\newtheorem{rhp}{Riemann-Hilbert Problem}
\newtheorem{ass}{Assumption}
\newtheorem{proposition}{Proposition}
\newtheorem{theorem}{Theorem}
\newtheorem{corollary}{Corollary}
\newtheorem{definition}{Definition}
\newcommand{\trans}[1]{{#1}^{\ensuremath{\mathsf{T}}}}
\newcommand{\breather}{{\sf{B}}}
\newcommand{\kink}{{\sf{K}}}
\newcommand{\grazing}{{\sf{G}}}
\newcommand{\bo}{\mathcal{O}}
\newcommand{\lo}{\mathfrak{o}}
\newcommand{\pu}{\mathcal{U}}
\newcommand{\pv}{\mathcal{V}}
\newcommand{\pw}{\mathcal{W}}
\newcommand{\pz}{\mathcal{Z}}
\newcommand{\ind}{m}
\newcommand{\sv}{\tau}
\numberwithin{equation}{section}
\numberwithin{theorem}{section}
\numberwithin{proposition}{section}
\numberwithin{lemma}{section}
\numberwithin{corollary}{section}
\numberwithin{ass}{section}
\numberwithin{rhp}{section}
\numberwithin{figure}{section}
\numberwithin{definition}{section}
\title{The Sine-Gordon Equation in the Semiclassical Limit:  Critical Behavior near a Separatrix}
\author{Robert J. Buckingham and Peter D. Miller}
\date{\today\\The authors thank A. B. J. Kuijlaars and A. R. Its for useful discussions.  R. J. Buckingham was partially supported by the Charles Phelps Taft Research Foundation.  P. D. Miller was partially supported by the National Science Foundation under grant DMS-0807653.  }
\begin{document}
\begin{abstract}
We study the Cauchy problem for the sine-Gordon equation in the semiclassical limit with pure-impulse
initial data of sufficient strength to generate both high-frequency rotational motion near the peak of the impulse profile and also high-frequency librational motion in the tails.  Subject to suitable conditions of a general nature, we analyze the fluxon condensate solution approximating the given initial data  for small time
near points where the initial data crosses the separatrix of the phase portrait of the simple pendulum.  We show that the solution is locally constructed as a universal curvilinear grid of 
superluminal kinks and grazing collisions thereof, with the grid curves being determined from rational solutions of the Painlev\'e-II system.
\end{abstract}
\maketitle
\tableofcontents
\section{Introduction}
This paper is concerned with a detailed local analysis of the solution of the Cauchy initial-value problem for the sine-Gordon equation
\begin{equation}
\epsilon^2u_{tt}-\epsilon^2u_{xx}+\sin(u)=0,\quad u(x,0)=F(x),\quad \epsilon u_t(x,0)=G(x),
\quad x\in\mathbb{R}.
\label{eq:sGCauchy}
\end{equation}
We will consider the number $\epsilon>0$ to be a small parameter.  This type of scaling can be physically motivated in the situation that the sine-Gordon equation is used to model the propagation of magnetic flux along superconducting Josephson junctions \cite{ScottCR76}.  
The sine-Gordon equation can also be derived in the continuum limit as a model for an array of
coaxial pendula with nearest-neighbor torsion coupling \cite{BaroneEMS71}.  This latter application is particularly
useful for the purposes of visualization of solutions.

A dramatic separation of scales occurs in the \emph{semiclassical limit}  $\epsilon\downarrow 0$ 
if the initial data $F(\cdot)$ and $G(\cdot)$ are held fixed.  As can be seen in Figures~\ref{fig:exactplots4}--\ref{fig:exactplots16}, the semiclassical dynamics apparently consists of well-defined (asymptotically independent of $\epsilon$) spacetime regions containing oscillations on space and time scales proportional to $\epsilon$ but modulated over longer scales originating with the $\epsilon$-independent initial conditions.  An important role is played by the
$x$-parametrized curve $(F,G)=(F(x),G(x))$ in the phase portrait of the simple pendulum, and its
relation to the separatrix curve $(1-\cos(F))+\tfrac{1}{2}G^2=2$.  Indeed, in the specific context of suitable
initial data of \emph{pure impulse} type, that is, for which $F(x)\equiv 0$, the following dichotomy has recently been established \cite{BuckinghamMelliptic} regarding the asymptotic behavior
of the solution $u(x,t)$  of \eqref{eq:sGCauchy}
for small time $t$ independent of $\epsilon$.   If $(F,G)=(0,G(x))$
lies inside the separatrix, then $u(x,t)$ 
is accurately modeled by a modulated train of superluminal librational waves; but if $(F,G)=(0,G(x))$ lies outside the separatrix, then $u(x,t)$ is instead accurately modeled by a
modulated train of superluminal rotational waves.  If $x\in\mathbb{R}$ is a point lying exactly
on the separatrix curve, then the approximation theorems proved in \cite{BuckinghamMelliptic}
fail to provide a uniform description of the asymptotic behavior of the solution $u(x,t)$ near such $x$.
It is clear from plots of exact solutions shown in Figures~\ref{fig:exactplots4}--\ref{fig:exactplots16} that some essentially
different asymptotic behavior is generated by separatrix crossings in the pure-impulse initial
data.  In particular, a different and more complicated kind of waveform than modulated traveling waves appears to spread in time away from specific points where separatrix crossings occur in the
initial data.  The coupled pendulum interpretation is useful here:  if the pendula are all initially at rest in the gravitationally stable configuration and are given a spatially-localized initial impulse
of sufficient strength, then some pendula have sufficient energy to rotate completely about the axis a number of times, while the pendula in the ``wings'' experience very little initial impulse and only have energy for small oscillatory motions near equilibrium (so-called librational motion).  Clearly
this situation leads to kink generation near the transition points and a more complicated type of dynamics as the kinks struggle to separate from one another.
\begin{figure}[h]
\begin{center}
\includegraphics{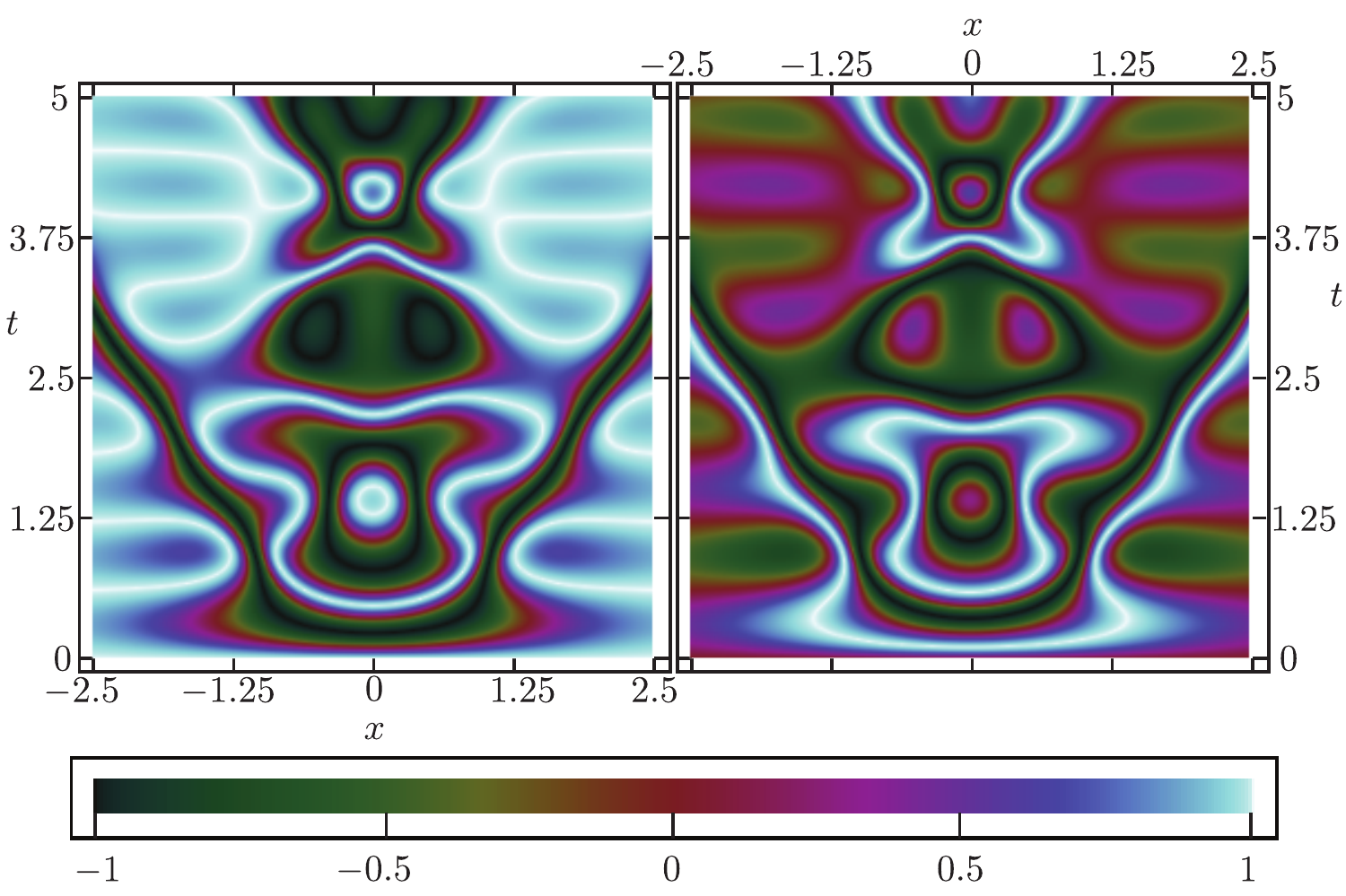}
\end{center}
\caption{\emph{Density plots of $\cos(u(x,t))$ (left) and $\sin(u(x,t))$ (right) for the exact solution of the Cauchy problem
\eqref{eq:sGCauchy} with $\epsilon=0.1875$ for $F(x)\equiv 0$ and $G(x)=-3\,\mathrm{sech}(x)$.  For this data, separatrix crossings occur for $x=\pm\,\mathrm{arcsech}(\tfrac{2}{3})\approx \pm 0.962$.}}
\label{fig:exactplots4}
\end{figure}
\begin{figure}[h]
\begin{center}
\includegraphics{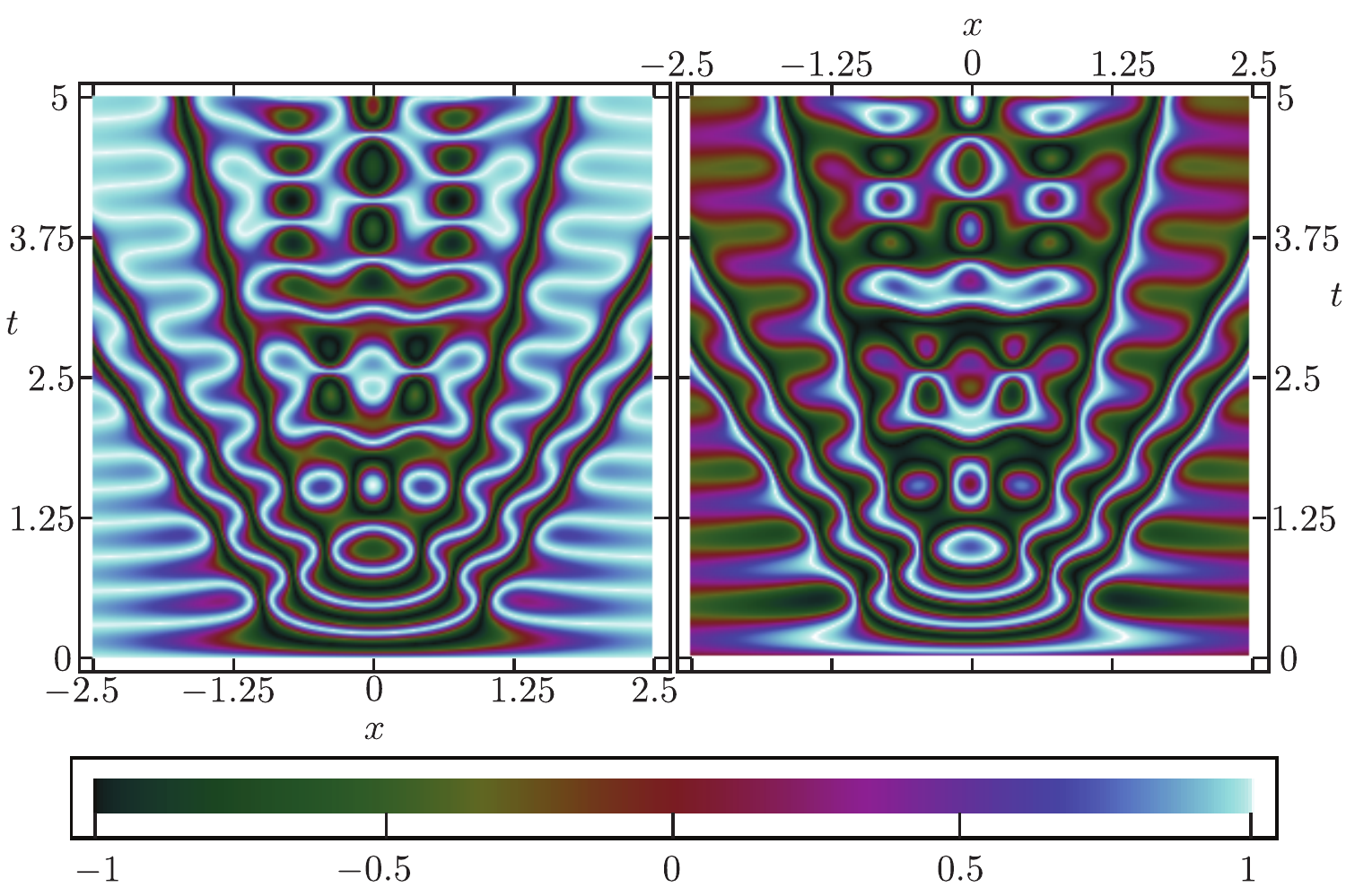}
\end{center}
\caption{\emph{Same as Figure~\ref{fig:exactplots4} but with $\epsilon=0.09375$.}}
\label{fig:exactplots8}
\end{figure}
\begin{figure}[h]
\begin{center}
\includegraphics{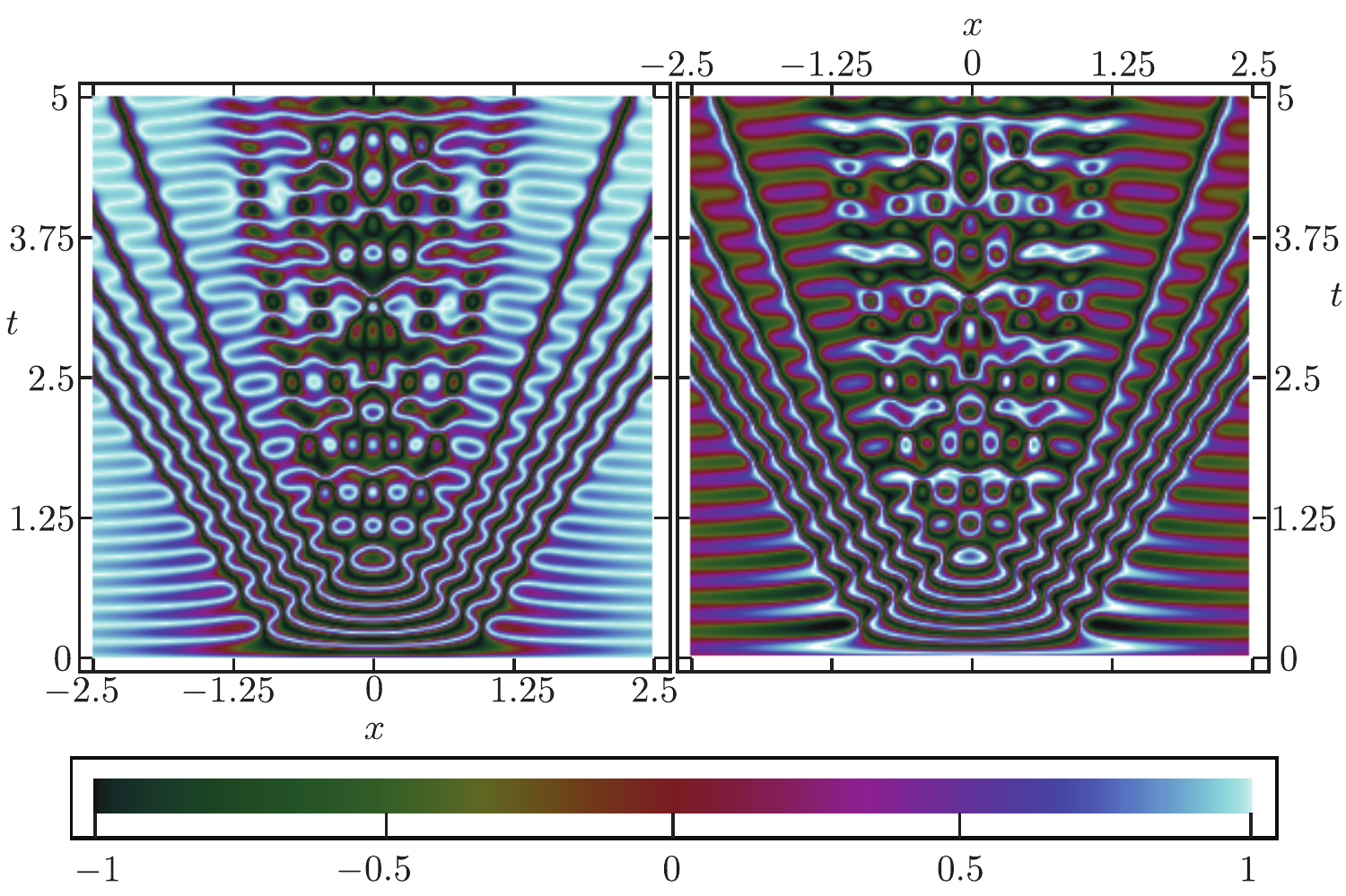}
\end{center}
\caption{\emph{Same as Figure~\ref{fig:exactplots4} but with $\epsilon=0.046875$.}}
\label{fig:exactplots16}
\end{figure}


To better understand the reason for the locally complicated behavior, it is useful to  view the sine-Gordon equation as
a perturbation of the simple pendulum ODE:
\begin{equation}
u_{TT}+\sin(u)=\epsilon^2u_{xx},\quad u(x,0)=0,\quad u_T(x,0)=G(x),\quad
t=\epsilon T.
\end{equation}
We might expect that the right-hand side should be negligible for quite a long
rescaled time $T$ as long as in the unperturbed system ($\epsilon=0$) pendula
located at nearby values of $x$ follow nearby orbits of the pendulum phase 
portrait.  This will be the case unless $|G(x)|\approx 2$, the condition allowing for nearby values
of $x$ to correspond to topologically dissimilar orbits, leading to large
relative displacements in $u$ over finite $T$ and causing $u_{xx}$ to become
very large very quickly.  See Figure~\ref{fig:sepplot}.
\begin{figure}[h]
\begin{center}
\includegraphics[width=0.3\linewidth]{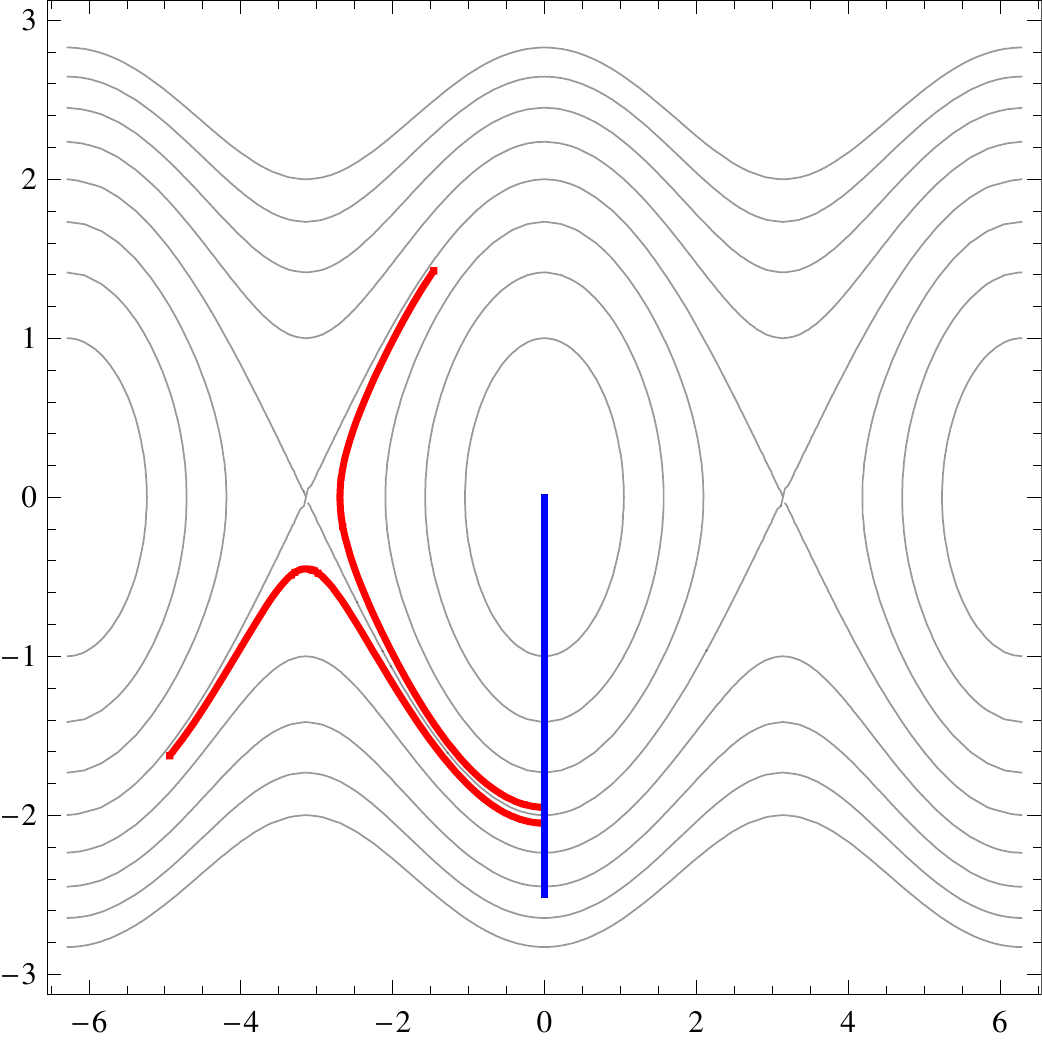}
\end{center}
\caption{\emph{The initial data $u(x,0)=0$ and $u_T(x,0)=G(x)$ with $G(0)<-2$
plotted parametrically in the $(u,u_T)$-plane (blue).  Orbits of the simple
pendulum are shown with gray curves.  Orbits corresponding to nearby values of
$x$ (red curves) can diverge after a descent to a neighborhood of a saddle 
point in finite $T$, if $x\approx \pm x_{\mathrm{crit}}=G^{-1}(-2)$.}}
\label{fig:sepplot}
\end{figure}
To leading order, and for times $t$ of order $\epsilon$ we expect that 
orbits near $x=x_{\mathrm{crit}}$ should follow the pendulum separatrix:
$\cos(u/2)\approx \mathrm{sech}(t/\epsilon)$, 
$\sin(u/2)\approx -\tanh(t/\epsilon)$, and 
$\epsilon u_t\approx -2\,\mathrm{sech}(t/\epsilon)$.
Of course as $T=t/\epsilon\to\infty$, the perturbation term $\epsilon^2u_{xx}$
will become important and will prevent the rapid divergence of trajectories
shown in Figure~\ref{fig:sepplot}.  

The subject of this paper is the asymptotic analysis, in the semiclassical limit $\epsilon\downarrow 0$, of the solution of the Cauchy problem \eqref{eq:sGCauchy} for the sine-Gordon equation
in the case that $t$ is small and 
$x$ is near
a value where the curve $(F,G)=(0,G(x))$ crosses the separatrix (at its midpoint, as in Figure~\ref{fig:sepplot}).  This region of the $(x,t)$-plane is blown up with what turn out to be the correct scalings for easier viewing in Figure~\ref{fig:CriticalZoom}.  
\begin{figure}[h]
\begin{center}
\includegraphics{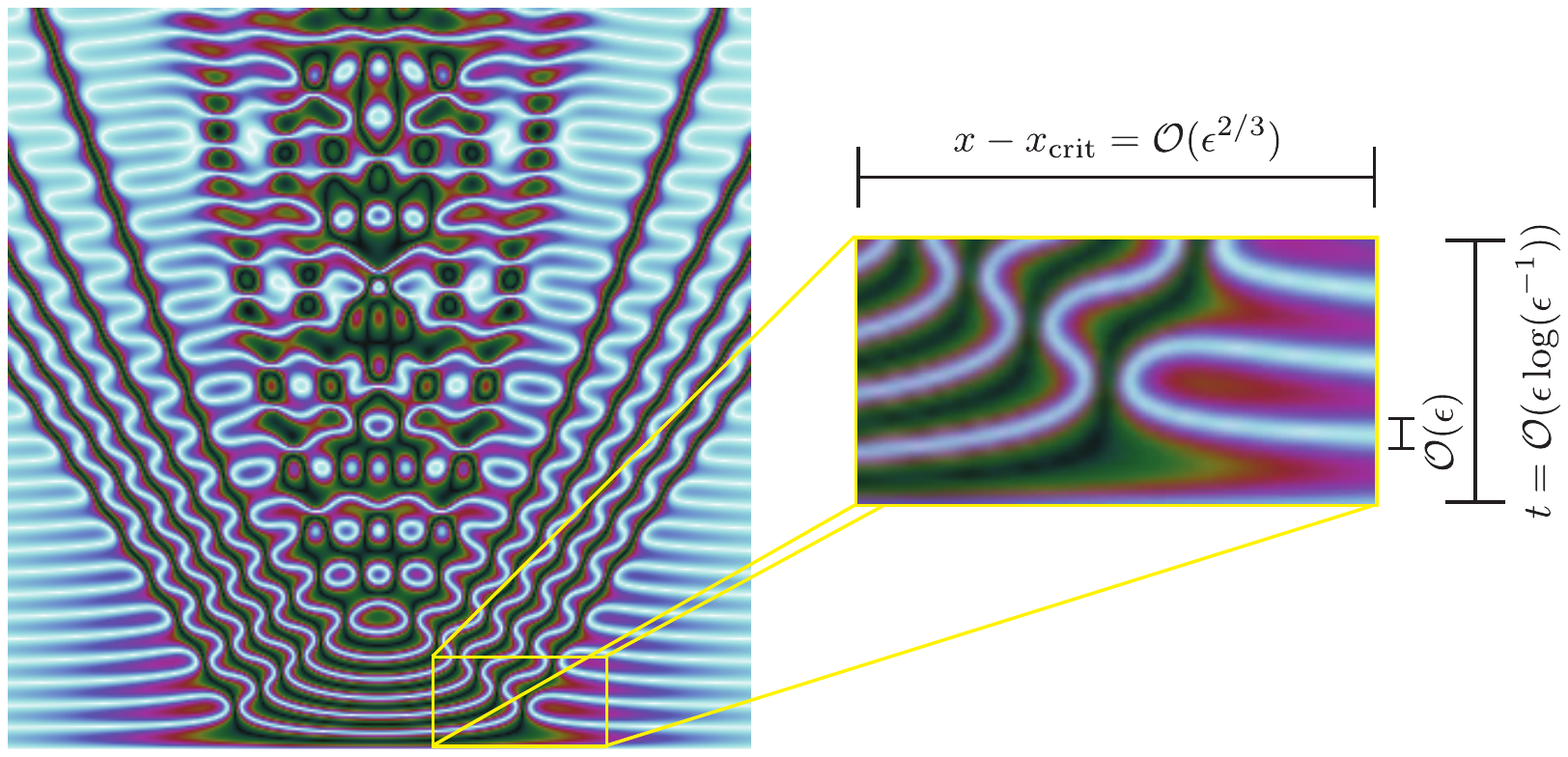}
\end{center}
\caption{\emph{The region of the $(x,t)$-plane in which the fluxon condensate is analyzed.}}
\label{fig:CriticalZoom}
\end{figure}
Therefore we are near the boundary of the regions where
the solution is described in terms of modulated rotational and librational
solutions of the pendulum equation.  We will show that, under suitable further assumptions, the
asymptotic behavior of $u(x,t)$ in this situation is universal, and is described by an essentially
multiscale formula that is expressible in terms of modern special functions, specifically solutions of certain nonlinear ordinary differential equations of Painlev\'e type.

To formulate our results precisely requires that we first set up some background material; we hope that the reader will bear with us until \S\ref{sec:results} where the full details will be presented.  In the meantime, we can describe the semiclassical asymptotics of $u(x,t)$ in the region illustrated
in Figure~\ref{fig:CriticalZoom} by saying that this region contains a curvilinear network of 
isolated kink-type solutions of the sine-Gordon equation (with the approximation $u(x,t)\approx \pi \pmod{2\pi}$ holding in between the kinks) lying along the graphs of certain rational solutions
of inhomogeneous Painlev\'e-II equations.  At  spacetime points associated with poles and zeros of these solutions, the kinks collide in a grazing fashion as can be modeled by a more complicated exact solution of the sine-Gordon equation corresponding to a double soliton.  The phenomenon of an asymptotically universal wave pattern being described by simple ``solitonic'' solutions located in space and time according to a rather more transcendental solution of a Painlev\'e-type equation was only recently observed for the first time
in another problem by Bertola and Tovbis \cite{BertolaT10cusp}.  


\subsection{Notation and terminology.}
All power functions $z^p$ will be assumed to be defined for nonintegral real $p$ as the principal branch with branch cut $z<0$ and with $-|p|\pi<\arg(z^p)<|p|\pi$.

If $G(\cdot)$ is a function of $x$ for which the equation $G(x)=-2$ has a unique solution
for $x>0$, then we denote this solution by $x=x_\mathrm{crit}$.
Let us use the term \emph{criticality} to describe the point $x=x_\mathrm{crit}$
and $t=0$, and \emph{near criticality} to mean $x\approx x_\mathrm{crit}$
and $t\approx 0$.

For any rational function $\mathcal{R}=\mathcal{R}(y)$, let $\mathscr{P}(\mathcal{R})$ denote
the finite set of real poles of $\mathcal{R}$ and let $\mathscr{Z}(\mathcal{R})$ denote the finite
set of real zeros of $\mathcal{R}$.
Finally, for a finite set $\mathscr{S}\subset\mathbb{R}$ and a real number $y\in\mathbb{R}$ denote by
\begin{equation}
|y-\mathscr{S}|:=\min_{y_0\in\mathscr{S}}|y-y_0|
\end{equation}
the distance between $y$ and $\mathscr{S}$.

If $\Sigma$ is an oriented contour in the complex plane and $f$ is a function analytic in the complement of $\Sigma$, we will use subscripts $f_+(\xi)$ and $f_-(\xi)$ to refer to the boundary
values taken by $f(w)$ as $w\to\xi\in\Sigma$ nontangentially from the left and right respectively.

We will make frequent use of the Pauli matrices:
\begin{equation}
\sigma_1:=\begin{bmatrix}0 & 1\\ 1 & 0\end{bmatrix},\quad
\sigma_2:=\begin{bmatrix}0 & -i\\ i & 0\end{bmatrix},\quad
\sigma_3:=\begin{bmatrix}1 & 0 \\ 0 & -1\end{bmatrix}.
\end{equation}
With the sole exception of these three, we write all matrices with boldface capital letters.

We use the Landau notation for most estimates, with ``big-oh'' written $\bo$ and ``little-oh'' written $\lo$.
Also, if $q_1,q_2,\dots,q_m$ are some quantities, we will use the shorthand notation
$\bo(q_1,\dots,q_m)$ to represent a quantity that is bounded by a linear combination
of $|q_1|,\dots,|q_m|$, that is, $\bo(q_1,\dots,q_m)=\bo(q_1)+\cdots +\bo(q_m)$.

We will be dealing with several matrix functions in which the matrix symbol carries subscripts and superscripts, so we will use a special notation for Taylor/Laurent expansion coefficients of such matrix functions:  if $\mathbf{M}_a^b(\sv)$ is such a matrix function and $\sv_0$ is a point about which this matrix is to be expanded, we write the Taylor expansion in the form
\begin{equation}
\mathbf{M}_a^b(\sv)=\sum_{k=0}^\infty [\smash{\fourIdx{\sv_0}{k}{b}{a}{\mathbf{M}}}](\sv-\sv_0)^k.
\end{equation}
We also use this notation in the case that $\sv_0=\infty$ with obvious modifications.

\subsection{Assumptions and definition of fluxon condensates.}
We study the Cauchy initial-value problem \eqref{eq:sGCauchy} under exactly the same assumptions
used in our earlier work \cite{BuckinghamMelliptic}:
\begin{ass} The initial conditions for \eqref{eq:sGCauchy} satisfy $F(x)\equiv 0$.  
\label{ass:impulse}
\end{ass}
Assumption~\ref{ass:impulse} asserts that the initial data is of pure-impulse type.  This is
important because it implies that the direct scattering problem for the solution of the Cauchy
problem by the inverse-scattering transform reduces from the Faddeev-Takhtajan eigenvalue problem to the better-understood nonselfadjoint Zakharov-Shabat eigenvalue problem.
\begin{ass} The function $G$ is a nonpositive function of Klaus-Shaw type, that is,
$G\in L^1(\mathbb{R})\cap C^1(\mathbb{R})$ and $G$ has a unique local (and global)
minimum.
\label{ass:KlausShaw}
\end{ass}
As was shown by Klaus and Shaw in \cite{KlausS02},  Assumption~\ref{ass:KlausShaw} provides a useful and important confinement property of the discrete spectrum of the nonselfadjoint Zakharov-Shabat eigenvalue problem associated to the potential $G$.  This allows us to formulate
Riemann-Hilbert Problem~\ref{rhp:basic} below on a system of contours very close to the unit circle and a transecting negative real interval, a set whose image under the function $E(\cdot)$ defined in \eqref{eq:ED} below is a segment of the imaginary axis.
\begin{ass}
The function $G$ is even in $x$:  $G(-x)=G(x)$, placing the unique minimum of $G$ at $x=0$.
\label{ass:even}
\end{ass}
Assumption~\ref{ass:even} is admittedly less important (we believe that with appropriate modifications our results all hold true without it) but it allows for a substantial simplification
of our analysis.  
We point out that under Assumptions~\ref{ass:KlausShaw} and \ref{ass:even} the function $G$ restricted to $\mathbb{R}_+$ has a well-defined inverse function $G^{-1}:(G(0),0)\to\mathbb{R}_+$.
\begin{ass}
The function $G$ is strictly increasing and real-analytic at each $x>0$, and the positive and
real-analytic function
\begin{equation}
\mathscr{G}(m):=\frac{\sqrt{m}\sqrt{G(0)^2-m}}{2G'(G^{-1}(-\sqrt{m}))},\quad 0<m<G(0)^2
\label{eq:hmdefcrit}
\end{equation}
can be analytically continued to neighborhoods of $m=0$ and $m=G(0)^2$ with 
$\mathscr{G}(0)>0$ and $\mathscr{G}(G(0)^2)>0$.
\label{ass:scriptGnice}
\end{ass}
The analyticity of $G$ for $x>0$ and that of $\mathscr{G}$ up to the endpoints of the interval $[0,G(0)^2]$ as guaranteed by Assumption~\ref{ass:scriptGnice} are both absolutely crucial to our
method of analysis.  It is the analyticity of $G$ that implies that of $\Psi$ defined in \eqref{eq:Psidefinecrit} below, and hence of $\theta_0$ defined by \eqref{eq:ED} and \eqref{eq:theta0defcrit}, and these are used to convert a ``primordial'' Riemann-Hilbert problem of inverse scattering that involves a large number (inversely proportional to $\epsilon>0$) of
pole singularities into the simpler Riemann-Hilbert Problem~\ref{rhp:basic} to be formulated below.  The latter problem has no poles, but only jump discontinuities along contours, and hence is amenable to the Deift-Zhou steepest descent technique \cite{DeiftZ93} of rigorous asymptotic analysis.
\begin{ass}
The small number $\epsilon$ lies in the infinite sequence
\begin{equation}
\epsilon=\epsilon_N:=\frac{\|G\|_1}{4\pi N},\quad N=1,2,3,\dots.
\end{equation}
\label{ass:epsilon}
\end{ass}
Assumption~\ref{ass:epsilon} is important because it minimizes the effect of \emph{spectral singularities}, events occurring infinitely often as $\epsilon\downarrow 0$ at which discrete eigenvalues are born from the continuous spectrum.  When spectral singularities occur, the reflection coefficient has poles in the continuous spectrum, and without Assumption~\ref{ass:epsilon} (or some suitable approximation thereof) the reflection coefficient cannot be neglected uniformly on the continuous spectrum.  Assumption~\ref{ass:epsilon} is important because our approach is based on neglecting the
reflection coefficient entirely.  (It can be shown to be small in the semiclassical limit without Assumption~\ref{ass:epsilon} except near points where spectral singularities can occur.)
\begin{ass}
The function $G(x)$ satisfies $G(0)<-2$.
\label{ass:kinks}
\end{ass}
It is this last assumption that ensures that there exist exactly two points $x=\pm x_\mathrm{crit}$
at which the initial data curve $(F,G)=(0,G(x))$ crosses the separatrix of the simple pendulum phase portrait.  Therefore, Assumption~\ref{ass:kinks} guarantees that the phenomenon we wish
to study in this paper actually occurs for the initial data in question.

Given initial data for the Cauchy problem \eqref{eq:sGCauchy} satisfying these assumptions, we construct a sequence of exact solutions $u=u_N(x,t)$ of the sine-Gordon equation $\epsilon^2 u_{tt}-\epsilon^2u_{xx}+\sin(u)=0$ for $\epsilon=\epsilon_N$, $N=1,2,3,\dots$ called the \emph{fluxon condensate} associated with the given initial data.  While $u_N(x,t)$ is an exact solution of the partial differential equation for each $N$, it generally does not satisfy exactly
the given initial conditions.  However, it has been proved  \cite[Corollary 1.1]{BuckinghamMelliptic}  that $u_N(x,0)=\bo(\epsilon_N)$ holds modulo $4\pi$ and $\epsilon_Nu_{N,t}(x,0)=G(x)+\bo(\epsilon_N)$ where the error estimates are valid pointwise for $x\neq 0$ and $|x|\neq x_\mathrm{crit}$, and uniformly on compact subsets of the set of pointwise accuracy.  We strengthen this convergence to include the points $x$ near where $|x|=x_\mathrm{crit}$ in Theorem~\ref{thm:tzero} below.

The fluxon condensate is
constructed as follows.  First one defines a function $\Psi(\lambda)$ by setting
\begin{equation}
\Psi(\lambda):=\frac{1}{2}\int_0^{G^{-1}(-v)}\sqrt{G(s)^2-v^2}\,ds,\quad\lambda = \frac{iv}{4},
\quad 0<v<-G(0).
\label{eq:Psidefinecrit}
\end{equation}
Note that $\Psi$ is a strictly decreasing function of $v$ where defined.  This fact allows us to
define a sequence of numbers $\{\lambda_{N,k}^0\}_{k=0}^{N-1}$ by solving the equation
\begin{equation}
\Psi(\lambda_{N,k}^0) = \pi\epsilon_N\left(k +\frac{1}{2}\right),\quad k=0,1,2,\dots,N-1.
\label{eq:BScrit}
\end{equation}
(These positive imaginary numbers are approximate eigenvalues for the Zakharov-Shabat eigenvalue problem associated with the Klaus-Shaw potential $G$, and \eqref{eq:BScrit} is a kind of \emph{Bohr-Sommerfeld quantization rule} for that nonselfadjoint problem.)  Setting
\begin{equation}
E(w):=\frac{i}{4}\left[(-w)^{1/2}+(-w)^{-1/2}\right]\quad\text{and}\quad D(w):=\frac{i}{4}\left[(-w)^{1/2}-(-w)^{-1/2}\right],\quad |\arg(-w)|<\pi,
\label{eq:ED}
\end{equation}
we define
\begin{equation}
Q(w;x,t):=E(w)x+D(w)t,\quad |\arg(-w)|<\pi,
\end{equation}
and
\begin{equation}
\Pi_N(w):=\prod_{k=0}^{N-1}\frac{E(w)+\lambda_{N,k}^0}{E(w)-\lambda_{N,k}^0},\quad |\arg(-w)|<\pi.
\end{equation}
This function is meromorphic and has no zeros where defined.  It has $2N$ poles (counted with multiplicity), the set of which we denote as $P_N$.  Generically (with respect to deformations of $G$) the poles are all simple, and there
are $2N_\breather$ of them in complex-conjugate pairs on the unit circle in the $w$-plane,
and $2N_\kink$ of them in pairs on the negative real axis in involution with respect to the map
$w\mapsto 1/w$.  As $N\uparrow\infty$ (or $\epsilon_N\downarrow 0$) the poles accumulate
on the whole unit circle and the interval $[\mathfrak{a},\mathfrak{b}]$, where
\begin{equation}
\mathfrak{a}:= -\frac{1}{4}\left(\sqrt{G(0)^2-4}-G(0)\right)^2,\quad
\mathfrak{b}:=-\frac{1}{4}\left(\sqrt{G(0)^2-4}+G(0)\right)^2=\frac{1}{\mathfrak{a}}.
\end{equation}
Clearly, both $\mathfrak{a}$ and $\mathfrak{b}$ are independent of $N$, and $\mathfrak{a}<-1<\mathfrak{b}<0$.  Were Assumption~\ref{ass:kinks} not satisfied, we would have $N_\kink=0$,
and the poles of $\Pi_N$ would accumulate only in a complex-conjugate symmetric incomplete arc of the
unit circle containing the point $w=1$.  In nongeneric cases it can happen that there is a double pole at $w=-1$.    We consider the accumulation locus of poles, $P_\infty:=[\mathfrak{a},\mathfrak{b}]\cup S^1$ to be an oriented contour, with orientation of the two intervals $(\mathfrak{a},-1)$ and $(-1,\mathfrak{b})$ both toward $w=-1$, and with orientation of the upper
and lower semicircles from $w=-1$ toward $w=1$.  See Figure~\ref{fig:critNcontour}.
We also use the notation
\begin{equation}
\theta_0(w):=\Psi(E(w)).
\label{eq:theta0defcrit}
\end{equation}
Then, define a function $L(w)$ by a Cauchy integral:
\begin{equation}
L(w):=\frac{(-w)^{1/2}}{\pi}\int_{P_\infty}\frac{\theta_0(y)}{(-y)^{1/2}}\frac{dy}{y-w},\quad w\in\mathbb{C}\setminus (P_\infty\cup\mathbb{R}_+),
\label{eq:Ldefine}
\end{equation}
and then set
\begin{equation}
Y_N(w):=\Pi_N(w)e^{-L(w)/\epsilon_N},\quad w\in\mathbb{C}\setminus (P_\infty\cup\mathbb{R}_+).
\end{equation}
Finally, writing
\begin{equation}
\overline{L}(\xi):=\frac{1}{2}(L_+(\xi)+L_-(\xi)),\quad\xi\in P_\infty,
\end{equation}
a function that turns out to have a well-defined analytic continuation to a full neighborhood of the self-intersection point of $w=-1$ of the contour $P_\infty$ (upon continuation from any of the four intersecting arcs), we set
\begin{equation}
T_N(\xi):=2\Pi_N(\xi)\cos(\epsilon_N^{-1}\theta_0(\xi))e^{-\overline{L}(\xi)/\epsilon_N},\quad
\xi\in P_\infty.
\end{equation}
It can be shown \cite[Proposition 3.1]{BuckinghamMelliptic} that $Y_N(w)=1+\bo(\epsilon_N)$ holds uniformly on compact subsets of the open 
domain of definition, and that $Y_N(w)=\bo(1)$ holds if $w\to\mathfrak{a}$ or $w\to\mathfrak{b}$
nontangentially to the real axis.  Similarly, $T_N(\xi)$ extends from $P_\infty$ to an analytic
function $T_N(w)$ on a neighborhood of $w=-1$, and $T_N(w)=1+\bo(\epsilon_N)$
holds uniformly on this neighborhood as well as on $P_\infty$, as long as $w$ is bounded away
from $w=\mathfrak{a}$ and $w=\mathfrak{b}$; however the estimate $T_N(\xi)=\bo(1)$ holds
uniformly for $\xi\in P_\infty$.

Consider the contour $\Sigma_{\mathbf{N}}$ illustrated in Figure~\ref{fig:critNcontour}.
\begin{figure}[h]
\begin{center}
\includegraphics{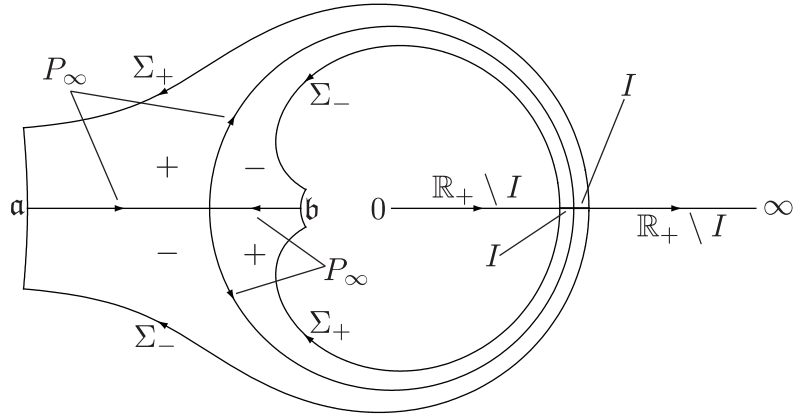}
\end{center}
\caption{\emph{The contour $\Sigma_{\mathbf{N}}$ of discontinuity for the sectionally analytic
function $\mathbf{N}(w)$.  With the exception of the two components of $\mathbb{R}_+\setminus I$ which are oriented left-to-right, the contour arcs are oriented with the regions labeled ``$+$'' on the left, and with the regions labeled ``$-$'' on the right.  The image of $\Sigma_\mathbf{N}\setminus\mathbb{R}_+$ under
$w\mapsto E(w)$ with $E$ defined by \eqref{eq:ED} consists of the four straight line segments:
(i) $\Re\{E\}=-\delta$, $0<\Im\{E\}\le -G(0)/4$, (ii) $\Re\{E\}=0$, $0<\Im\{E\}\le -G(0)/4$, (iii) $\Re\{E\}=\delta$, $0<\Im\{E\}\le -G(0)/4$, and (iv) $|\Re\{E\}|\le\delta$, $\Im\{E\}=-G(0)/4$.  Here $\delta>0$ is a sufficiently small number.  The self-intersection point $w=-1$ is the critical point of $E$.}}
\label{fig:critNcontour}
\end{figure}
Let $g(w)$ denote a function analytic in the domain $w\in\mathbb{C}\setminus(P_\infty\cup\mathbb{R}_+)$, satisfying the symmetry $g(w^*)=g(w)^*$ and the  conditions that $g(0)=0$ and
$g(w)\to 0$ as $w\to\infty$.  We assume also that $g$ takes well-defined boundary values on $P_\infty\cup\mathbb{R}_+$ in the classical sense (H\"older continuity up to the boundary), and that the boundary values satisfy $g_+(\xi)+g_-(\xi)=0$ for $\xi\in\mathbb{R}_+$.  Set
\begin{equation}
\theta(\xi):=-i(g_+(\xi)-g_-(\xi))\quad\text{and}\quad\phi(\xi;x,t):=2iQ(\xi;x,t)+\overline{L}(\xi)-g_+(\xi)-g_-(\xi),\quad\xi\in P_\infty.
\end{equation}
Then, to determine the fluxon condensate, one solves the following Riemann-Hilbert problem.  
\begin{rhp}[Basic Problem of Inverse Scattering]
Given $g$ as above, find a $2\times 2$ matrix function $\mathbf{N}(w)=\mathbf{N}^g_N(w;x,t)$ of the complex variable
$w$ with the following properties:
\begin{itemize}
\item[]\textbf{Analyticity:}  $\mathbf{N}(w)$ is analytic for $w\in\mathbb{C}\setminus\Sigma_{\mathbf{N}}$.
\item[]\textbf{Jump condition:}  $\mathbf{N}(w)$ is H\"older continuous up to $\Sigma_{\mathbf{N}}$.
The boundary values $\mathbf{N}_\pm(\xi)$ taken by $\mathbf{N}(w)$ on the various arcs of $\Sigma_{\mathbf{N}}$ are required to satisfy the following jump conditions 
\begin{equation}
\mathbf{N}_+(\xi)=\sigma_2\mathbf{N}_-(\xi)\sigma_2,\quad\xi\in\mathbb{R}_+\setminus I,
\end{equation}
\begin{equation}
\mathbf{N}_+(\xi)=\sigma_2\mathbf{N}_-(\xi)\sigma_2\begin{bmatrix}1+e^{i(\theta_{0+}(\xi)-\theta_{0-}(\xi))/\epsilon_N} & 
iY_{N-}(\xi)e^{-i\theta_{0-}(\xi)/\epsilon_N}e^{-F(\xi;x,t)/\epsilon_N}\\
-Y_{N+}(\xi)e^{i\theta_{0+}(\xi)/\epsilon_N}e^{F(\xi;x,t)/\epsilon_N} & 1
\end{bmatrix}
,\quad\xi\in I,
\end{equation}
where $F(\xi;x,t):=2iQ_+(\xi;x,t)+L_+(\xi)-2g_+(\xi)$,
\begin{equation}
\mathbf{N}_+(\xi)=\mathbf{N}_-(\xi)\begin{bmatrix}
e^{-i\theta(\xi)/\epsilon_N} & 0\\ -iT_N(\xi)e^{\phi(\xi;x,t)/\epsilon_N} & e^{i\theta(\xi)/\epsilon_N}
\end{bmatrix},\quad \xi\in P_\infty,
\end{equation}
and,
\begin{equation}
\mathbf{N}_+(\xi)=\mathbf{N}_-(\xi)\begin{bmatrix}
1 & 0 \\
-iY_N(\xi)e^{(2iQ(\xi;x,t)+L(\xi)\pm i\theta_0(\xi)-2g(\xi))/\epsilon_N} & 1\end{bmatrix},\quad
\xi\in\Sigma_\pm.
\end{equation}
\item[]\textbf{Normalization:}  The following normalization condition holds:
\begin{equation}
\lim_{w\to\infty}\mathbf{N}(w)=\mathbb{I},
\end{equation}
where the limit is uniform with respect to angle for $|\arg(-w)|<\pi$.
\end{itemize}
\label{rhp:basic}
\end{rhp}
It turns out that for each choice of the function $g$, each $N=1,2,3,\dots$, and each $(x,t)\in\mathbb{R}^2$ there exists a unique solution of this Riemann-Hilbert problem.  Moreover,
the product $\mathbf{M}(w)=\mathbf{M}_N(w;x,t):=\mathbf{N}_N^g(w;x,t)e^{g(w)\sigma_3/\epsilon_N}$ does not depend on the choice of the function $g$.  The solution
$\mathbf{N}^g_N(w;x,t)$ has convergent expansions near $w=0$ and $w=\infty$ of the form
\begin{equation}
\mathbf{N}^g_N(w;x,t)=\mathbb{I} + \sum_{k=1}^\infty[\fourIdx{\infty}{k}{g}{N}{\mathbf{N}}](x,t)(-w)^{-k/2},\quad \text{for $|w|$ sufficiently large,}
\label{eq:Nexpandwlarge}
\end{equation}
and
\begin{equation}
\mathbf{N}^g_N(w;x,t)=\sum_{k=0}^\infty[\fourIdx{0}{k}{g}{N}{\mathbf{N}}](x,t)(-w)^{k/2},\quad 
\text{for $|w|$ sufficiently small.}
\label{eq:Nexpandwsmall}
\end{equation}
\begin{definition}[Fluxon condensates]
Given a function $G$ satisfying Assumptions~\ref{ass:impulse} through \ref{ass:kinks}, 
the fluxon condensate associated with $G$ is the family of functions $\{u_N(x,t)\}_{N=1}^\infty$
defined modulo $4\pi$ by the equations
\begin{equation}
\cos(\tfrac{1}{2}u_N(x,t)):=\left[[\fourIdx{0}{0}{g}{N}{\mathbf{N}}](x,t)\right]_{11}
\quad\text{and}\quad
\sin(\tfrac{1}{2}u_N(x,t)):=\left[[\fourIdx{0}{0}{g}{N}{\mathbf{N}}](x,t)\right]_{21}.
\label{eq:fluxoncondensate}
\end{equation}
Note that these are independent of $g$ because $g(0)=0$.
\label{def:fluxoncondensate}
\end{definition}
While it is possible to extract formulae for derivatives of $u_N(x,t)$ with respect to $x$ and $t$
by the chain rule, it is preferable to have formulae that do not require differentiation of $\mathbf{N}^g_N(w;x,t)$, as this will require control of derivatives of error terms.  However, it can be shown
also that the following formula holds:
\begin{equation}
\epsilon_N\frac{\partial u_N}{\partial t}(x,t)=\left[[\fourIdx{\vphantom{0}\infty}{\vphantom{0}1}{g}{N}{\mathbf{N}}](x,t)\right]_{12}
+\left[[\fourIdx{0}{0}{g}{N}{\mathbf{N}}](x,t)^{-1}[\fourIdx{0}{1}{g}{N}{\mathbf{N}}](x,t)\right]_{12},
\label{eq:epsuNt}
\end{equation}
and this does not require differentiation of $\mathbf{N}^g_N(w;x,t)$ with respect to $t$ (it also turns out to be independent of choice of $g$).
Each function $u_N(x,t)$ of the fluxon condensate is an exact solution of the sine-Gordon equation with $\epsilon=\epsilon_N$, and $u_N(-x,t)=u_N(x,t)$.

While Riemann-Hilbert Problem~\ref{rhp:basic} is the most convenient starting point for our
local analysis near criticality, it is not the most fundamental representation of $u_N(x,t)$.  Indeed,
$u_N(x,t)$ is a \emph{reflectionless potential} for the direct scattering problem of the Lax pair
for the sine-Gordon equation, and this means that it can be obtained from a purely ``discrete''
Riemann-Hilbert problem whose solution is a rational function on the Riemann surface of $y^2=-w$, with poles on both sheets over the points of $P_N$.  Since the number of poles is increasing with $N$, some preparations are required to recast the problem in a suitable form for addressing the limit $N\to\infty$.  These preparations are detailed in \cite{BuckinghamMelliptic},
and they take the form of a sequence of explicit transformations resulting in the equivalent 
Riemann-Hilbert problem~\ref{rhp:basic}.  In general, a number of choices are made along the
way because the transformations that are required to enable the subsequent asymptotic analysis
depend on $(x,t)$; however for $x\approx x_\mathrm{crit}>0$ and $t$ small only the simplest
of the choices detailed in \cite{BuckinghamMelliptic} are required.  For readers familiar with the terminology of that paper, we are assuming that $\Delta=\emptyset$, and correspondingly,
$Y_N(w)$ is the function called $Y^\nabla(w)$ in \cite{BuckinghamMelliptic} while $T_N(w)$ is
the function called $T^\nabla(w)$ in \cite{BuckinghamMelliptic}.  Also, for our purposes we need
make no distinction between the contour called $\Sigma^\nabla\cup\Sigma^\Delta$ (which would just be $\Sigma^\nabla$ in the case that $\Delta=\emptyset$) and the contour $P_\infty$, and this implies that the set $Z\subset\mathbb{C}$ that lies between these two contours is empty.
Finally, to derive Riemann-Hilbert Problem~\ref{rhp:basic} from the results of \cite{BuckinghamMelliptic} we used the facts that for $\xi\in I$, $Q_+(\xi;x,t)+Q_-(\xi;x,t)=L_+(\xi)+L_-(\xi)=0$.

The solution of the sine-Gordon equation by inverse-scattering methods is a subject with a
long history.  When the sine-Gordon equation is written in terms of characteristic coordinates 
$x\pm t$ it fits naturally into the hierarchy of the Ablowitz-Kaup-Newell-Segur or Zakharov-Shabat scattering problem.  The \emph{characteristic} Cauchy problem for the sine-Gordon equation
was integrated in this framework by Ablowitz, Kaup, Newell, and Segur \cite{AblowitzKNS73}
and by Zakharov, Takhtajan, and Faddeev \cite{ZakharovTF74}.  The more physically-relevant
problem of the Cauchy problem in laboratory coordinates as posed in \eqref{eq:sGCauchy}
required new methodology, and the solution of this Cauchy problem by the inverse scattering
method was first outlined by Kaup \cite{Kaup75}.  An account of the solution of the Cauchy problem in laboratory coordinates is given in the textbook of Faddeev and Takhtajan
\cite{FaddeevT87}.  Some further analytical details needed to make the theory completely
rigorous were supplied by Zhou \cite{Zhou95} and later by Cheng, Venakides, and Zhou
\cite{ChengVZ99}.  A self-contained account of the Riemann-Hilbert formulation of the inverse-scattering solution of the Cauchy problem \eqref{eq:sGCauchy} assuming only that at each
instant of time the solution has $L^1$-Sobolev regularity can be found in our paper 
\cite[Appendix A]{BuckinghamM08}, and a direct proof that the sine-Gordon equation preserves this degree
of regularity if it is initially present is given in \cite[Appendix B]{BuckinghamM08}.  In our recent paper
\cite{BuckinghamMelliptic} we found that to describe the semiclassical limit for the Cauchy problem \eqref{eq:sGCauchy} it is useful to reformulate the Riemann-Hilbert problem in the complex plane of a square root of the spectral variable used in \cite[Appendix A]{BuckinghamM08}; this makes it easier to express the asymptotic solutions in terms of
Riemann theta functions of the lowest possible genus.  As we view the current paper as
a continuation of our work in \cite{BuckinghamMelliptic} we use the same formulation here.

\subsection{Statement of results.}
\label{sec:results}
Let $x_\mathrm{crit}$ be defined by
\begin{equation}
x_\mathrm{crit}:=G^{-1}(-2)>0,
\label{eq:xcritdefinecrit}
\end{equation}
and define a positive constant $\nu$ by
\begin{equation}
\nu:=\frac{1}{12G'(x_\mathrm{crit})}>0.
\label{eq:nudefinecrit}
\end{equation}
Also, set
\begin{equation}
\Delta x:=x-x_\mathrm{crit}.
\label{eq:deltax}
\end{equation}
All of our results concern the asymptotic behavior of the fluxon condensate $u_N(x,t)$ in the small region near criticality where $t = \bo(\epsilon_N\log(\epsilon_N^{-1}))$ and $\Delta x = \bo(\epsilon_N^{2/3})$ as shown in Figure~\ref{fig:CriticalZoom}.

Our first result is concerned with the relevance of the fluxon condensate associated with $G(x)$
to the Cauchy initial-value problem.
\begin{theorem}[Initial accuracy of fluxon condensates]
Suppose that $x\pm x_\mathrm{crit}=\bo(\epsilon_N^{2/3})$.  Then uniformly for such $x$,
\begin{equation}
u_N(x,0)= \bo(\epsilon_N^{1/3}) \pmod{4\pi} \quad\text{and}\quad
\epsilon_N\frac{\partial u_N}{\partial t}(x,0)=G(x) +\bo(\epsilon_N^{1/3}).
\end{equation}
\label{thm:tzero}
\end{theorem}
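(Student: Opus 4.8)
The plan is to obtain Theorem~\ref{thm:tzero} as the $t=0$ specialization of the global asymptotic analysis of Riemann-Hilbert Problem~\ref{rhp:basic} carried out in the body of the paper. By the symmetry $u_N(-x,t)=u_N(x,t)$ and the evenness of $G$ it suffices to treat $x=x_\mathrm{crit}+\Delta x$ with $\Delta x=\bo(\epsilon_N^{2/3})$, the case near $-x_\mathrm{crit}$ following by reflection. First I would fix the $g$-function appropriate to the critical window and run the usual sequence of steepest-descent transformations (deformation of the jump contour, opening of lenses) that converts $\mathbf{N}$ into a problem whose jump matrices, away from the three special points $w=\mathfrak a,\mathfrak b,-1$, are either constant or exponentially close to $\mathbb I$ as $N\to\infty$. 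The point of setting $t=0$ is that $Q(w;x,0)=E(w)x$, and since $\theta_0$, $L$, and $\overline L$ are all functions of $E(w)$ alone, the controlling phases collapse considerably; in particular on $I$ one has $Q_+ +Q_- = L_+ + L_- = 0$, so the analysis there is essentially the reflectionless Zakharov-Shabat inverse problem for the Klaus-Shaw potential $G$.

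Next I would assemble the global parametrix $\dot{\mathbf N}$: the outer parametrix $\dot{\mathbf N}^{\mathrm{out}}$ away from $\mathfrak a,\mathfrak b,-1$, Airy parametrices on fixed-size disks around $w=\mathfrak a$ and $w=\mathfrak b$ matching $\dot{\mathbf N}^{\mathrm{out}}$ with relative error $\bo(\epsilon_N)$, and the rational-Painlev\'e-II parametrix on a disk around $w=-1$ whose natural variables are $\Delta x\,\epsilon_N^{-2/3}=\bo(1)$ and a rescaled time that vanishes when $t=0$. The crucial algebraic check is that, at $t=0$, this outer parametrix reduces to the one associated with the pure-impulse datum: namely that $[\fourIdx{0}{0}{g}{N}{\dot{\mathbf N}^{\mathrm{out}}}](x,0)=\mathbb I$ and that the right-hand side of \eqref{eq:epsuNt} evaluated on $\dot{\mathbf N}^{\mathrm{out}}$ is exactly $G(x)$ — for $x$ a fixed positive distance from $x_\mathrm{crit}$ these are precisely the identities underlying \cite[Corollary~1.1]{BuckinghamMelliptic}, and one must verify they persist uniformly across the window using the $t=0$ phase simplifications above. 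On the boundary of the Painlev\'e disk the local parametrix agrees with $\dot{\mathbf N}^{\mathrm{out}}$ only up to $\bo(\epsilon_N^{1/3})$, because the large parameter of the local model is of order $\epsilon_N^{-1/3}$; this single fact is the source of the $\epsilon_N^{1/3}$ appearing in the statement. Forming $\mathbf R:=\mathbf N\,\dot{\mathbf N}^{-1}$, one gets a Riemann-Hilbert problem whose jump is $\mathbb I+\bo(\epsilon_N^{1/3})$ on a fixed contour (the three disk boundaries plus residual lens and band arcs carrying exponentially small jumps), all lying a fixed distance from $w=0$ and $w=\infty$; the standard small-norm theory then yields $\mathbf R(w)=\mathbb I+\bo(\epsilon_N^{1/3})$, with the same estimate for the Laurent/Taylor coefficients $[\fourIdx{0}{0}{g}{N}{\mathbf R}]$, $[\fourIdx{0}{1}{g}{N}{\mathbf R}]$, and $[\fourIdx{\infty}{1}{g}{N}{\mathbf R}]$, uniformly for $\Delta x=\bo(\epsilon_N^{2/3})$.

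Finally I would read off the two conclusions. Writing $\mathbf N=\mathbf R\,\dot{\mathbf N}$ and expanding near $w=0$ gives $[\fourIdx{0}{0}{g}{N}{\mathbf N}](x,0)=\mathbb I+\bo(\epsilon_N^{1/3})$, so \eqref{eq:fluxoncondensate} yields $\cos(\tfrac12 u_N(x,0))=1+\bo(\epsilon_N^{1/3})$ and $\sin(\tfrac12 u_N(x,0))=\bo(\epsilon_N^{1/3})$, which is equivalent to $u_N(x,0)=\bo(\epsilon_N^{1/3})\pmod{4\pi}$. Substituting $\mathbf N=\mathbf R\,\dot{\mathbf N}$ into \eqref{eq:epsuNt} and collecting the relevant coefficients at $w=0$ and $w=\infty$ produces the same bilinear expression evaluated on $\dot{\mathbf N}^{\mathrm{out}}$, which equals $G(x)$ by the check above, plus a $\bo(\epsilon_N^{1/3})$ error; since $G$ is real-analytic at $x_\mathrm{crit}$ the value $G(x)$ is unambiguous on the window, and $\epsilon_N\partial_t u_N(x,0)=G(x)+\bo(\epsilon_N^{1/3})$ follows. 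The main obstacle I anticipate is exactly the verification highlighted in the second paragraph: confirming that the outer parametrix at $t=0$ genuinely reduces to the pure-impulse one uniformly as $x$ crosses $x_\mathrm{crit}$ — where a band endpoint coalesces with the critical point $w=-1$ and the outer description degenerates — and that the Painlev\'e parametrix is matched there with precisely the claimed $\bo(\epsilon_N^{1/3})$ accuracy rather than something worse; once that is in hand, everything downstream is the routine small-norm estimate together with the coefficient extractions at $w=0$ and $w=\infty$.
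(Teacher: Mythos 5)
Your plan mis-estimates the inner--outer mismatch on $\partial U$, and the gap is not cosmetic: it is precisely the obstacle that Sections 5--6 of the paper were built to overcome. You assert that the local Painlev\'e parametrix ``agrees with $\dot{\mathbf N}^{\mathrm{out}}$ only up to $\bo(\epsilon_N^{1/3})$, because the large parameter of the local model is of order $\epsilon_N^{-1/3}$,'' and that the ratio $\mathbf{R}=\mathbf N\,\dot{\mathbf N}^{-1}$ then has a $\mathbb I+\bo(\epsilon_N^{1/3})$ jump on $\partial U$. That is not true. The inner parametrix $\dot{\mathbf P}^{\mathrm{in}}_\ind$ and outer parametrix $\dot{\mathbf P}^{\mathrm{out}}_\ind$ each carry a diagonal renormalization $\epsilon_N^{(2\ind-1)\sigma_3/6}e^{\frac{1}{2}s\sigma_3/\epsilon_N}$ (see \eqref{eq:dotPrepDS} and \eqref{eq:dotPnlocdefine}), and the jump of the ratio on $\partial U$ is the conjugate by this diagonal matrix of $\mathbf Z_\ind(\zeta;y)(-\zeta)^{(1-2\ind)\sigma_3/2}=\mathbb I+\mathbf A_\ind(y)\zeta^{-1}+\bo(\zeta^{-2})$, sandwiched between $\eta_\ind^{\pm\sigma_3}$. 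On $\partial U$ one has $\zeta^{-1}=\bo(\epsilon_N^{1/3})$, so the raw deviation is $\bo(\epsilon_N^{1/3})$ as you say --- but after conjugation by $\epsilon_N^{-\sigma_3/6}e^{\frac{1}{2}s\sigma_3/\epsilon_N}$ (with $\ind=0$ and $s/\epsilon_N=\bo(\epsilon_N^{1/3})$ at $t=0$), the $(1,2)$ entry is multiplied by $\epsilon_N^{-1/3}e^{s/\epsilon_N}\sim\epsilon_N^{-1/3}$, which blows the $\bo(\epsilon_N^{1/3})$ term up to $\bo(1)$. So the jump for $\mathbf R$ on $\partial U$ is $\mathbb I+\bo(1)$ (in the upper-triangular direction), and a direct small-norm argument does not apply.

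The paper resolves this with the ``improvement of the parametrices'' of \S 6: it constructs an explicit rational parametrix $\dot{\mathbf Q}_\ind^\pm(w)$ solving Riemann--Hilbert Problems~\ref{rhp:Eparametrix}/\ref{rhp:EparametrixII}, whose upper- (resp.\ lower-) triangular jump on $\partial U$ absorbs exactly the $\bo(1)$ mismatch. Only the residual $\mathbf E_\ind^\pm=\mathbf Q_\ind\dot{\mathbf Q}_\ind^{\pm,-1}$ is of small-norm type (Proposition~\ref{prop:Error}), and its accuracy $e_\ind^\pm(y,s;\epsilon_N)$ is $\bo(\epsilon_N^{1/3})$ on the part of $\Omega_0^+\cup\Omega_1^-$ hit by the window $t=0$, $\Delta x=\bo(\epsilon_N^{2/3})$. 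Crucially, $\dot{\mathbf Q}_0^+(0)$ is \emph{not} close to the identity --- for instance its $(1,2)$ entry equals $1+\bo(\epsilon_N^{1/3})$ at $t=0$ --- so the leading values $\cos(\tfrac12 u_N(x,0))=1+\bo(\epsilon_N^{1/3})$ and $\epsilon_N\partial_t u_N(x,0)=-(S+1/S)+\bo(\epsilon_N^{1/3})=G(x)+\bo(\epsilon_N^{1/3})$ actually come from the explicit formulas for $\dot{\mathbf Q}_0^+$ (the computation of $J_0^+$, $K_0^+$, $R_0^+=-1+\bo(\epsilon_N^{1/3})$, and then $\dot C^+_{N,0}$, $\dot S^+_{N,0}$, $\dot G^+_{N,0}$ in \S 7), not from the outer parametrix alone. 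Your ``crucial algebraic check'' that the outer parametrix reduces to identity with time-derivative $G(x)$ therefore aims at the right target but is carried out on the wrong object: it must be performed on $\dot{\mathbf Q}_\ind^\pm\dot{\mathbf P}^{\mathrm{out}}_\ind$, and the nontrivial contribution of $\dot{\mathbf Q}_\ind^\pm$ is where the real content of the proof lies.
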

This result extends that of \cite[Corollary~1.1]{BuckinghamMelliptic} to suitable neighborhoods
of the points $x=\pm x_\mathrm{crit}$.

\begin{theorem}[Main approximation theorem]
There exist multiscale approximating functions $\dot{C}(x,t;\epsilon_N)$ and $\dot{S}(x,t;\epsilon_N)$ (defined in detail
in \S\ref{sec:ProofMain}, and depending on initial data only through the constant $\nu>0$) such that
\begin{equation}
\begin{split}
\cos(\tfrac{1}{2}u_N(x,t))&=\dot{C}(x,t;\epsilon_N)+\bo(\epsilon_N^{1/6})\\
\sin(\tfrac{1}{2}u_N(x,t))&=\dot{S}(x,t;\epsilon_N)+\bo(\epsilon_N^{1/6})
\end{split}
\end{equation}
with the error terms being uniform for $\Delta x=\bo(\epsilon_N^{2/3})$ and $t=\bo(\epsilon_N\log(\epsilon_N^{-1}))$.
\label{thm:main}
\end{theorem}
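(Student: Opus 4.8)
\emph{Proof sketch.}  The plan is to analyze Riemann-Hilbert Problem~\ref{rhp:basic} by the Deift-Zhou steepest-descent method \cite{DeiftZ93}, with the function $g$ and the subsequent contour deformations chosen to expose the genus-zero ``$\pi$-background'' that dominates near criticality, and with a local parametrix at $w=-1$ built from a rational solution of the Painlev\'e-II system.  First I would fix $g=g(w;x,t)$ for $(x,t)$ in the critical window $\Delta x=\bo(\epsilon_N^{2/3})$, $t=\bo(\epsilon_N\log(\epsilon_N^{-1}))$ by requiring that $\theta\equiv 0$ on $P_\infty$ and that the exponent $\phi(\xi;x,t)$ have the sign making the off-diagonal entry of the $P_\infty$-jump exponentially small away from the self-intersection point $w=-1$; since the accumulation locus of poles has degenerated to all of $P_\infty$ with a single ``hard'' point at $w=-1$, this $g$ should admit an explicit Cauchy-integral representation analogous to that of $L$ in \eqref{eq:Ldefine}, and it is at genus zero so no Riemann theta functions appear.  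Next I would record the local structure of $\phi$ near $w=-1$: because $w=-1$ is the critical point of $E$ (so $E'(-1)=0$ but $D'(-1)\neq 0$), in a conformal coordinate $\zeta$ centered there one obtains an expansion $\phi=a_1(x,t)\,\zeta+a_3\,\zeta^3+\cdots$ with $a_3$ a nonzero constant and $a_1$ a smooth function that vanishes to first order at criticality; writing $a_1$ in terms of $\Delta x$, $t$ and the constant $\nu$ of \eqref{eq:nudefinecrit} and rescaling $\zeta=\epsilon_N^{1/3}\eta$ produces the self-similar variables in which the forced Painlev\'e-II equation emerges, the residual logarithmic time scale being what drives the slow separation of kinks.

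With $g$ fixed, the next step is the standard factorization of jump matrices and opening of lenses on every arc of $\Sigma_{\mathbf N}$, using the bounds $Y_N=1+\bo(\epsilon_N)$, $T_N=1+\bo(\epsilon_N)$ recorded after the definition of $T_N$ (and $Y_N,T_N=\bo(1)$ as $w\to\mathfrak a,\mathfrak b$), so that on the deformed contour the jump of the transformed matrix is $\mathbb I+\bo(e^{-c/\epsilon_N})$ outside a shrinking disk $U$ about $w=-1$.  The global (outer) parametrix then solves the model problem carrying only the background jump $\mathbf N_+=\sigma_2\mathbf N_-\sigma_2$ on $\mathbb R_+\setminus I$ together with the constant jump left on $P_\infty$; at genus zero this is an elementary algebraic function on the Riemann surface of $y^2=-w$, and its value at $w=0$ reproduces the plateau $u\equiv\pi\pmod{2\pi}$ seen between the kinks.

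The crux is the inner parametrix on $U$.  Here I would take the local model Riemann-Hilbert problem obtained by freezing $\theta_0$, $Y_N$, $T_N$, $L$ at their leading values near $w=-1$ and inserting the cubic local form of $\phi$, and match it to the Flaschka-Newell Riemann-Hilbert representation of Painlev\'e-II, the particular \emph{rational} solution being selected by the Stokes data forced by the signature of $\phi$.  The entries of the resulting parametrix, evaluated through \eqref{eq:fluxoncondensate}, are exactly what I would take as $\dot C$ and $\dot S$: an array of $\mathrm{sech}$/kink profiles whose grid curves are graphs of the rational Painlev\'e-II functions.  Two complications must be handled: near the real poles and zeros of these rational solutions the Painlev\'e parametrix is singular, and there the inner parametrix must instead be taken from the exact double-soliton (grazing-collision) solution of sine-Gordon, glued to the Painlev\'e model on an intermediate annulus; and one must verify that the matching of inner and outer parametrices on $\partial U$ holds with relative error $\bo(\epsilon_N^{1/6})$ uniformly, the degradation from the naive $\epsilon_N^{1/3}$ being traceable to the logarithmically growing time variable and to the gluing near the grazing-collision points.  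Finally I would set $\mathbf E=\mathbf N\cdot(\text{parametrix})^{-1}$, check that $\mathbf E$ solves a small-norm Riemann-Hilbert problem with jump $\mathbb I+\bo(\epsilon_N^{1/6})$, conclude $\mathbf E=\mathbb I+\bo(\epsilon_N^{1/6})$ uniformly on the critical window, and read off the coefficient $[\fourIdx{0}{0}{g}{N}{\mathbf N}]$ to obtain the claimed estimates for $\cos(\tfrac12 u_N)$ and $\sin(\tfrac12 u_N)$.  I expect the main obstacle to be precisely the construction and uniform control of this Painlev\'e-II inner parametrix --- in particular showing that the pole/zero structure of the rational solution can be absorbed into the double-soliton local model so that the approximation is genuinely uniform across the whole window, and extracting the correct error exponent $\tfrac16$ from the nonstandard pair of scalings $\Delta x\sim\epsilon_N^{2/3}$, $t\sim\epsilon_N\log(\epsilon_N^{-1})$.
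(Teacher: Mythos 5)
Your sketch follows the broad outline of the Deift--Zhou method (choose a $g$-function that exposes the background jump, open lenses, build an outer model, construct a Painlev\'e-II inner parametrix at $w=-1$, and close with small-norm theory) and correctly identifies the local cubic degeneracy at $w=-1$ and the role of Painlev\'e-II rational solutions.  However, there are several genuine gaps relative to what actually makes the argument go through, and one of them would cause the approach to fail as stated.

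The central difficulty is that no \emph{single} inner/outer parametrix pair can cover the whole window $\Delta x=\bo(\epsilon_N^{2/3})$, $t=\bo(\epsilon_N\log(\epsilon_N^{-1}))$.  After the conformal change of variable, the exponent in the jump on $\partial U$ contains a factor $e^{\pm s/(2\epsilon_N)}$, which grows like $\epsilon_N^{-2\ind/3}$ across the window.  This forces the introduction of an \emph{integer-indexed family} of outer models $\dot{\mathbf P}_\ind^{\mathrm{out}}$ and inner solutions $\mathbf Z_\ind(\zeta;y)$, with the integer $\ind$ chosen to balance $s/(\epsilon_N\log(\epsilon_N^{-1}))$ against $(2\ind-1)/3$.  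That each $\ind$ corresponds to a distinct rational Painlev\'e-II solution, and hence that an \emph{infinite} set of Painlev\'e transcendents is required to describe the full wave pattern, is the key structural feature you did not anticipate.  Your sketch presumes a single ``rational solution selected by the Stokes data,'' which cannot simultaneously control the jump on $\partial U$ at the top and bottom of the time window.

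Second, the way you propose to handle the singularities of the rational Painlev\'e-II solution is not what is done, and I believe it would not work.  You suggest replacing the Painlev\'e-II inner parametrix near its real poles and zeros by a double-soliton (grazing-collision) exact solution of sine-Gordon, glued on an annulus.  In the paper the grazing collision is an \emph{output} of the analysis (an asymptotic description of $u_N$ in a secondary regime, Theorem~\ref{thm:grazingapprox}), not an input to the parametrix construction.  The poles are instead managed by the \emph{singularity-confinement} property of the B\"acklund transformations: if $y_0$ is a pole of $(\pu_\ind,\pv_\ind)$, then $\mathbf Z_{\ind\pm 1}(\zeta;y_0)$ is regular, and one simply switches to the neighboring index.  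This is implemented through the ``teeth and notches'' decomposition of the $(y,s)$-plane into the sets $\Omega_\ind^\pm$, whose geometry is engineered exactly so that each point near a pole of $\pu_\ind$ is covered by a regular parametrix for some $\ind'\in\{\ind-1,\ind,\ind+1\}$.  Without this mechanism, your parametrix would blow up precisely at the pole locations and the small-norm hypothesis would fail there.

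Third, even after the correct $\ind$ is chosen, the mismatch $\mathbf V_{\mathbf Q_\ind}(\xi)$ on $\partial U$ is \emph{not} $\mathbb I + \lo(1)$ uniformly, so the resulting error problem cannot be closed by a single small-norm step as you claim.  One must build a second-level parametrix $\dot{\mathbf Q}_\ind^\pm(w)$ (a rational matrix with a double pole at $w_*$, determined by the dominant triangular part of $\mathbf V_{\mathbf Q_\ind}$), and it is only the \emph{ratio} $\mathbf E_\ind^\pm = \mathbf Q_\ind(\dot{\mathbf Q}_\ind^\pm)^{-1}$ that satisfies a small-norm problem.  The $\bo(\epsilon_N^{1/6})$ exponent comes out of this construction at the tips of the teeth of $\Omega_\ind^\pm$, i.e.~where $|p_\ind|\sim\tfrac13\log(\epsilon_N^{-1})$, not from ``gluing near grazing points.''  Finally, you propose to work on a shrinking disk $U$ with truncated Taylor data for the exponent; the paper is careful to avoid this by using the Chester--Friedman--Ursell exact conformal map $k(w;x,t)=W^3+rW-s$ on a \emph{fixed} disk, precisely because truncation on a shrinking disk gives suboptimal matching and complicates the small-norm estimates (cf.~the discussion around the Claeys--Kuijlaars method).
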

In fact, we will really show that the error term is significantly smaller, namely $\bo(\epsilon_N^{1/3})$, over most of the small region of the $(x,t)$-plane where the above Theorem provides an asymptotic
description of the dispersive breakup of the pendulum separatrix.  Now, the multiscale model provided by Theorem~\ref{thm:main} serves to establish universality of the behavior near the critical point,
but the formulae for $\dot{C}(x,t;\epsilon_N)$ and $\dot{S}(x,t;\epsilon_N)$ are rather complicated, so it is useful
to give some more detailed information by focusing on smaller parts of the $(x,t)$-plane near
criticality.  

To render our results in a more elementary fashion, we need to first define a certain hierarchy of rational functions.  First define
\begin{equation}
\pu_0(z):=1\quad\text{and}\quad \pv_0(z):=-\frac{1}{6}z.
\end{equation}
Then define $\{(\pu_\ind,\pv_\ind)\}_{\ind\in\mathbb{Z}}$ by the forward recursion
\begin{equation}
\pu_{\ind+1}(z):=-\frac{1}{6}z\pu_\ind(z)-\frac{\pu_\ind'(z)^2}{\pu_\ind(z)} +\frac{1}{2}\pu_\ind''(z)\quad\text{and}\quad\pv_{\ind+1}(z):=\frac{1}{\pu_\ind(z)}
\label{eq:Baecklundplusintro}
\end{equation}
and the backward recursion
\begin{equation}
\pu_{m-1}(z):=\frac{1}{\pv_\ind(z)}\quad\text{and}\quad
\pv_{m-1}(z):=\frac{1}{2}\pv_\ind''(z)-\frac{\pv_\ind'(z)^2}{\pv_\ind(z)}-\frac{1}{6}z\pv_\ind(z).
\label{eq:Baecklundminusintro}
\end{equation}
Up to constant factors, $\pu_\ind(z)$ and $\pv_\ind(z)$ are ratios of consecutive \emph{Yablonskii-Vorob'ev} polynomials (see \cite{Clarkson03,ClarksonM03}) and logarithmic derivatives of $\pu_\ind(z)$ and $\pv_\ind(z)$ are the unique \cite{Murata85} rational solutions of the inhomogeneous
Painlev\'e-II equations.  These observations are not necessary for us to state our results, and
we will make further comments later at an appropriate point in the paper.
The following proposition 
characterizes the behavior of these rational functions for large $z$.
\begin{proposition}
For each $\ind\in\mathbb{Z}$,
\begin{equation}
\pu_\ind(z)=\left(-\frac{z}{6}\right)^\ind(1+\bo(z^{-1}))\quad\text{and}\quad
\pv_\ind(z)=\left(-\frac{z}{6}\right)^{1-\ind}(1+\bo(z^{-1}))
\label{eq:unvnasympy}
\end{equation}
as $z\to\infty$.  In particular, $\mathrm{sgn}(\pu_\ind(z))=\mathrm{sgn}(\pv_\ind(z))=1$ for sufficiently large negative $z$ while
$\mathrm{sgn}(\pu_\ind(z))=-\mathrm{sgn}(\pv_\ind(z))=(-1)^\ind$ for sufficiently large positive $z$.
\label{prop:unvnasympy}
\end{proposition}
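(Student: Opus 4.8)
The plan is to prove the asymptotic formulas \eqref{eq:unvnasympy} by induction on $\ind$, using the two recursions \eqref{eq:Baecklundplusintro} and \eqref{eq:Baecklundminusintro} to step from $\ind$ to $\ind+1$ and from $\ind$ to $\ind-1$ respectively, with the base case $\ind=0$ being immediate from $\pu_0(z)=1=(-z/6)^0$ and $\pv_0(z)=-z/6=(-z/6)^{1-0}$. Since the recursions only ever divide by $\pu_\ind$ or $\pv_\ind$, which are rational functions, each $\pu_\ind$ and $\pv_\ind$ is a well-defined rational function, and the relevant statement is really about the leading behavior at $z=\infty$. To make the induction go through cleanly, I would strengthen the inductive hypothesis slightly: assume not just \eqref{eq:unvnasympy} but that $\pu_\ind$ and $\pv_\ind$ admit convergent Laurent-type expansions at $z=\infty$ in powers of $z^{-1}$ times the stated leading monomial, so that term-by-term differentiation is legitimate and produces the expected one-lower-order behavior, i.e. $\pu_\ind'(z)=\ind(-1/6)(-z/6)^{\ind-1}(1+\bo(z^{-1}))$ and similarly for $\pu_\ind''$, $\pv_\ind'$, $\pv_\ind''$. (The fact that these are genuine rational functions — which follows from the stated connection to ratios of Yablonskii--Vorob'ev polynomials, or can be seen directly from the recursions — guarantees such expansions exist.)

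For the forward step, suppose the hypothesis holds at index $\ind$. Then in the expression for $\pu_{\ind+1}$ in \eqref{eq:Baecklundplusintro}, I would estimate each of the three terms: $-\tfrac{1}{6}z\pu_\ind(z)=(-z/6)^{\ind+1}(1+\bo(z^{-1}))$; the term $\pu_\ind'(z)^2/\pu_\ind(z)$ is $\bo(z^{\ind-2})$, hence lower order; and $\tfrac12\pu_\ind''(z)=\bo(z^{\ind-2})$, also lower order. Adding, $\pu_{\ind+1}(z)=(-z/6)^{\ind+1}(1+\bo(z^{-1}))$, as claimed. The companion identity $\pv_{\ind+1}(z)=1/\pu_\ind(z)=(-z/6)^{-\ind}(1+\bo(z^{-1}))=(-z/6)^{1-(\ind+1)}(1+\bo(z^{-1}))$ is then immediate. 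The backward step via \eqref{eq:Baecklundminusintro} is entirely symmetric: $\pu_{\ind-1}=1/\pv_\ind=(-z/6)^{\ind-1}(1+\bo(z^{-1}))$, and for $\pv_{\ind-1}$ the dominant contribution comes from $-\tfrac16 z\pv_\ind(z)=(-z/6)^{2-\ind}(1+\bo(z^{-1}))=(-z/6)^{1-(\ind-1)}(1+\bo(z^{-1}))$ while $\tfrac12\pv_\ind''$ and $\pv_\ind'^2/\pv_\ind$ are of order $z^{-\ind-1}$, hence subdominant. This establishes \eqref{eq:unvnasympy} for all $\ind\in\mathbb{Z}$.

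The sign statements then follow by reading off the leading monomial. For $z\to-\infty$ we have $-z/6>0$, so $(-z/6)^\ind>0$ and $(-z/6)^{1-\ind}>0$ for every integer $\ind$, whence $\mathrm{sgn}(\pu_\ind(z))=\mathrm{sgn}(\pv_\ind(z))=1$. For $z\to+\infty$ we have $-z/6<0$, so $(-z/6)^\ind$ has sign $(-1)^\ind$ and $(-z/6)^{1-\ind}$ has sign $(-1)^{1-\ind}=-(-1)^\ind$, giving $\mathrm{sgn}(\pu_\ind(z))=(-1)^\ind=-\mathrm{sgn}(\pv_\ind(z))$. The main obstacle I anticipate is not any of the estimates, which are routine once the expansions are in hand, but rather the need to justify that each $\pu_\ind$, $\pv_\ind$ really is a rational function with no pole at $z=\infty$-neighborhood cancellation spoiling the leading term — in other words, that the leading coefficients never vanish. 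This is exactly where invoking the identification with ratios of consecutive Yablonskii--Vorob'ev polynomials (whose degrees are known) pays off; alternatively one can track leading coefficients through the induction, noting that the forward recursion sends a nonzero leading coefficient to a nonzero one because the three terms have strictly separated orders and cannot cancel, and likewise for the backward recursion.
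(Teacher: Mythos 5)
Your proof is correct and takes essentially the same route as the paper: both derive from the recursion that the derivative terms are subdominant (order $z^{\ind-2}$ against $z^{\ind+1}$), extract the leading-order recurrence $\pu_{\ind+1}(z)\sim -\tfrac{z}{6}\pu_\ind(z)$, and solve it from the base case $\pu_0=1$, with $\pv_\ind=1/\pu_{\ind-1}$ giving the companion formula. The only cosmetic difference is that you phrase it as a two-sided induction while the paper simply solves the recurrence in one stroke; the concern you flag about vanishing leading coefficients is handled automatically, since the recurrence $c_{\ind+1}=-c_\ind/6$ manifestly preserves nonvanishing.
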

\begin{proof}
It is obvious that the recursions \eqref{eq:Baecklundplusintro} and \eqref{eq:Baecklundminusintro} preserve rationality of the input functions $(\pu_\ind,\pv_\ind)$, so for each
$\ind\in\mathbb{Z}$, $\pu_\ind$ is a rational function of $z$.  Therefore, $\pu_\ind$ has a Laurent expansion
for large $z\in\mathbb{C}$ of the form
\begin{equation}
\pu_\ind(z)=c_\ind z^{k_\ind}(1+\bo(z^{-1}))
\end{equation}
where $c_\ind$ is a complex constant and $k_\ind\in\mathbb{Z}$ is an exponent to be determined, and moreover this expansion is differentiable any number of times with respect to $z$.  In particular, it follows that 
\begin{equation}
\frac{\pu_\ind'(z)^2}{\pu_\ind(z)} = k_\ind^2c_\ind z^{k_\ind-2}(1+\bo(z^{-1}))\quad\text{and}\quad
\pu_\ind''(z)=k_\ind(k_\ind-1)c_\ind z^{k_\ind-2}(1+\bo(z^{-1}))
\end{equation}
as $z\to\infty$.  This shows that the final two terms on the right-hand side of the formula for $\pu_{\ind+1}(z)$ given in \eqref{eq:Baecklundplusintro}
are subdominant compared to the term $-z\pu_\ind(z)/6$, and therefore,
\begin{equation}
\pu_{\ind+1}(z)=-\frac{z}{6}c_\ind z^{k_\ind}(1+\bo(z^{-1}))=-\frac{z}{6}\pu_\ind(z)(1+\bo(z^{-1})),\quad z\to\infty.
\end{equation}
This formula gives the recurrence relation for the large-$z$ asymptotics of $\pu_\ind$.  The asymptotic formula
for $\pu_\ind(z)$ given in \eqref{eq:unvnasympy} follows by solving the recurrence with the base case $\pu_0(z)=1$, and that for $\pv_\ind(z)$ follows from the identity $\pv_\ind(z)=1/\pu_{\ind-1}(z)$.
\end{proof}

\begin{theorem}[Superluminal kink asymptotics]
Fix an integer $\ind$, and suppose that $\Delta x = \bo(\epsilon_N^{2/3})$ while 
$|t-\tfrac{2}{3}\ind\epsilon_N\log(\epsilon_N^{-1})|\le\tfrac{1}{3}\epsilon_N\log(\epsilon_N^{-1})$.  
Then 
\begin{equation}
\begin{split}
\cos(\tfrac{1}{2}u_N(x,t))&=(-1)^\ind\,\mathrm{sgn}(\pu_\ind(z))\,\mathrm{sech}(T_\kink) + E_{\cos}(x,t;\epsilon_N)\\
\sin(\tfrac{1}{2}u_N(x,t))&=(-1)^{\ind+1}\tanh(T_\kink)+E_{\sin}(x,t;\epsilon_N),
\end{split}
\end{equation}
where the phase is
\begin{equation}
T_\kink:=\frac{t}{\epsilon_N}-2\ind\log\left(\frac{4\nu^{1/3}}{\epsilon_N^{1/3}}\right) +
\log|\pu_\ind(z)|,
\label{eq:kinkTdefine}
\end{equation}
and a rescaled spatial coordinate is given by
\begin{equation}
z:=\frac{\Delta x}{2\nu^{1/3}\epsilon_N^{2/3}}.
\label{eq:zDeltax}
\end{equation}
The error terms satisfy
\begin{equation}
\lim_{\epsilon_N\downarrow 0}E_{\cos}(x,t;\epsilon_N) = \lim_{\epsilon_N\downarrow 0}
E_{\sin}(x,t;\epsilon_N) = 0
\end{equation}
unless (i) both $\epsilon_N^{-2\ind/3}e^{-t/\epsilon_N}=\bo(\epsilon_N^{1/3})$ and 
$|z-\mathscr{Z}(\pu_\ind)|=\bo(\epsilon_N^{1/3})$ or (ii) both $\epsilon_N^{2\ind/3}e^{t/\epsilon_N}=\bo(\epsilon_N^{1/3})$ and $|z-\mathscr{P}(\pu_\ind)|=\bo(\epsilon_N^{1/3})$.
Moreover, given any interval $[z_-,z_+]$ on which $\log |\pu_\ind(z)|$ is a bounded
function, we have
\begin{equation}
E_{\cos}(x,t;\epsilon_N)=\bo(\epsilon_N^{1/3})\quad\text{and}
\quad
E_{\sin}(x,t;\epsilon_N)=\bo(\epsilon_N^{1/3})\quad
\end{equation}
holding uniformly for 
$z\in[z_-,z_+]$ and $|t-\tfrac{2}{3}\ind\epsilon_N\log(\epsilon_N^{-1})|\le\tfrac{1}{3}\epsilon_N\log(\epsilon_N^{-1})$.
\label{thm:kinkapprox}
\end{theorem}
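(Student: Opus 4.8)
\medskip
\noindent\emph{Sketch of the intended proof.}\quad
The plan is to deduce this statement from Theorem~\ref{thm:main} together with an explicit asymptotic analysis of the multiscale functions $\dot C(x,t;\epsilon_N)$ and $\dot S(x,t;\epsilon_N)$ built in \S\ref{sec:ProofMain}. Those functions come from the local parametrix near criticality, which is assembled from the rational solution of a model Painlev\'e-II Riemann-Hilbert problem; when expanded in the coordinates natural near the critical point they appear as a ratio of quantities that are, modulo exponentially small corrections, finite linear combinations over $j\in\mathbb{Z}$ of the scales $\epsilon_N^{-2j/3}e^{-t/\epsilon_N}$ and $\epsilon_N^{2j/3}e^{t/\epsilon_N}$, with coefficients rational in the rescaled coordinate $z$ of \eqref{eq:zDeltax} and expressible through the Yablonskii-Vorob'ev polynomials. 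The first step is to isolate the \emph{dominant balance} underlying the statement: on the window $|t-\tfrac23\ind\epsilon_N\log(\epsilon_N^{-1})|\le\tfrac13\epsilon_N\log(\epsilon_N^{-1})$ it is precisely the combination that occurs in $T_\kink$ — roughly, $\epsilon_N^{-2\ind/3}e^{-t/\epsilon_N}$ set against $\epsilon_N^{2\ind/3}e^{t/\epsilon_N}$ through the level-$\ind$ rational coefficient — that is $\bo(1)$ at the center and $\bo(\epsilon_N^{\pm1/3})$ at the edges, while every scale with index $j\ne\ind$ is either exponentially small or exponentially large relative to it.

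Second, I would carry out the reduction. The exponentially large contributions enter the relevant ratio symmetrically (equivalently, they drop out once the normalization $\dot C^2+\dot S^2=1$ is used) and cancel; discarding the exponentially small ones then leaves exactly a single superluminal kink, with $\dot C$ and $\dot S$ reducing to $\mathrm{sech}$ and $\tanh$ of one common phase. Identifying the surviving rational coefficient with $\pu_\ind(z)$, so that the residual logarithm in that phase is precisely $\log|\pu_\ind(z)|$ as in \eqref{eq:kinkTdefine}, is done by matching the relevant Yablonskii-Vorob'ev ratios against the forward and backward B\"acklund recursions \eqref{eq:Baecklundplusintro}--\eqref{eq:Baecklundminusintro}; the overall sign $(-1)^\ind\mathrm{sgn}(\pu_\ind(z))$ then comes from tracking the sign of that coefficient, which Proposition~\ref{prop:unvnasympy} pins down for large $|z|$ and which changes on a compact $z$-interval only across points of $\mathscr{Z}(\pu_\ind)$. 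Feeding this into the refined version of Theorem~\ref{thm:main} recorded in the remark following it — $\cos(\tfrac12u_N)=\dot C+\bo(\epsilon_N^{1/3})$ and $\sin(\tfrac12u_N)=\dot S+\bo(\epsilon_N^{1/3})$ away from poles and zeros — then yields the claimed $\bo(\epsilon_N^{1/3})$ estimates on any fixed compact interval $[z_-,z_+]$ on which $\log|\pu_\ind|$ is bounded.

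Third comes the error bookkeeping for general bounded $z$. The retained profile $\mathrm{sech}(T_\kink)$ is itself only of size $\bo(\epsilon_N^{1/3})$ near the two edges of the window, unless the factor $\log|\pu_\ind(z)|$ re-inflates it: this occurs exactly when $|\pu_\ind(z)|$ is small near the top of the window — that is, $|z-\mathscr{Z}(\pu_\ind)|=\bo(\epsilon_N^{1/3})$ together with $\epsilon_N^{-2\ind/3}e^{-t/\epsilon_N}=\bo(\epsilon_N^{1/3})$ — or when $|\pu_\ind(z)|$ is large near the bottom — that is, $|z-\mathscr{P}(\pu_\ind)|=\bo(\epsilon_N^{1/3})$ together with $\epsilon_N^{2\ind/3}e^{t/\epsilon_N}=\bo(\epsilon_N^{1/3})$. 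At precisely these spacetime points a neighboring kink of the network is also present, a grazing collision is taking place, the single-kink formula cannot hold, and the error terms need not vanish; this is the situation excluded from the statement. Outside these two exceptional sets the discarded multiscale corrections are $\lo(1)$, which together with the $\bo(\epsilon_N^{1/6})$ error of Theorem~\ref{thm:main} gives $E_{\cos},E_{\sin}\to0$.

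The step I expect to be the main obstacle is the second one: establishing the dominant balance \emph{uniformly} over the whole window — including its two edges, where the balance is only marginal and where the level-$\ind$ description must pass continuously into the level-$(\ind\pm1)$ one — and controlling each non-dominant term by the correct power of $\epsilon_N$ times a rational function of $z$ whose zero set is exactly $\mathscr{Z}(\pu_\ind)$ and whose pole set is exactly $\mathscr{P}(\pu_\ind)$. This requires a careful look at the explicit structure of $\dot C$ and $\dot S$ near criticality and at the large-argument behavior of the Yablonskii-Vorob'ev ratios beyond what Proposition~\ref{prop:unvnasympy} records, and it is exactly the demand for uniformity in $z$ over $\Delta x=\bo(\epsilon_N^{2/3})$ — so that $z$ runs over a fixed compact set which may contain points of $\mathscr{P}(\pu_\ind)\cup\mathscr{Z}(\pu_\ind)$ — that forces the exceptional-set caveats to appear in the final statement.
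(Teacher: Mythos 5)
Your overall plan --- obtaining the kink formula by expanding the explicit multiscale functions $\dot C$, $\dot S$ constructed in \S\ref{sec:ProofMain} and then appealing to the error estimates behind Theorem~\ref{thm:main} --- is the right one, and your accounting for the two exceptional sets (i) and (ii) is conceptually accurate. But the intermediate picture you rely on is not the one the paper actually sets up. You describe $\dot C$ and $\dot S$ as ratios of finite linear combinations over $j\in\mathbb{Z}$ of scales $\epsilon_N^{\pm 2j/3}e^{\mp t/\epsilon_N}$, to be tamed by isolating a dominant balance and canceling exponentially large cross-terms against the identity $\dot C^2+\dot S^2=1$. The multiscale functions are in fact \emph{piecewise} defined --- $\dot C:=\ddot C_{N,\ind}^\pm$ and $\dot S:=\ddot S_{N,\ind}^\pm$ whenever $(\epsilon_N^{-2/3}r,s)\in\Omega_\ind^\pm$ --- and each piece involves a \emph{single} level of Painlev\'e data through the scalar $\dot R_\ind^\pm$ of \eqref{eq:dotRplus}--\eqref{eq:dotRminus}. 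There is no sum over $j$, no dominant balance to extract, and nothing exponentially large to cancel: once the strip decomposition fixes $\ind$, the reduction to a single kink is simply the observation that the denominator of $\dot R_\ind^\pm$ equals $16$ plus a term of the schematic form $\epsilon_N^{2/3}e^{\pm 2q}\,B(z)^2$ that is uniformly $\bo(1)$ on $\Omega_\ind^\pm$ and pointwise $\lo(1)$ away from the exceptional sets.

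The step your outline leaves untouched --- and the reason a single phase $T_\kink$ serves over the entire time window --- is the consistency of the two halves. The image of the window under $(x,t)\mapsto(\epsilon_N^{-2/3}r,s)$ covers the interlocking strip $\Omega_\ind^+\cup\Omega_{\ind+1}^-$ of \eqref{eq:stripunion}, whose two pieces have disjoint interiors and whose parametrices are built from \emph{different} Painlev\'e data: $(\pu_\ind,B_{\ind,12})$ on $\Omega_\ind^+$ and $(\pv_{\ind+1},B_{\ind+1,21})$ on $\Omega_{\ind+1}^-$. Because the interface between them cuts through the interior of the window (it follows the teeth/notch boundaries, not merely the top or bottom edges), one must check that the two reductions produce identical leading terms; the paper does exactly this using the B\"acklund identity $\pv_{\ind+1}(z)=1/\pu_\ind(z)$ from \eqref{eq:Baecklundplus}, which makes both halves yield $(-1)^\ind\mathrm{sgn}(\pu_\ind(z))\,\mathrm{sech}(T_\kink)$ and $(-1)^{\ind+1}\tanh(T_\kink)$. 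Without this check the approximation does not patch across the interface. Finally, your proposed ``matching of Yablonskii--Vorob'ev ratios against the B\"acklund recursions'' to identify the rational coefficient with $\pu_\ind$ is an unneeded detour: the relation $\pu_\ind(y)=A_{\ind,12}(y)$ is already built into the inner parametrix through the Lax pair derivation \eqref{eq:uvwzdefs}, so $\pu_\ind(z)$ appears in $\dot R_\ind^+$ directly with no matching required.
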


Wherever the error terms vanish in the limit, it follows that $\cos(u_N(x,t))\to 2\,\mathrm{sech}^2(T_\kink)-1$ and that $\sin(u_N(x,t))\to-2\sigma\,\mathrm{sech}(T_\kink)\tanh(T_\kink)$, where $\sigma:=\mathrm{sgn}(\pu_\ind(z))$.  If 
we define a function $u(T)$
modulo $2\pi$ by 
\begin{equation}
\label{superluminal-antikink-formula}
\begin{split}
\cos(u(T)) & := 2\,\mathrm{sech}^2(T)-1 \\
\sin(u(T)) & := -2\sigma\,\mathrm{sech}(T)\tanh(T),
\end{split}
\end{equation}
then it is easy to check 
that $u(T)$ is an $X$-independent solution of the unscaled sine-Gordon equation
\begin{equation}
\frac{\partial^2u}{\partial T^2}-\frac{\partial^2u}{\partial X^2}+\sin(u)=0.
\label{eq:unscaledsG}
\end{equation}
\begin{figure}[h]
\begin{center}
\includegraphics{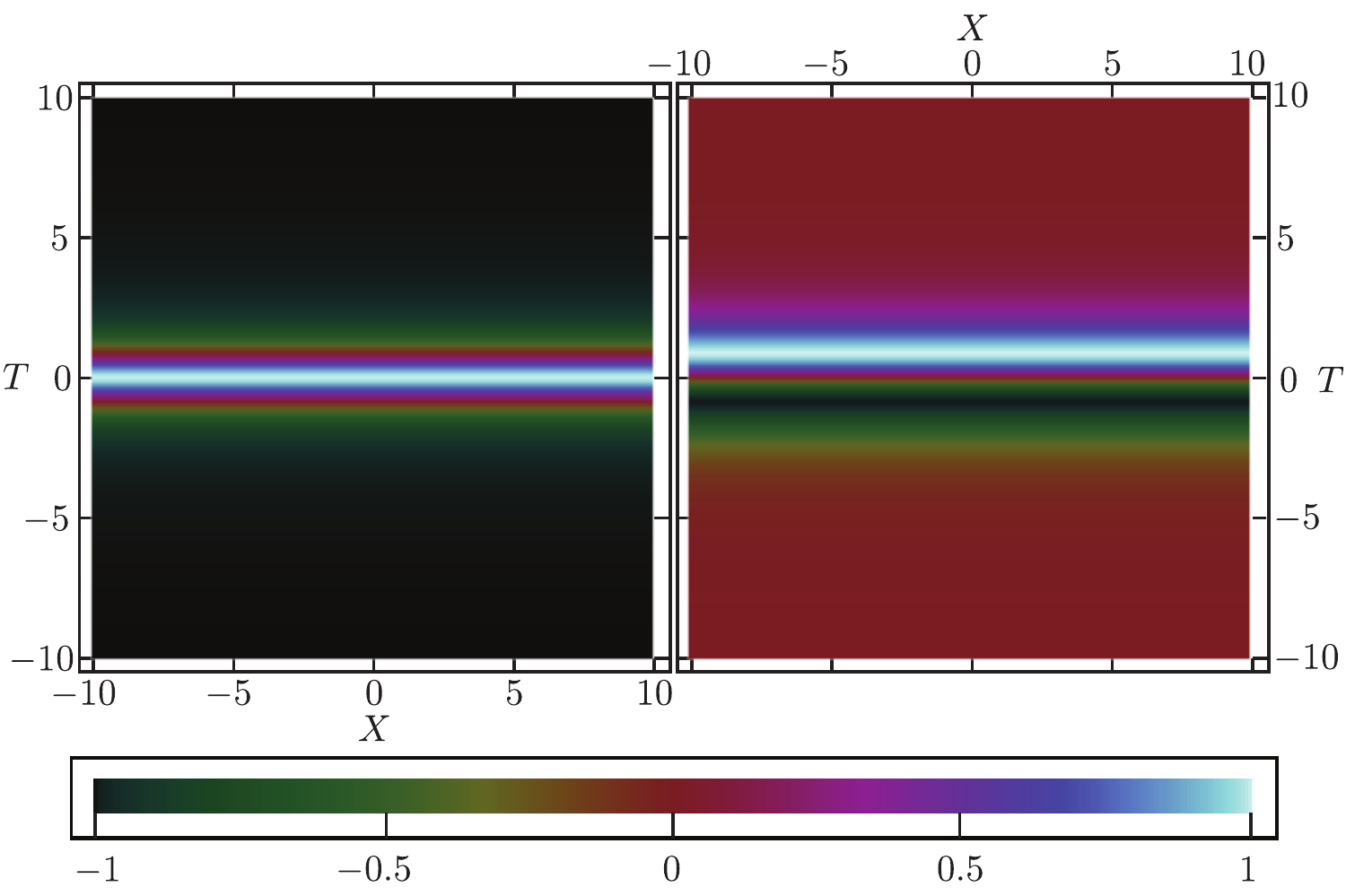}
\end{center}
\caption{\emph{(Superluminal antikink).  The cosine (left) and sine (right) of $u(X,T)=u(T)$ defined by \eqref{superluminal-antikink-formula} with $\sigma=1$.  For a superluminal kink ($\sigma=-1$), the plots are simply reflected through the horizontal $X$-axis. }}
\label{fig:KinkPicture}
\end{figure}
This exact solution represents a superluminal (infinite velocity) kink with unit magnitude \emph{topological charge}  $\sigma$; if $\sigma=1$ (respectively $\sigma=-1$) then $u$ decreases (respectively increases) by $2\pi$ as $T$ varies from $T=-\infty$ to $T=+\infty$.  Sometimes one refers to $u(T)$ as a kink in the case $\sigma=-1$ and as  an \emph{antikink} in the case $\sigma=1$.  Another way to describe $u(T)$
is to say that it is a solution of the simple pendulum equation $u''+\sin(u)=0$ that is homoclinic to
the unstable equilibrium of a stationary inverted pendulum.  See Figure~\ref{fig:KinkPicture}.

The dominant part of the phase variable $T_\kink$ defined by \eqref{eq:kinkTdefine} is indeed a recentering and rescaling by $\epsilon_N$ of $t$; however $T_\kink$ also contains weak $x$-dependence through the 
function $\log|\pu_\ind(z)|$ where $z$ is proportional to $x-x_\mathrm{crit}$.  Therefore, the approximation of $u_N(x,t)$ described in Theorem~\ref{thm:kinkapprox} is not an exact kink, but rather is one that is slowly modulated in the direction parallel to the wavefront.  Indeed, the center of the approximating kink (where the pendulum angle is zero) corresponds to $T_\kink=0$, which is a curve in
the $(x,t)$-plane that is a scaled and translated version of the graph of the function $-\log|\pu_\ind(z)|$.
See Figure~\ref{fig:kinkcurves}.  An additional important observation is that the period of each approximating kink is proportional to $\epsilon_N$, but the strip in which it lives (indexed by the integer $\ind$) corresponds to a time interval of length proportional to $\epsilon_N\log(\epsilon_N^{-1})$; hence the kinks are in reality widely separated from each other, and therefore
in ``most'' of the domain $\Delta x =\bo(\epsilon_N^{2/3})$ and $t=\bo(\epsilon_N\log(\epsilon_N^{-1}))$ covered by Theorem~\ref{thm:kinkapprox} it is fair to say that the approximate formula $u_N(x,t)\approx \pi\pmod{2\pi}$ holds, which in the context of the coupled pendulum interpretation of sine-Gordon means that ``most'' of the pendula are approximately in the unstable inverted configuration.
\begin{figure}[h]
\begin{center}
\includegraphics[height=1.8 in]{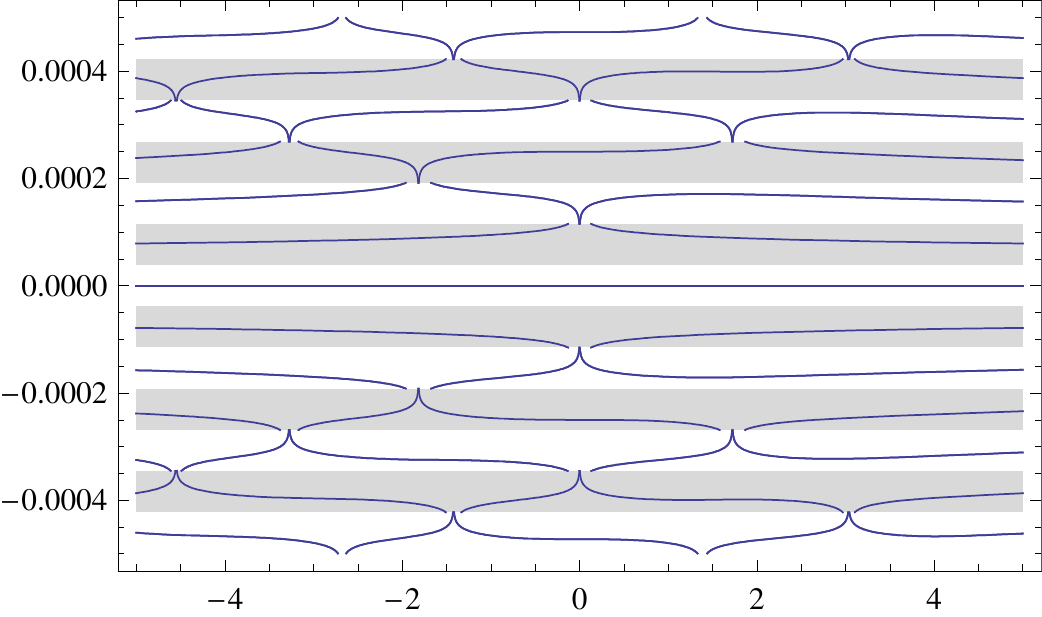}\hfill%
\includegraphics[height=1.8 in]{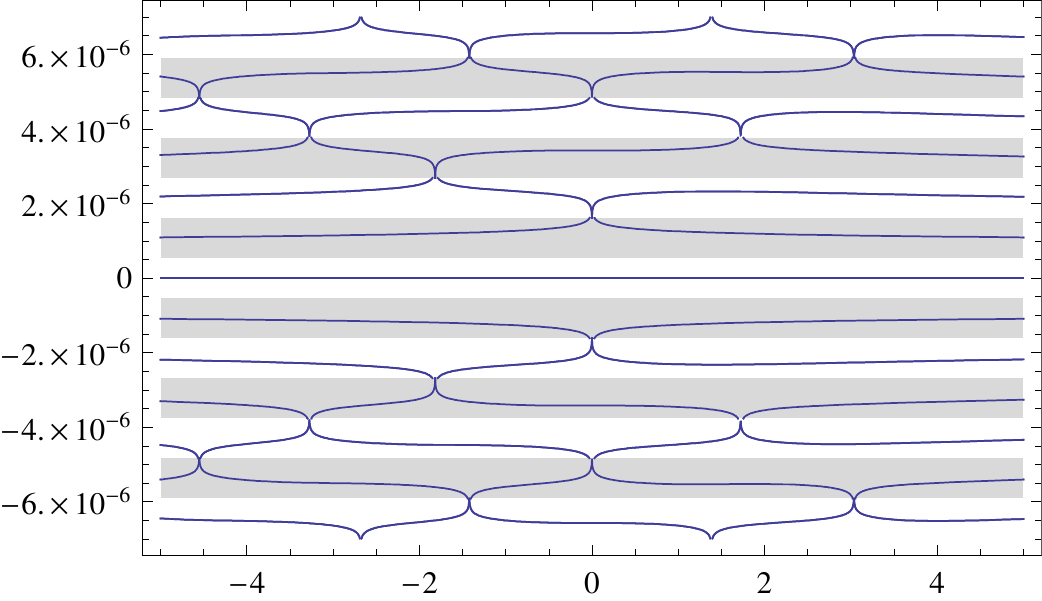}
\end{center}
\caption{\emph{The horizontal strips in the $(z,t)$-plane in each of which Theorem~\ref{thm:kinkapprox} provides asymptotics for $u_N(x,t)$, shown with alternating gray and white shading, with the center $T_\kink=0$ of each corresponding superluminal kink shown with a superimposed curve.  Left:  $\epsilon_N=10^{-5}$.  Right:  $\epsilon_N=10^{-7}$.  In both
cases $4\nu^{1/3}=1$.  For different values of $\nu>0$, the curves can ``march out'' of the confining strips as $|t|$ increases, unless $\epsilon_N$ is small enough.}}
\label{fig:kinkcurves}
\end{figure}

We may now formulate some observations about the kink centered at $T_\kink=0$ that follow
from Proposition~\ref{prop:unvnasympy}.  Indeed, for each
$\ind\in\mathbb{Z}$ the following is true: for $z$ sufficiently negative, $\sigma:=\mathrm{sgn}(\pu_\ind(z))=1$, while for $z$ sufficiently positive, $\sigma=(-1)^\ind$.  This then implies that as $z\to -\infty$
we have a series of timelike antikinks (all of the same topological charge) consistent with nearly synchronous rotational motion of pendula.  Similarly, as $z\to +\infty$
we have alternation between kink and antikink with no net topological charge consistent with nearly synchronous librational motion of pendula.  These facts demonstrate that the asymptotic behavior near criticality matches appropriately with the established asymptotic formulae \cite{BuckinghamMelliptic} valid away from criticality that model $u_N(x,t)$ as a modulated train of superluminal rotational traveling waves to the left of $x_\mathrm{crit}$ and as a modulated train of superluminal librational traveling waves to the right of $x_\mathrm{crit}$.  It also follows from Proposition~\ref{prop:unvnasympy} that if $\ind>0$ then $\pu_\ind(z)$ blows up as $z\to \pm\infty$ while if $\ind<0$ then $\pu_\ind(z)\to 0$
as $z\to\pm\infty$.  This implies that for $\ind>0$ the corresponding kinks 
follow a logarithmic ``frown'' for large $|z|$, while for $\ind<0$ they instead follow a logarithmic ``smile'' for large $|z|$.  These features can be seen in Figure~\ref{fig:kinkcurves}.

When the horizontal strips of the $(x,t)$-plane corresponding to different integral values of $m$ are put together, one sees that the kink-like asymptotics given by Theorem~\ref{thm:kinkapprox} are valid throughout the region where $\Delta x = \bo(\epsilon_N^{2/3})$ and $t = \bo(\epsilon_N\log(\epsilon_N^{-1}))$ with the sole exception of small sub-regions near the top and bottom edges of each strip where the curve $T_\kink=0$ tries to exit the strip.  When the strips are put together
there are obvious mismatches of the curves from neighboring strips in these small sub-regions
as can be seen from the left-hand plot in Figure~\ref{fig:kinkcurves}.  Our final result corrects the
kink asymptotics in these regions and therefore removes the mismatches.

\begin{theorem}[Grazing kink collisional asymptotics]
Fix an integer $\ind$ and let $z_0$ denote any of the (necessarily simple) real zeros of $\pu_{\ind-1}(z)$ (equivalently a simple zero of $\pv_{\ind+1}(z)$).  Given a sufficiently small positive number $\mu>0$,
suppose that $|t-(\tfrac{2}{3}\ind-\tfrac{1}{3})\epsilon_N\log(\epsilon_N^{-1})|\le \tfrac{1}{3}\epsilon_N\log(\epsilon_N^{-1})$, and that 
$|z-z_0|\le\mu \epsilon_N^{1/6} \exp(|t-(\tfrac{2}{3}m-\tfrac{1}{3})\epsilon_N\log(\epsilon_N^{-1})|/(2\epsilon_N))$, where $z$ is defined by \eqref{eq:zDeltax}.
%
Then
\begin{equation}
\begin{split}
\cos(\tfrac{1}{2}u_N(x,t))&= (-1)^{\ind-1}\,\mathrm{sgn}(\pu_{\ind-1}'(z_0))\,\frac{2X_\grazing\mathrm{sech}(T_\grazing)}{1+X_\grazing^2\mathrm{sech}^2(T_\grazing)}+E_{\cos}(x,t;\epsilon_N)\\
\sin(\tfrac{1}{2}u_N(x,t))&= (-1)^{\ind-1}\frac{1-X_\grazing^2\mathrm{sech}^2(T_\grazing)}{1+X_\grazing^2\mathrm{sech}^2(T_\grazing)}+E_{\sin}(x,t;\epsilon_N),
\end{split}
\label{eq:grazingcossinuover2}
\end{equation}
where 
\begin{equation}
X_\grazing:=2\left(\frac{\nu}{\epsilon_N}\right)^{1/3}\left(\frac{\Delta x}{2\nu^{1/3}\epsilon_N^{2/3}}-z_0\right)\quad\text{and}\quad
T_\grazing:=\frac{t}{\epsilon_N}-(2\ind-1)\log\left(\frac{4\nu^{1/3}}{\epsilon_N^{1/3}}\right) +
\log|\pu_{\ind-1}'(z_0)|.
\label{eq:grazingXT}
\end{equation}
The error terms satisfy 
\begin{equation}
\lim_{\epsilon_N\downarrow 0}E_{\cos}(x,t;\epsilon_N)=\lim_{\epsilon_N\downarrow 0}E_{\sin}(x,t;\epsilon_N)=0
\end{equation}
whenever $|z-z_0|\ll 1$ as $\epsilon_N\downarrow 0$ (which excludes only the four extreme corners of the hourglass-shaped region under consideration).  Moreover, we have
\begin{equation}
E_{\cos}(x,t;\epsilon_N)=\bo(\epsilon_N^{1/3})\quad\text{and}\quad
E_{\sin}(x,t;\epsilon_N)=\bo(\epsilon_N^{1/3})
\end{equation}
holding uniformly for $|z-z_0|=\bo(\epsilon_N^{1/3})$ and $|t-(\tfrac{2}{3}m-\tfrac{1}{3})\epsilon_N\log(\epsilon_N^{-1})|\le\tfrac{1}{3}\epsilon_N\log(\epsilon_N^{-1})$.
\label{thm:grazingapprox}
\end{theorem}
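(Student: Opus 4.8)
The plan is to deduce the statement from the main approximation theorem, Theorem~\ref{thm:main}, by ``zooming in'' on the explicit multiscale functions $\dot C(x,t;\epsilon_N)$ and $\dot S(x,t;\epsilon_N)$ of \S\ref{sec:ProofMain} inside the hourglass-shaped set near the (necessarily simple) real zero $z_0$ of $\pu_{\ind-1}$. Since Theorem~\ref{thm:main} already provides $\cos(\tfrac12u_N)=\dot C+\bo(\epsilon_N^{1/6})$ and $\sin(\tfrac12u_N)=\dot S+\bo(\epsilon_N^{1/6})$ uniformly on a region containing that hourglass --- on which $|z-z_0|\le\mu$, hence $\Delta x=\bo(\epsilon_N^{2/3})$ and $t=\bo(\epsilon_N\log(\epsilon_N^{-1}))$ --- the whole task reduces to an asymptotic simplification of $(\dot C,\dot S)$. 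First I would record the closed form of $(\dot C,\dot S)$ valid on the overlap of the two adjacent horizontal strips indexed by $\ind-1$ and $\ind$, which is exactly our hourglass; this form comes from the outer parametrix together with the criticality parametrix, which in \S\ref{sec:ProofMain} is assembled from the rational solutions of the inhomogeneous Painlev\'e-II equations appearing in \eqref{eq:Baecklundplusintro}--\eqref{eq:Baecklundminusintro}, and it depends on $(x,t)$ only through the exponentials $e^{\pm T_\kink^{(j)}}$, with $T_\kink^{(j)}:=t/\epsilon_N-2j\log(4\nu^{1/3}/\epsilon_N^{1/3})+\log|\pu_j(z)|$ for $j\in\{\ind-1,\ind\}$, and through the signs $\mathrm{sgn}(\pu_{\ind-1}(z))$ and $\mathrm{sgn}(\pu_\ind(z))$.

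The heart of the argument is a local analysis at $z=z_0$. By hypothesis $\pu_{\ind-1}$ has a simple zero at $z_0$, and the B\"acklund recursions \eqref{eq:Baecklundplusintro}--\eqref{eq:Baecklundminusintro} then force $\pu_\ind$ to have a simple pole at $z_0$ (equivalently $\pv_{\ind+1}(z_0)=0$), so that near $z_0$ one has $\log|\pu_{\ind-1}(z)|=\log|\pu_{\ind-1}'(z_0)|+\log|z-z_0|+\bo(|z-z_0|)$ and $\log|\pu_\ind(z)|=-\log|z-z_0|+\bo(1)$, while $\pu_{\ind-1}(z)\pu_\ind(z)=-(\pu_{\ind-1}'(z_0))^2(1+\bo(z-z_0))$. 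Passing to the rescaled variables $X_\grazing$, $T_\grazing$ of \eqref{eq:grazingXT}, so that $z-z_0=\tfrac12\nu^{-1/3}\epsilon_N^{1/3}X_\grazing$ and the constants $(2\ind-1)\log(4\nu^{1/3}/\epsilon_N^{1/3})$ and $\log|\pu_{\ind-1}'(z_0)|$ are absorbed, a short computation gives, up to signs and modulo additive errors $\bo(\epsilon_N^{1/3}X_\grazing)$, the identities $T_\kink^{(\ind-1)}=T_\grazing+\log|2X_\grazing|$ and $T_\kink^{(\ind)}=T_\grazing-\log|2X_\grazing|$ (in particular the average $\tfrac12(T_\kink^{(\ind-1)}+T_\kink^{(\ind)})$ reduces to $T_\grazing$); hence all the exponential content of $(\dot C,\dot S)$ collapses to rational functions of $X_\grazing$ and $e^{\pm T_\grazing}$, that is, of the single quantity $X_\grazing\,\mathrm{sech}(T_\grazing)$. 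Simplifying that rational expression, I would recognize the limit as the profile in \eqref{eq:grazingcossinuover2}, using that with $w:=X_\grazing\,\mathrm{sech}(T_\grazing)$ the pair $\bigl(\tfrac{2w}{1+w^2},\tfrac{1-w^2}{1+w^2}\bigr)$ lies on the unit circle and, up to the signs dictated by $\ind$ and $\mathrm{sgn}(\pu_{\ind-1}'(z_0))$, equals $(\cos(\tfrac12 u),\sin(\tfrac12 u))$ for the double-soliton $u=\pi-4\arctan(w)$, which is directly checked to be an exact solution of the unscaled sine-Gordon equation \eqref{eq:unscaledsG} representing a grazing kink-antikink collision. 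The prefactor $(-1)^{\ind-1}\mathrm{sgn}(\pu_{\ind-1}'(z_0))$ emerges from tracking $\mathrm{sgn}(\pu_{\ind-1}(z))$ through its simple zero at $z_0$, as in the observations following Theorem~\ref{thm:kinkapprox}.

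It remains to assemble the error bounds. Throughout the hourglass the $\bo(\epsilon_N^{1/6})$ of Theorem~\ref{thm:main} is present and tends to $0$, and the local reduction contributes a relative error that is $\lo(1)$ as soon as $|z-z_0|\ll 1$; this gives $E_{\cos},E_{\sin}\to 0$ everywhere except at the four extreme corners, where $|z-z_0|$ is allowed to approach an $\epsilon_N$-independent constant and the description instead matches the single-kink asymptotics of Theorem~\ref{thm:kinkapprox} brought in from the neighboring strips. On the thin central band $|z-z_0|=\bo(\epsilon_N^{1/3})$ I would use instead the sharper $\bo(\epsilon_N^{1/3})$ estimate for $(\dot C,\dot S)$ valid over ``most'' of the critical region (recorded just after Theorem~\ref{thm:main}, and applicable in a full neighborhood of a generic fixed zero $z_0$ of $\pu_{\ind-1}$, which lies in the interior of that region), combined with the fact that the change of variables produces only $\bo(\epsilon_N^{1/3})$ relative errors when $|z-z_0|=\bo(\epsilon_N^{1/3})$, to obtain the claimed uniform $\bo(\epsilon_N^{1/3})$. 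As a consistency check, along the boundary arcs shared with strips $\ind-1$ and $\ind$ one verifies that \eqref{eq:grazingcossinuover2} degenerates (up to sign) into $\pm\mathrm{sech}(T_\kink^{(j)})$ for the appropriate $j$, matching Theorem~\ref{thm:kinkapprox}.

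I expect the genuinely delicate step to be the uniformity of the local reduction over the \emph{entire} hourglass. The admissible range $|z-z_0|\le\mu\epsilon_N^{1/6}\exp(|t-(\tfrac23\ind-\tfrac13)\epsilon_N\log(\epsilon_N^{-1})|/(2\epsilon_N))$ grows exponentially away from the waist, so near the top and bottom $|X_\grazing|$ can be as large as $\bo(\epsilon_N^{-1/6}e^{|T_\grazing|/2})$ while $\mathrm{sech}(T_\grazing)$ is as small as $\bo(e^{-|T_\grazing|})$; one has to show that it is exactly the product $X_\grazing\,\mathrm{sech}(T_\grazing)$ that stays under control (and, where $|X_\grazing|$ is large, that the double-soliton profile degenerates smoothly to the $u\equiv\pi$ background), so that the $\bo(\epsilon_N^{1/3}X_\grazing)$ corrections from the Taylor expansions of $\log|\pu_{\ind-1}(z)|$ and $\log|\pu_\ind(z)|$, as well as the neglected exponential cross-terms, remain negligible against this product uniformly up to --- but not including --- the four corners. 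Making this balance quantitative, and splicing the result onto the kink asymptotics of Theorem~\ref{thm:kinkapprox} along the shared boundaries, is the technical crux.
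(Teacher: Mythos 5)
Your overall strategy matches the paper's: derive the grazing formula from the multiscale functions $\dot C,\dot S$ of Theorem~\ref{thm:main} by Taylor expanding the Painlev\'e-II rational data near the simple zero $z_0$ of $\pu_{\ind-1}$, and absorb the result into the rescaled variables $(X_\grazing,T_\grazing)$. The key local facts you invoke --- that $z_0$ is a simple zero of both $\pu_{\ind-1}$ and $\pv_{\ind+1}$ with $\pv_{\ind+1}'(z_0)=-1/\pu_{\ind-1}'(z_0)$, and the exact identities $T_\kink^{(\ind-1)}=T_\grazing+\log|2X_\grazing|+\bo(z-z_0)$ and $T_\kink^{(\ind)}=T_\grazing-\log|2X_\grazing|+\bo(z-z_0)$ --- are all correct and do appear implicitly in the paper's computation.

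Two places where the paper's bookkeeping is cleaner and where your sketch would need tightening. First, the relevant parametrices on the two halves of the hourglass are $\dot{\mathbf Q}_{\ind-1}^+$ (lower half, the upward tooth of $\Omega_{\ind-1}^+$) and $\dot{\mathbf Q}_{\ind+1}^-$ (upper half, the downward tooth of $\Omega_{\ind+1}^-$), so the Painlev\'e inputs to Proposition~\ref{prop:CSapproximate} are the \emph{regular-at-$z_0$} functions $\pu_{\ind-1}$, $B_{\ind-1,12}$ and $\pv_{\ind+1}$, $B_{\ind+1,21}$; your formulation instead tries to reduce through $\pu_\ind$ and the ``strip $\ind$'' kink phase $T_\kink^{(\ind)}$, both of which are singular at $z_0$ (indeed $\log|\pu_\ind(z)|\to+\infty$ there). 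The singularity is only logarithmic and cancels, so your computation still yields the right answer, but it makes the error tracking you are worried about harder than it needs to be; the paper never has to expand a pole. Second, the ``genuinely delicate'' uniformity you flag at the end --- controlling the product $X_\grazing\,\mathrm{sech}(T_\grazing)$ when $|z-z_0|$ is allowed to grow exponentially away from the waist --- is actually taken care of automatically by the region decomposition: inside the tooth of $\Omega_{\ind-1}^+$ one has the uniform bound $e^{T_\grazing}=\bo(1)$, and inside the tooth of $\Omega_{\ind+1}^-$ one has $e^{-T_\grazing}=\bo(1)$, so the $\bo(z-z_0)$ relative error from the Taylor expansion is immediately $\lo(1)$ off the four corners and $\bo(\epsilon_N^{1/3})$ on the central band without any separate argument. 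You do not need the consistency check against Theorem~\ref{thm:kinkapprox} along the strip boundaries; Corollaries~\ref{cor:VQerrorplus}--\ref{cor:VQerrorminus} together with Proposition~\ref{prop:Error} already give the $\bo(\epsilon_N^{1/6})$ ceiling everywhere in the hourglass, and Proposition~\ref{prop:VQerrorplus}/\ref{prop:VQerrorminus} refine it to $\bo(\epsilon_N^{1/3})$ when $|z-z_0|=\bo(\epsilon_N^{1/3})$.
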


Wherever the error terms vanish in the limit, it follows from \eqref{eq:grazingcossinuover2} that
$\cos(u_N(x,t))\to\cos(u(X_\grazing,T_\grazing))$ and $\sin(u_N(x,t))\to\sin(u(X_\grazing,T_\grazing))$,
where with $\kappa:=\mathrm{sgn}(\pu_{\ind-1}'(z_0))$, a function $u(X,T)$ is defined modulo $2\pi$ by
\begin{equation}
\label{grazing-kink-formula}
\begin{split}
\cos(u(X,T))&:=\frac{8X^2\mathrm{sech}^2(T)}{(1+X^2\mathrm{sech}^2(T))^2} -1\\
\sin(u(X,T))&:=4\kappa X\mathrm{sech}(T)\frac{1-X^2\mathrm{sech}^2(T)}{(1+X^2\mathrm{sech}^2(T))^2}.
\end{split}
\end{equation}
Then it is again easy to confirm that $u(X,T)$ is an exact solution of the unscaled sine-Gordon
equation \eqref{eq:unscaledsG}. 
Plots of $\cos(u(X,T))$ and $\sin(u(X,T))$ are shown in Figure~\ref{fig:GrazingCollision}.
\begin{figure}[h]
\begin{center}
\includegraphics{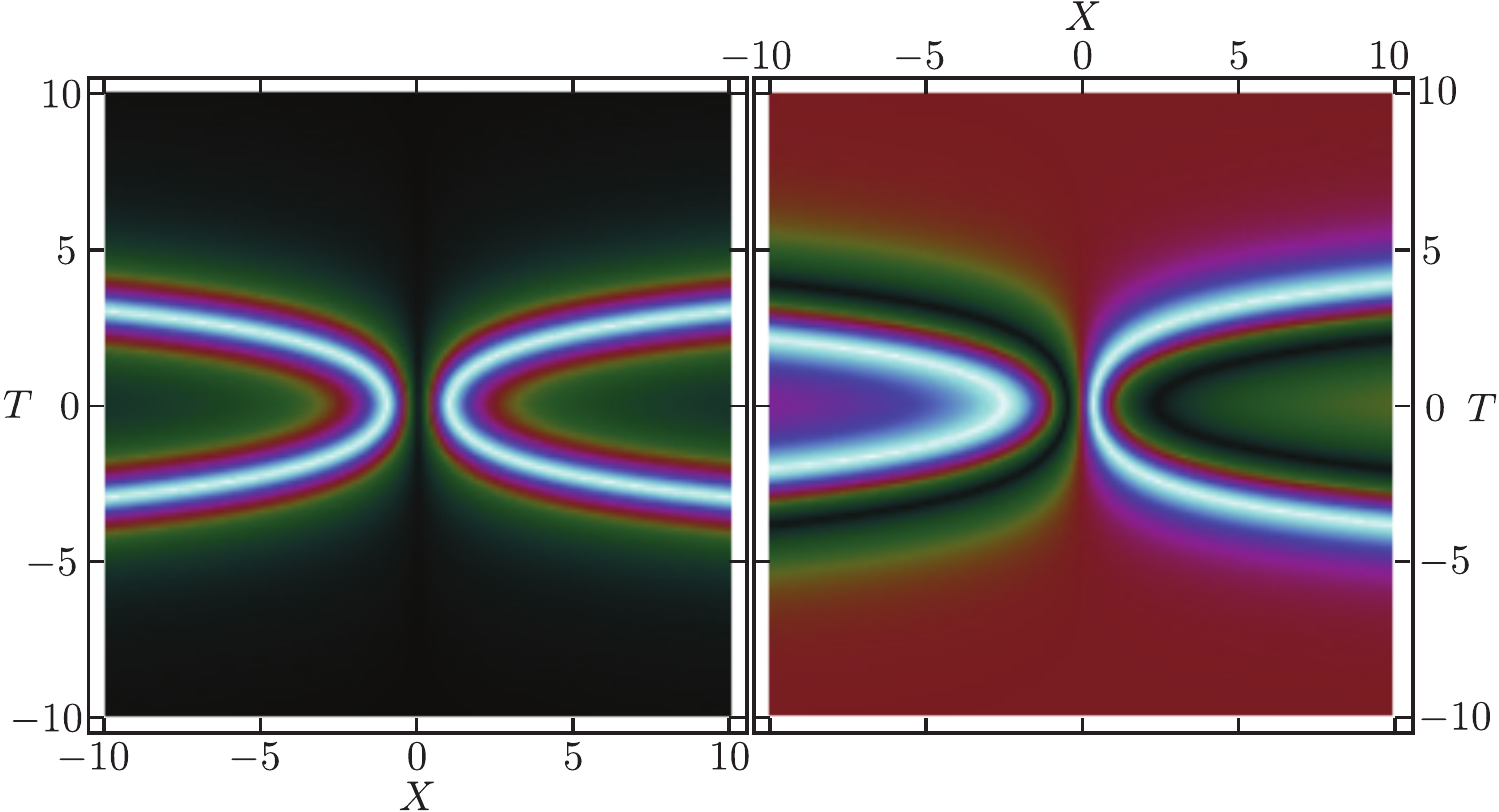}
\end{center}
\caption{\emph{(Grazing collision of superluminal kinks).  The cosine (left) and sine (right) of $u(X,T)$ defined by \eqref{grazing-kink-formula} with $\kappa=1$.  For $\kappa=-1$ the plots are simply reflected through the vertical $T$-axis. }}
\label{fig:GrazingCollision}
\end{figure}
This is a particular solution of the sine-Gordon equation that corresponds to boundary conditions of the form
$u\to \pi\pmod {2\pi}$ as $T\to\pm\infty$.  In the proper version of scattering theory corresponding to 
these boundary conditions, the solution at hand is a reflectionless potential associated to a \emph{double
eigenvalue}.  In the context of the Riemann-Hilbert problem of inverse scattering such an object is
encoded as a double pole of the matrix unknown.  Such solutions were noted in the earliest days of inverse scattering by Zakharov and Shabat \cite{ZS} in their study of the focusing nonlinear Schr\"odinger (NLS) equation.  They pointed out
that such solutions describe \emph{grazing collisions} of solitons.  The trajectories of the solitons
emerging from the interaction region are not asymptotically straight lines, but rather are logarithmic,
with relative velocities of the solitons tending to zero.  In the context of the sine-Gordon equation we have instead a grazing collision of superluminal kinks.  These structures serve to smooth out
the mismatches of curves shown in Figure~\ref{fig:kinkcurves} at horizontal strip boundaries.  In the language of matched asymptotic expansions, they function as internal transition layers of ``corner'' type.

The regions of validity of the asymptotic formulae given in Theorems~\ref{thm:kinkapprox}
and \ref{thm:grazingapprox} are compared in Figure~\ref{fig:RegionsSchematic}, which illustrates the complementary nature of these two results.
\begin{figure}[h]
\begin{center}
\includegraphics{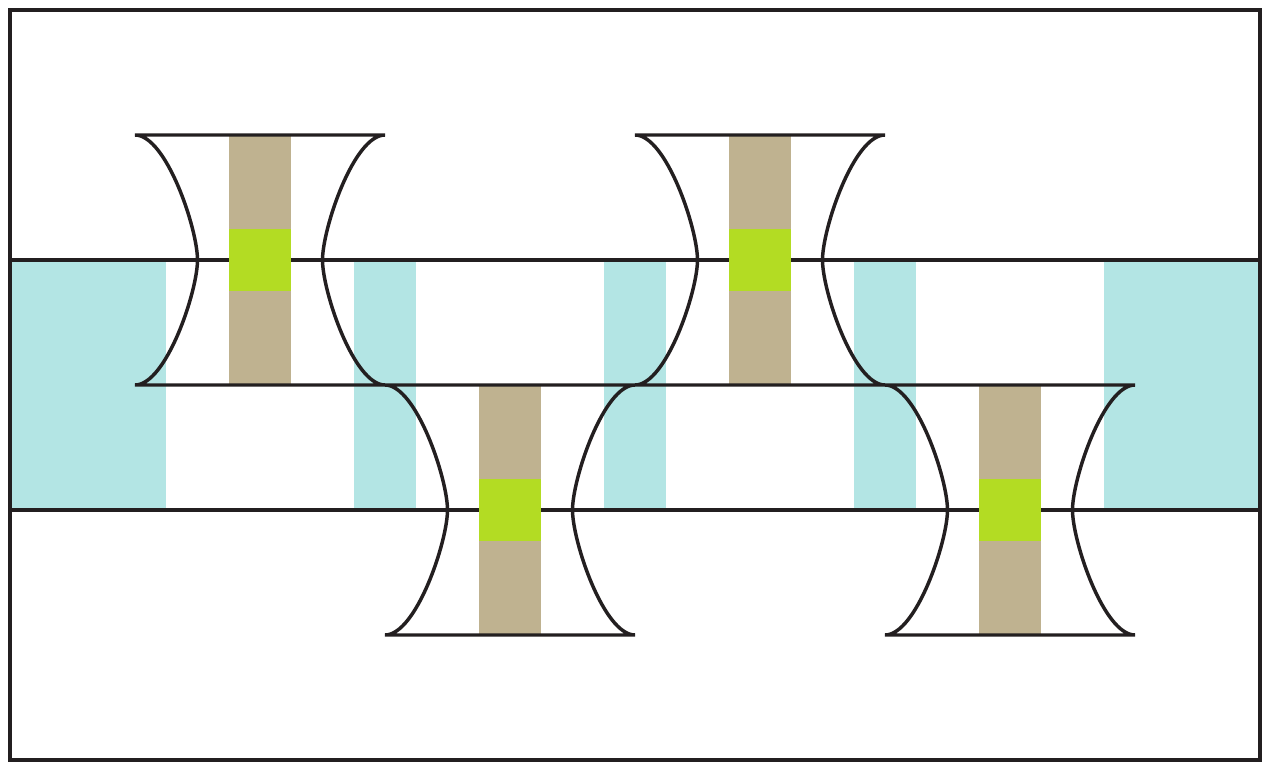}
\end{center}
\caption{\emph{Theorem~\ref{thm:kinkapprox} provides asymptotics in horizontal strips in the 
$(z,t)$-plane with the exception of the small rectangles shown in green, and with optimal accuracy of $\bo(\epsilon_N^{1/3})$ in the subregions shown in blue.  Theorem~\ref{thm:grazingapprox} provides asymptotics in hourglass-shaped regions with the exception of where they overlap the blue regions (where the error estimate of Theorem~\ref{thm:kinkapprox} is optimally small) and with optimal accuracy of $\bo(\epsilon_N^{1/3})$ in the included green regions (where the error estimate of Theorem~\ref{thm:kinkapprox} fails to be small) and brown regions.}}
\label{fig:RegionsSchematic}
\end{figure}
It is clear from this figure that Theorems~\ref{thm:kinkapprox} and \ref{thm:grazingapprox} both provide simple asymptotic formulae for the
same fields in overlapping regions.  In the overlap domains, formulae of kink type provided by
Theorem~\ref{thm:kinkapprox}, and of grazing kink collision type provided by Theorem~\ref{thm:grazingapprox} are \emph{simultaneously valid}.  This is actually a consequence of
Theorem~\ref{thm:main}, but it may also be checked directly, by writing both types of formulae in
terms of common spatiotemporal independent variables.  

One way to characterize the asymptotic formulae giving the universal form of the dispersive
breakup of the simple pendulum separatrix under the sine-Gordon equation is to say that the wave pattern consists of waves of elementary ``solitonic'' forms that are spatiotemporally arranged according to solutions of certain nonlinear differential equations of Painlev\'e type.  To our knowledge, this type of phenomenon was first observed quite recently in a paper by Bertola and Tovbis \cite{BertolaT10cusp} in which all of the local maxima of the modulus of the solution to the focusing NLS equation in the semiclassical limit near the onset of oscillatory behavior (the point of so-called elliptic umbilic catastrophe of the approximating elliptic quasilinear Whitham modulational system) are individually modeled by the same exact solution of the focusing NLS equation (the rational or Peregrine breather solution) but the locations of the maxima are far more transcendental, being determined by the location of poles in the complex plane of the so-called \emph{tritronqu\'ee} solution of the Painlev\'e-I equation.  In the current context, the role of the tritronqu\'ee solution is played by the rational functions $\{\pu_\ind\}_{\ind\in\mathbb{Z}}$ solving the Painlev\'e-II system, and interestingly, \emph{we require an infinite number of different Painlev\'e functions to describe the full wave pattern}.

\section{Choice of $g(w)$}
\label{sec:frakg}
The use of a so-called ``$g$-function'' to precondition a matrix Riemann-Hilbert problem for subsequent asymptotic analysis by the Deift-Zhou steepest-descent method is by now a standard tool, having first been introduced by Deift, Venakides, and Zhou \cite{DeiftVZ97} in their analysis of the Korteweg-de Vries (KdV) equation in the small-dispersion limit.
Recall that the scalar function $g$ appearing in the formulation of Riemann-Hilbert Problem~\ref{rhp:basic} is subject to the basic requirements of
\begin{itemize}
\item Analyticity:  $g$ is analytic in the domain $\mathbb{C}\setminus (P_\infty\cup\mathbb{R}_+)$,
\item Boundary behavior:  $g$ takes well-defined boundary values on $P_\infty\cup\mathbb{R}_+$
in the classical sense, and the sum of the boundary values vanishes on $\mathbb{R}_+$, and,
\item Normalization:  $g(0)=g(\infty)=0$.
\end{itemize}
We can use any such function to formulate the basic Riemann-Hilbert problem for $\mathbf{N}(w)=\mathbf{N}_{N}^g(w;x,t)$, but it is well-known that to study the limit $N\to\infty$ it is important that  $g$ be chosen appropriately.  In \cite{BuckinghamMelliptic} it is explained how $g$ should be chosen
given $(x,t)\in\mathbb{R}^2$ to calculate the limit $N\to\infty$ with $(x,t)$ fixed.  The chosen function $g(w)=g(w;x,t)$ has the property that its boundary values satisfy a kind of equilibrium condition (an analogue of \eqref{eq:frakgcircle} below) on a contour (called $\beta$ in \cite{BuckinghamMelliptic}) whose topology changes
at criticality:  in the simplest (genus one) case, near $(x,t)=(x_\mathrm{crit},0)$ the equilibrium contour is either a nearly circular
arc through $w=1$ with complex-conjugate endpoints near $w=-1$ (this corresponds to modulated superluminal librational waves) or the union of a nearly circular closed contour with a transecting small interval of the real axis with both endpoints near $w=-1$ (this corresponds to modulated superluminal rotational waves).  Exactly at criticality the two endpoints coalesce at $w=-1$ and the equilibrium contour becomes the unit circle.  It turns out that proximity of endpoints of this contour is an obstruction to proving
uniform asymptotics for $(x,t)$ near, but not exactly at, criticality.

Here, since we want to allow $(x,t)$ to approach the critical point $(x_\mathrm{crit},0)$ at some
rate depending on $N$, we need to use a different function than $g(w)=g(w;x,t)$ defined in \cite{BuckinghamMelliptic}.  We will replace $g(w)$ with $\mathfrak{g}(w)=\mathfrak{g}(w;x,t)$
whose construction and properties are described below.  It is important that while
for general $x$ and $t$, the functions $g(w)$ and $\mathfrak{g}(w)$
are different,  at criticality they
coincide (they satisfy the same conditions, which can be shown to
uniquely determine the solution).

The function $\mathfrak{g}(w)$ is required to satisfy
the following conditions:
\begin{itemize}
\item $\mathfrak{g}(w)$ is analytic if $|w|\neq 1$ and $w\not\in\mathbb{R}_+$,
and it takes continuous and bounded boundary values on the unit circle $|w|=1$
and on the positive half-line $\mathbb{R}_+$.
\item $\mathfrak{g}(w)$ obeys the following jump conditions:
\begin{equation}
\mathfrak{g}_+(\xi)+\mathfrak{g}_-(\xi)=0,\quad \xi\in\mathbb{R}_+.
\end{equation}
\begin{equation}
2iQ(\xi;x,t)+\overline{L}(\xi)-\mathfrak{g}_+(\xi)-\mathfrak{g}_-(\xi)=0,\quad
|\xi|=1.
\label{eq:frakgcircle}
\end{equation}
\item $\mathfrak{g}(w)$ has the following values:
\begin{equation}
\mathfrak{g}(0)=0.
\end{equation}
\begin{equation}
\lim_{w\to\infty}\mathfrak{g}(w)=0.
\end{equation}
\end{itemize}
To obtain a formula for $\mathfrak{g}(w)$, we first write $\mathfrak{g}$ in the form
\begin{equation}
\mathfrak{g}(w)=\frac{1}{2}L(w)+\mathfrak{h}(w),
\label{eq:gh}
\end{equation}
where the function $L$ is defined by \eqref{eq:Ldefine}.  Therefore the above
conditions on $\mathfrak{g}$ are translated into equivalent
conditions on $\mathfrak{h}$:
\begin{itemize}
\item $\mathfrak{h}(w)$ is analytic if $|w|\neq 1$, $w\not\in\mathbb{R}_+$,
and $w\not\in [\mathfrak{a},\mathfrak{b}]$, and it takes continuous boundary values on the
unit circle $|w|=1$, on the positive half-line $\mathbb{R}_+$, and 
on the interval $[\mathfrak{a},\mathfrak{b}]$.
\item $\mathfrak{h}(w)$ obeys the following jump conditions:
\begin{equation}
\mathfrak{h}_+(\xi)+\mathfrak{h}_-(\xi)=0,\quad \xi\in\mathbb{R}_+.
\end{equation}
\begin{equation}
2iQ(\xi;x,t)-\mathfrak{h}_+(\xi)-\mathfrak{h}_-(\xi)=0,\quad |\xi|=1.
\end{equation}
\begin{equation}
\mathfrak{h}_+(\xi)-\mathfrak{h}_-(\xi)=-\tfrac{1}{2}(L_+(\xi)-L_-(\xi))=-i\theta_0(\xi),\quad w\in [\mathfrak{a},\mathfrak{b}].
\end{equation}
In the latter condition, the two complementary segments $(\mathfrak{a},-1)$ and $(-1,\mathfrak{b})$ are taken to be oppositely oriented toward $\xi=-1$.
\item $\mathfrak{h}(w)$ has the following values:
\begin{equation}
\mathfrak{h}(0)=0.
\end{equation}
\begin{equation}
\lim_{w\to\infty}\mathfrak{h}(w)=0.
\end{equation}
\end{itemize}
Finally, we write $\mathfrak{h}(w)$ in terms of a new equivalent unknown
$\mathfrak{m}(w)$ by the piecewise substitution:
\begin{equation}
\mathfrak{h}(w):=\begin{cases}(-w)^{-1/2}\mathfrak{m}(w),
&\quad |w|>1\\
 -(-w)^{-1/2}\mathfrak{m}(w),&\quad |w|<1.
\end{cases}
\label{eq:hm}
\end{equation}
The conditions satisfied by $\mathfrak{m}(w)$ are then the following:
\begin{itemize}
\item $\mathfrak{m}(w)$ is analytic if $|w|\neq 1$ and $w\not\in [\mathfrak{a},\mathfrak{b}]$,
and it takes continuous boundary values on the unit circle $|w|=1$ and
the interval $[\mathfrak{a},\mathfrak{b}]$.
\item $\mathfrak{m}(w)$ obeys the following jump conditions:
\begin{equation}
\mathfrak{m}_+(\xi)-\mathfrak{m}_-(\xi)=-2i(-\xi)^{1/2}Q(\xi;x,t),\quad |\xi|=1,
\label{eq:mjumpcircle}
\end{equation}
where the orientation of the circle is in the positive (counterclockwise)
sense, and
\begin{equation}
\mathfrak{m}_+(\xi)-\mathfrak{m}_-(\xi)=-i(-\xi)^{1/2}\theta_0(\xi),\quad
\xi\in [\mathfrak{a},\mathfrak{b}],
\end{equation}
where now the interval $[\mathfrak{a},\mathfrak{b}]$ is taken to be oriented left-to-right.
Note that $(-\xi)^{1/2}>0$ for $\mathfrak{a}<\xi<\mathfrak{b}<0$, and that the right-hand side of \eqref{eq:mjumpcircle} is a single-valued function on 
the circle:
\begin{equation}
-2i(-\xi)^{1/2}Q(\xi;x,t)=\frac{x}{2}(1-\xi)-\frac{t}{2}(1+\xi),\quad |\xi|=1.
\end{equation}
\item $\mathfrak{m}(w)$ is uniformly bounded, and has the following value:
\begin{equation}
\mathfrak{m}(0)=0.
\label{eq:mathfrakmzero}
\end{equation}
\end{itemize}
We may now express $\mathfrak{m}(w)$ in terms of a Cauchy integral
using the Plemelj formula:
\begin{equation}
\mathfrak{m}(w)=\mathfrak{m}_0 + \frac{1}{2\pi i}\oint_{|\xi|=1}
\frac{-2i(-\xi)^{1/2}Q(\xi;x,t)}{\xi-w}\,d\xi +\frac{1}{2\pi i}
\int_{\mathfrak{a}}^{\mathfrak{b}}\frac{-i(-\xi)^{1/2}\theta_0(\xi)}{\xi-w}\,d\xi,
\end{equation}
where the constant $\mathfrak{m}_0$ is the most general entire function
we can add to the Cauchy integrals consistent with the uniform boundedness
of $\mathfrak{m}(w)$.  The integral over the positively-oriented unit
circle may be evaluated in closed form:
\begin{equation}
\mathfrak{m}(w)=\mathfrak{m}_0 +\frac{1}{2\pi i}
\int_{\mathfrak{a}}^{\mathfrak{b}}\frac{-i(-\xi)^{1/2}\theta_0(\xi)}{\xi-w}\,d\xi
+\begin{cases}\displaystyle \frac{x}{2}(1-w)-\frac{t}{2}(1+w),&\quad |w|<1\\\\
0,&\quad |w|>1.
\end{cases}
\end{equation}
We choose the value of the constant $\mathfrak{m}_0$ to enforce the condition 
\eqref{eq:mathfrakmzero}:
\begin{equation}
\mathfrak{m}_0=\frac{t-x}{2} -\frac{1}{2\pi}\int_{\mathfrak{a}}^{\mathfrak{b}}
\frac{\theta_0(\xi)}{(-\xi)^{1/2}}\,d\xi.
\end{equation}
It follows that 
\begin{equation}
\mathfrak{m}(w)=\frac{w}{2\pi}\int_{\mathfrak{a}}^{\mathfrak{b}}\frac{\theta_0(\xi)\,d\xi}
{(-\xi)^{1/2}(\xi-w)} +
\begin{cases}\displaystyle -\frac{x+t}{2}w,&\quad |w|<1\\\\
\displaystyle\frac{t-x}{2},&\quad |w|>1.
\end{cases}
\end{equation}
By writing $\mathfrak{g}$ in terms of $\mathfrak{h}$ (by \eqref{eq:gh}) and then writing $\mathfrak{h}$ in terms of $\mathfrak{m}$ (by \eqref{eq:hm}), this completes the construction of $\mathfrak{g}(w)$.

\section{
Steepest Descent, the Outer 
Model Problem, and its Solution}  

According to \eqref{eq:frakgcircle}, with $\mathbf{N}(w)=\mathbf{N}^\mathfrak{g}_N(w;x,t)$
and with $\mathfrak{g}(w)=\mathfrak{g}(w;x,t)$ defined as in \S\ref{sec:frakg}, a condition of ``equilibrium type'' holds on
the unit circle $|w|=1$,
regardless of the values of $x$ and $t$.  The next step is to ``open a
lens'' about the whole circle.  The lens consists of four disjoint open sets separated by the unit circle and the interval $I\subset\mathbb{R}_+$ as shown with shading in Figure~\ref{fig:critOcontour}.  The union of the two components lying on the left (respectively right) of the unit circle with its orientation will be denoted $\Lambda^+$ (respectively $\Lambda^-$).   We
now define a new unknown $\mathbf{O}(w)$ equivalent to $\mathbf{N}(w)$ by the substitution
\begin{equation}
\mathbf{O}(w):=\begin{cases}
\displaystyle \mathbf{N}(w)T_N(w)^{-\sigma_3/2}\begin{bmatrix}1 & \mp ie^{-[2iQ(w;x,t)+L(w)\mp i\theta_0(w)-2\mathfrak{g}(w)]/\epsilon_N}\\ 0 &1\end{bmatrix},&\quad
w\in\Lambda^\pm \\\\
\mathbf{N}(w),&\quad w\in\mathbb{C}\setminus(\Sigma_{\mathbf{N}}\cup\Lambda^+\cup\Lambda^-).
\end{cases}
\label{eq:OintermsofN}
\end{equation}
Here, the square root $T_N(w)^{1/2}$ is well-defined as the principal branch, since
the uniform approximation $T_N(w)\approx 1$ holds when $N$ is large.
(The definition \eqref{eq:OintermsofN} coincides with that of $\mathbf{O}(w)$ in terms of $\mathbf{N}(w)$ given
in \cite{BuckinghamMelliptic} in the case that $\Delta=\emptyset$ and with $g$ replaced
by $\mathfrak{g}$.)
The jump contour for $\mathbf{O}(w)$
is the same as that for $\mathbf{N}(w)$ but augmented by four arcs emanating
from $w=-1$ representing the ``outer'' boundaries of the lens halves $\Lambda^\pm$ as shown
in Figure~\ref{fig:critOcontour}.
\begin{figure}[h]
\begin{center}
\includegraphics{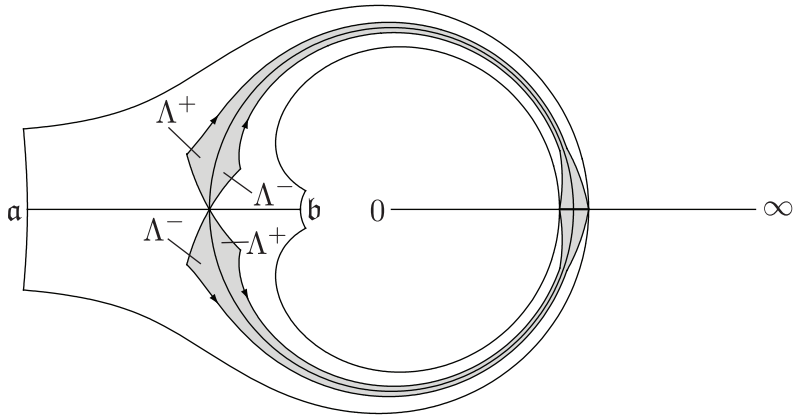}
\end{center}
\caption{\emph{The contour of discontinuity for the sectionally analytic
function $\mathbf{O}(w)$, with the lens halves $\Lambda^\pm$ indicated with shading.
Arcs of the jump contour also present in the jump contour
for $\mathbf{N}(w)$ (see Figure~\ref{fig:critNcontour}) retain that
orientation, and the lens boundaries are oriented as indicated.}}
\label{fig:critOcontour}
\end{figure}

It is a consequence of the equilibrium condition \eqref{eq:frakgcircle} satisfied by the
boundary values taken by $\mathfrak{g}(w)$ on the 
unit circle that
$\mathbf{O}(w)$ satisfies the piecewise-constant jump conditions
\begin{equation}
\mathbf{O}_+(\xi)=\mathbf{O}_-(\xi)(\mp i\sigma_1),\quad |\xi|=1,\quad
\pm\Im\{\xi\}>0.
\end{equation}
Since the equilibrium contour is a closed curve, we may 
remove these discontinuities from the problem by making another explicit 
transformation:
\begin{equation}
\mathbf{P}(w):=\begin{cases}
\mathbf{O}(w),&\quad |w|>1\\
\mathbf{O}(w)(-i\sigma_1),&
\quad \text{$|w|<1$ and $\Im\{w\}>0$}\\
\mathbf{O}(w)i\sigma_1,&
\quad\text{$|w|<1$ and $\Im\{w\}<0$}.
\end{cases}
\end{equation}
Unlike $\mathbf{O}(w)$, the matrix function $\mathbf{P}(w)$  extends continuously to the unit circle, but 
the jump conditions it satisfies within the unit disk where it differs from $\mathbf{O}(w)$ are altered somewhat from 
those of $\mathbf{O}(w)$, including a new jump of the simple form
$\mathbf{P}_+(\xi)=-\mathbf{P}_-(\xi)$ on the segment $\mathfrak{b}<\xi<0$.
The contour of discontinuity for $\mathbf{P}(w)$ is illustrated in 
Figure~\ref{fig:critPcontour}.
\begin{figure}[h]
\begin{center}
\includegraphics{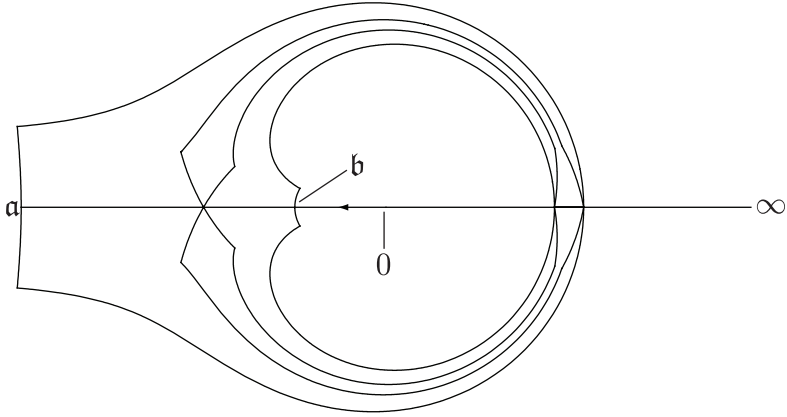}
\end{center}
\caption{\emph{The contour of discontinuity for the sectionally
analytic function 
$\mathbf{P}(w)$.  Contour arcs also contained in the jump contour
for $\mathbf{O}(w)$ (see Figures~\ref{fig:critNcontour} and 
\ref{fig:critOcontour}) retain that
orientation, and otherwise (that is, in the interval $(\mathfrak{b},0)$) the orientation is as indicated.}}
\label{fig:critPcontour}
\end{figure}

The jump discontinuities of $\mathbf{P}(w)$ will turn out to be
negligible except in a small neighborhood of $w=-1$ and along the ray
$w>-1$, and $\mathbf{P}(w)$ is a matrix tending to the identity
as $w\to\infty$.  These considerations lead us to propose a model 
Riemann-Hilbert problem
whose solution we expect to approximate $\mathbf{P}(w)$ away
from $w=-1$:
\begin{rhp}[Outer model problem near criticality]
Let $w_*\approx -1$ be a real parameter.  Find a $2\times 2$ matrix
$\dot{\mathbf{P}}^\mathrm{out}(w)$ with the following properties.
\begin{itemize}
\item[]\textbf{Analyticity:} $\dot{\mathbf{P}}^\mathrm{out}(w)$ is an analytic
  function of $w$ for $w\in\mathbb{C}\setminus 
[w_*,+\infty)$, H\"older
continuous up to the jump interval $[w_*,+\infty)$
with the exception of an arbitrarily
small neighborhood of the point $w=w_*$.  In a neighborhood of
$w=w_*$, the elements of $\dot{\mathbf{P}}^\mathrm{out}(w)$ are
bounded by an unspecified power of $|w-w_*|$.
\item[]\textbf{Jump condition:} The boundary values taken by
  $\dot{\mathbf{P}}^\mathrm{out}(w)$ along $(w_*,0)$ and 
  $\mathbb{R}_+$ 
  satisfy the following jump conditions:
\begin{equation}
\dot{\mathbf{P}}^\mathrm{out}_+(\xi)=-\dot{\mathbf{P}}^\mathrm{out}_-(\xi),
\quad w_*<\xi<0,
\end{equation}
and
\begin{equation}
\dot{\mathbf{P}}^\mathrm{out}_+(\xi)=
\sigma_2\dot{\mathbf{P}}^\mathrm{out}_-(\xi)
\sigma_2,\quad\xi\in\mathbb{R}_+.
\end{equation}
As these jump conditions are both involutive,  orientation of
the jump contours is irrelevant.
\item[]\textbf{Normalization:}  The following normalization conditions hold:
\begin{equation}
\lim_{w\to\infty}\dot{\mathbf{P}}^\mathrm{out}(w)=\mathbb{I}\quad\text{and}\quad
\det(\dot{\mathbf{P}}^\mathrm{out}(w))\equiv 1.
\end{equation}
\end{itemize}
\label{rhp:wOdotcrit}
\end{rhp}

There are many solutions of this Riemann-Hilbert problem, a fact that can be traced partly to the unspecified power-law rate of growth admitted as $w\to w_*$.  The complete variety
of solutions is not directly relevant for us; we will simply select a family of particular solutions
in the class of diagonal matrices.  Indeed, for each $\ind\in\mathbb{Z}$, we
have the solution
\begin{equation}
\dot{\mathbf{P}}^\mathrm{out}(w)=\dot{\mathbf{P}}_\ind^\mathrm{out}(w):=
\left[(-w)^{1/2}+(-w_*)^{1/2}\right]^{(1-2\ind)\sigma_3}(w_*-w)^{(\ind-1/2)\sigma_3}.
\label{eq:PoutDS}
\end{equation}
Note that if $\ind$ is held fixed, then
$\dot{\mathbf{P}}^\mathrm{out}_\ind(w)$ and its inverse are bounded when
$w$ is bounded away from $w_*$.  Later (see \eqref{eq:wstardefine}), 
we will let $w_*$ depend
weakly on $x$ and $t$ in such a way that $w_*\to -1$ as $t\to 0$ and
$x\to x_\mathrm{crit}$.  Clearly, any bound for
$\dot{\mathbf{P}}_\ind^\mathrm{out}(w)$ or its inverse valid for $w$
bounded away from $-1$ will hold uniformly with respect to $x$ and $t$ 
near criticality,
and of course
$\dot{\mathbf{P}}_\ind^\mathrm{out}(w)$ is independent of $\epsilon_N$.

\section{Inner Model Problem Valid near $w=-1$}
\subsection{Exact jump matrices for $\mathbf{P}(w)$ near $w=-1$.}
Let $U$ be a fixed neighborhood of $w=-1$.  By straightforward substitutions,
we may assume without loss of generality that within $U$, the jump contour for 
$\mathbf{P}(w)$ consists of the real axis together with four
arcs (the lens boundaries) meeting at some real point $w=w_*\in U\cap\mathbb{R}$
tending to $-1$ as criticality is approached; the way $w_*$ is determined
as a function of $x$ and $t$ will be explained
later (the resulting formula being \eqref{eq:wstardefine}).
We suppose further that the lens boundaries
lie along certain curves (also to be specified later) tangent at $w=w_*$ to the 
straight lines $\arg(w-w_*)=\pm\pi/3$ and $\arg(w-w_*)=\pm 2\pi/3$.
Some calculations show that the exact jump conditions for 
$\mathbf{P}(w)$ within $U$ can be expressed only in terms of the
analytic function $T_N(w)\approx 1$ and another analytic function 
$k(w)$, which is the analytic continuation from $w<\min\{w_*,-1\}$ of the function
\begin{equation}
k(w)=k(w;x,t):= 2iQ(w;x,t)+\overline{L}(w)-2\mathfrak{g}(w;x,t),\quad w<-1.
\label{eq:kdefinecrit}
\end{equation}
The jump matrix $\mathbf{V}_{\mathbf{P}}(\xi)$ for which along each
of six contour arcs meeting at $w=w_*$ 
we have 
$\mathbf{P}_+(\xi)=\mathbf{P}_-(\xi)\mathbf{V}_{\mathbf{P}}(\xi)$
is illustrated in Figure~\ref{fig:PtildeLocalJumps}.
\begin{figure}[h]
\begin{center}
\includegraphics{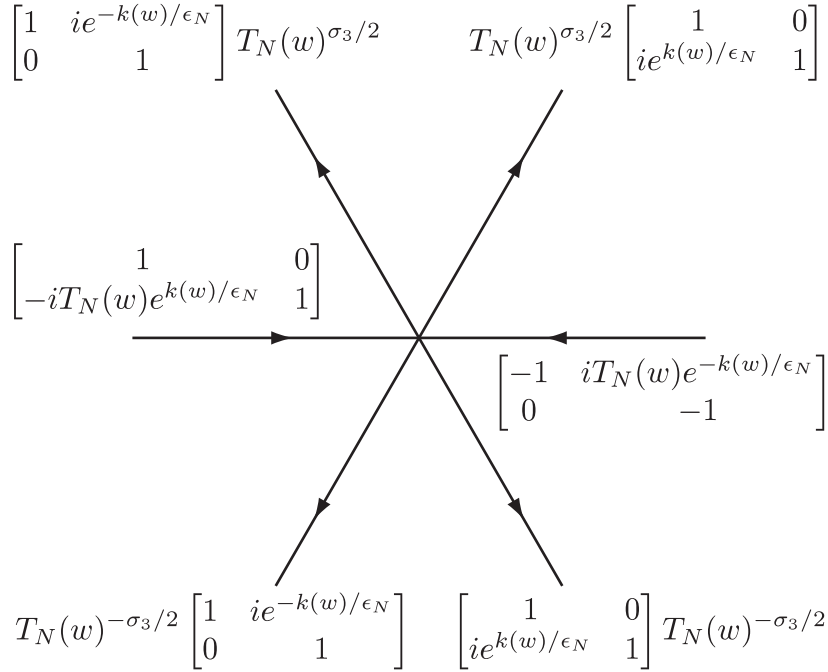}
\end{center}
\caption{\emph{The jump matrix for $\mathbf{P}(w)$ near
    $w=w_*\approx -1$.  The four non-real arcs of the jump contour are in 
general only
    approximately linear near the intersection point $w=w_*$.}}
\label{fig:PtildeLocalJumps}
\end{figure}
The function $T_N(w)$ can easily be removed from the jump conditions by
making the following near-identity transformation (locally, in $U$):
\begin{equation}
\tilde{\mathbf{P}}(w):=\mathbf{P}(w)d_N(w)^{\sigma_3}
\label{eq:tildeQdefine}
\end{equation}
where $d_N(w)$ is the piecewise-analytic function given by
\begin{equation}
d_N(w):=
\begin{cases}
T_N(w)^{-1/2},&\quad \text{in the region tangent to the sector $|\arg(w_*-w)|<\frac{\pi}{3}$}\\
T_N(w)^{1/2},&\quad \text{in the region tangent to the sector $|\arg(w-w_*)|<\frac{\pi}{3}$}\\
1,&\quad\text{otherwise}.
\end{cases}
\label{eq:Ddefine}
\end{equation}
Note that $d_N(w)=1+\bo(\epsilon_N)$ holds uniformly for $w\in U$, so indeed
$\mathbf{P}(w)^{-1}\tilde{\mathbf{P}}(w)= \mathbb{I}+\bo(\epsilon_N)$.  The
jump conditions satisfied by $\tilde{\mathbf{P}}(w)$ near $w=w_*\approx -1$
are as shown in Figure~\ref{fig:PtildeLocalJumps} but with $T_N(w)$
replaced by $1$ in all cases.

\subsection{Expansion of $k(w)$ about $w=-1$.}
\begin{proposition}
The function $k(w)=k(w;x,t)$ is analytic at $w=-1$, having the Taylor expansion 
\begin{equation}
\begin{split}
k(w;x,t)&=-t +\frac{1}{2}\Delta x(w+1) +
\left[\frac{1}{4}\Delta x -\frac{1}{8}t\right](w+1)^2 \\
&\quad\quad\quad\quad{}+\left[\frac{3}{16}\Delta x-
\frac{1}{8}t + \nu\right](w+1)^3 \\
&\quad\quad\quad\quad{}+\left[\frac{5}{32}\Delta x-\frac{15}{128}t
+\frac{3}{2}\nu\right](w+1)^4
+ \bo((w+1)^5),\quad w\to -1,
\end{split}
\label{eq:bexpansion}
\end{equation}
where $\nu>0$ is independent of $w$, $x$, and $t$, being defined by \eqref{eq:nudefinecrit}.  All Taylor coefficients are linear in $\Delta x$ and $t$, and hence in particular the error term is uniform for bounded $\Delta x$ and $t$.  
\label{prop:kexpand}
\end{proposition}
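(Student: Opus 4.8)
The plan is to write $k(w;x,t)$ out explicitly using the closed-form construction of $\mathfrak{g}$ from \S\ref{sec:frakg}, to establish analyticity at $w=-1$ by a matching argument across the arcs of $P_\infty$ through $w=-1$, and then to read off the Taylor coefficients by expanding two explicit constituents. Substituting \eqref{eq:gh}, \eqref{eq:hm} and the final closed form for $\mathfrak{m}$ into \eqref{eq:kdefinecrit} gives, for $w<-1$ (so $|w|>1$), the splitting $k(w;x,t)=P(w;x,t)+R(w)$, where
\begin{equation}
P(w;x,t):=2iQ(w;x,t)-(-w)^{-1/2}(t-x)=\frac{x}{2}\bigl[(-w)^{-1/2}-(-w)^{1/2}\bigr]-\frac{t}{2}\bigl[(-w)^{1/2}+(-w)^{-1/2}\bigr]
\end{equation}
and $R(w):=\overline{L}(w)-L(w)-2(-w)^{-1/2}wC(w)$ with $C(w):=\frac{1}{2\pi}\int_{\mathfrak{a}}^{\mathfrak{b}}\theta_0(\xi)(-\xi)^{-1/2}(\xi-w)^{-1}\,d\xi$. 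The point of this split is that $P$ depends \emph{linearly} on $(x,t)$ while $R$ is independent of $(x,t)$; hence every Taylor coefficient of $k$ at $w=-1$ is the sum of a coefficient of $P$, linear in $x$ and $t$, and a coefficient of $R$, a pure constant, which yields the asserted linear dependence of each coefficient on $(\Delta x,t)$ after writing $x=x_\mathrm{crit}+\Delta x$ and makes the uniformity of the $\bo((w+1)^5)$ remainder for bounded $(\Delta x,t)$ automatic.

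Next I would establish analyticity at $w=-1$. The function $P(w;x,t)$ is manifestly analytic at $w=-1$ since there $-w=1>0$, and $\theta_0(w)=\Psi(E(w))$ is analytic at $w=-1$ because $E$ is, with $E(-1)=i/2$ lying in the interior of the interval of analyticity of $\Psi$ afforded by Assumption~\ref{ass:scriptGnice}. For the remainder it is cleanest to work with $\mathfrak{g}$ directly rather than with the displayed formula for $R$: let $\hat{k}_{\mathrm{out}}$ and $\hat{k}_{\mathrm{in}}$ denote $2iQ+\overline{L}-2\mathfrak{g}$ formed from the branches of $\mathfrak{g}$ valid for $|w|>1$ and $|w|<1$ respectively. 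By the equilibrium condition \eqref{eq:frakgcircle}, $\mathfrak{g}_{\mathrm{out}}+\mathfrak{g}_{\mathrm{in}}=2iQ+\overline{L}$ on $|\xi|=1$, so $\hat{k}_{\mathrm{out}}+\hat{k}_{\mathrm{in}}=0$ there, and therefore $\hat{k}_{\mathrm{out}}$ continues across each of the two circular arcs meeting at $w=-1$ to $-\hat{k}_{\mathrm{in}}$; on the other hand $\mathfrak{g}$ is analytic across the real subintervals $(\mathfrak{a},-1)$ and $(-1,\mathfrak{b})$ (each lies off $|w|=1$ and off $\mathbb{R}_+$), so $\hat{k}_{\mathrm{out}}$ is single-valued across $(\mathfrak{a},-1)$ and $-\hat{k}_{\mathrm{in}}$ across $(-1,\mathfrak{b})$. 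Since $\overline{L}$ is analytic in a full neighborhood of $w=-1$ and $\mathfrak{g}$ is bounded there, $k=\hat{k}_{\mathrm{out}}$ (valid for $w<-1$) extends to a single-valued bounded analytic function on a punctured neighborhood of $w=-1$, and $w=-1$ is a removable singularity.

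Finally, the expansion. The contribution of $P$ is elementary: expanding $(1-(w+1))^{\pm1/2}$ by the binomial series produces all of the $x$- and $t$-dependent parts of the coefficients, which on setting $x=x_\mathrm{crit}+\Delta x$ are exactly the listed $\Delta x$- and $t$-terms (for example the $(w+1)^3$ contribution of $P$ is $\tfrac{3}{16}\Delta x-\tfrac{1}{8}t+\tfrac{3}{16}x_\mathrm{crit}$, the last summand belonging to the constant term). The constant terms come from $R$ together with the $x=x_\mathrm{crit}$, $t=0$ part of $P$; to obtain them one expands $R$ about $w=-1$. Applying Sokhotski--Plemelj near the interior point $w=-1$ of $[\mathfrak{a},\mathfrak{b}]$, the singular pieces of $-L$ and of $C$ cancel and $R$ reduces, on the real axis, to the finite Hilbert transform
\begin{equation}
R(w)=\frac{(-w)^{1/2}}{\pi}\,\mathrm{P.V.}\!\int_{\mathfrak{a}}^{\mathfrak{b}}\frac{\theta_0(\xi)\,d\xi}{(-\xi)^{1/2}(\xi-w)},
\end{equation}
whose Taylor coefficients at $w=-1$ are Hadamard finite-part moments of $\theta_0=\Psi\circ E$. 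These are evaluated by passing back to the $x$-variable through the substitutions in \eqref{eq:Psidefinecrit} and using that $G(s)^2-4=(G(s)-2)(G(s)+2)$ vanishes to first order at $s=x_\mathrm{crit}$ with $G(s)+2\sim G'(x_\mathrm{crit})(s-x_\mathrm{crit})$; this is what forces the first three constant terms to vanish and then produces the value $\nu=1/(12G'(x_\mathrm{crit}))$ at order $(w+1)^3$, together with $\tfrac{3}{2}\nu$ at order $(w+1)^4$ (consistent with the pattern $\nu,\tfrac32\nu$ being the start of a contribution proportional to $\nu(-w)^{-3/2}(w+1)^3$). The main obstacle is precisely this last identification: one must carefully extract the local behavior at the separatrix crossing of a principal-value integral of $\Psi\circ E$ and confirm that the numerical constants emerge exactly as claimed; by comparison the analyticity argument and the linearity statement are routine.
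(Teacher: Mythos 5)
Your decomposition of $k$ into a piece $P(w;x,t)$ that is explicit and linear in $(x,t)$ plus an $(x,t)$-independent remainder $R(w)$ is exactly the decomposition underlying the paper's argument: your principal-value formula for $R(w)$ agrees with the paper's representation \eqref{eq:kanalyticcrit} (where the contour $C_+\cup C_-$ is the deformed version of the $\mathrm{P.V.}$ integral over $[\mathfrak a,\mathfrak b]$, accounting for the $\tfrac{1}{\pi}$ vs.\ $\tfrac{1}{2\pi}$ factor), and your expansion of $P$ reproduces \eqref{eq:elementaryTaylorcrit}. Your analyticity argument is a small but genuine variant: where the paper makes $k$ visibly analytic near $w=-1$ by rewriting the Cauchy integral over the deformed loop $C_+\cup C_-$, you argue abstractly via the equilibrium condition \eqref{eq:frakgcircle} ($\hat k_{\mathrm{out}}+\hat k_{\mathrm{in}}=0$ on $|\xi|=1$) together with the analyticity of $\mathfrak g$ across $(\mathfrak a,-1)$ and $(-1,\mathfrak b)$ and the boundedness of $\mathfrak g$ near $w=-1$, which correctly yields single-valuedness and removability of the singularity. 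Both routes are sound; the paper's is more immediately useful since it produces the formula one actually needs to expand.

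The real content of the proposition, however, is the evaluation of the $(x,t)$-independent Taylor coefficients, and here your proposal contains a gap, which you to your credit flag. Two points. First, it is not the individual coefficients of $R$ that vanish to three orders; writing the coefficients of $(w+1)^k$ in $R$ as $R_k$ (so $R_0=I_1$, $R_1=I_2-\tfrac12 I_1$, $R_2=I_3-\tfrac12 I_2-\tfrac18 I_1$, etc.\ in the notation of \eqref{eq:termbytermcrit}), the paper computes $R_0=0$ but $R_1=-\tfrac12 x_{\mathrm{crit}}$ and $R_2=-\tfrac14 x_{\mathrm{crit}}$, which exactly cancel the $x_{\mathrm{crit}}$-part of $P$. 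Only the combined constant vanishes. Second, your proposed mechanism for seeing this --- ``pass back to the $x$-variable and use that $G(s)^2-4$ vanishes to first order at $s=x_{\mathrm{crit}}$'' --- does not reflect what is actually required, and would not get you to the stated values. The paper's computation goes through the involution $\sigma=s^{-1}$ on the integrand to deduce $I_1=0$, $I_2=I_3$, and $I_5=2I_4-I_2$ (see \eqref{eq:Ikdefine}--\eqref{eq:I5identity}), then evaluates $I_2=-\tfrac12 x_{\mathrm{crit}}$ by a change of variable $v=-4iE(s)$ and an integration by parts followed by Fubini, and finally obtains $I_4=\nu-\tfrac12 x_{\mathrm{crit}}$ by a further integration by parts and a loop-contour computation that relies essentially on the analyticity of the function $\mathscr G$ near $m=G(0)^2$ (Assumption~\ref{ass:scriptGnice}) to justify the complex continuation of $\varphi$ and to locate $\nu$ as the residue at the simple pole $m=4$ of $j(m)\mathscr G(m)/m$ in \eqref{eq:nujint}. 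The value $\tfrac32\nu$ at order $(w+1)^4$ then follows from the identity $I_5=2I_4-I_2$, not from any ``pattern'' $\nu(-w)^{-3/2}(w+1)^3$. So while your overall strategy is the right one, the step you identify as ``the main obstacle'' is not a routine Taylor-expansion of a Hilbert transform; it is a multi-stage explicit computation, and the heuristic you offer for it would lead you astray.
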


\begin{proof}
For $w<\min\{w_*,-1\}$ and $w\in U$, we have the formula
\begin{equation}
k(w;x,t)=2iQ(w;x,t)-\frac{1}{(-w)^{1/2}}[\mathfrak{m}_+(w)+\mathfrak{m}_-(w)],
\end{equation}
where $\mathfrak{m}(w)$ is given by 
\begin{equation}
\mathfrak{m}(w) = \frac{t-x}{2}+
\frac{w}{2\pi}\int_{\mathfrak{a}}^{\mathfrak{b}}\frac{\theta_0(s)\,ds}{(-s)^{1/2}(s-w)}.
\end{equation}
To express the boundary values $\mathfrak{m}_\pm(w)$ for $w<\min\{w_*,-1\}$ in a form admitting analytic continuation to a full neighborhood of $w=-1$ (a point that lies on the discontinuity contour $[\mathfrak{a},\mathfrak{b}]$ of $\mathfrak{m}(w)$),
let $C_\pm$ denote two
contours from $s=\mathfrak{a}$ to $s=\mathfrak{b}$, with $C_+$ in the upper half $s$-plane and
$C_-$ in the lower half $s$-plane, and such that the two contours are mapped onto
each other with reversal of orientation under the mapping $s\mapsto s^{-1}$ (in particular, this 
mapping permutes the points $\mathfrak{a}$ and $\mathfrak{b}$).
Then, also recalling the definition of $Q(w;x,t)$ in terms of $E(w)$ and $D(w)$,
\begin{equation}
k(w)=-\frac{x}{2}\left((-w)^{1/2}-(-w)^{-1/2}\right) 
-\frac{t}{2}\left((-w)^{1/2}+(-w)^{-1/2}\right)
+\frac{(-w)^{1/2}}{2\pi}\int_{C_+\cup C_-}
\frac{\theta_0(s)\,ds}{(-s)^{1/2}(s-w)}.
\label{eq:kanalyticcrit}
\end{equation}
This formula represents the analytic continuation to a full neighborhood of $w=-1$ of the function $k$ originally defined by \eqref{eq:kdefinecrit}
only for $w<\min\{w_*,-1\}$. In fact, the domain of analyticity for $k$ has now been extended to the domain enclosed by the
contour $C_+\cup C_-$.  It is also obvious that $k$ depends linearly on $x$ and $t$, and this property will clearly be inherited by all of its Taylor coefficients.

To analyze $k(w)$ near the point $w=-1$ to which the self-intersection point
$w_*$ will converge at criticality, we begin with 
the following elementary Taylor expansions (convergent for $|w+1|<1$):
\begin{equation}
\begin{split}
-\frac{1}{2}\left((-w)^{1/2}-(-w)^{-1/2}\right)& =
\frac{1}{2}(w+1) +\frac{1}{4}(w+1)^2+\frac{3}{16}(w+1)^3 
+\frac{5}{32}(w+1)^4+\bo((w+1)^5)\\
-\frac{1}{2}\left((-w)^{1/2}+(-w)^{-1/2}\right)&=
-1-\frac{1}{8}(w+1)^2-\frac{1}{8}(w+1)^3 -\frac{15}{128}(w+1)^4+\bo((w+1)^5).
\end{split}
\label{eq:elementaryTaylorcrit}
\end{equation}
We also have the expansion
\begin{equation}
\begin{split}
\frac{(-w)^{1/2}}{s-w}&=\frac{1}{s+1} +\left[\frac{1}{(s+1)^2}-\frac{1}{2}
\frac{1}{s+1}\right](w+1)\\
&\quad\quad\quad\quad{}+\left[\frac{1}{(s+1)^3}-\frac{1}{2}\frac{1}{(s+1)^2}
-\frac{1}{8}\frac{1}{s+1}\right](w+1)^2\\
&\quad\quad\quad\quad{}+\left[\frac{1}{(s+1)^4}-\frac{1}{2}\frac{1}{(s+1)^3}
-\frac{1}{8}\frac{1}{(s+1)^2}-\frac{1}{16}\frac{1}{s+1}\right](w+1)^3\\
&\quad\quad\quad\quad{}+\left[
\frac{1}{(s+1)^5}-\frac{1}{2}\frac{1}{(s+1)^4}-\frac{1}{8}
\frac{1}{(s+1)^3}-\frac{1}{16}\frac{1}{(s+1)^2}-\frac{5}{128}
\frac{1}{s+1}\right](w+1)^4\\
&\quad\quad\quad\quad{}+\bo((w+1)^5),
\end{split}
\end{equation}
which is a convergent power series for $|w+1|<\min\{1,|s+1|\}$.  Therefore, given contours $C_\pm$ as above, if $|w+1|$ is sufficiently small 
we may integrate term-by-term to obtain
\begin{equation}
\begin{split}
\frac{(-w)^{1/2}}{2\pi}\int_{C_+\cup C_-}\frac{\theta_0(s)\,ds}{(-s)^{1/2}(s-w)}&
= I_1 + \left[I_2-\frac{1}{2}I_1\right](w+1) +\left[I_3-\frac{1}{2}I_2-
\frac{1}{8}I_1\right](w+1)^2 \\
&\quad\quad\quad\quad{}+\left[I_4-\frac{1}{2}I_3-\frac{1}{8}I_2-
\frac{1}{16}I_1\right](w+1)^3 \\
&\quad\quad\quad\quad{}+\left[I_5-\frac{1}{2}I_4-\frac{1}{8}I_3-\frac{1}{16}
I_2-\frac{5}{128}I_1\right](w+1)^4
+ \bo((w+1)^5),
\end{split}
\label{eq:termbytermcrit}
\end{equation}
where
\begin{equation}
I_k:=\frac{1}{2\pi}\int_{C_+\cup C_-}\frac{\theta_0(s)\,ds}{(-s)^{1/2}(s+1)^k},\quad k=1,\dots,5.
\label{eq:Ikdefine}
\end{equation}
By making the substitution $\sigma=s^{-1}$, we may rewrite these integrals
in the form
\begin{equation}
I_k=-\frac{1}{2\pi}\int_{C_+\cup C_-}\frac{\theta_0(\sigma)\,d\sigma}{(-\sigma)^{1/2}(\sigma+1)(\sigma^{-1}+1)^{k-1}}.
\label{eq:Ikrewrite}
\end{equation}
Therefore, we see immediately that 
\begin{equation}
I_1=-I_1=0, 
\label{eq:I1zerocrit}
\end{equation}
while by averaging
\eqref{eq:Ikdefine} and \eqref{eq:Ikrewrite} we obtain for $I_2$,
$I_3$, and $I_4$ the following:
\begin{equation}
I_2 = I_3=\frac{1}{4\pi}\int_{C_+\cup C_-}\frac{\theta_0(s)}{(-s)^{1/2}(s+1)}
\frac{1-s}{1+s}\,ds
\label{eq:I2I3basic}
\end{equation}
and
\begin{equation}
I_4 = \frac{1}{4\pi}\int_{C_+\cup C_-}\frac{\theta_0(s)}{(-s)^{1/2}(s+1)}
\frac{1-s^3}{(s+1)^3}\,ds.
\label{eq:I4basic}
\end{equation}
Also, averaging $I_5$ as given by \eqref{eq:Ikdefine} and \eqref{eq:Ikrewrite}
and comparing with \eqref{eq:I2I3basic} and \eqref{eq:I4basic} we obtain
the identity
\begin{equation}
I_5 = 2I_4-I_2=2I_4-I_3.
\label{eq:I5identity}
\end{equation}

Now we calculate $I_2=I_3$ exactly.  Recalling
the definitions \eqref{eq:ED} of the functions $E(\cdot)$ and $D(\cdot)$, as well as the definition
\eqref{eq:theta0defcrit} of $\theta_0(s)$, we see that
\begin{equation}
I_2=I_3=-\frac{i}{8\pi}
\int_{C_+\cup C_-}\frac{\Psi(E(s))}{D(s)^3}E(s)E'(s)\,ds.
\end{equation}
Since $D(s)$ can be eliminated in favor of $E(s)$ by the identity $D(s)^2=E(s)^2+\tfrac{1}{4}$,  we would like to introduce 
$v=-4iE(s)$ as a new integration variable.  The two contours $C_\pm$
are mapped under $-4iE(\cdot)$ to two oppositely-oriented copies of the
same contour.  In particular, $-4iE(C_-)$ is a teardrop-shaped contour beginning
and ending at the point $v=-4iE(\mathfrak{a})=-4iE(\mathfrak{b})$ lying on the real
axis to the right of $v=2$ and encircling the point $v=2=-4iE(-1)$ once in the 
positive sense.  Along the contours $C_\pm$ we have the identities 
$D(s)=\pm (E(s)^2+\tfrac{1}{4})^{1/2}$ (principal branch).  These considerations
show that both contributions from $C_+$ and from $C_-$ are equal, and so
\begin{equation}
I_2=I_3=\frac{i}{2\pi}
\int_{-4iE(C_-)}\frac{\Psi(iv/4)}{(4-v^2)^{3/2}}(-2v)\,dv.
\end{equation}
Next, we integrate by parts, using the fact that $\Psi(E(\mathfrak{a}))=0$:
\begin{equation}
I_2=I_3=\frac{i}{\pi}\int_{-4iE(C_-)}\frac{\varphi(v)\,dv}{(4-v^2)^{1/2}},\quad
\varphi(v):=\frac{d}{dv}\Psi(iv/4).
\end{equation}
Collapsing the contour to the top and bottom of the interval $(2,-4iE(\mathfrak{a}))$
and noting that $-4iE(\mathfrak{a})=-G(0)$ we obtain
\begin{equation}
I_2=I_3=\frac{2}{\pi}\int_2^{-G(0)}\frac{\varphi(v)\,dv}{\sqrt{v^2-4}}.
\end{equation}
Now substituting from  the definition \eqref{eq:Psidefinecrit} and exchanging the order of integration (see \cite[Proposition~1.1]{BuckinghamMelliptic}) leads to the identity 
\begin{equation}
I_2=I_3 = -\frac{1}{2}G^{-1}(-2)=
-\frac{1}{2}x_\mathrm{crit}.
\label{eq:I2xcrit}
\end{equation}

Next we will show that
\begin{equation}
I_4 = \nu-\frac{1}{2}x_\mathrm{crit}
\label{eq:I4formulacrit}
\end{equation}
where $\nu>0$ is defined by \eqref{eq:nudefinecrit}.
Using \eqref{eq:I2xcrit} together with \eqref{eq:I2I3basic}--\eqref{eq:I4basic}
and the definitions \eqref{eq:ED} of $E(\cdot)$ and $D(\cdot)$ gives
\begin{equation}
I_4+\frac{1}{2}x_\mathrm{crit} = I_4-I_2 = \frac{i}{128\pi}\int_{C_+\cup C_-}\frac{\Psi(E(s))}{D(s)^5}
E(s)E'(s)\,ds.
\end{equation}
Introducing $v=-4iE(s)$ as a new integration variable, as above, yields
\begin{equation}
I_4+\frac{1}{2}x_\mathrm{crit} = \frac{1}{2\pi i}\int_{-4iE(C_-)}\frac{\Psi(iv/4)}{(4-v^2)^{5/2}}(-2v)\,dv,
\end{equation}
and then integrating by parts,
\begin{equation}
I_4+\frac{1}{2}x_\mathrm{crit} = \frac{1}{3\pi i}\int_{-4iE(C_-)}\frac{\varphi(v)\,dv}{(4-v^2)^{3/2}}.
\label{eq:nuint}
\end{equation}
Note that by the subsitution $m=G(s)^2$, $\varphi(v)$ may be written in the form
\begin{equation}
\varphi(v)=-\frac{v}{2}\int_{v^2}^{G(0)^2}\frac{\mathscr{G}(m)}{\sqrt{m-v^2}\sqrt{G(0)^2-m}}
\frac{dm}{m},\quad 0<v<-G(0),
\end{equation}
where the function $\mathscr{G}$ is defined in terms of the initial condition
$G(\cdot)$ by \eqref{eq:hmdefcrit}.  We need to write this formula in a
way that admits analytic continuation to complex $v$ such as those
$v\in -4iE(C_-)$.  To this end, let
\begin{equation}
T(m,z):=i(G(0)^2-m)^{-1}\left(1-\frac{z-G(0)^2}{m-G(0)^2}\right)^{-1/2}
\end{equation}
where the principal branch is meant, and then note that
\begin{equation}
\varphi(v) = \frac{v}{4}\oint_{O} T(m,v^2)\frac{\mathscr{G}(m)}{m}\,dm,
\label{eq:varphivcomplex}
\end{equation}
where $O$ is a loop contour surrounding, in the negative (clockwise) sense,
the straight-line branch cut of $T(m,v^2)$ (viewed as a function of $m$) 
connecting $v^2$ with $G(0)^2$.  In writing the formula \eqref{eq:varphivcomplex}
we therefore are using Assumption~\ref{ass:scriptGnice} to guarantee analyticity of $\mathscr{G}$ in a neighborhood of $m=G(0)^2$.
With the loop contour $O$ fixed, it is clear that $\varphi(v)$ as given by \eqref{eq:varphivcomplex} is 
analytic if $z=v^2$ varies in the region enclosed by $O$.  In particular, 
we will assume that $C_-$ has been chosen so that $-4iE(C_-)$ is completely 
contained in this region.  Since $-4iE(C_-)$ and $L$ are both fixed contours,
we may exchange the order of integration upon substituting 
\eqref{eq:varphivcomplex} into \eqref{eq:nuint}:
\begin{equation}
I_4+\frac{1}{2}x_\mathrm{crit} = \frac{1}{24\pi i}\oint_{O}\frac{\mathscr{G}(m)}{m}j(m)\,dm,
\label{eq:nujint}
\end{equation}
where the inner integral is now:
\begin{equation}
j(m):=\int_{-4iE(C_-)}
\frac{T(m,v^2)}{(4-v^2)^{3/2}}(2v)\,dv = 
\int_{[-4iE(C_-)]^2}\frac{T(m,z)\,dz}{(4-z)^{3/2}}.
\end{equation}
Note that the contour $[-4iE(C_-)]^2$ is a teardrop-shaped contour beginning
and ending at $z=G(0)^2$ and encircling the part of the (principal) 
branch cut of $(4-z)^{3/2}$ between $z=4$ and $z=G(0)^2>4$ once in the
positive sense, and each $m\in O$ lies outside of this closed contour.
To evaluate $j(m)$ we note that the branch cut of $T(m,z)$ viewed as a function
of $z$ is the ray from $z=m$ to $z=\infty$ in the direction away from
$z=G(0)^2$.  By simple contour deformations of $[-4iE(C_-)]^2$ taking into
account that the integrand $T(m,z)(4-z)^{-3/2}$ is integrable at $z=\infty$ and 
changes sign across all branch cuts, we obtain, for $\Im\{m\}\neq 0$,
\begin{equation}
j(m)=2\,\mathrm{sgn}(\Im\{m\})\int_{G(0)^2}^m\frac{T(m,z)\,dz}{(4-z)^{3/2}},
\end{equation}
where the path of integration is a straight line in the region of analyticity 
of the integrand.  Since $\mathrm{sgn}(\Im\{m\})$ and $\mathrm{sgn}(\Im\{z\})$
coincide, we may absorb the sign by changing the branch of $(4-z)^{3/2}$
in the lower-half $z$-plane:
\begin{equation}
j(m)=2\int_{G(0)^2}^m\frac{T(m,z)\,dz}{i(z-4)^{3/2}}.
\end{equation}
With the substitution
\begin{equation}
\alpha:=\left(1-\frac{z-G(0)^2}{m-G(0)^2}\right)^{1/2},\quad 0<\alpha<1,
\end{equation}
the formula for $j(m)$ becomes
\begin{equation}
\begin{split}
j(m)&=-4\int_0^1(m-4-(m-G(0)^2)\alpha^2)^{-3/2}\,d\alpha\\
&=\frac{4}{4-m}\int_0^1\frac{d}{d\alpha}\left[\alpha(m-4-(m-G(0)^2)\alpha^2)^{-1/2}\right]\,d\alpha\\
&=\frac{4}{(4-m)\sqrt{G(0)^2-4}}.
\end{split}
\end{equation}
Inserting this formula into \eqref{eq:nujint} and taking into account that
the only singularity of the integrand enclosed by the contour $O$ is
the simple pole at $m=4$ yields
\begin{equation}
I_4+\frac{1}{2}x_\mathrm{crit} = \frac{\mathscr{G}(4)}{12\sqrt{G(0)^2-4}}.
\end{equation}
Evaluating $\mathscr{G}(4)$ directly using the definition \eqref{eq:hmdefcrit} then proves
that indeed $I_4+\tfrac{1}{2}x_\mathrm{crit}=\nu$ where $\nu>0$ is given by \eqref{eq:nudefinecrit}.

%
Using \eqref{eq:I1zerocrit}, \eqref{eq:I5identity}, \eqref{eq:I2xcrit}, and \eqref{eq:I4formulacrit}
in \eqref{eq:termbytermcrit}, and combining the resulting expansion with the expansions \eqref{eq:elementaryTaylorcrit} and the analytic formula \eqref{eq:kanalyticcrit} for $k(w;x,t)$
completes the proof of the Proposition.
\end{proof}

\subsection{Conformal mapping near $w=-1$ and new spacetime coordinates.}
From \eqref{eq:bexpansion} it is clear that
exactly at criticality, $k(w) = \nu(w+1)^3
+\bo((w+1)^4)$.  For small $|\Delta x|$ and $|t|$, the cubic
degeneration unfolds, with one real and two complex conjugate roots,
or three real roots,
near $w=-1$.  The double critical point (double root of $k'(w)$)
unfolds generically to a pair $w=w_\pm(x,t)$ 
of simple critical points near $w=-1$, and either these are both
real or they form a complex-conjugate pair (in both cases 
since $k(w)=k(w^*)^*$, $(k(w_+)-k(w_-))^2\in\mathbb{R}$).

The main idea here is that to unfold the cubic degeneracy it is really only necessary to take into
account the lower-order terms in the Taylor expansion, and therefore it seems desirable to 
somehow replace $k(w;x,t)$ by an appropriate cubic polynomial with coefficients depending on
$(x,t)$.  This issue has arisen frequently in the construction of local parametrices for matrix
Riemann-Hilbert problems corresponding to certain \emph{double-scaling limits}.  Perhaps the first time such a replacement was made rigorous was 
in the paper of Baik, Deift, and Johansson \cite{BaikDJ99}, in which a certain double-scaling limit of
orthogonal polynomials on the unit circle is analyzed, and the authors construct
a local parametrix by essentially truncating an analogue of the Taylor expansion \eqref{eq:bexpansion} after the cubic term.  Of course such a truncation is not exact, so
there are errors incurred in modeling the jump matrices by others having cubic exponents, and the  effect of these errors must be carefully controlled.  Also, any artificial truncation of a Taylor series can only be accurate if the local parametrix is constructed in a disk centered at the expansion point whose radius tends to zero at some rate tied to the large parameter in the problem.  This fact further implies that estimates must be supplied for norms of singular integral operators that are independent of the moving contour and it also 
typically means that divergence of the outer parametrix near the expansion point can lead to a
sub-optimal mismatch with the local parametrix, leading in turn to sub-optimal estimates of errors.
A significant advance was made by Claeys and Kuijlaars in \cite{ClaeysK06} (see in particular section 5.6 of that paper); here the authors consider a similar double-scaling limit and take the 
approach of constructing a certain nontrivial conformal mapping $W=W(w)$ of a fixed neighborhood of the expansion point to a neighborhood of the origin.  The conformal mapping is
 more-or-less explicit, and it does not depend on the parameters (analogues of $x$ and $t$) driving the system to criticality.  However, the point is that in a full fixed neighborhood of the expansion point, the analogue of the function $k(w;x,t)$ is represented exactly (no truncation required) as a cubic polynomial in $W$.  The difficulty that remains with this approach is that
 the coefficients of the cubic in $W$ actually depend on $w$ (so in fact it is not really a cubic after all).  The remarkable approach of \cite{ClaeysK06} is to simply substitute the $w$-dependent 
 coefficients into a known solution formula for exactly cubic exponents (involving solutions of
 the linear differential equation whose isomonodromy deformations are governed by solutions
 of the Painlev\'e-II equation), resulting in a local parametrix in terms of Painlev\'e transcendents depending (through the coefficients of the cubic) on $w$.  Confirming that such an approach
 actually provides a usable local parametrix requires exploiting \emph{a priori} knowledge of the
 behavior of solutions of the nonlinear Painlev\'e-II equation and its auxiliary linear differential equation.  If this information is available in a convenient form, the approach of Claeys and Kuijlaars delivers a vast improvement over earlier methods because it really works in a neighborhood of fixed size centered at the expansion point where critical points coalesce at criticality.    The technique advanced in \cite{ClaeysK06} has more recently been applied to
 problems of nonlinear wave theory as well 
 (see, for example, 
 \cite{ClaeysG10osc}).

We choose instead to replace $k(w;x,t)$ by a cubic in an exact way, an approach that provides
all of the accuracy of the Claeys-Kuijlaars method but seems simpler and requires no \emph{a priori} knowledge of the behavior of solutions of the local parametrix Riemann-Hilbert problem.
The approach we are going to explain now has also been used recently to study a different kind of double-scaling limit for a
matrix Riemann-Hilbert problem in \cite{BertolaT10cusp}.
As part of a careful study of the asymptotic behavior of exponential integrals with exponent functions having coalescing critical points (to generalize the steepest descent or saddle point method), Chester, Friedman, and Ursell \cite{ChesterFU57}
showed how to construct a substitution that rendered the exponent function in the integrand in the exact form
of a cubic polynomial.  Their method applies in the current context to establish that, \emph{because the coefficient $\nu$ in the expansion \eqref{eq:bexpansion} of $k(w;x,t)$
is strictly positive}, there is a suitable choice of new spacetime coordinates
$r=r(x,t)$ and $s=s(x,t)$ for which the relation
\begin{equation}
k(w;x,t)=W^3 +rW -s
\label{eq:langercondition}
\end{equation}
defines an invertible conformal mapping $W=W(w)=W(w;x,t)$ between $w\in U$ and
$W\in W(U)$ that preserves the real axis, for $(x,t)$ 
near enough to criticality.
Moreover, the new coordinates $r$ and $s$
depend continuously on $(x,t)$ near criticality.  Unlike in the Claeys-Kuijlaars method \cite{ClaeysK06}, neither $r$ nor $s$ depends on $w$ with the cost that the conformal
mapping will now depend on $(x,t)$.

The new coordinates $r=r(x,t)$ and $s=s(x,t)$
are to be determined so that under 
\eqref{eq:langercondition} the critical points of the cubic on the 
right-hand side, namely $W=\pm (-r/3)^{1/2}$, correspond to the two critical
points of $k(w)$ near $w=-1$ when $\Delta x$ and $t$ are
sufficiently small. Evaluating \eqref{eq:langercondition} for 
$w=w_\pm(x,t)$ and $W=\pm (-r(x,t)/3)^{1/2}$, one obtains the formulae
\begin{equation}
\begin{split}
r(x,t)&=-3\left(\frac{1}{16}\left[
k(w_+(x,t);x,t)-k(w_-(x,t);x,t)\right]^2\right)^{1/3}\\
s(x,t)&=-\frac{1}{2}\left[k(w_+(x,t);x,t)+k(w_-(x,t);x,t)\right],
\end{split}
\label{eq:rsdefine}
\end{equation}
where in the formula for $r$ the real cube root is meant, and hence
both $r$ and $s$ are real.  Moreover, it is possible to show that $r(x,t)$
and $s(x,t)$ are analytic functions of $x$ and $t$ near criticality, and 
have two-variable Taylor expansions of the form:
\begin{equation}
\begin{split}
r(x,t)&=\frac{1}{2\nu^{1/3}}
\Delta x +
\bo(\Delta x^2,t\Delta x,t^2)\\
s(x,t)&=t +\bo(\Delta x^2,t\Delta x,t^2).
\end{split}
\label{eq:rsexpansions}
\end{equation}
To see this, first we find a unique root $w=-1+a$ of $k''(w;x,t)$ such that $a=0$ at criticality and
such that $a=a(x,t)$ is an analytic function of $(x,t)$ near criticality.  Indeed, from the Taylor expansion
\eqref{eq:bexpansion} and the fact that $\nu\neq 0$ we see that the analytic implicit function theorem 
applies and we obtain 
\begin{equation}
a(x,t)=-\frac{1}{12\nu}\Delta x + \frac{1}{24\nu}t +  \bo(\Delta x^2,t\Delta x,t^2).
\end{equation}
Now write $w=-1+a(x,t)+z$, and re-expand $k(w;x,t)$ about $z=0$.  From \eqref{eq:bexpansion}
and the definition of $a(x,t)$ we obtain the convergent power series expansion (note that there is no quadratic term) 
\begin{equation}
k(w;x,t)=k_0(x,t)+k_1(x,t)z + \sum_{n=3}^\infty k_n(x,t)z^n,
\label{eq:kseriesz}
\end{equation}
where the coefficients are all analytic functions of $(x,t)$ near criticality, and in particular,
\begin{equation}
k_0(x,t)=-t + \bo(\Delta x^2,t\Delta x,t^2),
\end{equation}
\begin{equation}
k_1(x,t)=\frac{1}{2}\Delta x+ \bo(\Delta x^2,t\Delta x,t^2),
\end{equation}
\begin{equation}
k_3(x,t)=\nu+ \bo(\Delta x,t).
\end{equation}
The radius of convergence of this series is bounded away from zero near criticality.
The equation satisfied by the critical points of $k$ is
\begin{equation}
k'(w;x,t)=k_1(x,t)+3k_3(x,t)z^2 +\sum_{n=4}^\infty nk_n(x,t)z^{n-1}=0.
\label{eq:criticalintermsofz}
\end{equation}
Let $\rho$ be any number satisfying the equation
\begin{equation}
\rho^2 = -\frac{k_1(x,t)}{3k_3(x,t)}.
\label{eq:rhosquareddef}
\end{equation}
Note that $\rho^2$ is an analytic function of $(x,t)$ near criticality because $k_3(x_\mathrm{crit},0)=\nu>0$.  If $\rho=0$ then $k_1(x,t)=0$ and $z=0$ is a double root of \eqref{eq:criticalintermsofz}.
So we suppose that $\rho\neq 0$, and we rescale $z$ by writing $z=\rho q$ for some new unknown
$q$.  Dividing through by $3k_3(x,t)\neq 0$ and canceling a factor of $\rho^2$ then converts \eqref{eq:criticalintermsofz} into the equation
\begin{equation}
q^2 + \sum_{n=4}^\infty \frac{n k_n(x,t)}{3k_3(x,t)}\rho^{n-3}q^{n-1}=1.
\end{equation}
The coefficients of $q^{n-1}$ in the sum are analytic functions of the three variables $x$, $t$, and $\rho$.  When $\rho=0$ there are two solutions, $q=\pm 1$, and by the implicit function theorem there are two corresponding solutions for $\rho\approx 0$ and $(x,t)$ near criticality.  The two solutions are related
by the symmetry $(\rho,q)\mapsto (-\rho,-q)$.  We may write them in the form
\begin{equation}
q=q_\pm(x,t,\rho)=\pm \left(1 + \sum_{n=1}^\infty q_n(x,t)(\pm\rho)^n\right).
\end{equation}
The coefficients $q_n(x,t)$ are all analytic functions of $(x,t)$ near criticality, and the radius of convergence of the series with respect to $\rho$ is bounded away from zero near criticality.
The corresponding critical points are written in terms of $z$ as
\begin{equation}
z=z_\pm(x,t,\rho)=\pm\rho + \sum_{n=1}^\infty q_n(x,t)(\pm\rho)^{n+1}.
\label{eq:criticalpointsz}
\end{equation}
To calculate $r(x,t)$ and $s(x,t)$ we need to evaluate $k(w;x,t)$ at the critical points.  This is accomplished by substitution of the series \eqref{eq:criticalpointsz} into \eqref{eq:kseriesz}:
\begin{equation}
\begin{split}
k(-1+a(x,t)+z_\pm(x,t,\rho);x,t)&=k_0(x,t)+k_1(x,t)\left(\pm\rho +\sum_{n=1}^\infty q_n(x,t)(\pm\rho)^{n+1}\right) \\
&\quad\quad{}+\sum_{m=3}^\infty k_m(x,t)\left(\pm\rho +\sum_{n=1}^\infty q_n(x,t)(\pm\rho)^{n+1}\right)^m\\&=
k_0(x,t) \pm\rho^3\sum_{n=0}^\infty g_n(x,t)(\pm\rho)^n,
\end{split}
\end{equation}
where we have used \eqref{eq:rhosquareddef}, and $g_0(x,t):=-2k_3(x,t)$.  Here again, the
coefficients $g_n(x,t)$ are certain analytic functions of $(x,t)$ near criticality, and the radius of
convergence of the power series in $\pm\rho$ is bounded below near criticality.  Then, by definition,
we have
\begin{equation}
\begin{split}
r(x,t)&=-3\left(\frac{1}{16}\left[2g_0(x,t)\rho^3\left(1+\sum_{j=1}^\infty\frac{g_{2j}(x,t)}{g_0(x,t)}(\pm\rho)^{2j}
\right)\right]^2\right)^{1/3}\\
&= \frac{k_1(x,t)}{k_3(x,t)^{1/3}}\left(1-\frac{1}{2}\sum_{j=1}^\infty\frac{g_{2j}(x,t)}{k_3(x,t)}\left[-\frac{k_1(x,t)}{3k_3(x,t)}\right]^j\right)^{2/3},
\end{split}
\end{equation}
where we have used the definition of $g_0(x,t)$ in terms of $k_3(x,t)$ and \eqref{eq:rhosquareddef}.  Therefore, $r(x,t)$ is clearly an analytic function of $(x,t)$ near criticality.  Similarly, by definition we have
\begin{equation}
\begin{split}
s(x,t)&=-\frac{1}{2}\left[2k_0(x,t)+2\sum_{j=0}^\infty g_{2j+1}(x,t)(\pm\rho)^{2j+4}\right]\\
&=-k_0(x,t) -\sum_{j=0}^\infty g_{2j+1}(x,t)\left(-\frac{k_1(x,t)}{3k_3(x,t)}\right)^{j+2},
\end{split}
\end{equation}
which again is obviously an analytic function of $(x,t)$ near criticality.  The leading terms of
$r(x,t)$ and $s(x,t)$ near criticality are easy to calculate from these formulae, with the result being
\eqref{eq:rsexpansions}.

As $W(\cdot;x,t)$ is a conformal map, its inverse is an analytic
function mapping a neighborhood of $W=0$ to a neighborhood of $w=-1$.
We now define
\begin{equation}
w_*=w_*(x,t):=W^{-1}(0;x,t),
\label{eq:wstardefine}
\end{equation}
which is a real analytic function of $x$ and $t$ near criticality.
Note also that $W'(w_*(x,t);x,t)>0$ (prime denotes differentiation with
respect to $w$) near criticality.  It is not difficult to obtain the
Taylor expansion of the conformal mapping $W(w;x,t)$ about $w=w_*$
exactly at criticality.  Indeed, from \eqref{eq:bexpansion} and
\eqref{eq:langercondition} with $\Delta x=t=r=s=0$, we obtain
the equation
\begin{equation}
W(w)^3 = \nu(w+1)^3\left[1+\frac{3}{2}(w+1) +\bo((w+1)^2)\right],
\quad\text{at criticality},
\end{equation}
and analytically blowing up the cubic degeneracy we obtain
\begin{equation}
W(w)=\nu^{1/3}(w+1)\left[1+\frac{1}{2}(w+1) +\bo((w+1)^2)\right],
\quad\text{at criticality}.
\end{equation}
In particular, this implies that
\begin{equation}
W'(w_*)=W''(w_*)=\nu^{1/3}>0,\quad\text{at criticality}.
\label{eq:WprimeWdoubleprimecrit}
\end{equation}
Also, it is clear that $W(-1;x_\mathrm{crit},0)=0$, and therefore
$w_*(x_\mathrm{crit},0)=-1$.  Since $w_*(x,t)$ is an analytic function
of $x$ and $t$, it then follows that
\begin{equation}
w_*(x,t)=-1+\bo(\Delta x,t),
\label{eq:wstarexpand}
\end{equation}
and furthermore,
\begin{equation}
W'(w_*(x,t);x,t)=\nu^{1/3}+\bo(\Delta x,t)\quad\text{and}\quad
W''(w_*(x,t);x,t)=\nu^{1/3}+\bo(\Delta x,t)
\end{equation}
near criticality.

Now let $\zeta$ and $y$ be scaled versions
of $W$ and $r$ respectively:
\begin{equation}
\zeta:=\frac{W}{\epsilon_N^{1/3}}\quad \text{and}
\quad y:=
\frac{r}{\epsilon_N^{2/3}}.
\label{eq:scalings}
\end{equation}
We will regard $y$ as being bounded.  It then follows that \emph{without approximation} the exponent
appearing in the jump matrix $\mathbf{V}_\mathbf{P}(w)$ takes the form
of a simple cubic with a formally large constant term:
\begin{equation}
\frac{k(w;x,t)}{\epsilon_N} = \zeta^3 + y\zeta -\frac{s}{\epsilon_N}.
\label{eq:bzeta}
\end{equation}

\subsection{Formulation of an inner model problem.}
To begin with, we wish to find a simpler representation for 
$\dot{\mathbf{P}}_\ind^\mathrm{out}(w)$ valid
when $w$ is close to $w=w_*\approx -1$. 
Using \eqref{eq:scalings}, we write $\dot{\mathbf{P}}_\ind^\mathrm{out}(w)$ in
the form 
\begin{equation}
\dot{\mathbf{P}}_\ind^\mathrm{out}(w)=\eta_\ind(w)^{\sigma_3}(-W(w))^{(\ind-1/2)\sigma_3} = 
\eta_\ind(w)^{\sigma_3}\epsilon_N^{(2\ind-1)\sigma_3/6}(-\zeta)^{(2\ind-1)\sigma_3/2},
\label{eq:dotPrepDS}
\end{equation}
where $\eta_\ind(w)$ is independent
of $\epsilon_N$ and is analytic and nonvanishing 
in a neighborhood $U$ of $w=w_*$:
\begin{equation}
\eta_\ind(w):=(\sqrt{-w}+\sqrt{-w_*})^{1-2\ind}
\left(\frac{W(w)}{w-w_*}\right)^{1/2-\ind}.
\label{eq:Hnformula}
\end{equation}
(Analyticity follows since from \eqref{eq:wstardefine} we have $W(w_*)=0$
and $W$ is analytic at $w_*$ with $W'(w_*)>0$.)

Given a value of $\ind\in\mathbb{Z}$ (to be determined below), we wish to construct
an inner model, valid for $w\in U$, of the matrix $\tilde{\mathbf{P}}(w)$ 
(related to 
$\mathbf{P}(w)$ for $w\in U$ via \eqref{eq:tildeQdefine}).
We temporarily denote this model as $\mathbf{W}(\zeta(w))$, and we want it
to have the following properties:
\begin{itemize}
\item
Supposing that in $U$ the lens boundaries are identified with
arcs of the curves $\arg(W(w))=\pm \pi/3$ and $\arg(W(w))=\pm 2\pi/3$, which
makes them segments (of length proportional to $\epsilon_N^{-1/3}$)
of straight rays in the $\zeta$-plane, the inner model
$\mathbf{W}(\zeta(w))$ should be analytic exactly where $\tilde{\mathbf{P}}(w)$
is within $U$ and should satisfy \emph{exactly the same jump 
conditions} that $\tilde{\mathbf{P}}(w)$ does
within $U$.
\item
The inner model $\mathbf{W}(\zeta)$ 
should match onto the latter factors 
in \eqref{eq:dotPrepDS} 
along the disc boundary $\partial U$
in the sense that $\mathbf{W}(\zeta)$ may be analytically continued from
each sector of the domain $\zeta\in\epsilon_N^{-1/3}W(U)$ 
to the corresponding infinite sector in
the $\zeta$-plane, and that
\begin{equation}
\lim_{\zeta\to\infty}\mathbf{W}(\zeta)
\epsilon_N^{(1-2\ind)\sigma_3/6}(-\zeta)^{(1-2\ind)\sigma_3/2}=\mathbb{I},
\label{eq:Qdotnorm}
\end{equation}
with the limit being uniform with respect to direction in each of the
six sectors of analyticity.  Since $\partial U$ is fixed and bounded
away from $w_*\approx -1$, upon scaling its image under $W(\cdot;x,t)$
by $\epsilon_N^{-1/3}$ to work in terms of the variable $\zeta$,
we see that $w\in\partial U$ corresponds to $\zeta\to\infty$ at a
uniform rate of $\epsilon_N^{-1/3}$.
\end{itemize}

Since $\mathbf{W}(\zeta(w))$ depends on $w$ through $\zeta(w)$, it is
convenient to use \eqref{eq:bzeta} to 
write the jump matrices for $\tilde{\mathbf{P}}(w)$ (and hence
also for $\mathbf{W}(\zeta(w))$) in terms
of $\zeta$, which shows that the jump matrices 
involve the product of exponentials
$e^{\mp s/\epsilon_N}e^{\pm (\zeta^3+y\zeta)}$.
The constant ($\zeta$-independent) 
factors $e^{\mp s/\epsilon_N}$ present in the jump matrices 
can be removed, and simultaneously the presence
of $\epsilon_N$ in the normalization condition \eqref{eq:Qdotnorm}
can be eliminated, by defining
the equivalent unknown
\begin{equation}
\mathbf{Z}_\ind(\zeta;y):=
\epsilon_N^{(1-2\ind)\sigma_3/6}e^{-\frac{1}{2}s\sigma_3/\epsilon_N}
\mathbf{W}(\zeta)e^{\frac{1}{2}s\sigma_3/\epsilon_N}.
\end{equation}
The conditions previously discussed as being desirable for
$\mathbf{W}(\zeta)$ are then easily seen to be equivalent to
those of the following problem for
$\mathbf{Z}_\ind(\zeta;y)$:
\begin{rhp}[Inner model problem near criticality]
Let a real number $y$ and an integer $\ind$ be fixed.  
Seek a matrix $\mathbf{Z}_\ind(\zeta;y)$ 
with the following properties:
\begin{itemize}
\item[]\textbf{Analyticity:}  $\mathbf{Z}_\ind(\zeta;y)$
is analytic in $\zeta$ except along the rays $\arg(\zeta)=k\pi/3$, 
$k=0,\dots,5$, from each sector of analyticity it may be continued to
a slightly larger sector, and in each sector is H\"older continuous 
up to the boundary in a neighborhood of $\zeta=0$.
\item[]\textbf{Jump condition:}  The jump conditions satisfied by 
the matrix function $\mathbf{Z}_\ind(\zeta;y)$ are of the form 
$\mathbf{Z}_{\ind+}(\zeta;y)=
\mathbf{Z}_{\ind-}(\zeta;y)
\mathbf{V}_{\mathbf{Z}}(\zeta;y)$, 
where the jump matrix $\mathbf{V}_{\mathbf{Z}}(\zeta;y)$ 
is as shown in 
Figure~\ref{fig:VlocR} 
\begin{figure}[h]
\begin{center}
\includegraphics{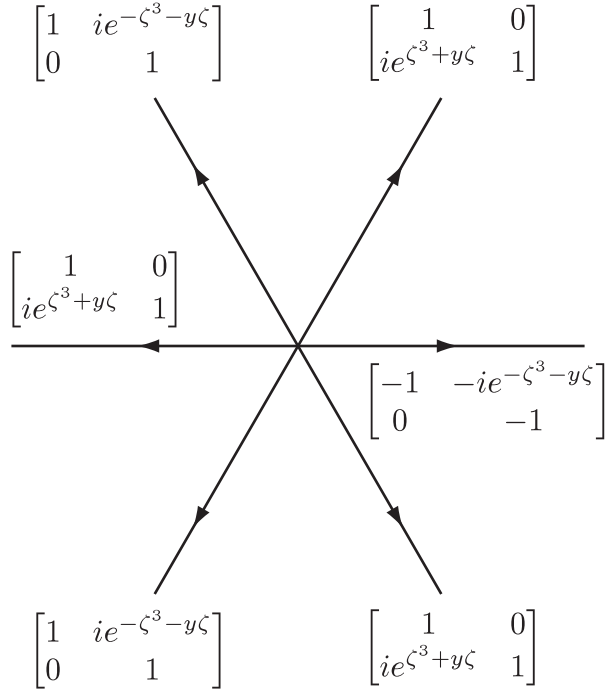}
\end{center}
\caption{\emph{The jump matrix $\mathbf{V}_{\mathbf{Z}}(\zeta;y)$.}}
\label{fig:VlocR}
\end{figure}
and all rays are oriented toward infinity.
\item[]\textbf{Normalization:}  The matrix 
$\mathbf{Z}_\ind(\zeta;y)$
satisifies the condition
\begin{equation}
\lim_{\zeta\to\infty}\mathbf{Z}_\ind(\zeta;y)
(-\zeta)^{(1-2\ind)\sigma_3/2}=\mathbb{I},
\label{eq:Zindnorm}
\end{equation}
with the limit being uniform with respect to direction in each of the
six sectors of analyticity.
\end{itemize}
\label{rhp:DSlocalII}
\end{rhp}

Suppose that Riemann-Hilbert Problem~\ref{rhp:DSlocalII} has a unique solution
for $y\in (y_-,y_+)\subset\mathbb{R}$ and for some integer $\ind$.  
We will now describe how to use it to create
a model for $\mathbf{P}(w)$ valid in the neighborhood $U$ assuming
that $y\in (y_-,y_+)$;  
we set
\begin{equation}
\begin{split}
\dot{\mathbf{P}}_\ind^\mathrm{in}(w):={}&
\eta_\ind(w)^{\sigma_3}\mathbf{W}(\zeta(w))d_N(w)^{-\sigma_3}\\
={}& \eta_\ind(w)^{\sigma_3}\epsilon_N^{(2\ind-1)\sigma_3/6}e^{\frac{1}{2}s\sigma_3/\epsilon_N}
\mathbf{Z}_\ind(\zeta(w);y)e^{-\frac{1}{2}s\sigma_3/\epsilon_N}
d_N(w)^{-\sigma_3},\quad w\in U.
\end{split}
\label{eq:dotPnlocdefine}
\end{equation}
The effect of post-multiplication by $d_N(w)^{-\sigma_3}$ is simply to restore the factors involving $T_N(w)$ to the jump conditions.  We emphasize that within the fixed neighborhood $U$ of $w=-1$, 
the inner model matrix
$\dot{\mathbf{P}}_\ind^\mathrm{in}(w)$ satisfies exactly the same jump conditions 
along the six contour arcs meeting at $w=w_*$ as does $\mathbf{P}(w)$.
It appears reasonable to propose a 
global model for $\mathbf{P}(w)$ in the following form:
\begin{equation}
\dot{\mathbf{P}}_\ind(w):=
\begin{cases}
\dot{\mathbf{P}}_\ind^\mathrm{in}(w),&\quad w\in U\\
\dot{\mathbf{P}}_\ind^\mathrm{out}(w),&\quad w\not\in \overline{U}.
\end{cases}
\label{eq:dotPdefineDS}
\end{equation}
The integer $\ind$ is evidently at our disposal.  We will later describe how it
should be chosen.

\section{Solution of the Inner Model Problem}
\subsection{Symmetry analysis of the inner model problem.}
While it may not yet be clear whether there exists a solution of
Riemann-Hilbert Problem~\ref{rhp:DSlocalII} for any $y\in\mathbb{R}$ 
and $\ind\in\mathbb{Z}$ at all, 
it is a standard argument that there exists at most one solution
for each $y\in\mathbb{R}$ and $\ind\in\mathbb{Z}$, 
and that every solution must satisfy
$\det(\mathbf{Z}_\ind(\zeta;y))\equiv 1$.  In this short
section we suppose that $y\in\mathbb{R}$ and $\ind\in\mathbb{Z}$ 
are values for which
Riemann-Hilbert Problem~\ref{rhp:DSlocalII} has a (unique and
unimodular) solution, and we examine some of the consequences of the
existence.  We will later show that $\mathbf{Z}_\ind(\zeta;y)
(-\zeta)^{(1-2\ind)\sigma_3/2}$
has a complete asymptotic expansion in descending powers of $\zeta$
as $\zeta\to\infty$:
\begin{equation}
\mathbf{Z}_\ind(\zeta;y)(-\zeta)^{(1-2\ind)\sigma_3/2}=
\mathbb{I}+\mathbf{A}_\ind(y)\zeta^{-1}+
\mathbf{B}_\ind(y)\zeta^{-2}+\mathbf{C}_\ind(y)\zeta^{-3}+\bo(\zeta^{-4}),
\quad\zeta\to\infty
\label{eq:Fexpansion}
\end{equation}
and where the matrix coefficients are the same regardless of which
of the six sectors of analyticity of $\mathbf{Z}_\ind(\zeta;y)$
is used to compute the expansion.

Given a solution $\mathbf{Z}_\ind(\zeta;y)$ of
Riemann-Hilbert Problem~\ref{rhp:DSlocalII}, consider the related matrix
\begin{equation}
\mathbf{F}(\zeta):=\mathbf{Z}_\ind(\zeta^*;y)^*.
\end{equation}
Since the jump contour for $\mathbf{Z}_\ind(\zeta;y)$ is
invariant under complex conjugation, $\mathbf{F}(\zeta)$ will be
analytic in the same domain that
$\mathbf{Z}_\ind(\zeta;y)$ is.  It is a direct calculation
that $\mathbf{F}(\zeta)$ satisfies exactly the same jump conditions as
does $\mathbf{Z}_\ind(\zeta;y)$.  Therefore, the matrix
$\mathbf{F}(\zeta)\mathbf{Z}_\ind(\zeta;y)^{-1}$ extends
continuously to the jump contour from both sides of each arc.  It
follows from the classical sense in which the boundary values are
attained (even at the self-intersection point $\zeta=0$), and from the
fact that $\mathbf{Z}_\ind(\zeta;y)$ is unimodular, that
in fact $\mathbf{F}(\zeta)\mathbf{Z}_\ind(\zeta;y)^{-1}$ 
is an entire function of $\zeta$.  Moreover, 
\begin{equation}
\begin{split}
\lim_{\zeta\to\infty}\mathbf{F}(\zeta)
\mathbf{Z}_\ind(\zeta;y)^{-1}
&=
\lim_{\zeta\to\infty}\mathbf{F}(\zeta)(-\zeta)^{(1-2\ind)\sigma_3/2}\cdot
\left[\mathbf{Z}_\ind(\zeta;y)(-\zeta)^{(1-2\ind)\sigma_3/2}
\right]^{-1}\\
&{}=\lim_{\zeta\to\infty}\mathbf{Z}_\ind(\zeta^*;y)^*
(-\zeta)^{(1-2\ind)\sigma_3/2}
=\lim_{\zeta\to\infty}\left[\mathbf{Z}_\ind(\zeta^*;y)
(-\zeta^*)^{(1-2\ind)\sigma_3/2}\right]^* = \mathbb{I},
\end{split}
\end{equation}
so by Liouville's Theorem, 
$\mathbf{Z}_\ind(\zeta^*;y)^*=
\mathbf{F}(\zeta)=\mathbf{Z}_\ind(\zeta;y)$.  Applying this
symmetry to the expansion \eqref{eq:Fexpansion} shows that the elements
of the matrices $\mathbf{A}_\ind(y)$ and $\mathbf{B}_\ind(y)$
(and in fact all of the matrix expansion coefficients) are real numbers.

We can obtain more detailed information by exploiting a further symmetry in
the special case of $\ind=0$.
Indeed, let us compare $\mathbf{Z}_0(\zeta;y)$ with the matrix
\begin{equation}
\mathbf{F}(\zeta):=\begin{cases}
\mathbf{Z}_0(-\zeta;y)i\sigma_1, &\quad \Im\{\zeta\}>0\\
-\mathbf{Z}_0(-\zeta;y)i\sigma_1,&\quad \Im\{\zeta\}<0.
\end{cases}
\label{eq:FQminus}
\end{equation}
Clearly, $\det(\mathbf{F}(\zeta))\equiv 1$, and it is a direct matter
to check that $\mathbf{F}(\zeta)$ is analytic precisely where
$\mathbf{Z}_0(\zeta;y)$ is, and $\mathbf{F}(\zeta)$ 
satisfies exactly the same jump conditions on the six rays of Riemann-Hilbert 
Problem~\ref{rhp:DSlocalII} as does $\mathbf{Z}_0(\zeta;y)$.
Therefore, $\mathbf{F}(\zeta)\mathbf{Z}_0(\zeta;y)^{-1}$ is
an entire unimodular matrix function of $\zeta$.  To identify this entire
function as a polynomial, it is enough to extract the non-decaying terms
in its asymptotic expansion as $\zeta\to\infty$;  Assuming without
loss of generality that $\Im\{\zeta\}>0$ so that 
$(-\zeta)^{\sigma_3/2}=-i\sigma_3\zeta^{\sigma_3/2}$, we have
\begin{equation}
\begin{split}
\mathbf{F}(\zeta)\mathbf{Z}_0(\zeta;y)^{-1}&=
\mathbf{F}(\zeta)(-i\sigma_3)\zeta^{\sigma_3/2}\cdot
\left[\mathbf{Z}_0(\zeta;y)(-\zeta)^{\sigma_3/2}\right]^{-1}\\
&{}=\mathbf{Z}_0(-\zeta;y)\sigma_1\sigma_3\zeta^{\sigma_3/2}
\cdot\left(\mathbb{I}-\mathbf{A}_0(y)\zeta^{-1} + \bo(\zeta^{-2})\right)\\
&{}=\mathbf{Z}_0(-\zeta;y)\zeta^{\sigma_3/2}
\begin{bmatrix}0 & -\zeta^{-1}\\\zeta & 0\end{bmatrix}
\left(\mathbb{I}-\mathbf{A}_0(y)\zeta^{-1} + \bo(\zeta^{-2})\right)\\
&{}=\left(\mathbb{I}-\mathbf{A}_0(y)\zeta^{-1}+\bo(\zeta^{-2})\right)
\begin{bmatrix}0 & -\zeta^{-1}\\\zeta & 0\end{bmatrix}
\left(\mathbb{I}-\mathbf{A}_0(y)\zeta^{-1}+\bo(\zeta^{-2})\right)\\
&=\sigma_-\zeta
-\mathbf{A}_0(y)\sigma_--
\sigma_-\mathbf{A}_0(y)
+\bo(\zeta^{-1}),\quad\zeta\to\infty,
\end{split}
\label{eq:FQminusexpansion}
\end{equation}
where
\begin{equation}
\sigma_+:=\begin{bmatrix}0 & 1\\0 & 0\end{bmatrix}\quad\text{and}\quad
\sigma_-:=\begin{bmatrix}0 & 0\\1 & 0\end{bmatrix},
\end{equation}
and so $\mathbf{F}(\zeta)\mathbf{Z}_0(\zeta;y)^{-1}$
may be identified with the polynomial terms in this expansion; therefore
we arrive at the identity
\begin{equation}
\mathbf{F}(\zeta)\mathbf{Z}_0(\zeta;y)^{-1}=\begin{bmatrix}-A_{0,12}(y) & 0\\\zeta-A_{0,11}(y)-A_{0,22}(y) &
-A_{0,12}(y)\end{bmatrix}.
\end{equation}
Since both sides represent unimodular matrices, we learn that
\begin{equation}
A_{0,12}(y)^2 = 1.
\label{eq:A12squaredeq1}
\end{equation}
We will later obtain a refinement of
    \eqref{eq:A12squaredeq1}, namely that 
$A_{0,12}(y)= 1$ holds for all real $y$.  The sign
information will follow from the explicit solution of 
Riemann-Hilbert Problem~\ref{rhp:DSlocalII} for $\ind=0$ in terms
of special functions, which will be possible precisely as a consequence of the
identity \eqref{eq:A12squaredeq1}.

\subsection{Differential equations derived from the inner model problem.  
Lax equations and Painlev\'e-II}
The only parameters in the Riemann-Hilbert problem characterizing
$\mathbf{Z}_\ind(\zeta;y)$ are $\ind\in\mathbb{Z}$ and
$y\in\mathbb{R}$.  Assuming existence
for some fixed $\ind\in\mathbb{Z}$ and for $y$ in some open set, we will now
investigate some consequences of the dependence of the solution on $y$.
We suppose that the expansion \eqref{eq:Fexpansion}
holds in the stronger sense that 
the following are also true:
\begin{equation}
\frac{\partial}{\partial y}\left[\mathbf{Z}_\ind(\zeta;y)
(-\zeta)^{(1-2\ind)\sigma_3/2}
\right]=\bo(\zeta^{-1})\quad\text{and}\quad
\frac{\partial}{\partial\zeta}\left[\mathbf{Z}_\ind(\zeta;y)
(-\zeta)^{(1-2\ind)\sigma_3/2}
\right]=\bo(\zeta^{-2}),\quad\zeta\to\infty.
\end{equation}
The matrix $\mathbf{L}_\ind(y,\zeta)$ defined by
\begin{equation}
\mathbf{L}_\ind(y,\zeta):=\mathbf{Z}_\ind(\zeta;y)
e^{-(\zeta^3+y\zeta)\sigma_3/2}
\label{eq:GFrelation}
\end{equation}
has jump discontinuities along the six rays mediated by jump matrices that are independent of both $y$ and $\zeta$.  Given
the compatibility of the jump matrices at $\zeta=0$, this implies that
the matrices
\begin{equation}
\mathbf{U}_\ind(y,\zeta):=\frac{\partial\mathbf{L}_\ind}{\partial y}\mathbf{L}_\ind^{-1}
\quad\text{and}
\quad
\mathbf{V}_\ind(y,\zeta):=\frac{\partial\mathbf{L}_\ind}{\partial\zeta}
\mathbf{L}_\ind^{-1}
\end{equation}
are both entire functions of $\zeta$.  Using the above expansions of
$\mathbf{Z}_\ind(\zeta;y)$ and its derivatives, we easily obtain
the following expansions involving $\mathbf{L}_\ind$:
\begin{equation}
\begin{split}
\frac{\partial \mathbf{L}_\ind}{\partial y}&=
\left(-\frac{\partial\Theta}{\partial y}
\left[\mathbb{I}+\mathbf{A}_\ind(y)\zeta^{-1}+
\mathbf{B}_\ind(y)\zeta^{-2}+\bo(\zeta^{-3})\right]\sigma_3 +\bo(\zeta^{-1})\right)
e^{-\Theta\sigma_3}\\
&=\left(-\frac{1}{2}\sigma_3\zeta-\frac{1}{2}\mathbf{A}_\ind(y)\sigma_3 +
\bo(\zeta^{-1})\right)e^{-\Theta\sigma_3}\\
\frac{\partial\mathbf{L}_\ind}{\partial\zeta}&=
\left(-\frac{\partial\Theta}{\partial\zeta}
\left[\mathbb{I}+\mathbf{A}_\ind(y)\zeta^{-1}+\mathbf{B}_\ind(y)\zeta^{-2}
+\bo(\zeta^{-3})\right]\sigma_3 +
\bo(\zeta^{-2})\right)e^{-\Theta\sigma_3}\\
&=\left(-\frac{3}{2}\sigma_3\zeta^2-\frac{3}{2}\mathbf{A}_\ind(y)\sigma_3\zeta
-\frac{3}{2}\mathbf{B}_\ind(y)\sigma_3-\frac{1}{2}y\sigma_3 + \bo(\zeta^{-1})
\right)e^{-\Theta\sigma_3}\\
\mathbf{L}_\ind^{-1}&=e^{\Theta\sigma_3}\left(\mathbb{I}-\mathbf{A}_\ind(y)\zeta^{-1}
+\left[\mathbf{A}_\ind(y)^2-\mathbf{B}_\ind(y)\right]\zeta^{-2} +\bo(\zeta^{-3})\right),
\end{split}
\end{equation}
where 
\begin{equation}
\Theta:=\frac{1}{2}\left[\zeta^3+y\zeta+(1-2\ind)\log(-\zeta)\right].
\end{equation}
We can then easily see that both $\mathbf{U}_\ind$
and $\mathbf{V}_\ind$ grow algebraically in $\zeta$ as $\zeta\to\infty$
and hence are necessarily polynomials:
\begin{equation}
\begin{split}
\mathbf{U}_\ind(y,\zeta)&
=-\frac{1}{2}\sigma_3\zeta +\frac{1}{2}[\sigma_3,\mathbf{A}_\ind(y)]\\
\mathbf{V}_\ind(y,\zeta)&=-\frac{3}{2}\sigma_3\zeta^2+\frac{3}{2} 
[\sigma_3,\mathbf{A}_\ind(y)]\zeta
-\frac{3}{2}[\sigma_3,\mathbf{A}_\ind(y)]\mathbf{A}_\ind(y)+\frac{3}{2}
[\sigma_3,\mathbf{B}_\ind(y)]-\frac{1}{2}y\sigma_3.
\end{split}
\end{equation}
We may rewrite these in the form
\begin{equation}
\begin{split}
\mathbf{U}_\ind(y,\zeta)&=
-\frac{1}{2}\sigma_3\zeta +\begin{bmatrix}0 &\pu_\ind(y) \\ -\pv_\ind(y) & 0
\end{bmatrix}\\
\mathbf{V}_\ind(y,\zeta)&=-\frac{3}{2}\sigma_3\zeta^2 +3
\begin{bmatrix}0 & \pu_\ind(y)\\-\pv_\ind(y) & 0\end{bmatrix}\zeta +
\frac{1}{2}\begin{bmatrix}-6\pu_\ind(y)\pv_\ind(y)-y & 2\pw_\ind(y) \\-2\pz_\ind(y) & 
6\pu_\ind(y)\pv_\ind(y)+y\end{bmatrix},
\end{split}
\label{eq:DSUVdefine}
\end{equation}
where
\begin{equation}
\begin{split}
\pu_\ind(y)&:=A_{\ind,12}(y)\\ \pv_\ind(y)&:=A_{\ind,21}(y)\\ 
\pw_\ind(y)&:=3B_{\ind,12}(y)-3A_{\ind,12}(y)A_{\ind,22}(y)\\
\pz_\ind(y)&:=3B_{\ind,21}(y)-3A_{\ind,21}(y)A_{\ind,11}(y).
\end{split}
\label{eq:uvwzdefs}
\end{equation}
The matrix $\mathbf{L}_\ind$ is therefore a simultaneous 
(and fundamental, since $\det(\mathbf{L}_\ind)\equiv 1$) solution of the \emph{Lax
equations}
\begin{equation}
\frac{\partial \mathbf{L}_\ind}{\partial y}=\mathbf{U}_\ind(y,\zeta)\mathbf{L}_\ind
\quad\text{and}\quad
\frac{\partial \mathbf{L}_\ind}{\partial\zeta}=\mathbf{V}_\ind(y,\zeta)\mathbf{L}_\ind,
\label{eq:DSLaxEqns}
\end{equation}
and therefore the compatibility condition
\begin{equation}
\frac{\partial \mathbf{U}_\ind}{\partial\zeta}-
\frac{\partial\mathbf{V}_\ind}{\partial y} + [\mathbf{U}_\ind,\mathbf{V}_\ind]=0
\end{equation}
holds identically in $\zeta$.  Separating the powers of $\zeta$
leads to the following system of differential equations (the \emph{Painlev\'e-II system}) governing
the quantities defined by \eqref{eq:uvwzdefs}:
\begin{equation}
\begin{split}
\pu_\ind'(y)&=-\frac{1}{3}\pw_\ind(y)\\
\pv_\ind'(y)&=\frac{1}{3}\pz_\ind(y)\\
\pw_\ind'(y)&=6\pu_\ind(y)^2\pv_\ind(y)+y\pu_\ind(y)\\
\pz_\ind'(y)&=-6\pu_\ind(y)\pv_\ind(y)^2-y\pv_\ind(y).
\end{split}
\label{eq:PIIsystem}
\end{equation}
Although satisfied by quantities evidently depending on $\ind\in\mathbb{Z}$, the Painlev\'e-II system does not involve $\ind$ in any explicit way.
Eliminating $\pw_\ind$ and $\pz_\ind$ yields the coupled system of second-order 
Painlev\'e-II-type equations
\begin{equation}
\pu_\ind''(y)+2\pu_\ind(y)^2\pv_\ind(y)+\frac{1}{3}y\pu_\ind(y)=0\quad\text{and}\quad
\pv_\ind''(y)+2\pu_\ind(y)\pv_\ind(y)^2+\frac{1}{3}y\pv_\ind(y)=0.
\label{eq:uvPIIsystem}
\end{equation}

In fact, functions related in an elementary way to $\pu_\ind$ and $\pv_\ind$ turn out to satisfy uncoupled second-order equations of Painlev\'e-II type, but we must use the \emph{inhomogeneous} form of the Painlev\'e-II equation, and the coefficient of inhomogeneity
will depend on $\ind\in\mathbb{Z}$.  To see that this is so, we follow \cite[pages 154--155]{FokasIKN06}  by multiplying the first equation of \eqref{eq:uvPIIsystem} by $\pv_\ind$, the second by $\pu_\ind$, and subtracting: 
\begin{equation}
0=\pv_\ind(y)\pu_\ind''(y)-\pu_\ind(y)\pv_\ind''(y)=\frac{d}{dy}\left[\pv_\ind(y)\pu_\ind'(y)-\pu_\ind(y)
\pv_\ind'(y)\right].
\end{equation}
Hence, the quantity
\begin{equation}
\lambda_\ind:=\pv_\ind(y)\pu_\ind'(y)-\pu_\ind(y)\pv_\ind'(y)=-\frac{1}{3}\left[\pv_\ind(y)\pw_\ind(y)+\pu_\ind(y)\pz_\ind(y)
\right]
\label{eq:nuconst}
\end{equation}
is a constant, independent of $y$.  

To compute the value of $\lambda_\ind$ for the particular solution of 
\eqref{eq:PIIsystem} corresponding to the specific Riemann-Hilbert problem
for $\mathbf{Z}_\ind(\zeta;y)$, we first consider the 
direct problem for
the Lax equation $\mathbf{L}_{\ind,\zeta}=\mathbf{V}_\ind\mathbf{L}_\ind$, formally expanding
$\mathbf{L}_\ind(y,\zeta)$ for large $\zeta$ by assuming the form
\begin{equation}
\mathbf{L}_\ind(y,\zeta)\sim\left(\mathbb{I}+\sum_{k=1}^\infty 
\mathbf{Y}_{\ind,k}(y)\zeta^{-k}
\right)e^{\mathbf{D}_\ind(y,\zeta)},\quad\zeta\to\infty,
\label{eq:GexpansionYD}
\end{equation}
where each of the coefficients $\mathbf{Y}_{\ind,k}(y)$ is an 
off-diagonal matrix, and
where $\mathbf{D}_\ind(y,\zeta)$ is a diagonal matrix with the asymptotic expansion
\begin{equation}
\mathbf{D}_\ind(y,\zeta)\sim \mathbf{D}_{\ind,-3}(y)\zeta^3 +
\mathbf{D}_{\ind,-2}(y)\zeta^2
+\mathbf{D}_{\ind,-1}(y)\zeta +\mathbf{D}_{\ind,0}(y)\log(-\zeta)+\sum_{k=1}^\infty
\mathbf{D}_{\ind,k}(y)\zeta^{-k},\quad \zeta\to\infty,
\end{equation}
where all of the coefficients $\mathbf{D}_{\ind,k}(y)$ are diagonal matrices.  The value of $\lambda_\ind$ will be determined from the coefficient $\mathbf{D}_{\ind,0}$.  By
differentiation of the formal series, we obtain
\begin{equation}
\frac{\partial\mathbf{L}_\ind}{\partial\zeta}-\mathbf{V}_\ind\mathbf{L}_\ind\sim
\left[
\mathbf{M}_{\ind,-2}\zeta^2 +\mathbf{M}_{\ind,-1}\zeta+\mathbf{M}_{\ind,0} +
\mathbf{M}_{\ind,1}\zeta^{-1} + \mathbf{M}_{\ind,2}\zeta^{-2}+\bo(\zeta^{-3})\right]
e^{\mathbf{D}_\ind(y,\zeta)},
\quad\zeta\to\infty,
\end{equation}
where the matrix coefficients are systematically determined as follows:
\begin{equation}
\begin{split}
\mathbf{M}_{\ind,-2}&:=
3\mathbf{D}_{\ind,-3}(y)+\frac{3}{2}\sigma_3\\
\mathbf{M}_{\ind,-1}&:=
3\mathbf{Y}_{\ind,1}(y)\mathbf{D}_{\ind,-3}(y)+2\mathbf{D}_{\ind,-2}(y)
-3\begin{bmatrix}0 & \pu_\ind(y)\\-\pv_\ind(y) & 0
\end{bmatrix}+\frac{3}{2}\sigma_3\mathbf{Y}_{\ind,1}(y)\\
\mathbf{M}_{\ind,0}&:=
3\mathbf{Y}_{\ind,2}(y)\mathbf{D}_{\ind,-3}(y)+2\mathbf{Y}_{\ind,1}(y)
\mathbf{D}_{\ind,-2}(y)+\mathbf{D}_{\ind,-1}(y)\\
&\quad{}
-\frac{1}{2}\begin{bmatrix}-6\pu_\ind(y)\pv_\ind(y)-y & 2\pw_\ind(y)\\-2\pz_\ind(y) & 
6\pu_\ind(y)\pv_\ind(y)+y\end{bmatrix}
-3\begin{bmatrix}0 & \pu_\ind(y)\\-\pv_\ind(y) & 0\end{bmatrix}
\mathbf{Y}_{\ind,1}(y) +\frac{3}{2}\sigma_3
\mathbf{Y}_{\ind,2}(y)\\
\mathbf{M}_{\ind,1}&:=
3\mathbf{Y}_{\ind,3}(y)\mathbf{D}_{\ind,-3}(y)+2\mathbf{Y}_{\ind,2}(y)
\mathbf{D}_{\ind,-2}(y)+\mathbf{Y}_{\ind,1}(y)
\mathbf{D}_{\ind,-1}(y)+\mathbf{D}_{\ind,0}(y)\\
&\quad{}-\frac{1}{2}\begin{bmatrix}-6\pu_\ind(y)\pv_\ind(y)-y & 2\pw_\ind(y)\\-2\pz_\ind(y) 
& 6\pu_\ind(y)\pv_\ind(y)+y\end{bmatrix}
\mathbf{Y}_{\ind,1}(y) \\
&\quad{}
-3\begin{bmatrix}0 & \pu_\ind(y)\\-\pv_\ind(y) & 0\end{bmatrix}
\mathbf{Y}_{\ind,2}(y) +\frac{3}{2}\sigma_3
\mathbf{Y}_{\ind,3}(y)\\
\mathbf{M}_{\ind,2}&:=
-\mathbf{D}_{\ind,1}(y)
-\frac{1}{2}\begin{bmatrix}0 & 2\pw_\ind(y)\\-2\pz_\ind(y) & 0\end{bmatrix}
\mathbf{Y}_{\ind,2}(y)\\
&\quad{}
-3\begin{bmatrix}0 & \pu_\ind(y)\\-\pv_\ind(y) & 0\end{bmatrix}\mathbf{Y}_{\ind,3}(y) +
\text{off-diagonal terms},
\end{split}
\end{equation}
and so on.  By setting each $\mathbf{M}_{\ind,k}$ to zero in sequence,
and by separating the diagonal and off-diagonal parts of each of these
equations, we obtain the following:  the 
matrix equation $\mathbf{M}_{\ind,-2}=\mathbf{0}$
implies
\begin{equation}
\mathbf{D}_{\ind,-3}(y)=-\frac{1}{2}\sigma_3,
\end{equation}
the matrix equation $\mathbf{M}_{\ind,-1}=\mathbf{0}$ implies
\begin{equation}
\mathbf{Y}_{\ind,1}(y)=\begin{bmatrix}0 & \pu_\ind(y)\\\pv_\ind(y) & 0\end{bmatrix}
\quad\text{and}
\quad
\mathbf{D}_{\ind,-2}(y)=\mathbf{0},
\end{equation}
the matrix equation $\mathbf{M}_{\ind,0}=\mathbf{0}$ implies
\begin{equation}
\mathbf{Y}_{\ind,2}(y)=\frac{1}{3}\begin{bmatrix}0 & \pw_\ind(y)\\\pz_\ind(y) & 0\end{bmatrix}
\quad\text{and}\quad
\mathbf{D}_{\ind,-1}(y)=-\frac{1}{2}y\sigma_3,
\end{equation}
the matrix equation $\mathbf{M}_{\ind,1}=\mathbf{0}$ implies
\begin{equation}
\mathbf{Y}_{\ind,3}(y)=-\frac{1}{3}\begin{bmatrix}0 & 3\pu_\ind(y)^2\pv_\ind(y)+y\pu_\ind(y)\\
3\pu_\ind(y)\pv_\ind(y)^2+y\pv_\ind(y) & 0\end{bmatrix}
\quad\text{and}\quad
\mathbf{D}_{\ind,0}(y) = -3\lambda_\ind\sigma_3,
\end{equation}
where we have recalled the definition \eqref{eq:nuconst} of $\lambda_\ind$,
and finally, the diagonal terms of the matrix equation
$\mathbf{M}_{\ind,2}=\mathbf{0}$ imply that
\begin{equation}
\mathbf{D}_{\ind,1}(y)=-2H_\ind(y)\sigma_3
\end{equation}
where the Painlev\'e-II \emph{Hamiltonian} is
\begin{equation}
H_\ind(y):=\frac{1}{6}\pw_\ind(y)\pz_\ind(y)-\frac{3}{2}\pu_\ind(y)^2\pv_\ind(y)^2-\frac{1}{2}y\pu_\ind(y)\pv_\ind(y).
\label{eq:PIIHamiltonian}
\end{equation}
It follows that the formal asymptotic expansion of $\mathbf{L}_\ind(y,\zeta)$
for large $\zeta$ takes the form
\begin{equation}
\begin{split}
\mathbf{L}_\ind(y,\zeta)e^{(\zeta^3+y\zeta)\sigma_3/2}(-\zeta)^{3\lambda_\ind\sigma_3}\sim
\mathbb{I}&+\left[\mathbf{Y}_{\ind,1}(y)+\mathbf{D}_{\ind,1}(y)\right]\zeta^{-1} \\
&+ 
\left[\mathbf{Y}_{\ind,2}(y)+\mathbf{D}_{\ind,2}(y)+\frac{1}{2}\mathbf{D}_{\ind,1}(y)^2+
\mathbf{Y}_{\ind,1}(y)\mathbf{D}_{\ind,1}(y)\right]\zeta^{-2}\\
&+\bo(\zeta^{-3})
,\quad\zeta\to\infty.
\end{split}
\label{eq:Gexpansion}
\end{equation}
Comparing with \eqref{eq:Fexpansion} and \eqref{eq:GFrelation}, we see that
\begin{equation}
\lambda_\ind=\frac{1}{6}-\frac{\ind}{3},
\label{eq:nuvalue}
\end{equation}
and also that 
\begin{equation}
\begin{split}
\mathbf{A}_\ind(y)&=\mathbf{Y}_{\ind,1}(y)+\mathbf{D}_{\ind,1}(y)\\
\mathbf{B}_\ind(y)&= 
\mathbf{Y}_{\ind,2}(y)+\mathbf{D}_{\ind,2}(y)+\frac{1}{2}\mathbf{D}_{\ind,1}(y)^2+
\mathbf{Y}_{\ind,1}(y)\mathbf{D}_{\ind,1}(y),
\end{split}
\end{equation}
so that in addition to the relations \eqref{eq:uvwzdefs} we have
\begin{equation}
A_{\ind,11}(y)=-2H_\ind(y)\quad\text{and}\quad A_{\ind,22}(y)=2H_\ind(y)
\label{eq:Adiag}
\end{equation}
and
\begin{equation}
B_{\ind,12}(y)=\frac{1}{3}\pw_\ind(y)+2H_\ind(y)\pu_\ind(y)\quad\text{and}\quad
B_{\ind,21}(y)=\frac{1}{3}\pz_\ind(y)-2H_\ind(y)\pv_\ind(y).
\label{eq:Boffdiag}
\end{equation}

Now that the value of $\lambda_\ind$ has been determined according to \eqref{eq:nuvalue}, a direct calculation
shows that the logarithmic derivatives
\begin{equation}
\mathcal{P}_\ind(y):=\frac{\pu_\ind'(y)}{\pu_\ind(y)}\quad\text{and}\quad 
\mathcal{Q}_\ind(y):=\frac{\pv_\ind'(y)}{\pv_\ind(y)}
\label{eq:logderivs}
\end{equation}
satisfy the uncoupled Painlev\'e-II equations
\begin{equation}
\begin{alignedat}{2}
\mathcal{P}_\ind''(y)&=2\mathcal{P}_\ind(y)^3+\frac{2}{3}y\mathcal{P}_\ind(y) -\left[\frac{1}{3}-2\lambda_\ind\right]
&\qquad\qquad
\mathcal{Q}_\ind''(y)&=2\mathcal{Q}_\ind(y)^3+\frac{2}{3}y\mathcal{Q}_\ind(y)-\left[\frac{1}{3}+2\lambda_\ind\right]
\\
&=2\mathcal{P}_\ind(y)^3+\frac{2}{3}y\mathcal{P}_\ind(y)-\frac{2}{3}\ind
&\qquad\qquad &=2\mathcal{Q}_\ind(y)^3+\frac{2}{3}y\mathcal{Q}_\ind(y)+\frac{2}{3}(\ind-1).
\end{alignedat}
\end{equation}
Note that for general $\ind\in\mathbb{Z}$, both equations are of 
inhomogeneous type.

\subsection{Solution of Riemann-Hilbert Problem~\ref{rhp:DSlocalII} for $\ind=0$.}
Recalling the relation \eqref{eq:A12squaredeq1} and the definitions
\eqref{eq:uvwzdefs} of the potentials $\pu_\ind$, $\pv_\ind$, $\pw_\ind$, and $\pz_\ind$,
we learn that when $\ind=0$, $\pu_0(y)\equiv\sigma$, where (assuming for
the moment continuity of $\pu_0(y)$ with respect to $y$) $\sigma=\pm 1$
is a fixed sign to be determined.  According to the system of differential
equations \eqref{eq:PIIsystem} necessarily satisfied by the
potentials, we can determine the values of the remaining potentials in
the special case of $\ind=0$:
\begin{equation}
\pu_0(y)= \sigma,\quad \pv_0(y)=-\frac{\sigma}{6}y,\quad \pw_0(y)=0,\quad 
\pz_0(y)=-\frac{\sigma}{2}.
\label{eq:potentialszero}
\end{equation}
Therefore, from \eqref{eq:DSUVdefine} and \eqref{eq:DSLaxEqns} we see that
in the special case of $\ind=0$, the linear differential equations
simultaneously satisfied by the matrix $\mathbf{L}_0(a,\zeta)=
\mathbf{Z}_0(\zeta;y)e^{-(\zeta^3+y\zeta)\sigma_3/2}$ take the form
\begin{equation}
\frac{\partial\mathbf{L}_0}{\partial y} = 
\begin{bmatrix}
-\tfrac{1}{2}\zeta & \sigma\\\tfrac{1}{6}\sigma y & \tfrac{1}{2}\zeta
\end{bmatrix}\mathbf{L}_0\quad\text{and}\quad
\frac{\partial\mathbf{L}_0}{\partial\zeta}=
\begin{bmatrix}
-\tfrac{3}{2}\zeta^2 & 3\sigma\zeta\\
\tfrac{1}{2}\sigma(y\zeta+1) & \tfrac{3}{2}\zeta^2\end{bmatrix}
\mathbf{L}_0.
\end{equation}
Let $\mathbf{l}=\trans{(l_{1},l_{2})}$ denote either of the columns of 
$\mathbf{L}_0$.
From the first of the above two Lax equations, we  observe that 
\begin{equation}
l_{2}=\sigma\left(\frac{\partial l_{1}}{\partial y} +
\frac{1}{2}\zeta l_{1}\right)
\end{equation}
and then that 
\begin{equation}
l_{1}=F(\xi),\quad \xi:=\frac{1}{6^{1/3}}\left(y+\frac{3}{2}\zeta^2\right),
\end{equation}
where $F$ is any solution of Airy's equation $F''(\xi)=\xi F(\xi)$.  Therefore 
$\mathbf{l}$ necessarily has the form
\begin{equation}
\mathbf{l}=\begin{bmatrix}F(\xi)\\\sigma (6^{-1/3}F'(\xi)+\tfrac{1}{2}\zeta F(\xi))
\end{bmatrix}
\end{equation}
and by substitution into the second equation of the Lax pair we learn
that the Airy function $F$ may only depend on $\zeta$ through the variable $\xi$,
in other words, we must select for $F$ a linear combination of $\mathrm{Ai}(\xi)$
and $\mathrm{Bi}(\xi)$ with coefficients independent of $\zeta$.

Now we try to select the correct solutions of Airy's equation to
construct $\mathbf{L}_0(y,\zeta)$ and hence $\mathbf{Z}_0(\zeta;y)$
in each of the six sectors of analyticity
in such a way that we obtain the desired asymptotics for large $\zeta$
and satisfy the required jump conditions.  This procedure will also determine
the correct value of $\sigma=\pm 1$.

Let us consider three specific choices of solution to Airy's equation leading
to the three solution vectors
\begin{equation}
\mathbf{l}^0:=\begin{bmatrix}\mathrm{Ai}(\xi)\\
\sigma(6^{-1/3}\mathrm{Ai}'(\xi)+\tfrac{1}{2}\zeta\mathrm{Ai}(\xi))
\end{bmatrix},\quad
\mathbf{l}^\pm:=\begin{bmatrix}\mathrm{Ai}(e^{\pm 2\pi i/3}\xi)\\
\sigma(6^{-1/3}e^{\pm 2\pi i/3}\mathrm{Ai}'(e^{\pm 2\pi i/3}\xi) +
\tfrac{1}{2}\zeta
\mathrm{Ai}(e^{\pm 2\pi i/3}\xi))\end{bmatrix}.
\label{eq:gvectors}
\end{equation}
Each of these is an entire function of $\zeta$.
From the standard asymptotic formulae
\begin{equation}
\mathrm{Ai}(z)=\frac{1}{2\sqrt{\pi}}z^{-1/4}e^{-2z^{3/2}/3}(1+\bo(|z|^{-3/2}))\quad
\text{and}\quad
\mathrm{Ai}'(z)=-\frac{1}{2\sqrt{\pi}}z^{1/4}e^{-2z^{3/2}/3}(1+\bo(|z|^{-3/2}))
\end{equation}
valid as $z\to\infty$ with $|\arg(z)|\le \pi-\delta$
for any $\delta>0$, we see that (all of the following asymptotic statements 
assume $y$ is held fixed)
\begin{equation}
\mathbf{l}^0 = 
\frac{\zeta^{-1/2}e^{-(\zeta^3+y\zeta)/2}}{(48)^{1/6}\sqrt{\pi}}
\begin{bmatrix}1+\bo(\zeta^{-1})\\
\bo(\zeta^{-1})\end{bmatrix},\quad
\zeta\to\infty,\quad|\arg(\zeta)|\le\frac{\pi}{2}-\delta,
\end{equation}
\begin{equation}
\mathbf{l}^0 = 
-\sigma\frac{(-\zeta)^{1/2}e^{(\zeta^3+y\zeta)/2}}{(48)^{1/6}\sqrt{\pi}}
\begin{bmatrix}\bo(\zeta^{-1})\\
1+\bo(\zeta^{-1})
\end{bmatrix},\quad \zeta\to\infty,\quad
|\arg(-\zeta)|\le\frac{\pi}{2}-\delta.
\end{equation}
Similarly,
\begin{equation}
\mathbf{l}^\pm = -\sigma e^{\pm 2\pi i/3}
\frac{(e^{\pm i\pi/3}\zeta)^{1/2}e^{(\zeta^3+y\zeta)/2}}{(48)^{1/6}\sqrt{\pi}}
\begin{bmatrix}\bo(\zeta^{-1})\\1+\bo(\zeta^{-1})\end{bmatrix},\quad
\zeta\to\infty,\quad
|\arg(e^{\pm i\pi/3}\zeta)|\le\frac{\pi}{2}-\delta,
\end{equation}
\begin{equation}
\mathbf{l}^\pm = \frac{(e^{\mp 2\pi i/3}\zeta)^{-1/2}e^{-(\zeta^3+y\zeta)/2}}
{(48)^{1/6}\sqrt{\pi}}\begin{bmatrix}1+\bo(\zeta^{-1})\\
\bo(\zeta^{-1})\end{bmatrix},\quad\zeta\to\infty,\quad
|\arg(e^{\mp 2\pi i/3}\zeta)|\le\frac{\pi}{2}-\delta.
\end{equation}
We may now define a candidate for $\mathbf{Z}_0(\zeta;y)$
that satisfies the required analyticity and normalization properties among
the conditions of Riemann-Hilbert Problem~\ref{rhp:DSlocalII}:
\begin{equation}
\mathbf{Z}_0(\zeta;y):=
\begin{cases}
(48)^{1/6}\sqrt{\pi}(i\mathbf{l}^0,\sigma e^{-2\pi i/3}\mathbf{l}^-)
e^{(\zeta^3+y\zeta)\sigma_3/2},\quad &
\arg(\zeta)\in (0,\frac{\pi}{3})\\
(48)^{1/6}\sqrt{\pi}(e^{i\pi/6}\mathbf{l}^+,\sigma e^{-2\pi i/3}\mathbf{l}^-)
e^{(\zeta^3+y\zeta)\sigma_3/2},
\quad & \arg(\zeta)\in (\frac{\pi}{3},\frac{2\pi}{3})\\
(48)^{1/6}\sqrt{\pi}(e^{i\pi/6}\mathbf{l}^+,-\sigma\mathbf{l}^0)
e^{(\zeta^3+y\zeta)\sigma_3/2},\quad & \arg(\zeta)\in (\frac{2\pi}{3},\pi)\\
(48)^{1/6}\sqrt{\pi}(e^{-i\pi/6}\mathbf{l}^-,-\sigma\mathbf{l}^0)
e^{(\zeta^3+y\zeta)\sigma_3/2},\quad & \arg(\zeta)\in (-\pi,-\frac{2\pi}{3})\\
(48)^{1/6}\sqrt{\pi}(e^{-i\pi/6}\mathbf{l}^-,\sigma e^{2\pi i/3}\mathbf{l}^+)
e^{(\zeta^3+y\zeta)\sigma_3/2},\quad & \arg(\zeta)\in (-\frac{2\pi}{3},-\frac{\pi}{3})\\
(48)^{1/6}\sqrt{\pi}(-i\mathbf{l}^0,\sigma e^{2\pi i/3}\mathbf{l}^+)
e^{(\zeta^3+y\zeta)\sigma_3/2},\quad &
\arg(\zeta)\in (-\frac{\pi}{3},0).
\end{cases}
\label{eq:Rnloczero}
\end{equation}
It only remains to verify the jump conditions of Riemann-Hilbert 
Problem~\ref{rhp:DSlocalII}, 
with the help of the identity
\begin{equation}
\mathrm{Ai}(z)+e^{-2\pi i/3}\mathrm{Ai}(ze^{-2\pi i/3}) + e^{2\pi i/3}
\mathrm{Ai}(ze^{2\pi i/3})=0,\quad z\in\mathbb{C},
\end{equation}
which implies the vector identity $\mathbf{l}^0 + e^{-2\pi i/3}\mathbf{l}^- +
e^{2\pi i/3}\mathbf{l}^+=\mathbf{0}$.  
Using this fact, it is a direct matter to confirm that the required 
jump conditions are indeed satisfied, \emph{provided we make the choice of sign $\sigma=+1$}.
We have therefore proved the following.
\begin{proposition}
Let $\ind=0$.  For every $y\in\mathbb{R}$, Riemann-Hilbert 
Problem~\ref{rhp:DSlocalII} has a unique solution given explicitly 
by the formulae \eqref{eq:gvectors} and \eqref{eq:Rnloczero} with $\sigma=+1$.
Moreover, $\mathbf{Z}_0(\zeta;y)$ is an entire function of $y$,
$\mathbf{Z}_0(\zeta;y)(-\zeta)^{\sigma_3/2}$ has a complete 
asymptotic expansion as $\zeta\to\infty$ in descending integer powers of
$\zeta$ (as in \eqref{eq:Fexpansion}) differentiable term-by-term
with respect to both $y$ and $\zeta$,
and the corresponding potentials $\pu_0(y)$, $\pv_0(y)$, $\pw_0(y)$ and $\pz_0(y)$
are given by \eqref{eq:potentialszero} with $\sigma=+1$.
\label{prop:R0locsolution}
\end{proposition}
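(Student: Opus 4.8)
The plan is straightforward, since the candidate solution has already been written down explicitly in \eqref{eq:gvectors}--\eqref{eq:Rnloczero}: I would verify in turn the three defining conditions of Riemann-Hilbert Problem~\ref{rhp:DSlocalII} for this candidate with $\ind=0$, and then separately handle uniqueness, entirety in $y$, the differentiable asymptotic expansion, and the identification of the potentials. Analyticity requires almost nothing: $\mathrm{Ai}$ and $\mathrm{Ai}'$ are entire, the argument $\xi=6^{-1/3}(y+\tfrac32\zeta^2)$ is entire in both variables, and $e^{(\zeta^3+y\zeta)\sigma_3/2}$ is entire and invertible, so each of the six sectorial formulae in \eqref{eq:Rnloczero} defines a matrix analytic in an open neighborhood of the corresponding closed sector; in particular it continues to a slightly larger sector and is analytic (hence certainly H\"older continuous) up to $\zeta=0$. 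For the normalization condition \eqref{eq:Zindnorm} I would appeal to the standard Airy asymptotics recorded in the excerpt: the two columns in each sector were chosen precisely so that, after removing the diagonal exponential factor and multiplying by $(-\zeta)^{\sigma_3/2}$, one column tends to $\trans{(1,0)}$ and the other to $\trans{(0,1)}$ in the branch proper to that sector, which is exactly $(-\zeta)^{-\sigma_3/2}(\mathbb{I}+\bo(\zeta^{-1}))$; so \eqref{eq:Zindnorm} holds by design.

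The real work is the jump condition across the six rays $\arg\zeta=k\pi/3$, $k=0,\dots,5$: on each ray the boundary values of the candidate from the two adjacent sectors must differ (on the right) by the prescribed constant matrix $\mathbf{V}_{\mathbf{Z}}$ of Figure~\ref{fig:VlocR}. Here I would use that consecutive sectorial formulae in \eqref{eq:Rnloczero} always share one of their two column vectors, so each ray comparison reduces to re-expressing the single non-shared column in terms of the column basis available on the other side; every such relation is a consequence of the three-term Airy identity $\mathbf{l}^0+e^{-2\pi i/3}\mathbf{l}^-+e^{2\pi i/3}\mathbf{l}^+=\mathbf{0}$ stated above, together with the elementary relabelling of $\mathbf{l}^0$ and $\mathbf{l}^\pm$ under $\xi\mapsto e^{\pm2\pi i/3}\xi$. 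Working through the six rays while carefully tracking the scalar prefactors ($\pm i$, $\pm\sigma$, $e^{\pm i\pi/6}$, $\sigma e^{\pm2\pi i/3}$) and the conjugating exponential reproduces precisely the entries of $\mathbf{V}_{\mathbf{Z}}$, and in doing so forces the choice $\sigma=+1$, since with $\sigma=-1$ one of the off-diagonal jump entries comes out with the wrong sign. This bookkeeping across all six sectors is the step I expect to be the main obstacle: it is entirely mechanical once organized around the Airy connection formula, but it is the only place anything can go wrong.

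Uniqueness is the usual argument, already alluded to in the text: given two solutions, the quotient (well defined once one knows $\det\equiv1$, which follows the same way from $\det\mathbf{V}_{\mathbf{Z}}\equiv1$) has trivial jumps across all six rays, is bounded near $\zeta=0$ by the H\"older continuity and hence extends analytically there by removability, and tends to $\mathbb{I}$ at infinity by \eqref{eq:Zindnorm}, so Liouville's theorem yields equality; in our explicit case $\det\mathbf{Z}_0\equiv1$ can also be read off directly from the constancy of the Airy Wronskian built into \eqref{eq:Rnloczero}. Entirety in $y$ is immediate because the whole construction is assembled from $\mathrm{Ai}$ and $\mathrm{Ai}'$ evaluated at $\xi$, which is entire in $y$. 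For the complete asymptotic expansion of $\mathbf{Z}_0(\zeta;y)(-\zeta)^{\sigma_3/2}$ in descending integer powers of $\zeta$, differentiable term by term in $\zeta$ and $y$, I would substitute the full (differentiable) asymptotic series for $\mathrm{Ai}(z)$ and $\mathrm{Ai}'(z)$ at large argument into \eqref{eq:Rnloczero} and collect powers of $\zeta$, using that $\xi^{3/2}$ equals $\zeta^3$ times a convergent series in $\zeta^{-2}$ so that the argument of the exponential reproduces $\tfrac12(\zeta^3+y\zeta)$ plus negative integer powers of $\zeta$.

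Finally, reading the coefficient $A_{0,12}(y)$ off the resulting expansion gives $\pu_0(y)=A_{0,12}(y)\equiv\sigma=+1$, consistent with \eqref{eq:A12squaredeq1}; feeding $\pu_0\equiv1$ into the Painlev\'e-II system \eqref{eq:PIIsystem} then pins down $\pv_0(y)=-\tfrac16y$, $\pw_0(y)=0$, and $\pz_0(y)=-\tfrac12$, i.e.\ exactly \eqref{eq:potentialszero} with $\sigma=+1$, which one can alternatively confirm by direct expansion of the explicit formula \eqref{eq:Rnloczero}. The only genuinely delicate point in the whole argument remains the sign determination and the sector-by-sector jump verification described above; everything else is either a direct appeal to properties of Airy functions or the by-now standard Liouville-type uniqueness argument.
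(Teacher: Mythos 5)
Your proposal is correct and follows essentially the same route as the paper: reduce the Lax pair for $\ind=0$ to Airy's equation, assemble the sectorial candidate \eqref{eq:Rnloczero}, verify analyticity and the normalization \eqref{eq:Zindnorm} from standard Airy asymptotics, check the six jump conditions via the three-term connection identity (which pins down $\sigma=+1$), and finish with Liouville uniqueness, entirety in $y$, term-by-term differentiability of the Airy expansion, and identification of the potentials. The only cosmetic difference is that the paper first exploits the symmetry relation $A_{0,12}(y)^2=1$ to motivate $\pu_0\equiv\sigma$ before solving, while you present the construction and read $\pu_0=+1$ off the result; both orderings are sound.
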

\subsection{Schlesinger-B\"acklund transformations.  Solution of Riemann-Hilbert Problem~\ref{rhp:DSlocalII} for general $\ind\in\mathbb{Z}$.}
Now we describe an inductive procedure for obtaining
$\mathbf{Z}_\ind(\zeta;y)$ uniquely from the conditions of
Riemann-Hilbert Problem~\ref{rhp:DSlocalII} for $\ind\neq 0$.  Suppose
$y\in\mathbb{R}$ is a value for which 
Riemann-Hilbert
Problem~\ref{rhp:DSlocalII}
has a (unique) solution $\mathbf{Z}_\ind(\zeta;y)$ for some 
$\ind\in\mathbb{Z}$.  
Following  \cite[Chapter~6]{FokasIKN06}, we claim that 
\begin{equation}
\mathbf{Z}_{\ind\pm 1}(\zeta;y)=
(\mathbf{S}_{\ind,1}^\pm(y)\zeta +\mathbf{S}_{\ind,0}^\pm(y))
\mathbf{Z}_\ind(\zeta;y)
\label{eq:Rnpm1}
\end{equation}
if the matrices $\mathbf{S}_{\ind,0}^\pm(y)$ and $\mathbf{S}_{\ind,1}^\pm(y)$ 
can be properly chosen. 

It is immediately clear that since the the prefactor is entire in $\zeta$, the right-hand side of \eqref{eq:Rnpm1} has the
same domain of analyticity, achieves its boundary values in the same classical
sense, and satisfies exactly the same jump conditions, as does 
$\mathbf{Z}_\ind(\zeta;y)$.  Of course the analyticity and jump
conditions in Riemann-Hilbert Problem~\ref{rhp:DSlocalII} are independent 
of $\ind$, so it only remains to impose the normalization condition \eqref{eq:Zindnorm} on $\mathbf{Z}_{\ind\pm 1}(\zeta;y)$ given by \eqref{eq:Rnpm1}, and this condition 
takes the form
\begin{equation}
\lim_{\zeta\to\infty}
(\mathbf{S}_{\ind,1}^\pm(y)\zeta+\mathbf{S}_{\ind,0}^\pm(y))
\mathbf{Z}_\ind(\zeta;y)
(-\zeta)^{(1-2\ind\mp 2)\sigma_3/2}  =\mathbb{I}.
\end{equation}
Assuming the expansion \eqref{eq:Fexpansion} (which at the moment is justified
only for $\ind=0$), we are requiring that
\begin{equation}
(\mathbf{S}_{\ind,1}^\pm(y)\zeta+\mathbf{S}_{\ind,0}^\pm(y))
(\mathbb{I}+\mathbf{A}_\ind(y)\zeta^{-1} +\mathbf{B}_\ind(y)\zeta^{-2} +
\mathbf{C}_\ind(y)\zeta^{-3}+
 \bo(\zeta^{-4}))(-\zeta)^{\mp \sigma_3} =
\mathbb{I} + \bo(\zeta^{-1}),\quad\zeta\to\infty.
\label{eq:Wpmcondgen}
\end{equation}
Equating the polynomial part of the left-hand side to the identity yields
the conditions
\begin{equation}
\begin{split}
\mathbf{S}_{\ind,1}^+(y)\begin{bmatrix}0 & 0\\0 & 1\end{bmatrix}&=\mathbf{0}\\
(\mathbf{S}_{\ind,1}^+(y)\mathbf{A}_\ind(y)+\mathbf{S}_{\ind,0}^+(y))
\begin{bmatrix}0 & 0\\0 & 1\end{bmatrix} &=\mathbf{0}\\
(\mathbf{S}_{\ind,1}^+(y)\mathbf{B}_\ind(y)+\mathbf{S}_{\ind,0}^+(y)\mathbf{A}_\ind(y))
\begin{bmatrix}0 & 0\\0 & 1\end{bmatrix}+\mathbf{S}_{\ind,1}^+(y)
\begin{bmatrix}1 & 0\\0 & 0\end{bmatrix}&=-\mathbb{I}
\end{split}
\end{equation}
which can be solved uniquely for the matrix elements of $\mathbf{S}_{\ind,1}^+(y)$
and $\mathbf{S}_{\ind,0}^+(y)$:
\begin{equation}
\begin{split}
\mathbf{S}_{\ind,1}^+(y)&=\begin{bmatrix}-1 & 0\\0 & 0\end{bmatrix},\\
\mathbf{S}_{\ind,0}^+(y)&=\begin{bmatrix}B_{\ind,12}(y)/A_{\ind,12}(y)-A_{\ind,22}(y) & A_{\ind,12}(y)\\
-1/A_{\ind,12}(y) & 0\end{bmatrix} = 
\begin{bmatrix}\pw_\ind(y)/(3\pu_\ind(y))& \pu_\ind(y)\\
-1/\pu_\ind(y) & 0\end{bmatrix},
\end{split}
\label{eq:SchlesingerPlus}
\end{equation}
\emph{provided that $\pu_\ind(y)\neq 0$}, where we have used 
\eqref{eq:uvwzdefs}, \eqref{eq:Adiag}, and
\eqref{eq:Boffdiag}.  Similarly, from \eqref{eq:Wpmcondgen} we obtain the 
conditions
\begin{equation}
\begin{split}
\mathbf{S}_{\ind,1}^-(y)\begin{bmatrix}1 & 0\\0 & 0\end{bmatrix}&=\mathbf{0}\\
(\mathbf{S}_{\ind,1}^-(y)\mathbf{A}_\ind(y)+\mathbf{S}_{\ind,0}^-(y))
\begin{bmatrix}1 & 0\\0 & 0\end{bmatrix} &=\mathbf{0}\\
(\mathbf{S}_{\ind,1}^-(y)\mathbf{B}_\ind(y)+\mathbf{S}_{\ind,0}^-(y)\mathbf{A}_\ind(y))
\begin{bmatrix}1 & 0\\0 & 0\end{bmatrix}+\mathbf{S}_{\ind,1}^-(y)
\begin{bmatrix}0 & 0\\0 & 1\end{bmatrix}&=-\mathbb{I}
\end{split}
\end{equation}
which can be solved uniquely for the matrix elements of $\mathbf{S}_{\ind,1}^-(y)$
and $\mathbf{S}_{\ind,0}^-(y)$:
\begin{equation}
\begin{split}
\mathbf{S}_{\ind,1}^-(y)&=\begin{bmatrix}0 & 0\\0 & -1\end{bmatrix},\\
\mathbf{S}_{\ind,0}^-(y)&=\begin{bmatrix}0 & -1/A_{\ind,21}(y)\\A_{\ind,21}(y) & B_{\ind,21}(y)/A_{\ind,21}(y)-A_{\ind,11}(y)
\end{bmatrix} = 
\begin{bmatrix}0 & -1/\pv_\ind(y)\\
\pv_\ind(y) & \pz_\ind(y)/(3\pv_\ind(y))\end{bmatrix},
\end{split}
\label{eq:SchlesingerMinus}
\end{equation}
\emph{provided that $\pv_\ind(y)\neq 0$}.
When it makes sense, 
the formula \eqref{eq:Rnpm1} subject to \eqref{eq:SchlesingerPlus} or
\eqref{eq:SchlesingerMinus} constitutes a so-called \emph{discrete 
isomonodromic Schlesinger transformation}.

The recurrence relation \eqref{eq:Rnpm1} also implies a corresponding
recurrence for the potentials $\pu_\ind$, $\pv_\ind$, $\pw_\ind$, and $\pz_\ind$.  Indeed,
expanding the right-hand side of \eqref{eq:Wpmcondgen} as $\mathbb{I}
+\mathbf{A}_{\ind\pm 1}(y)\zeta^{-1}+\bo(\zeta^{-2})$ the terms proportional
to $\zeta^{-1}$ yield the conditions
\begin{equation}
(\mathbf{S}_{\ind,1}^+(y)\mathbf{C}_\ind(y)+\mathbf{S}_{\ind,0}^+(y)\mathbf{B}_\ind(y))
\begin{bmatrix}0 & 0\\0 & 1\end{bmatrix} + (\mathbf{S}_{\ind,1}^+(y)\mathbf{A}_\ind(y)
+\mathbf{S}_{\ind,0}^+(y))
\begin{bmatrix}1 & 0 \\ 0 & 0\end{bmatrix}=-\mathbf{A}_{\ind+1}(y)
\label{eq:Baecklundplussystem}
\end{equation}
and
\begin{equation}
(\mathbf{S}_{\ind,1}^-(y)\mathbf{C}_\ind(y)+\mathbf{S}_{\ind,0}^-(y)\mathbf{B}_\ind(y))
\begin{bmatrix}1 & 0\\0 & 0\end{bmatrix} + (\mathbf{S}_{\ind,1}^-(y)\mathbf{A}_\ind(y)
+\mathbf{S}_{\ind,0}^-(y))
\begin{bmatrix}0 & 0 \\ 0 & 1\end{bmatrix}=-\mathbf{A}_{\ind-1}(y).
\label{eq:Baecklundminussystem}
\end{equation}
Recalling the definition \eqref{eq:uvwzdefs} of the potentials $(\pu_\ind,\pv_\ind)$
in terms of the matrix entries of $\mathbf{A}_\ind(y)$, we examine the 
$(2,1)$-entry of the matrix equation 
\eqref{eq:Baecklundplussystem} and easily obtain the relation
$\pv_{\ind+1}(y)=1/\pu_\ind(y)$.
Then, from the differential equations \eqref{eq:PIIsystem} satisfied by
the potentials for all $n$, we obtain
\begin{equation}
\pu_{\ind+1}(y)=-\frac{1}{6}y\pu_\ind(y)-\frac{\pu_\ind'(y)^2}{\pu_\ind(y)}
+\frac{1}{2}\pu_\ind''(y)\quad\text{and}\quad
\pv_{\ind+1}(y)=\frac{1}{\pu_\ind(y)}
\label{eq:Baecklundplus}
\end{equation}
which together with the implied relations for $\pw_{\ind+1}$ and $\pz_{\ind+1}$
constitutes a \emph{B\"acklund transformation} for the Painlev\'e-II system
\eqref{eq:PIIsystem}.  Similarly, the $(1,2)$-element of
\eqref{eq:Baecklundminussystem} together with \eqref{eq:PIIsystem}
yields the B\"acklund transformation
\begin{equation}
\pu_{\ind-1}(y)=\frac{1}{\pv_\ind(y)}\quad\text{and}\quad
\pv_{\ind-1}(y)=\frac{1}{2}\pv_\ind''(y)-\frac{\pv_\ind'(y)^2}{\pv_\ind(y)}
-\frac{1}{6}y\pv_\ind(y),
\label{eq:Baecklundminus}
\end{equation}
which is easily seen to be inverse to \eqref{eq:Baecklundplus}.  
The iterative scheme for generating the solution to Riemann-Hilbert
Problem~\ref{rhp:DSlocalII} for general $\ind\in\mathbb{Z}$ is then
simply to start with the solution valid for all $y\in\mathbb{R}$ for
$\ind=0$ as described by Proposition~\ref{prop:R0locsolution} and attempt
to apply the discrete isomonodromic Schlesinger transformation
\eqref{eq:Rnpm1} in order to repeatedly increase or decrease the value
of $\ind$ in integer steps.  At the level of the implied B\"acklund
transformations \eqref{eq:Baecklundplus} and
\eqref{eq:Baecklundminus}, this procedure generates a family
$\{(\pu_\ind(y),\pv_\ind(y),\pw_\ind(y),\pz_\ind(y))\}_{\ind\in\mathbb{Z}}$ of \emph{rational
solutions} of the Painlev\'e-II system \eqref{eq:PIIsystem}.  
At the level of the logarithmic derivatives given by \eqref{eq:logderivs} we have a
family of rational solutions of inhomogeneous Painlev\'e-II equations with certain quantized
values for the inhomogeneity parameter $\alpha$.  In fact, it is known \cite{Murata85} that the equation
\begin{equation}
\mathcal{P}''(y)=2\mathcal{P}(y)^3 + \frac{2}{3}y\mathcal{P}(y)-\frac{2}{3}\alpha
\end{equation}
has a unique rational solution $\mathcal{P}=\mathcal{P}_\alpha(y)$ if and only if $\alpha\in\mathbb{Z}$.  These may be represented as logarithmic derivatives of other rational functions $\pu_\alpha$ that are ratios of consecutive
\emph{Yablonskii-Vorob'ev polynomials} \cite{Clarkson03,ClarksonM03}, which are known to have a number of remarkable properties.

It is
important for us to recognize that the poles of these solutions can
occur for finite real  values of $y$.  For example, applying \eqref{eq:Baecklundplus} twice with $\pu_0(y)=1$ we obtain $\pu_1(y)=-y/6$ and $\pu_2(y)=(y^3+6)/(36y)$,
and the latter has a pole at $y=0$.  These poles correspond to values of $y$
for which the discrete Schlesinger maps from the neighboring values of
$\ind$ fail to exist, and consequently for these values of $y$ and $\ind$ 
\emph{there exists no solution to Riemann-Hilbert Problem~\ref{rhp:DSlocalII}}.

While there exist values of $y$ for which the B\"acklund
transformations \eqref{eq:Baecklundplus} or \eqref{eq:Baecklundminus}
may not make sense, we may always interpret these transformations as
maps on the ring of rational functions, and as such they can be shown
to have a kind of \emph{singularity confinement property}: if
$y_0\in\mathbb{C}$ is a pole of the solution for some $\ind$, then $y_0$
is a regular point of the solution for $\ind\pm 1$.  Indeed, a local
analysis of the system \eqref{eq:uvPIIsystem} in the spirit of
Painlev\'e's original method \cite{Ince} shows that all poles of $(\pu_\ind(y),\pv_\ind(y))$
are necessarily simple and simultaneously occur in both functions; if
$y_0$ is a pole, then for some  constants
$k\in\mathbb{C}\setminus\{0\}$ and $\omega\in\mathbb{C}$ we necessarily have
\begin{equation}
\begin{split}
\pu_\ind(y)&=k\left[\frac{1}{y-y_0} +\frac{y_0}{18}(y-y_0) +\left(\frac{1}{12}+\omega\right)(y-y_0)^2 +U_3(y)\right]\\
\pv_\ind(y)&=-k^{-1}\left[\frac{1}{y-y_0} +\frac{y_0}{18}(y-y_0) +\left(\frac{1}{12}-\omega\right)(y-y_0)^2+V_3(y)\right],
\end{split}
\label{eq:uvPainleveexpansions}
\end{equation}
where $U_3(y)$ and $V_3(y)$ are analytic functions each vanishing to third order
at $y=y_0$.  
Then, using these formulae in the B\"acklund
transformations \eqref{eq:Baecklundplus}--\eqref{eq:Baecklundminus}
shows that the singularities at $y=y_0$ in the pairs
$(\pu_{\ind+1}(y),\pv_{\ind+1}(y))$ and $(\pu_{\ind-1}(y),\pv_{\ind-1}(y))$ are all
removable.  Moreover, both $\pv_{\ind+1}(y)$ and $\pu_{\ind-1}(y)$ have simple
zeros at $y=y_0$.  It is easy to see that the singularity confinenment
property extends to the Schlesinger transformations that generated
\eqref{eq:Baecklundplus}--\eqref{eq:Baecklundminus}.  That is, if
$\mathbf{Z}_\ind(\zeta;y_0)$ does not exist (because $y_0$ is
a pole of $\pu_\ind(y)$ and $\pv_\ind(y)$), then
both $\mathbf{Z}_{\ind+1}(\zeta;y_0)$ and 
$\mathbf{Z}_{\ind-1}(\zeta;y_0)$ do exist and they may be calculated
from $\mathbf{Z}_\ind(\zeta;y)$ by applying Schlesinger transformations
and then taking the limit $y\to y_0$.  We summarize these results in the
following proposition.
\begin{proposition}
  Let $\ind\in\mathbb{Z}$.  Then by iterated Schlesinger transformations
  given by \eqref{eq:Rnpm1} subject to \eqref{eq:SchlesingerPlus} or
  \eqref{eq:SchlesingerMinus} applied to the base case of $\ind=0$ characterized
  by Proposition~\ref{prop:R0locsolution}, 
  a matrix function
  $\mathbf{Z}_\ind(\zeta;y)$ is well-defined as a rational
  function of $y$ with simple poles.  The poles of $\mathbf{Z}_\ind(\zeta;y)$
  are exactly the poles of the corresponding potentials $(\pu_\ind(y),\pv_\ind(y))$ 
  obtained from the base
  case of $(\pu_0(y),\pv_0(y))=(1,-y/6)$ by iterated B\"acklund
  transformations \eqref{eq:Baecklundplus}--\eqref{eq:Baecklundminus}.
\begin{itemize}
\item
  If $y_0$ is not a pole, then there is a neighborhood $A\subset\mathbb{C}$ 
  with $y_0\in A$ such that for $y\in A$, the matrix 
  $\mathbf{Z}_\ind(\zeta;y)$ 
  \begin{itemize}
  \item is analytic in $y$, 
  \item is the
  solution of Riemann-Hilbert Problem~\ref{rhp:DSlocalII}, 
  \item has an expansion
  for large $\zeta$ of the form \eqref{eq:Fexpansion} differentiable with
  respect to both $y$ and $\zeta$ and uniform for $y\in A$, and 
  \item is
  uniformly bounded for $y\in A$ and $\zeta$ in compact subsets of 
  $\mathbb{C}$.  
  \end{itemize}
\item 
  If $y_0$ is a pole, then Riemann-Hilbert 
  Problem~\ref{rhp:DSlocalII} has no solution for $y=y_0$; however both
  $\mathbf{Z}_{\ind+1}(\zeta;y)$ and 
  $\mathbf{Z}_{\ind-1}(\zeta;y)$ are regular at $y_0$ (in the
  sense of taking limits from $y\neq y_0$), and both $\pv_{\ind+1}(y)$ and
  $\pu_{\ind-1}(y)$ have simple zeros at $y=y_0$.
\end{itemize}
Since the poles are simple, the bound on the elements of 
$\mathbf{Z}_\ind(\zeta;y)$ (and its inverse, since 
$\det(\mathbf{Z}_\ind(\zeta;y))=1$ where defined) can be strengthened
to $\bo(|y-\mathscr{P}(\pu_\ind)|^{-1})$ (or equivalently $\bo(|y-\mathscr{P}(\pv_\ind)|^{-1})$) uniformly on compact
sets in the $\zeta$-plane.  This improved 
bound also applies to all of the coefficients
in the expansion \eqref{eq:Fexpansion} and to the error term as well.  In particular, this implies
that at $y=y_0\in\mathscr{P}(\pu_\ind)=\mathscr{P}(\pv_\ind)$, the functions $H_\ind(y)=\tfrac{1}{2}A_{\ind,22}(y)=-\tfrac{1}{2}A_{\ind,11}(y)$, $B_{\ind,12}(y)=2H_\ind(y)\pu_\ind(y)-\pu_\ind'(y)$,
and $B_{\ind,21}(y)=\pv_\ind'(y)-2H_\ind(y)\pv_\ind(y)$ all have simple poles.
\label{prop:Rnlocgeneral}
\end{proposition}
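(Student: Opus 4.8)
The plan is to prove the proposition by induction on $|\ind|$, using $\ind=0$ (Proposition~\ref{prop:R0locsolution}) as the base case and propagating every listed property through the Schlesinger transformation \eqref{eq:Rnpm1}. In the base case $\mathbf{Z}_0(\zeta;y)$ is entire in $y$, so it has no poles at all; it solves Riemann--Hilbert Problem~\ref{rhp:DSlocalII}; and the explicit Airy representation \eqref{eq:gvectors}--\eqref{eq:Rnloczero} exhibits the complete asymptotic expansion \eqref{eq:Fexpansion} as differentiable term-by-term in $y$ and $\zeta$ and locally uniform in $y$, with $\mathbf{Z}_0$ and (via $\det(\mathbf{Z}_0)\equiv 1$) $\mathbf{Z}_0^{-1}$ locally bounded in $(y,\zeta)$ since their entries are entire. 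All of this is already recorded in Proposition~\ref{prop:R0locsolution}. Separately, for \emph{every} $\ind\in\mathbb{Z}$ the potentials $\pu_\ind,\pv_\ind,\pw_\ind,\pz_\ind$ and hence the matrices $\mathbf{S}_{\ind,j}^\pm(y)$ of \eqref{eq:SchlesingerPlus}--\eqref{eq:SchlesingerMinus} are \emph{defined} unconditionally by iterating the B\"acklund relations \eqref{eq:Baecklundplus}--\eqref{eq:Baecklundminus} on the ring of rational functions starting from $(\pu_0,\pv_0)=(1,-y/6)$, and by the local Painlev\'e-type analysis encoded in \eqref{eq:uvPainleveexpansions} their poles in $y$ are finite in number, real, and simple, with $\mathscr{P}(\pu_\ind)=\mathscr{P}(\pv_\ind)$; it is these poles that will be matched to the poles of $\mathbf{Z}_\ind(\zeta;y)$.

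For the inductive step, assume the claims at index $\ind$. As already noted in the text preceding the proposition, since the prefactor $\mathbf{S}_{\ind,1}^\pm(y)\zeta+\mathbf{S}_{\ind,0}^\pm(y)$ in \eqref{eq:Rnpm1} is entire in $\zeta$, while the analyticity domain, the jump rays, and the jump matrices of Riemann--Hilbert Problem~\ref{rhp:DSlocalII} are independent of $\ind$, the matrix $\mathbf{Z}_{\ind\pm1}(\zeta;y)$ produced by \eqref{eq:Rnpm1} automatically inherits the required analyticity and jump behavior, the prefactor entries were chosen precisely to impose the normalization \eqref{eq:Zindnorm} at index $\ind\pm1$, and uniqueness is the standard argument already invoked. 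Because $\mathbf{S}_{\ind,j}^\pm(y)$ is rational in $y$ and $\mathbf{Z}_\ind$ is meromorphic with simple poles by hypothesis, $\mathbf{Z}_{\ind\pm1}(\zeta;y)$ is meromorphic in $y$ with possible poles only at poles of $\mathbf{Z}_\ind$ or at poles of $\mathbf{S}_{\ind,0}^\pm$, that is, at poles of $(\pu_\ind,\pv_\ind)$ or at zeros of $\pu_\ind$ (for the $+$ transformation) or zeros of $\pv_\ind$ (for the $-$ transformation). At a point $y_0$ that is \emph{not} a pole of $(\pu_{\ind\pm1},\pv_{\ind\pm1})$ I will show below that $\mathbf{Z}_{\ind\pm1}$ is in fact analytic near $y_0$; granting this, the form and term-by-term differentiability of the expansion \eqref{eq:Fexpansion}, its local uniformity in $y$, and the local-uniform boundedness of $\mathbf{Z}_{\ind\pm1}$ and $\mathbf{Z}_{\ind\pm1}^{-1}$ follow either directly from the product structure \eqref{eq:Rnpm1} on any subneighborhood where the prefactor is analytic, or, at the points of confined singularity, from analyticity of $\mathbf{Z}_{\ind\pm1}$ there together with the standard asymptotic theory of Riemann--Hilbert problems whose jumps on the rays are exponentially close to the identity.

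The heart of the matter is therefore the pole count. At a simple zero $y_0$ of $\pu_\ind$ --- where $\mathbf{Z}_\ind$ is analytic but $\mathbf{S}_{\ind,0}^+$ has a simple pole --- it is immediate from \eqref{eq:SchlesingerPlus} that $\mathbf{Z}_{\ind+1}$ acquires exactly a simple pole, matching the simple pole of $\pv_{\ind+1}=1/\pu_\ind$. At a point $y_0$ that is a pole of $(\pu_\ind,\pv_\ind)$, both $\mathbf{Z}_\ind$ and $\mathbf{S}_{\ind,0}^+$ have simple poles, so a naive estimate would permit a double pole in $\mathbf{Z}_{\ind+1}$; one must show it is in fact \emph{removable}. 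The input here is the singularity-confinement analysis already carried out before the proposition: inserting the local expansions \eqref{eq:uvPainleveexpansions} into \eqref{eq:Baecklundplus}--\eqref{eq:Baecklundminus} shows $(\pu_{\ind+1},\pv_{\ind+1})$ is analytic at $y_0$, with $\pv_{\ind+1}$ vanishing simply there; one then checks, again from \eqref{eq:uvPainleveexpansions}, that the residue of $\mathbf{Z}_\ind$ at $y_0$ is rank one and is annihilated on the left by the residue of $\mathbf{S}_{\ind,0}^+(y)$, so the would-be double \emph{and} simple polar parts of $\mathbf{Z}_{\ind+1}$ both vanish. Equivalently, the inverse relation $\mathbf{Z}_\ind=(\mathbf{S}_{\ind+1,1}^-\zeta+\mathbf{S}_{\ind+1,0}^-)\mathbf{Z}_{\ind+1}$ has a prefactor analytic at $y_0$ (being built from the analytic $(\pu_{\ind+1},\pv_{\ind+1})$ and nondegenerate there because $\pv_{\ind+1}$ has only a simple zero), which forces $\mathbf{Z}_{\ind+1}$ to be analytic at $y_0$. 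The $-$ transformation is symmetric with the roles of $\pu$ and $\pv$ interchanged. I expect this residue computation to be the main obstacle: it requires carrying \eqref{eq:uvPainleveexpansions} to several orders and tracking the coalescence of the two simple poles --- routine but delicate, and a hallmark of discrete isomonodromic transformations (cf.~\cite[Chapter~6]{FokasIKN06}).

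Finally I collect the consequences. If $y_0$ is a pole of $(\pu_\ind,\pv_\ind)$, then Riemann--Hilbert Problem~\ref{rhp:DSlocalII} has no solution at $y=y_0$: the jump data depend analytically on $y$, so by uniqueness any solution at $y_0$ would be the limit of the solutions $\mathbf{Z}_\ind(\zeta;y)$ at nearby non-poles, which are however unbounded as $y\to y_0$ since their leading expansion coefficient $A_{\ind,12}(y)=\pu_\ind(y)$ blows up. That $\pv_{\ind+1}$ and $\pu_{\ind-1}$ have simple zeros at $y_0$ was obtained above. The simple-pole structure in $y$ of $\mathbf{Z}_\ind(\zeta;y)$ passes to $\mathbf{Z}_\ind^{-1}$ (using $\det(\mathbf{Z}_\ind)\equiv 1$) and to the coefficients $\mathbf{A}_\ind,\mathbf{B}_\ind,\mathbf{C}_\ind$ and the remainder in \eqref{eq:Fexpansion}, giving the strengthened bound $\bo(|y-\mathscr{P}(\pu_\ind)|^{-1})=\bo(|y-\mathscr{P}(\pv_\ind)|^{-1})$ uniformly on compact $\zeta$-sets. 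Then the identities $H_\ind=\tfrac12 A_{\ind,22}=-\tfrac12 A_{\ind,11}$, $B_{\ind,12}=2H_\ind\pu_\ind-\pu_\ind'$, and $B_{\ind,21}=\pv_\ind'-2H_\ind\pv_\ind$ read off from \eqref{eq:Adiag}--\eqref{eq:Boffdiag} exhibit $H_\ind$, $B_{\ind,12}$, and $B_{\ind,21}$ as having at most simple poles at $y_0$, and substituting \eqref{eq:uvPainleveexpansions} confirms these poles are genuinely present.
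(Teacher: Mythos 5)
Your overall strategy is the same as the paper's: iterate the Schlesinger transformation from the $\ind=0$ base case, use the local Painlev\'e expansions \eqref{eq:uvPainleveexpansions} to establish singularity confinement for $(\pu_\ind,\pv_\ind)$, and then lift that confinement to the matrix $\mathbf{Z}_\ind(\zeta;y)$ itself by showing that the apparent double poles in $\mathbf{Z}_{\ind\pm 1}$ at points of $\mathscr{P}(\pu_\ind)=\mathscr{P}(\pv_\ind)$ are removable. The paper states this lift tersely (``it is easy to see''), whereas you outline the necessary residue cancellation explicitly and correctly identify it as the delicate step, so the substance agrees.

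One concrete error should be corrected: your ``equivalent'' inverse-relation shortcut does not work as stated. You claim that the prefactor $\mathbf{S}_{\ind+1,1}^-\zeta+\mathbf{S}_{\ind+1,0}^-$ in $\mathbf{Z}_\ind=(\mathbf{S}_{\ind+1,1}^-\zeta+\mathbf{S}_{\ind+1,0}^-)\mathbf{Z}_{\ind+1}$ is analytic at $y_0$ because it is ``built from the analytic $(\pu_{\ind+1},\pv_{\ind+1})$.'' But from \eqref{eq:SchlesingerMinus}, $\mathbf{S}_{\ind+1,0}^-$ contains the entries $-1/\pv_{\ind+1}(y)$ and $\pz_{\ind+1}(y)/(3\pv_{\ind+1}(y))$, and since $\pv_{\ind+1}(y_0)=0$ these entries have simple poles at $y_0$; the matrix has determinant $1$, but that does not make it entrywise regular, and its inverse is likewise singular at $y_0$. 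So this route does not ``force'' analyticity of $\mathbf{Z}_{\ind+1}$ and cannot replace the residue computation you lay out as your primary argument. The proof should rely on that direct cancellation (or simply cite that singularity confinement, established at the level of the potentials via \eqref{eq:uvPainleveexpansions}, is inherited by the Schlesinger maps) and drop the purported shortcut.
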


\section{Inadequacy of the Global Model}
On one hand, it seems quite reasonable to expect that the global model 
$\dot{\mathbf{P}}_\ind(w)$ defined by 
\eqref{eq:dotPdefineDS} should provide a good approximation to
$\mathbf{P}(w)$, at least if the integer $\ind$ is properly chosen.
Indeed, we have shown in an earlier paper \cite{BuckinghamMelliptic} that far from criticality a global model obtained
by steps completely analogous to those we have followed here (choice
of an appropriate $g$-function, opening of lenses, pointwise approximation
of jump matrices away from points of nonuniformity and the solution of a
corresponding outer model problem, and exact solution of the jump conditions
near points of nonuniformity yielding inner models that match well onto the
outer model at disk boundaries) leads directly to a Riemann-Hilbert problem
for the error of small-norm type.  On the other hand, in the present case
the formally large exponents $\pm s/\epsilon_N$ were removed from the jump
matrices in our construction of the inner model, and we may expect them to 
reappear when the mismatch between the inner and outer models is calculated
on $\partial U$, contaminating the error estimates when $s$ is small but large
compared with $\epsilon_N$.

In order to properly 
gauge our prospects for success, we introduce a new matrix unknown
$\mathbf{Q}_\ind(w)$ constructed from the old unknown $\mathbf{P}(w)$ and the 
explicit global model $\dot{\mathbf{P}}_\ind(w)$ defined by \eqref{eq:dotPdefineDS} as follows:
\begin{equation}
\mathbf{Q}_\ind(w):=\mathbf{P}(w)\dot{\mathbf{P}}_\ind(w)^{-1}.
\end{equation}
We might expect to be able to prove that $\mathbf{Q}_\ind(w)$ is a small perturbation of the identity matrix with the use of small-norm theory applied to the Riemann-Hilbert problem satisfied
by this matrix function.

To begin to determine the nature of the Riemann-Hilbert problem satisfied by $\mathbf{Q}_\ind(w)$, firstly note that $\mathbf{Q}_\ind(w)$ is analytic for $w\in U$,  since for such $w$,
$\dot{\mathbf{P}}_\ind(w)=\dot{\mathbf{P}}_\ind^\mathrm{in}(w)$ has determinant
one and satisfies exactly the same jump conditions as does
$\mathbf{P}(w)$.  For $w\in\mathbb{C}\setminus \overline{U}$,
$\mathbf{Q}_\ind(w)$ is analytic except on the arcs of the jump
contour for $\mathbf{P}(w)$ illustrated in Figure~\ref{fig:critPcontour}; it can further be checked that
since both $\mathbf{P}(w)$ and
$\dot{\mathbf{P}}_\ind(w)=\dot{\mathbf{P}}_\ind^\mathrm{out}(w)$ change sign
across the contour segment $(\mathfrak{b},0)$,
$\mathbf{Q}_\ind(w)$  also extends analytically to this segment.  The jump
contour for $\mathbf{Q}_\ind(w)$ also contains $\partial U$, where
$\mathbf{P}(w)$ is analytic but $\dot{\mathbf{P}}_\ind(w)$ has a
jump discontinuity stemming from the mismatch (because $\zeta$ is large but
finite on $\partial U$) between
$\dot{\mathbf{P}}_\ind^\mathrm{in}(w)$ and $\dot{\mathbf{P}}_\ind^\mathrm{out}(w)$.
The jump contour for $\mathbf{Q}_\ind(w)$ is illustrated in Figure~\ref{fig:critTcontour}.
\begin{figure}[h]
\begin{center}
\includegraphics{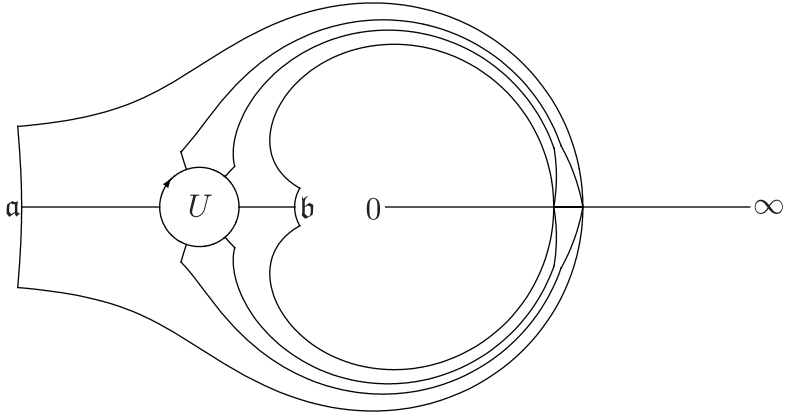}
\end{center}
\caption{\emph{The contour $\Sigma$ of discontinuity of the sectionally analytic
function $\mathbf{Q}_\ind(w)$.   With the exception of the circle $\partial U$ which is clockwise-oriented as shown, all other arcs are oriented exactly as components of the jump contour of $\mathbf{P}(w)$.}}
\label{fig:critTcontour}
\end{figure}

Let $\Sigma$ denote the jump contour for $\mathbf{Q}_\ind(w)$, that is,
$\mathbf{Q}_\ind(w)$ is analytic for $w\in\mathbb{C}\setminus\Sigma$,
and it takes its boundary values in the classical sense 
on this contour from each component of the domain of analyticity.  Also,
since both $\mathbf{P}(w)$ and $\dot{\mathbf{P}}_\ind(w)$ tend to 
the identity matrix as $w\to\infty$ (the former by hypothesis and the
latter by the explicit formula for $\dot{\mathbf{P}}_\ind^\mathrm{out}(w)$),
we have $\mathbf{Q}_\ind(w)\to \mathbb{I}$ as $w\to\infty$.

For fixed $\ind\in\mathbb{Z}$, it turns out that for $(x,t)$ sufficiently close
to criticality,
$\mathbf{Q}_{\ind+}(\xi)=\mathbf{Q}_{\ind-}(\xi)(\mathbb{I}+\bo(\epsilon_N))$ 
as $\epsilon_N\to 0$ for $\xi\in\Sigma\setminus (\mathbb{R}_+\cup\partial U)$,
with the estimate being uniform with respect to $\xi$ on the specified contour arcs and $(x,t)$
near criticality.
Indeed, exactly at criticality we have $\mathfrak{g}(w)=g(w)$ where $g(w)=g(w;x,t)$ is as defined
in \cite[Section 4]{BuckinghamMelliptic} and so by continuity
with respect to $x$ and $t$ we will have near criticality
that the jump matrix for $\mathbf{O}(w)$ is subject to the same error estimates
as recorded in \cite[Proposition~5.1]{BuckinghamMelliptic} generally valid away
from criticality, where the neighborhoods
$U_1$ and $U_2$ in the statement of the proposition are taken to coincide with $U$.  Making the transformation
from $\mathbf{O}(w)$ to $\mathbf{P}(w)$  we learn that 
the jump condition for $\mathbf{P}(w)$ is of the form $\mathbf{P}_+(\xi)=
\mathbf{P}_-(\xi)(\mathbb{I}+\bo(\epsilon_N))$  uniformly for all $\xi$ in the jump contour pictured in Figure~\ref{fig:critPcontour} with the exception of $\xi\in U$
and on the ray $\xi>-1$.  But, $\dot{\mathbf{P}}_\ind(w)$ satisfies exactly
the same jump conditions as does $\mathbf{P}(w)$ within $U$ and 
in the interval $-1<\xi<0$.  The estimate on the jump conditions for $\mathbf{Q}_\ind(w)$
asserted at the beginning of this paragraph
then follow by exact computation.

For $\xi\in\mathbb{R}_+\setminus I$ (recall that $I$ is a small open interval
containing $\xi=1$ as shown in Figure~\ref{fig:critNcontour}), both
$\mathbf{P}(w)$ and $\dot{\mathbf{P}}_\ind(w)=
\dot{\mathbf{P}}^\mathrm{out}_\ind(w)$ satisfy the same jump condition, from which
it follows that for $\xi>0$ but $\xi\not\in I$,
\begin{equation}
\mathbf{Q}_{\ind+}(\xi)=\mathbf{P}_+(\xi)
\dot{\mathbf{P}}_{\ind+}^\mathrm{out}(\xi)^{-1} = 
\sigma_2\mathbf{P}_-(\xi)\sigma_2\left[\sigma_2
\dot{\mathbf{P}}_{\ind-}^\mathrm{out}(\xi)\sigma_2\right]^{-1} = 
\sigma_2\mathbf{P}_-(\xi)\dot{\mathbf{P}}_{\ind-}^\mathrm{out}(\xi)^{-1}\sigma_2
=\sigma_2\mathbf{Q}_{\ind-}(\xi)\sigma_2.
\label{eq:Econjjump}
\end{equation}
For $\xi\in I$, it follows from the fact that $\mathfrak{g}_+(\xi)+\mathfrak{g}_-(\xi)=0$ for $\xi>0$ that \cite[Proposition~3.3]{BuckinghamMelliptic} holds,
implying that $\mathbf{P}(w)$ satisfies
$\mathbf{P}_+(\xi)=\sigma_2\mathbf{P}_-(\xi)\sigma_2(\mathbb{I}+\bo(\epsilon_N))$.
On the other hand, by construction $\dot{\mathbf{P}}_\ind(w)$ satisfies 
the corresponding jump
condition without error term.  Because $\dot{\mathbf{P}}_\ind(w)$ is
uniformly bounded with bounded inverse, by a calculation
similar to \eqref{eq:Econjjump} we see that 
$\mathbf{Q}_{\ind+}(\xi)=\sigma_2
\mathbf{Q}_{\ind-}(\xi)\sigma_2(\mathbb{I}+\bo(\epsilon_N))$ holds uniformly for $\xi\in I\subset\mathbb{R}_+$.

To compute the jump condition for $\mathbf{Q}_\ind(w)$ across the negatively-oriented boundary
$\partial U$ of the disk $U$, first use the fact that $\mathbf{P}(w)$ is continuous across
$\partial U$ to obtain
\begin{equation}
\mathbf{Q}_{\ind+}(\xi)=\mathbf{P}(\xi)
\dot{\mathbf{P}}_\ind^\mathrm{out}(\xi)^{-1} = 
\mathbf{P}(\xi)\dot{\mathbf{P}}_\ind^\mathrm{in}(\xi)^{-1}
\left[\dot{\mathbf{P}}_\ind^\mathrm{in}(\xi)
\dot{\mathbf{P}}_\ind^\mathrm{out}(\xi)^{-1}\right]=
\mathbf{Q}_{\ind-}(\xi)\left[\dot{\mathbf{P}}_\ind^\mathrm{in}(\xi)
\dot{\mathbf{P}}_\ind^\mathrm{out}(\xi)^{-1}\right],\quad \xi\in\partial U.
\end{equation}
By substitution from \eqref{eq:dotPrepDS} and \eqref{eq:dotPnlocdefine},
we see that
\begin{multline}
\dot{\mathbf{P}}_\ind^\mathrm{in}(\xi)
\dot{\mathbf{P}}_\ind^\mathrm{out}(\xi)^{-1}\\{}=
\epsilon_N^{(2\ind-1)\sigma_3/6}e^{\frac{1}{2}s\sigma_3/\epsilon_N}
\left[\eta_\ind(\xi)^{\sigma_3}\mathbf{Z}_\ind(\zeta;y)
(-\zeta)^{(1-2\ind)\sigma_3/2}\eta_\ind(\xi)^{-\sigma_3}\right]
e^{-\frac{1}{2}s\sigma_3/\epsilon_N}\epsilon_N^{(1-2\ind)\sigma_3/6}d_N(\xi)^{-\sigma_3},
\label{eq:Qjumpdiscboundary}
\end{multline}
for $\xi\in\partial U$, where $\zeta=\epsilon_N^{-1/3}W(\xi)$.  

We therefore see that for $(x,t)$ close enough to criticality,
$\mathbf{Q}_\ind(w)$ satisfies the conditions
of the following type of Riemann-Hilbert problem.
\begin{rhp}  
Seek a matrix $\mathbf{Q}_\ind(w)$ with the following properties:
\begin{itemize}
\item[]\textbf{Analyticity:}  $\mathbf{Q}_\ind(w)$ is analytic for $w\in\mathbb{C}\setminus\Sigma$,
where $\Sigma$ is 
the contour (independent of $x$, $t$, $\ind$, and $\epsilon_N$) 
pictured in Figure~\ref{fig:critTcontour}, and in each
component of the domain of analyticity is uniformly H\"older continuous up to the boundary.
\item[]\textbf{Jump conditions:}  The boundary values taken by $\mathbf{Q}_\ind(w)$
on $\Sigma$ are related as follows.  For $\xi\in\partial U\subset\Sigma$
with clockwise orientation, 
\begin{equation}
\mathbf{Q}_{\ind+}(\xi)=\mathbf{Q}_{\ind-}(\xi)\mathbf{V}_{\mathbf{Q}_\ind}(\xi)
\end{equation}
with jump matrix $\mathbf{V}_{\mathbf{Q}_\ind}(\xi)$ defined by the
right-hand side of \eqref{eq:Qjumpdiscboundary}.  For $\xi\in\mathbb{R}_+$,
\begin{equation}
\mathbf{Q}_{\ind+}(\xi)=\sigma_2\mathbf{Q}_{\ind-}(\xi)\sigma_2(\mathbb{I}+\mathbf{X}_\ind(\xi))
\end{equation}
where $\mathbf{X}_\ind(\xi)=\bo(\epsilon_N)$ holds uniformly for $\xi\in I$ and $\mathbf{X}_\ind(\xi)=0$
for $\xi\in\mathbb{R}_+\setminus I$.  For all remaining $\xi\in\Sigma$ we
have the uniform estimate 
\begin{equation}
\mathbf{Q}_{\ind+}(\xi)=\mathbf{Q}_{\ind-}(\xi)(\mathbb{I}+\bo(\epsilon_N))
\end{equation}
as $\epsilon_N\downarrow 0$.
\item[]\textbf{Normalization:}  The matrix $\mathbf{Q}_\ind(w)$ satisfies the
condition
\begin{equation}
\mathop{\lim_{w\to\infty}}_{|\arg(-w)|<\pi}\mathbf{Q}_\ind(w)=\mathbb{I}.
\end{equation}
\end{itemize}
All of the above jump matrix estimates are uniform for bounded $\ind$.
\label{rhp:critQ}
\end{rhp}
Note that since the jump matrix for $\mathbf{Q}_\ind(w)$ is not specified here precisely, the
conditions of Riemann-Hilbert Problem~\ref{rhp:critQ} are not sufficient to determine $\mathbf{Q}_\ind(w)$, but we shall see that they are enough to approximate $\mathbf{Q}_\ind(w)$ with suitable accuracy
as $\epsilon_N\to 0$.

Now $W(\xi)$ is bounded away from zero for $\xi\in\partial U$, so from
the normalization condition on $\mathbf{Z}_\ind(\zeta;y)$
given as part of Riemann-Hilbert Problem~\ref{rhp:DSlocalII} and understood
in the stronger sense of \eqref{eq:Fexpansion},
and from the fact that $\eta_\ind(\xi)$ and its reciprocal are uniformly
bounded as $\epsilon_N\to 0$ for $\xi\in\partial U$, we obtain
\begin{equation}
\dot{\mathbf{P}}_\ind^\mathrm{in}(\xi)\dot{\mathbf{P}}_\ind^\mathrm{out}(\xi)^{-1}=
\epsilon_N^{(2\ind-1)\sigma_3/6}e^{\frac{1}{2}s\sigma_3/\epsilon_N}
\left[\mathbb{I} + \bo(\epsilon_N^{1/3})\right]e^{-\frac{1}{2}s\sigma_3/\epsilon_N}
\epsilon_N^{(1-2\ind)\sigma_3/6}d_N(\xi)^{-\sigma_3},\quad \xi\in\partial U.
\end{equation}
Recall that $d_N(w)=1+\bo(\epsilon_N)$ uniformly for
$w\in\overline{U}$.  It follows that the jump matrix for $\mathbf{Q}_\ind(w)$
will be a small perturbation (of order $\bo(\epsilon_N^\mu)$ for some 
$\mu\in (0,1/3)$) of the identity for $\xi\in\partial U$ if
\begin{equation}
\left|\frac{2\ind-1}{3}-\frac{s}{\epsilon_N\log(\epsilon_N^{-1})}\right|\le
\frac{1}{3}-\mu.
\label{eq:smallnormcondition}
\end{equation}
Subject to this condition, we may formulate a Riemann-Hilbert problem for
$\mathbf{Q}_\ind(w)$ of small-norm type, that is, a problem that can
be solved by iteration.
This calculation leads to the suggestion that different choices of the integer
$\ind$ will be required to control the error in different regions of the $(y,s)$
parameter plane, an idea that has proven useful in at least two very recent works on related
problems of semiclassical analysis for nonlinear waves.  In the work of Claeys and Grava \cite{ClaeysG10sol} on the small-dispersion limit of the KdV equation near the ``trailing edge'' of the oscillatory wavepacket generated by a gradient catastrophe in the
solution of the limiting Burgers equation, the authors find that different choices of an integer parametrizing an outer model problem are required as the rescaled distance from the trailing edge varies.
Similarly, in the work of Bertola and Tovbis \cite{BertolaT10edge} the semiclassical solution
of the focusing NLS equation is examined for points near the ``primary caustic'' or ``breaking curve'' generated from an elliptic umbilic catastrophe in the solution of the formal limiting Whitham system, and it is again found that different choices of parametrices must be made as a certain rescaled distance from the caustic curve varies.  In both of these problems, as will be the case here, the rescaling of distance involves logarithms of the small parameter to get the balance right, and the resulting approximate solutions have a ``solitonic'' nature (in the KdV problem they are simple solitons while in the NLS problem they are soliton superpositions forming breathers).  

To implement this suggestion, we will 
choose a value for $\ind$ and then define corresponding regions of the $(y,s)$
plane in which we can make use of this value of $\ind$.  Also, we will find
it necessary 
to relax the small-norm condition \eqref{eq:smallnormcondition} in order
to cover a large enough part of the $(y,s)$ plane with overlapping regions, 
and this means that
the Riemann-Hilbert problem solved by the matrix $\mathbf{Q}_\ind(w)$
will no longer be treatable by iteration, and instead we shall have to
improve our inadequate model $\dot{\mathbf{P}}_\ind(w)$ 
by building parametrices for $\mathbf{Q}_\ind(w)$, another technique used in
the papers \cite{ClaeysG10sol} and \cite{BertolaT10edge} (also in the more recent
work \cite{BertolaT10cusp} of Bertola and Tovbis where the solution is analyzed in a neighborhood of the elliptic umbilic catastrophe point 
itself).

For each integer $\ind\in\mathbb{Z}$ we define a shifted and scaled coordinate $p_\ind$ equivalent
to $s$ by the relation 
\begin{equation}
s=\frac{2\ind}{3}\epsilon_N\log(\epsilon_N^{-1}) +\epsilon_Np_\ind.
\label{eq:spind}
\end{equation}
\begin{definition}[Sets $\Omega_\ind^\pm$]
Let $\ind\in\mathbb{Z}$.   For suitably small $\delta>0$ and arbitrary $\kappa\ge 0$, set
\begin{multline}
\Omega_\ind^+:=\left\{(y,s)\in\mathbb{R}^2\;\text{with}\;|p_\ind|\le\frac{1}{3}\log(\epsilon_N^{-1})+\kappa\;\text{such that}\right.\\
\begin{aligned} |y-\mathscr{P}(\pu_\ind)|\ge \delta e^{\frac{1}{2}p_\ind} &\quad\text{if} \quad
-\frac{1}{3}\log(\epsilon_N^{-1}) -\kappa\le p_\ind \le 0\quad\text{and}\\
|y-\mathscr{Z}(\pu_\ind)|\le\delta e^{-\frac{1}{2}p_\ind}&\quad\text{if}\quad
\left.0<p_\ind\le\frac{1}{3}\log(\epsilon_N^{-1})+\kappa\right\}
\end{aligned}
\end{multline}
and
\begin{multline}
\Omega_\ind^-:=\left\{(y,s)\in\mathbb{R}^2\;\text{with}\;|p_{\ind-1}|\le\frac{1}{3}\log(\epsilon_N^{-1})+\kappa\;\text{such that}\right.\\
\begin{aligned} |y-\mathscr{P}(\pv_\ind)|\ge \delta e^{-\frac{1}{2}p_{\ind-1}} &\quad\text{if} \quad
0\le p_{\ind-1}\le \frac{1}{3}\log(\epsilon_N^{-1})+\kappa \quad\text{and}\\
|y-\mathscr{Z}(\pv_\ind)|\le\delta e^{\frac{1}{2}p_{\ind-1}}&\quad\text{if}\quad
\left.-\frac{1}{3}\log(\epsilon_N^{-1})-\kappa\le p_{\ind-1}<0\right\}.
\end{aligned}
\end{multline}
\label{def:sets}
\end{definition}
The set $\Omega_\ind^+$ is basically a horizontal strip 
with excluded \emph{notches} 
located near the poles of $\pu_\ind(y)$ and upward-extending \emph{teeth} 
located near the zeros of $\pu_\ind(y)$.  Similarly, $\Omega_\ind^-$ is a horizontal
strip 
with excluded notches 
located near the poles of $\pv_\ind(y)$ and downward-extending teeth 
located near the zeros of $\pv_\ind(y)$.  Since the poles of $\pu_\ind$ and $\pv_\ind$
coincide (although their zeros do not in general), that is, $\mathscr{P}(\pu_\ind)=\mathscr{P}(\pv_\ind)$,
the notches of $\Omega_\ind^+$ align with those of $\Omega_\ind^-$.  This is very important, since
we will make use of the solution $\mathbf{Z}_\ind(\zeta;y)$ of Riemann-Hilbert Problem~\ref{rhp:DSlocalII}
for $(y,s)\in\Omega_\ind^+\cup\Omega_\ind^-$, and therefore we must exclude from this union
points with $y$-values too close to points of $\mathscr{P}(\pu_\ind)=\mathscr{P}(\pv_\ind)$
at which $\mathbf{Z}_\ind(\zeta;y)$ fails to exist.  See Figure~\ref{fig:twomorestrips}.
\begin{figure}[h]
\begin{center}
\includegraphics{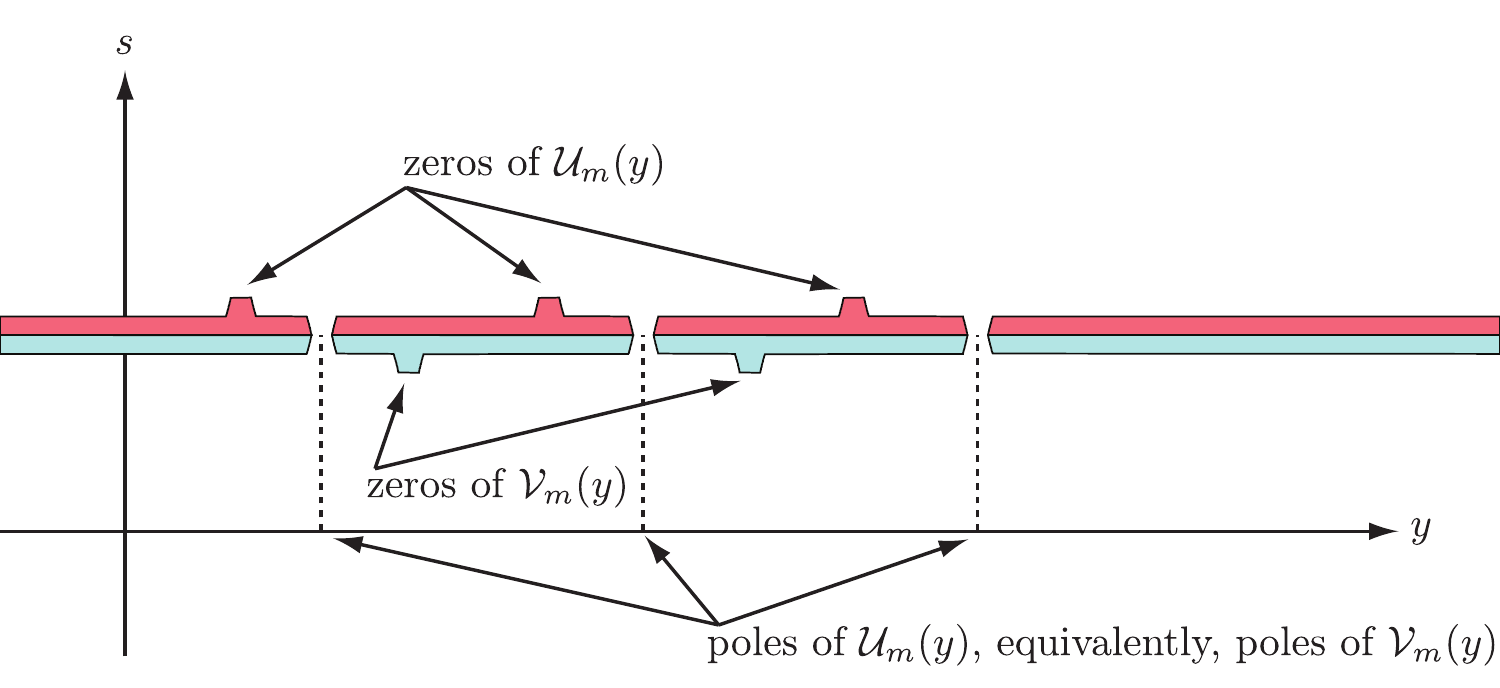}
\end{center}
\caption{\emph{The regions $\Omega_\ind^+$ (pink) and $\Omega_\ind^-$ (blue) for $\ind=6$.   For this figure, $\kappa=0$, $\delta=0.2$ and $\epsilon_N=0.1$.  Note the alignment of the notches.   It is for $(y,s)$ in the
union of these regions that we select the value $m=6$ in both the solution $\dot{\mathbf{P}}_\ind^\mathrm{out}(w)$ of the outer model problem and the solution $\mathbf{Z}_\ind(\zeta;y)$ of the inner model problem.}}
\label{fig:twomorestrips}
\end{figure}

Since $\pv_{\ind+1}(y)=1/\pu_{\ind}(y)$ according to  \eqref{eq:Baecklundplus}, it is clear that
$\mathscr{P}(\pu_{\ind+1})=\mathscr{Z}(\pv_\ind)$ and $\mathscr{Z}(\pu_{\ind+1})=\mathscr{P}(\pv_\ind)$.   It follows that while the
sets $\Omega_\ind^+$ and $\Omega_{\ind+1}^-$ have disjoint interiors,
their union is a horizontal strip:
\begin{equation}
\Omega_\ind^+\cup\Omega_{\ind+1}^- = 
\left\{(y,s)\in\mathbb{R}^2\quad\text{such that}\quad |p_\ind|\le \frac{1}{3}\log(\epsilon_N^{-1})+\kappa\right\}.
\label{eq:stripunion}
\end{equation}
The two regions interlock near the singularities
of $\log|\pu_\ind(y)|$ (equivalently the singularities of $\log|\pv_{\ind+1}(y)|$);
the teeth of $\Omega_\ind^+$ fit in a one-to-one fashion into the notches of 
$\Omega_{\ind+1}^-$, and 
the teeth of $\Omega_{\ind+1}^-$ fit in a one-to-one fashion into the notches
of $\Omega_\ind^+$.
See Figure~\ref{fig:twostrips}.  In Figure~\ref{fig:CriticalRegionsAnnotated}
we illustrate the tiling of a region of the $(y,s)$-plane with many such strips when $\kappa=0$ (the strips overlap if $\kappa>0$).
\begin{figure}[h]
\begin{center}
\includegraphics{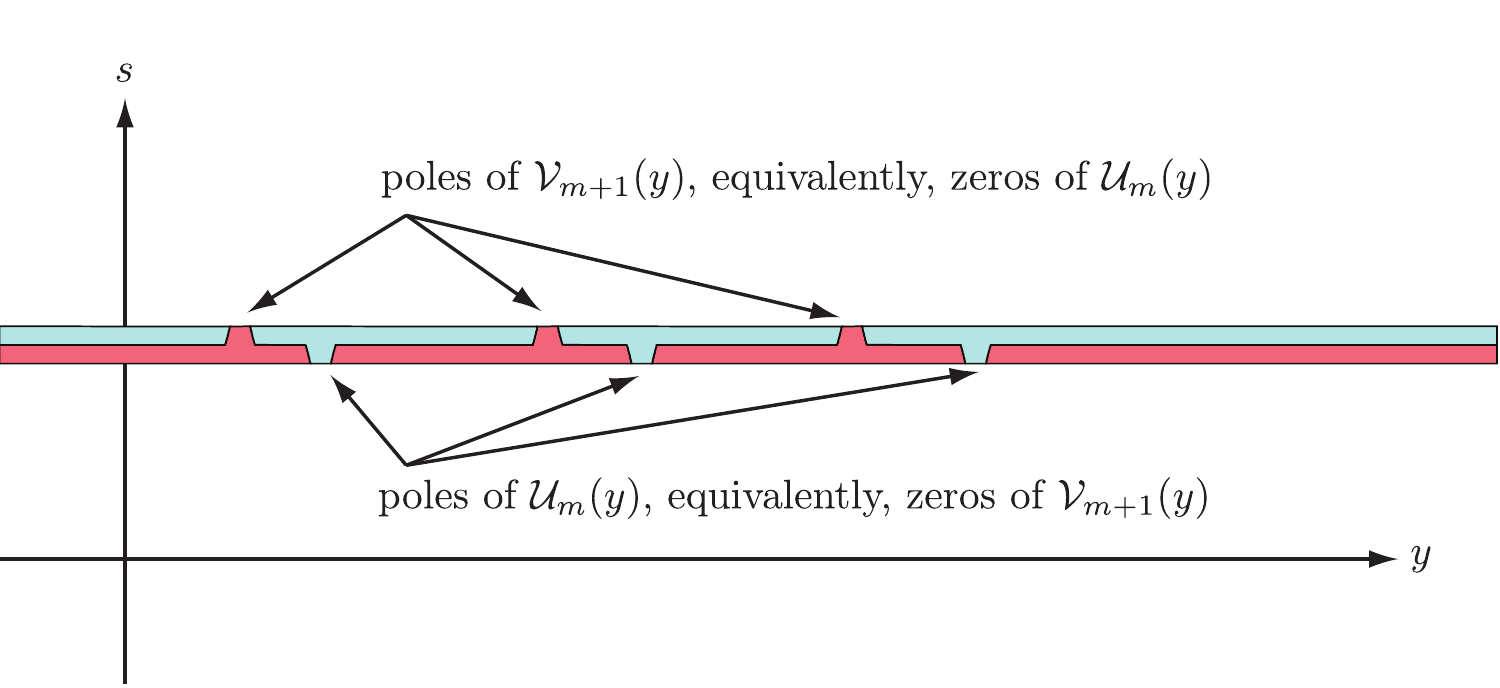}
\end{center}
\caption{\emph{The regions $\Omega_\ind^+$ (pink) and $\Omega_{\ind+1}^-$ (blue) for $\ind=6$.
Their union is a closed horizontal strip in the $(y,s)$-plane.  For this
figure, $\kappa=0$, $\delta=0.2$ and $\epsilon_N=0.1$. 
}}
\label{fig:twostrips}
\end{figure}
\begin{figure}[h]
\begin{center}
\includegraphics{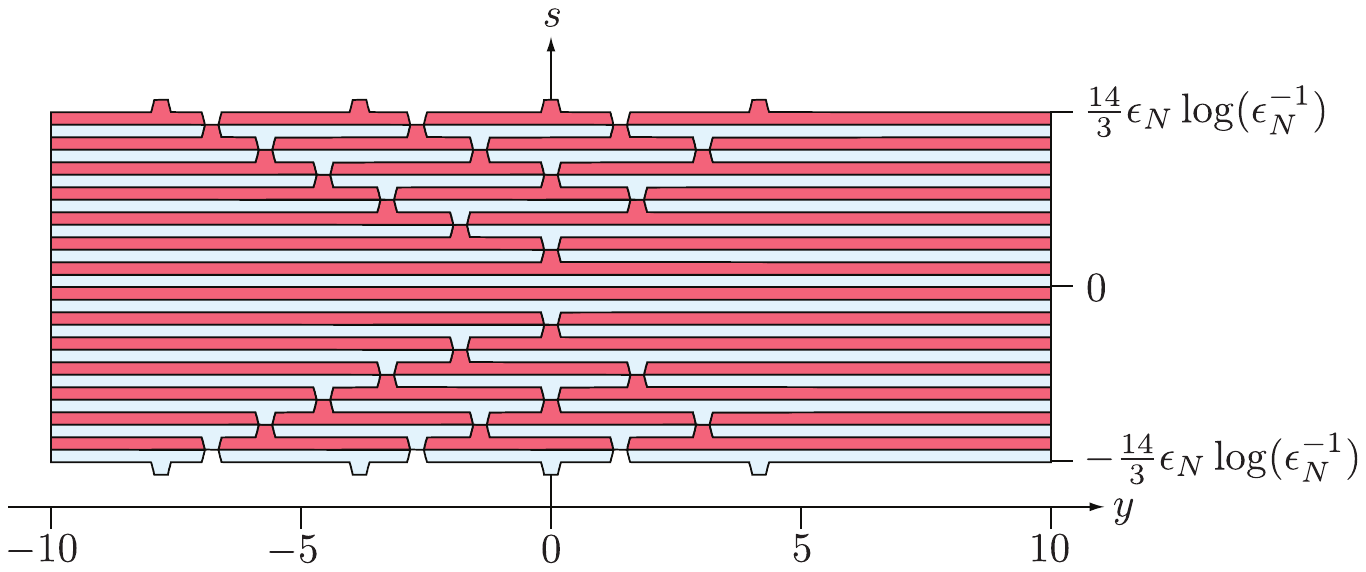}
\end{center}
\caption{\emph{The regions $\Omega_{-6}^+,\dots,\Omega_7^+$ (pink) and $\Omega_{-6}^-,\dots,\Omega_7^-$ (blue), for the ``tiling'' case $\kappa=0$.}}
\label{fig:CriticalRegionsAnnotated}
\end{figure}
This latter figure illustrates another feature of the $\kappa=0$ tiling:  since by the B\"acklund
transformations \eqref{eq:Baecklundplus}--\eqref{eq:Baecklundminus},
$y_0$ is a simple zero of $\pu_{\ind-1}(y)$ if and only if $y_0$ is a simple pole of both
$\pu_\ind(y)$ and $\pv_\ind(y)$ if and only if $y_0$ is a simple zero of $\pv_{\ind+1}(y)$, the 
tips of the teeth of $\Omega_{\ind-1}^+$ are matched in a one-to-one fashion
onto the tips of the teeth of $\Omega_{\ind+1}^-$. 

We will use Definition~\ref{def:sets} in the following context:  we fix a bound $B>0$ and consider only indices $\ind\in\mathbb{Z}$ with $|\ind|\le B$;  we then choose $\delta>0$
sufficiently small that for all $\ind\in\mathbb{Z}$ with $|\ind|\le B$,
no two real singularities of $\log|\pu_\ind(y)|$ 
 or $\log|\pv_\ind(y)|$ 
are closer\footnote{Preliminary asymptotic analysis suggests that $\delta = \bo(B^{-1/3})$ as $B\uparrow\infty$, which is one indication of failure of uniformity of the analysis in this paper when $t\sim s\gg \epsilon_N\log(\epsilon_N^{-1})$.  Indeed, for $t>0$ fixed and $x\approx x_\mathrm{crit}$ fixed one expects a different description of the sine-Gordon dynamics for sufficiently small $\epsilon_N$ in terms of higher-genus Riemann theta functions.} than $2\delta$.  This prevents
neighboring teeth/notches from overlapping at their widest point (corresponding to $p_\ind=0$).

\subsection{Parametrix for $\mathbf{Q}_\ind(w)$ in $\Omega_\ind^+$.}
\label{sec:parametrixplus}
Even with the choice of the integer $\ind$ corresponding to $(y,s)\in\Omega_\ind^\pm$
it turns out that Riemann-Hilbert Problem~\ref{rhp:critQ} cannot generally be solved with
the use of small-norm theory as the jump matrix $\mathbf{V}_{\mathbf{Q}_\ind}(\xi)$ is not 
close to the identity for $\xi\in\partial U$, at least not uniformly for $(y,s)\in\Omega_\ind^\pm$.
Therefore, even though we may think of $\mathbf{Q}_\ind(w)$ as an ``error'' matrix gauging the mismatch between the matrix function $\dot{\mathbf{P}}_\ind(w)$ and the matrix function $\mathbf{P}(w)$ for which the former was introduced as a model,  we must in general carry out further modeling of $\mathbf{Q}_\ind(w)$ by creating an explicit model (parametrix) for it, and comparing the model with $\mathbf{Q}_\ind(w)$ in hopes of --- finally --- arriving at a small-norm problem for the matrix ratio.  We now proceed to build such a \emph{parametrix for the error} by a procedure
referred to in some papers \cite{ClaeysG10osc,ClaeysG10sol} as \emph{improvement of the parametrices}.

Let $\ind\in \mathbb{Z}$ with $|\ind|\le B$ be fixed, and suppose that
$(y,s)\in \Omega_\ind^+$ with $y\in\mathbb{R}$ bounded.
In particular, this implies that the coordinate $p_\ind$ defined by \eqref{eq:spind} is subjected to the inequalities $|p_\ind|\le\tfrac{1}{3}\log(\epsilon_N^{-1})+\kappa$.
Using the coordinate $p_\ind$, and recalling Proposition~\ref{prop:Rnlocgeneral} along with the
expansion \eqref{eq:Fexpansion}, we write the matrix $\mathbf{V}_{\mathbf{Q}_\ind}(\xi)$ defined by the right-hand
side of \eqref{eq:Qjumpdiscboundary} in the form (assuming only that $y$ is bounded)
\begin{equation}
\begin{split}
\mathbf{V}_{\mathbf{Q}_\ind}(\xi)&=\exp\left(\frac{1}{2}\left(p_\ind+\frac{1}{3}\log(\epsilon_N^{-1})\right)\sigma_3\right)\\
&\quad\quad{}\cdot\eta_\ind(\xi)^{\sigma_3}\left[\mathbb{I}+\epsilon_N^{1/3}\mathbf{A}_\ind(y)W(\xi)^{-1}
+\epsilon_N^{2/3}\mathbf{B}_\ind(y)W(\xi)^{-2} +\bo\left(\frac{\epsilon_N}{|y-\mathscr{P}(\pu_\ind)|}\right)\right]
\eta_\ind(\xi)^{-\sigma_3}\\
&\quad\quad{}\cdot\exp\left(-\frac{1}{2}\left(p_\ind+\frac{1}{3}\log(\epsilon_N^{-1})\right)
\sigma_3\right)d_N(\xi)^{-\sigma_3},\quad\xi\in\partial U
\end{split}
\label{eq:VQomegaplus}
\end{equation}
because $W(\xi)$ is bounded away from zero on $\partial U$.
Since $\eta_\ind(\xi)$, $d_N(\xi)$, and their reciprocals are all bounded on $\partial U$ and $p_\ind\ge -\tfrac{1}{3}\log(\epsilon_N^{-1})-\kappa$ for $(y,s)\in\Omega_\ind^+$, 
\begin{equation}
\begin{split}
\mathbf{V}_{\mathbf{Q}_\ind}(\xi)&=\exp\left(\frac{1}{2}\left(p_\ind+\frac{1}{3}\log(\epsilon_N^{-1})\right)\sigma_3\right)\\
&\quad\quad{}\cdot\eta_\ind(\xi)^{\sigma_3}\left[\mathbb{I}+\epsilon_N^{1/3}\mathbf{A}_\ind(y)W(\xi)^{-1}
+\epsilon_N^{2/3}\mathbf{B}_\ind(y)W(\xi)^{-2}\right]
\eta_\ind(\xi)^{-\sigma_3}\\
&\quad\quad{}\cdot\exp\left(-\frac{1}{2}\left(p_\ind+\frac{1}{3}\log(\epsilon_N^{-1})\right)
\sigma_3\right)d_N(\xi)^{-\sigma_3} + \bo\left(\frac{\epsilon_N^{2/3}e^{p_\ind}}{|y-\mathscr{P}(\pu_\ind)|}\right)
\end{split}
\label{eq:VQomegaplus1}
\end{equation}
holds for $(y,s)\in\Omega_\ind^+$ with $y$ bounded and $\xi\in\partial U$.
Now, we want to neglect all but the $(1,2)$-entry of the matrix $\epsilon_N^{1/3}\mathbf{A}_\ind(y)W(\xi)^{-1} + 
\epsilon_N^{2/3}\mathbf{B}_\ind(y)W(\xi)^{-2}$.  Since the matrix elements of $\mathbf{A}_\ind(y)$ and $\mathbf{B}_\ind(y)$ are $\bo(|y-\mathscr{P}(\pu_\ind)|^{-1})$,
we may do this at the cost of adding a new error term of the
form $\bo(\epsilon_N^{1/3}|y-\mathscr{P}(\pu_\ind)|^{-1})$ (dominated by the diagonal entries).
This error term dominates the one already present in \eqref{eq:VQomegaplus1} for $(y,s)\in\Omega_\ind^+$ as a consequence of the inequality $p_\ind\le\tfrac{1}{3}\log(\epsilon_N^{-1})+\kappa$.  Therefore,
\begin{equation}
\mathbf{V}_{\mathbf{Q}_\ind}(\xi)=\begin{bmatrix}1 & Y_\ind^+(\xi)\\0 & 1\end{bmatrix}d_N(\xi)^{-\sigma_3} +
\bo\left(\frac{\epsilon_N^{1/3}}{|y-\mathscr{P}(\pu_\ind)|}\right),
\quad \xi\in\partial U,\quad \text{$(y,s)\in\Omega_\ind^+$ with $y$ bounded,}
\label{eq:EjumppartialU1}
\end{equation}
where, defining
\begin{equation}
h_\ind^+(w):=\eta_\ind(w)^2, 
\label{eq:hindplus}
\end{equation}
the explicit function appearing in \eqref{eq:EjumppartialU1} is
\begin{equation}
\begin{split}
Y^+_{\ind}(w):=&\;e^{p_\ind}h_\ind^+(w)\left(A_{\ind,12}(y)W(w)^{-1}+\epsilon_N^{1/3}B_{\ind,12}(y)W(w)^{-2}\right)
\\
{}=&\;e^{p_\ind}h_\ind^+(w)\left(\pu_\ind(y)W(w)^{-1} +
\epsilon_N^{1/3}[2H_\ind(y)\pu_\ind(y)-\pu_\ind'(y)]W(w)^{-2}\right).
\end{split}
\label{eq:Ynplusdef}
\end{equation}
Note that $Y_\ind^+(w)$ extends from $\xi\in\partial U$ as a meromorphic function of $w\in \overline{U}$ whose only singularity
is a double pole at $w=w_*$, and that $Y_\ind^+(\xi)$ also depends parametrically on $p_\ind$ and $y$.  Since $d_N(\xi)=1+\bo(\epsilon_N)$ holds uniformly for $\xi\in\partial U$, another way to write \eqref{eq:EjumppartialU1}
is
\begin{multline}
\mathbf{V}_{\mathbf{Q}_\ind}(\xi)\begin{bmatrix}1 & Y_\ind^+(\xi)\\0 & 1\end{bmatrix}^{-1} = 
\mathbb{I} + \bo\left(\frac{\epsilon_N^{1/3}}{|y-\mathscr{P}(\pu_\ind)|}\right) + \bo\left(\frac{Y_\ind^+(\xi)\epsilon_N^{1/3}}{|y-\mathscr{P}(\pu_\ind)|}\right),\\
\xi\in\partial U,\quad\text{$(y,s)\in\Omega_\ind^+$ with $y$ bounded.}
\label{eq:plusjumpestimate}
\end{multline}

\begin{proposition}
The following three estimates hold uniformly for $\xi\in\partial U$ and $y$ bounded.\\
In the upward-pointing ``teeth'' of $\Omega_\ind^+$:
\begin{multline}
\mathbf{V}_{\mathbf{Q}_\ind}(\xi)\begin{bmatrix}1 & Y_\ind^+(\xi)\\ 0 & 1\end{bmatrix}^{-1}=
\mathbb{I}+\bo(\epsilon_N^{1/3}e^{p_\ind}|y-\mathscr{Z}(\pu_\ind)|),\\(y,s)\in\Omega_\ind^+,
\quad p_\ind>0\;\text{and}\; |y-\mathscr{Z}(\pu_\ind)|\ge\delta e^{-p_\ind}.
\label{eq:plusestimateteeth}
\end{multline}
In the part of $\Omega_\ind^+$ resembling a ``strip with notches'':
\begin{equation}
\mathbf{V}_{\mathbf{Q}_\ind}(\xi)\begin{bmatrix}1 & Y_\ind^+(\xi)\\ 0 & 1\end{bmatrix}^{-1}=
\mathbb{I}+\bo\left(\frac{\epsilon_N^{1/3}}{|y-\mathscr{P}(\pu_\ind)|}\right),\quad
(y,s)\in\Omega_\ind^+,\quad |y-\mathscr{P}(\pu_\ind)|\le\delta.
\label{eq:plusestimatenotches}
\end{equation}
Elsewhere in $\Omega_\ind^+$:
\begin{equation}
\mathbf{V}_{\mathbf{Q}_\ind}(\xi)\begin{bmatrix}1 & Y_\ind^+(\xi)\\ 0 & 1\end{bmatrix}^{-1}=
\mathbb{I}+\bo(\epsilon_N^{1/3}).
\label{eq:plusestimateelsewhere}
\end{equation}
\label{prop:VQerrorplus}
\end{proposition}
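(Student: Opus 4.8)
The plan is to start from the uniform estimate \eqref{eq:plusjumpestimate}, which already shows that
\[
\mathbf{V}_{\mathbf{Q}_\ind}(\xi)\begin{bmatrix}1 & Y_\ind^+(\xi)\\0 & 1\end{bmatrix}^{-1}=\mathbb{I}+\bo\left(\frac{\epsilon_N^{1/3}}{|y-\mathscr{P}(\pu_\ind)|}\right)+\bo\left(\frac{Y_\ind^+(\xi)\epsilon_N^{1/3}}{|y-\mathscr{P}(\pu_\ind)|}\right),
\]
uniformly for $\xi\in\partial U$ and $(y,s)\in\Omega_\ind^+$ with $y$ bounded. The whole proposition reduces to estimating the two quantities $|y-\mathscr{P}(\pu_\ind)|^{-1}$ and $Y_\ind^+(\xi)|y-\mathscr{P}(\pu_\ind)|^{-1}$ in the three sub-regions of $\Omega_\ind^+$ specified in Definition~\ref{def:sets}. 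First I would record the elementary bounds that $W(\xi)$ and its reciprocal, $\eta_\ind(\xi)=\sqrt{h_\ind^+(\xi)}$ and its reciprocal, and $d_N(\xi)$ are all $\bo(1)$ on $\partial U$ (the first because $\partial U$ is fixed and bounded away from $w_*\approx -1$, the others by \eqref{eq:Hnformula} and the discussion of $d_N$); these reduce the estimation of $Y_\ind^+(\xi)$ via \eqref{eq:Ynplusdef} to estimating $e^{p_\ind}\pu_\ind(y)$ and $\epsilon_N^{1/3}e^{p_\ind}[2H_\ind(y)\pu_\ind(y)-\pu_\ind'(y)]=\epsilon_N^{1/3}e^{p_\ind}B_{\ind,12}(y)$, using Proposition~\ref{prop:Rnlocgeneral} for the pole structure of $\pu_\ind$, $\pu_\ind'$, $H_\ind$, and $B_{\ind,12}$ near $\mathscr{P}(\pu_\ind)$.

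The main work is the case analysis. In the ``notches'' region, where $|y-\mathscr{P}(\pu_\ind)|\le\delta$: by Proposition~\ref{prop:Rnlocgeneral}, $\pu_\ind$ has a simple pole, so $Y_\ind^+(\xi)=\bo(|y-\mathscr{P}(\pu_\ind)|^{-1})+\epsilon_N^{1/3}\bo(|y-\mathscr{P}(\pu_\ind)|^{-2})$ (the second term from $B_{\ind,12}$, which has a simple pole there as noted at the end of Proposition~\ref{prop:Rnlocgeneral}), and multiplying by $e^{p_\ind}$ with $p_\ind\le\tfrac13\log(\epsilon_N^{-1})+\kappa$ gives $e^{p_\ind}=\bo(\epsilon_N^{-1/3})$; I would check that the product $Y_\ind^+(\xi)\epsilon_N^{1/3}|y-\mathscr{P}(\pu_\ind)|^{-1}$ is then dominated by $\epsilon_N^{1/3}|y-\mathscr{P}(\pu_\ind)|^{-1}$ precisely because the defining inequality of $\Omega_\ind^+$ in this regime ($|y-\mathscr{P}(\pu_\ind)|\ge\delta e^{p_\ind/2}$, recall $p_\ind\le 0$ near poles) controls how fast $e^{p_\ind}$ can grow relative to $|y-\mathscr{P}(\pu_\ind)|$. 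This yields \eqref{eq:plusestimatenotches}. In the ``teeth'' region ($p_\ind>0$, $|y-\mathscr{Z}(\pu_\ind)|\ge\delta e^{-p_\ind}$, and $y$ near $\mathscr{Z}(\pu_\ind)$ so $\pu_\ind(y)=\bo(|y-\mathscr{Z}(\pu_\ind)|)$ with $\pu_\ind$ vanishing simply there): here $|y-\mathscr{P}(\pu_\ind)|$ is bounded away from zero, so the first error is just $\bo(\epsilon_N^{1/3})$, and $Y_\ind^+(\xi)=\bo(e^{p_\ind}|y-\mathscr{Z}(\pu_\ind)|)+\epsilon_N^{1/3}\bo(e^{p_\ind})$; multiplying by $\epsilon_N^{1/3}$ and using $e^{p_\ind}\le\epsilon_N^{-1/3+\bo(1/\log\epsilon_N^{-1})}$-type bounds gives $\bo(\epsilon_N^{1/3}e^{p_\ind}|y-\mathscr{Z}(\pu_\ind)|)$ as the dominant contribution (the term $\epsilon_N^{2/3}e^{p_\ind}$ being subdominant since $|y-\mathscr{Z}(\pu_\ind)|\ge\delta e^{-p_\ind}$ forces $\epsilon_N^{1/3}e^{p_\ind}|y-\mathscr{Z}(\pu_\ind)|\ge\delta\epsilon_N^{1/3}$, so $\epsilon_N^{2/3}e^{p_\ind}=\epsilon_N^{1/3}\cdot\epsilon_N^{1/3}e^{p_\ind}\le\delta^{-1}\epsilon_N^{1/3}e^{p_\ind}|y-\mathscr{Z}(\pu_\ind)|\cdot\epsilon_N^{1/3}$ is indeed absorbed), giving \eqref{eq:plusestimateteeth}. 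In the remaining ``elsewhere'' region, both $|y-\mathscr{P}(\pu_\ind)|$ and $\pu_\ind(y)$ stay bounded and bounded below, $e^{p_\ind}=\bo(\epsilon_N^{-1/3})$ is overkill, and one checks directly that $Y_\ind^+(\xi)=\bo(e^{p_\ind})$ with the product giving $\bo(\epsilon_N^{1/3})$, i.e. \eqref{eq:plusestimateelsewhere}.

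The hard part will be bookkeeping the logarithmic factors: $e^{p_\ind}$ is not a clean power of $\epsilon_N$, so one must be careful that the boundary inequalities in Definition~\ref{def:sets} (which involve $e^{\pm p_\ind/2}$, not $e^{\pm p_\ind}$) combine correctly with the two explicit powers $\epsilon_N^{1/3}$ and $\epsilon_N^{2/3}$ coming from the $\mathbf{A}_\ind$- and $\mathbf{B}_\ind$-terms in \eqref{eq:VQomegaplus1} to produce exactly the claimed error orders. The key algebraic identities are $\delta e^{p_\ind/2}\le |y-\mathscr{P}(\pu_\ind)|$ (poles, $p_\ind\le 0$) and $|y-\mathscr{Z}(\pu_\ind)|\le \delta e^{-p_\ind/2}$ together with $|y-\mathscr{Z}(\pu_\ind)|\ge\delta e^{-p_\ind}$ (teeth, $p_\ind>0$); I would systematically use these to convert every appearance of $e^{p_\ind}$ into a product of $e^{p_\ind/2}$ factors that can be traded against the appropriate distance, and I would verify in each region which of the $\bo(\epsilon_N^{1/3}|y-\mathscr{P}|^{-1})$ versus $\bo(Y_\ind^+\epsilon_N^{1/3}|y-\mathscr{P}|^{-1})$ terms dominates. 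Once that balancing is done region-by-region the proposition follows.
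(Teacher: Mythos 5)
Your overall plan follows the paper's own argument closely: start from \eqref{eq:plusjumpestimate}, reduce to estimating $|y-\mathscr{P}(\pu_\ind)|^{-1}$ and $Y_\ind^+(\xi)|y-\mathscr{P}(\pu_\ind)|^{-1}$ region by region, and use Proposition~\ref{prop:Rnlocgeneral} for the pole/zero structure of $\pu_\ind$ and $B_{\ind,12}=2H_\ind\pu_\ind-\pu_\ind'$. The teeth case and the notches case go essentially as in the paper; the small slip that you wrote $\bo(|y-\mathscr{P}(\pu_\ind)|^{-2})$ for the $B_{\ind,12}$ contribution in the notches regime is at odds with your own (correct) statement that $B_{\ind,12}$ has only a simple pole there, but it is harmless since the resulting bound only improves.

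The real gap is in your treatment of the ``elsewhere'' case. You assert that in that region both $|y-\mathscr{P}(\pu_\ind)|$ \emph{and} $\pu_\ind(y)$ are bounded below, and conclude $Y_\ind^+(\xi)=\bo(e^{p_\ind})$. Neither step is correct. For $p_\ind>0$ the ``elsewhere'' region is exactly $|y-\mathscr{Z}(\pu_\ind)|<\delta e^{-p_\ind}$ — i.e.\ $y$ is very close to a zero of $\pu_\ind$ — so $\pu_\ind(y)$ is \emph{not} bounded below. Moreover even if you grant $Y_\ind^+(\xi)=\bo(e^{p_\ind})$, multiplying by $\epsilon_N^{1/3}/|y-\mathscr{P}(\pu_\ind)|=\bo(\epsilon_N^{1/3})$ and then using $e^{p_\ind}\le\epsilon_N^{-1/3}e^\kappa$ gives only $\bo(1)$, not $\bo(\epsilon_N^{1/3})$. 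What makes it work is precisely the proximity of $y$ to a zero: $\pu_\ind(y)=\bo(|y-\mathscr{Z}(\pu_\ind)|)=\bo(e^{-p_\ind})$ there, so $e^{p_\ind}\pu_\ind(y)=\bo(1)$ and $Y_\ind^+(\xi)=\bo(1)$. The cleanest route (and the one the paper uses) is to observe that the teeth argument already gives $\mathbb{I}+\bo(\epsilon_N^{1/3})+\bo(\epsilon_N^{1/3}e^{p_\ind}|y-\mathscr{Z}(\pu_\ind)|)$ for \emph{all} of $\Omega_\ind^+$ with $p_\ind>0$, and that the second term is $\bo(\epsilon_N^{1/3})$ precisely when $|y-\mathscr{Z}(\pu_\ind)|<\delta e^{-p_\ind}$; similarly the notches argument gives $\mathbb{I}+\bo(\epsilon_N^{1/3}|y-\mathscr{P}(\pu_\ind)|^{-1})$ for all $p_\ind\le 0$, which reduces to $\bo(\epsilon_N^{1/3})$ when $|y-\mathscr{P}(\pu_\ind)|>\delta$. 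So the elsewhere estimate should fall out as a corollary of the other two rather than needing a separate (and in your version incorrect) argument.
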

\begin{proof}
First suppose that $0<p_\ind\le\frac{1}{3}\log(\epsilon_N^{-1})+\kappa$ (so we are considering $(y,s)$ to lie in
the upward-pointing ``teeth''  of $\Omega_\ind^+$) and that $y$ is bounded.    In this sub-region, $|y-\mathscr{P}(\pu_\ind)|$ is bounded away from zero because the teeth are localized near the zeros $\mathscr{Z}(\pu_\ind)$ of $\pu_\ind(y)$.  Therefore
$2H_\ind\pu_\ind-\pu_\ind'$ is bounded and $\pu_\ind(y)=\bo(|y-\mathscr{Z}(\pu_\ind)|)$, so $Y_\ind^+(\xi)=\bo(e^{p_\ind}|y-\mathscr{Z}(\pu_\ind)|) + \bo(1)$ holds uniformly for $\xi\in\partial U$.  Putting this estimate into \eqref{eq:plusjumpestimate} gives
\begin{multline}
\mathbf{V}_{\mathbf{Q}_\ind}(\xi)\begin{bmatrix}1 & Y_\ind^+(\xi)\\0 & 1\end{bmatrix}^{-1}=\mathbb{I}
+\bo(\epsilon_N^{1/3}) + \bo(\epsilon_N^{1/3}e^{p_\ind}|y-\mathscr{Z}(\pu_\ind)|),\\
\xi\in\partial U,\quad \text{$(y,s)\in\Omega_\ind^+$ with $y$ bounded and $p_\ind>0$.}
\end{multline}
This proves the estimate \eqref{eq:plusestimateteeth}, and it also proves the estimate \eqref{eq:plusestimateelsewhere} in the case that $p_\ind>0$.

Now suppose instead that $-\frac{1}{3}\log(\epsilon_N^{-1})-\kappa\le p_\ind\le 0$ (so we are considering $(y,s)$ to lie in the complementary part of $\Omega_\ind^+$ that resembles a ``strip with notches''), and that $y$ is bounded.  Since according to Proposition~\ref{prop:Rnlocgeneral}, both $\pu_\ind(y)$ 
and $2H_\ind(y)\pu_\ind(y)-\pu_\ind'(y)$ have simple poles at the points of $\mathscr{P}(\pu_\ind)$, it follows that in this sub-region,
$Y_\ind^+(\xi)=\bo(e^{p_\ind}|y-\mathscr{P}(\pu_\ind)|^{-1})$.  But in this subregion we also have the inequality $|y-\mathscr{P}(\pu_\ind)|\ge\delta e^{\frac{1}{2}p_\ind}$, so with $p_\ind\le 0$ in fact we have $Y_\ind^+(\xi)=\bo(1)$.  Putting this information into \eqref{eq:plusjumpestimate}
gives
\begin{equation}
\mathbf{V}_\mathbf{Q}(\xi)\begin{bmatrix}1 & Y_\ind^+(\xi)\\0 & 1\end{bmatrix}^{-1}=
\mathbb{I} + \bo\left(\frac{\epsilon_N^{1/3}}{|y-\mathscr{P}(\pu_\ind)|}\right),\;\;
\xi\in\partial U,\quad\text{$(y,s)\in\Omega_\ind^+$ with $y$ bounded and $p_\ind\le 0$.}
\end{equation}
This proves \eqref{eq:plusestimatenotches}, and it also proves \eqref{eq:plusestimateelsewhere} 
in the case that $p_\ind\le 0$.
\end{proof}

\begin{corollary}
The following estimate holds uniformly for $\xi\in\partial U$ and $(y,s)\in\Omega_\ind^+$ with
$y$ bounded:
\begin{equation}
\mathbf{V}_{\mathbf{Q}_\ind}(\xi)\begin{bmatrix}1 & Y_\ind^+(\xi)\\0 & 1\end{bmatrix}^{-1}=
\mathbb{I}+\bo(\epsilon_N^{1/6}).
\end{equation}
While uniform, this estimate fails to be sharp except possibly at the vertical extremes of $\Omega_\ind^+$ where $|p_\ind|\sim \frac{1}{3}\log(\epsilon_N^{-1})$.
\label{cor:VQerrorplus}
\end{corollary}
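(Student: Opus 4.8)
The plan is to derive the uniform bound by taking the worst case among the three region-dependent estimates supplied by Proposition~\ref{prop:VQerrorplus}, using the defining inequalities of $\Omega_\ind^+$ from Definition~\ref{def:sets} to convert the exponential factors $e^{\pm p_\ind/2}$ into powers of $\epsilon_N$. First I would dispose of the trivial case: on the part of $\Omega_\ind^+$ that is neither a ``tooth'' nor a ``notch'', estimate \eqref{eq:plusestimateelsewhere} already gives an error $\bo(\epsilon_N^{1/3})$, which is $\bo(\epsilon_N^{1/6})$ a fortiori, uniformly for $\xi\in\partial U$.

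Next I would treat the teeth. There \eqref{eq:plusestimateteeth} yields an error of size $\bo(\epsilon_N^{1/3}e^{p_\ind}|y-\mathscr{Z}(\pu_\ind)|)$, and on the teeth one has simultaneously $p_\ind>0$ and $|y-\mathscr{Z}(\pu_\ind)|\le\delta e^{-\frac{1}{2}p_\ind}$, so the error is $\bo(\delta\epsilon_N^{1/3}e^{\frac{1}{2}p_\ind})$; since $p_\ind\le\frac{1}{3}\log(\epsilon_N^{-1})+\kappa$ in $\Omega_\ind^+$ we get $e^{\frac{1}{2}p_\ind}\le e^{\kappa/2}\epsilon_N^{-1/6}$ and hence a bound $\bo(\epsilon_N^{1/6})$ with implied constant depending only on $\delta$ and $\kappa$. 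For the notches, \eqref{eq:plusestimatenotches} gives an error $\bo(\epsilon_N^{1/3}/|y-\mathscr{P}(\pu_\ind)|)$, and on the notches $|y-\mathscr{P}(\pu_\ind)|\ge\delta e^{\frac{1}{2}p_\ind}$ with $p_\ind\ge-\frac{1}{3}\log(\epsilon_N^{-1})-\kappa$, so $|y-\mathscr{P}(\pu_\ind)|\ge\delta e^{-\kappa/2}\epsilon_N^{1/6}$, again producing $\bo(\epsilon_N^{1/6})$. Since these three sub-regions cover $\Omega_\ind^+$ (for $y$ bounded) and all the estimates on $\partial U$ in Proposition~\ref{prop:VQerrorplus} are uniform, combining them proves the corollary.

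I do not anticipate a genuine obstacle here; the argument is a bookkeeping of the worst case. The one point that merits care is that the exponential/logarithmic balance is truly tight at the vertical extremes $|p_\ind|\sim\frac{1}{3}\log(\epsilon_N^{-1})$, where $e^{\pm p_\ind/2}\sim\epsilon_N^{\mp 1/6}$ exactly cancels against the $\epsilon_N^{1/3}$ prefactor and leaves nothing better than $\epsilon_N^{1/6}$ — this is precisely the loss of sharpness flagged in the statement, and it should be noted explicitly. One should also confirm that $\kappa$, a fixed nonnegative constant independent of $\epsilon_N$, contributes only harmless multiplicative constants, and that $\delta$ (chosen once $B$ is fixed, with $|\ind|\le B$) likewise affects only the implied constant and not the power of $\epsilon_N$.
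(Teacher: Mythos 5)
Your argument is correct and is essentially the paper's intended proof: the corollary follows immediately from the three region-dependent estimates of Proposition~\ref{prop:VQerrorplus} combined with the defining inequalities of $\Omega_\ind^+$ that convert $e^{\pm p_\ind/2}$ into at most $\bo(\epsilon_N^{-1/6})$. The bookkeeping of the worst case is the whole content, and your discussion of where the bound is tight (at the vertical extremes of the strip) matches the remark in the statement.
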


As $\epsilon_N\downarrow 0$, the function $Y^+_\ind(\xi)$ is not uniformly small on $\partial U$ as $(y,s)$ ranges over
$\Omega_\ind^+$ with $y$ bounded.
This implies that to 
approximate $\mathbf{Q}_\ind(w)$ for $(y,s)\in \Omega_\ind^+$,
it will be necessary to build a parametrix for it
that takes into account the explicit terms in
\eqref{eq:EjumppartialU1}.

\begin{rhp}[Parametrix for $\mathbf{Q}_\ind(w)$ in $\Omega_\ind^+$]
Let real numbers $y$ and $p_\ind$ be given, 
and assume that $y\not\in\mathscr{P}(\pu_\ind)$.  Seek a matrix $\dot{\mathbf{Q}}_\ind^+(w)$
with the following properties:
\begin{itemize}
\item[]\textbf{Analyticity:}  $\dot{\mathbf{Q}}_\ind^+(w)$ is analytic for
$w\in\mathbb{C}\setminus(\mathbb{R}_+\cup\partial U)$, and is uniformly
H\"older continuous up to the boundary of its domain of
analyticity.  
\item[]\textbf{Jump conditions:}  The boundary values taken by 
$\dot{\mathbf{Q}}_\ind^+(w)$ on the contour 
$\mathbb{R}_+\cup\partial U$ are related
as follows:
\begin{equation}
\dot{\mathbf{Q}}^+_{\ind+}(\xi)=\sigma_2\dot{\mathbf{Q}}^+_{\ind-}(\xi)\sigma_2,\quad \xi>0,
\end{equation}
and
\begin{equation}
\dot{\mathbf{Q}}^+_{\ind+}(\xi)=\dot{\mathbf{Q}}^+_{\ind-}(\xi)
\begin{bmatrix}1 & Y^+_\ind(\xi)\\
0 & 1\end{bmatrix},\quad \xi\in\partial U
\end{equation}
where $\partial U$ is taken to be negatively (clockwise) oriented and
$Y_\ind^+(\xi)$ is defined by \eqref{eq:Ynplusdef}.
\item[]\textbf{Normalization:}  the matrix $\dot{\mathbf{Q}}_\ind^+(w)$ satisfies
the condition
\begin{equation}
\lim_{w\to\infty}\dot{\mathbf{Q}}_\ind^+(w)=\mathbb{I}.
\end{equation}
\end{itemize}
\label{rhp:Eparametrix}
\end{rhp}
We solve this problem uniquely for $\dot{\mathbf{Q}}_\ind^+(w)$ as follows.  
Firstly,
define a related unknown $\mathbf{F}_\ind^+(w)$ as follows:
\begin{equation}
\mathbf{F}_\ind^+(w):=\begin{cases}
\dot{\mathbf{Q}}_\ind^+(w),\quad & w\not\in \overline{U}\\
\displaystyle \dot{\mathbf{Q}}_\ind^+(w)
\begin{bmatrix}1 & Y^+_\ind(w)\\ 0 & 1\end{bmatrix},
\quad & w\in U.
\end{cases}
\label{eq:Fnplusdef}
\end{equation}
Clearly, $\mathbf{F}_\ind^+(w)$ is continuous across the contour $\partial U$.
Now, $\eta_\ind(w)$ is analytic in $U$, but $W(w)$ has a simple zero at 
$w=w_*\approx -1$, so $\mathbf{F}_\ind^+(w)$ generally has a double pole at 
$w=w_*$.  
To handle the only remaining jump condition for $\mathbf{F}^+_\ind(w)$, namely
$\mathbf{F}^+_{\ind+}(\xi)=
\sigma_2\mathbf{F}^+_{\ind-}(\xi)
\sigma_2$ for $\xi>0$, we introduce $\sv:=i(-w)^{1/2}$ and set
\begin{equation}
\mathbf{G}_\ind^+(\sv):=\begin{cases} \mathbf{F}_\ind^+(\sv^2),\quad &\Im\{\sv\}>0\\
\sigma_2\mathbf{F}_\ind^+(\sv^2)\sigma_2,\quad &\Im\{\sv\}<0.
\end{cases}
\label{eq:GFdef}
\end{equation}
It is now evident that $\mathbf{G}_\ind^+(\sv)$ is a rational function of
$\sv\in\mathbb{C}$ with simple poles at $\sv=\pm iS$ where $S:=(-w_*)^{1/2}>0$, normalized
to the identity at $\sv=\infty$, and satisfying the symmetry relation
$\mathbf{G}_\ind^+(-\sv)=\sigma_2\mathbf{G}_\ind^+(\sv)\sigma_2$.  

To fully characterize the singularities of $\mathbf{G}_\ind^+(\sv)$, we will
relate its principal part at $\sv=iS$ to its regular part.
Note that in a neighborhood of $\sv=iS$, $Y^+_\ind(\sv^2)$ has the 
Laurent expansion
\begin{equation}
Y^+_\ind(\sv^2)=\frac{J^+_\ind}{(\sv-iS)^{2}}+\frac{iK^+_\ind}{\sv-iS} + \bo(1)
\end{equation}
where $\bo(1)$ represents a function analytic at $\sv=iS$ and where
\begin{equation}
\begin{split}
J^+_\ind&:=\frac{e^{p_\ind}h^+_\ind(w_*)\epsilon_N^{1/3}}{4w_*W'(w_*)^2}
\Big(2H_\ind(y)\pu_\ind(y)-\pu_\ind'(y)\Big)\\
K^+_\ind&:=-\frac{e^{p_\ind}}{4w_*SW'(w_*)^3}
\Bigg[2w_*h_\ind^+(w_*)W'(w_*)^2\pu_\ind(y) \\
&\quad{}+\epsilon_N^{1/3}\Big(2H_\ind(y)\pu_\ind(y)-\pu_\ind'(y)\Big)\Big(2w_*h^{+\prime}_\ind(w_*)W'(w_*)-2w_*h^+_\ind(w_*)W''(w_*)-h^+_\ind(w_*)W'(w_*)\Big)\Bigg].
\end{split}
\label{eq:JKplus}
\end{equation}
Note also that $J^+_\ind$ and $K^+_\ind$ are both real when $y\in\mathbb{R}$ and 
$p_\ind\in\mathbb{R}$.
Let us write the Laurent expansion of $\mathbf{G}_\ind^+(\sv)$ about 
$\sv=iS$ in the form
\begin{equation}
\mathbf{G}_\ind^+(\sv)=\frac{(\mathbf{a}^+_\ind,\mathbf{b}^+_\ind)}
{(\sv-iS)^{2}} + 
\frac{(\mathbf{c}^{+}_\ind,\mathbf{d}^+_\ind)}{\sv-iS} 
+
(\mathbf{e}^+_\ind,\mathbf{f}^+_\ind) +
(\mathbf{g}^+_\ind,\mathbf{h}^+_\ind)(\sv-iS) + \bo((\sv-iS)^2),
\label{eq:LaurentGnplus}
\end{equation}
where $\bo((\sv-iS)^2)$ represents a matrix function analytic
and vanishing to second order at $\sv=iS$.  Since
$\dot{\mathbf{Q}}_\ind^+(w)$ is meant to be analytic at $w=w_*$, from
\eqref{eq:Fnplusdef} we see that the principal part of
$\mathbf{G}_\ind^+(\sv)$ at $\sv=iS$ is determined from its holomorphic part
as follows.  The first column vanishes:
\begin{equation}
\mathbf{a}^+_\ind=\mathbf{c}^+_\ind=\mathbf{0},
\label{eq:GplusPPfirstcol}
\end{equation}
and the second column satisfies the linear relations
\begin{equation}
\mathbf{b}^+_\ind=J^+_\ind\mathbf{e}^+_\ind\quad\text{and}\quad
\mathbf{d}^+_\ind=J^+_\ind\mathbf{g}^+_\ind + iK^+_\ind\mathbf{e}^+_\ind.
\label{eq:GplusPPsecondcol}
\end{equation}
Taking into account the symmetry condition
$\mathbf{G}_\ind^+(-\sv)=\sigma_2\mathbf{G}_\ind^+(\sv)\sigma_2$ and the
notation of the principal part of the Laurent expansion
\eqref{eq:LaurentGnplus}, the matrix $\mathbf{G}_\ind^+(\sv)$ necessarily
can be expressed in partial fractions as follows:
\begin{equation}
\mathbf{G}_\ind^+(\sv)=\mathbb{I} +
\frac{(\mathbf{0},\mathbf{b}^+_\ind)}{(\sv-iS)^2}
+\frac{(\mathbf{0},\mathbf{d}^+_\ind)}{\sv-iS} +
\frac{(i\sigma_2\mathbf{b}^+_\ind,\mathbf{0})}{(\sv+iS)^2}+
\frac{(-i\sigma_2\mathbf{d}^+_\ind,\mathbf{0})}{\sv+iS},
\label{eq:GnplusPF}
\end{equation}
where we have also used \eqref{eq:GplusPPfirstcol}.
Evaluating the first column of this expression and its derivative at
the regular point (for the first column) $\sv=iS$ we find that, 
in the notation of the regular part of the Laurent expansion 
\eqref{eq:LaurentGnplus},
\begin{equation}
\mathbf{e}^+_\ind = \begin{bmatrix}1\\0\end{bmatrix}
+\frac{i\sigma_2}{4w_*}\mathbf{b}^+_\ind -
\frac{\sigma_2}{2S}\mathbf{d}^+_\ind\quad\text{and}\quad
\mathbf{g}^+_\ind = -\frac{\sigma_2}{4w_*S}\mathbf{b}^+_\ind
+\frac{i\sigma_2}{4w_*}\mathbf{d}^+_\ind,\quad S:=(-w_*)^{1/2}>0.
\end{equation}
Substituting these expressions into the relations \eqref{eq:GplusPPsecondcol}
yields a closed inhomogeneous linear system for the vectors $\mathbf{b}^+_\ind$
and $\mathbf{d}^+_\ind$:
\begin{equation}
\begin{split}
\left(\mathbb{I}-\frac{iJ^+_\ind}{4w_*}\sigma_2\right)\mathbf{b}^+_\ind
+\frac{J^+_\ind}{2S}\sigma_2\mathbf{d}^+_\ind &= \begin{bmatrix}
J^+_\ind\\0\end{bmatrix}\\
\frac{J^+_\ind+SK^+_\ind}{4w_*S}\sigma_2
\mathbf{b}^+_\ind +
\left(\mathbb{I}-\frac{iJ^+_\ind+2iSK^+_\ind}{4w_*}\sigma_2
\right)\mathbf{d}^+_\ind&=\begin{bmatrix}iK^+_\ind\\0\end{bmatrix}.
\end{split}
\end{equation}
This system has a unique solution:
\begin{equation}
\begin{split}
\mathbf{b}^+_\ind&=\frac{4w_*J^+_\ind}{D^+_\ind}\begin{bmatrix}
4w_*(16w_*^2+2SJ^+_\ind K^+_\ind+3[J_\ind^+]^{2})\\
[J_\ind^+]^{3}-16w_*^2J^+_\ind-32w_*^2SK^+_\ind
\end{bmatrix}\\
\mathbf{d}^+_\ind&=\frac{4iS}{D^+_\ind}\begin{bmatrix}
4w_*(2[J_\ind^+]^{3}+S[J_\ind^+]^{2}K^+_\ind-16w_*^2SK^+_\ind)\\
[J_\ind^+]^{4}+16w_*^2[J_\ind^+]^{2}+32w_*^2SJ^+_\ind K^+_\ind-32w_*^3[K_\ind^+]^{2}
\end{bmatrix},
\end{split}
\label{eq:bdplus}
\end{equation}
where the denominator 
\begin{equation}
D^+_\ind:=(16w_*^2+[J_\ind^+]^{2})^2 + 64w_*^2(J^+_\ind+SK^+_\ind)^2\ge 256w_*^4
\label{eq:denomplus}
\end{equation}
is strictly positive, because $w_*\approx -1$.  Substituting
\eqref{eq:bdplus}--\eqref{eq:denomplus} into the partial fraction representation
\eqref{eq:GnplusPF} completes the construction of $\mathbf{G}_\ind^+(\sv)$.
By writing $\sv=i(-w)^{1/2}$ and restricting $\mathbf{G}_\ind^+(\sv)$ to the upper
half-plane we then recover $\mathbf{F}_\ind^+(w)$, and then from
\eqref{eq:Fnplusdef} we recover the solution
$\dot{\mathbf{Q}}_\ind^+(w)$ of Riemann-Hilbert Problem~\ref{rhp:Eparametrix}.

\begin{proposition}
  For each fixed integer $\ind$ with $|\ind|\le B$,
  Riemann-Hilbert Problem~\ref{rhp:Eparametrix} has a unique and explicit solution
  $\dot{\mathbf{Q}}_\ind^+(w)$ satisfying
  $\det(\dot{\mathbf{Q}}_\ind^+(w))=1$ where defined.  Also, $\dot{\mathbf{Q}}_\ind^+(w)$
  is uniformly bounded for $w\in\mathbb{C}\setminus (U\cup\mathbb{R}_+)$ and $(y,s)\in\Omega_\ind^+$
  with $y$ bounded  as $\epsilon_N\to 0$.
\label{prop:parametrixplus}
\end{proposition}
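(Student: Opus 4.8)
The bulk of the construction needed for \emph{existence} has already been carried out above, so the plan is to verify that it delivers a function satisfying all the conditions of Riemann-Hilbert Problem~\ref{rhp:Eparametrix}, then to establish the determinant identity and uniqueness, and finally to prove the uniform bound. For existence I would run the reductions $\dot{\mathbf{Q}}_\ind^+\rightsquigarrow\mathbf{F}_\ind^+\rightsquigarrow\mathbf{G}_\ind^+$ in reverse: the linear system determining $\mathbf{b}_\ind^+$ and $\mathbf{d}_\ind^+$ is uniquely solvable because, by \eqref{eq:denomplus}, its determinant is proportional to $D_\ind^+\ge 256w_*^4>0$ near criticality; inserting the explicit formulas \eqref{eq:bdplus}--\eqref{eq:denomplus} into the partial-fraction ansatz \eqref{eq:GnplusPF} produces a rational matrix $\mathbf{G}_\ind^+(\sv)$ that is normalized to $\mathbb{I}$ at $\sv=\infty$, obeys $\mathbf{G}_\ind^+(-\sv)=\sigma_2\mathbf{G}_\ind^+(\sv)\sigma_2$, and has poles only at $\sv=\pm iS$. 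Undoing \eqref{eq:GFdef} and \eqref{eq:Fnplusdef} then produces $\dot{\mathbf{Q}}_\ind^+(w)$ with the $\sigma_2$-conjugation jump on $\mathbb{R}_+$ and the unipotent upper-triangular jump (off-diagonal entry $Y_\ind^+$) on $\partial U$; it is manifestly H\"older continuous up to these contours since it is built from $\eta_\ind$, $W$ and the rational function $\mathbf{G}_\ind^+$. The only point that needs a word is analyticity at the interior point $w=w_*$ of $U$: this is exactly what the relations \eqref{eq:GplusPPfirstcol}--\eqref{eq:GplusPPsecondcol} were imposed to guarantee, since they force $\mathbf{F}_\ind^+(w)$ times the inverse of the jump matrix to be regular at $w_*$ despite the double pole of $Y_\ind^+$ there; analyticity of $\mathbf{G}_\ind^+$ at $\sv=0$ and $\sv=\infty$ disposes of $w=0$ and $w=\infty$.

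The \emph{determinant identity and uniqueness} then follow by Liouville's theorem. The function $\det\dot{\mathbf{Q}}_\ind^+(w)$ is continuous across $\partial U$ (unimodular jump matrix) and across $\mathbb{R}_+$ ($\sigma_2$-conjugation preserves determinants), is analytic at $w_*$, $w=0$ and $w=\infty$ where it tends to $1$, hence $\det\dot{\mathbf{Q}}_\ind^+\equiv 1$. Given two solutions, forming the ratio $\dot{\mathbf{Q}}_\ind^{+,(1)}(\dot{\mathbf{Q}}_\ind^{+,(2)})^{-1}$ and passing to the variable $\sv=i(-w)^{1/2}$ exactly as in \eqref{eq:GFdef} produces a matrix function that is analytic on all of $\mathbb{C}$, bounded, and equal to $\mathbb{I}$ at $\sv=\infty$; hence it is identically $\mathbb{I}$ and the two solutions coincide.

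Finally, the \emph{uniform bound} — which I expect to be the crux. For $w\in\mathbb{C}\setminus(U\cup\mathbb{R}_+)$ the point $\sv=i(-w)^{1/2}$ stays bounded away from the poles $\pm iS$ of $\mathbf{G}_\ind^+$ (because $|\sv^2+S^2|=|w-w_*|\ge\operatorname{dist}(w_*,\partial U)>0$ on bounded $w$-sets, and $\mathbf{G}_\ind^+\to\mathbb{I}$ for large $w$), and $w_*\to -1$, $S\to 1$; so by \eqref{eq:GnplusPF} it suffices to prove that $\mathbf{b}_\ind^+$ and $\mathbf{d}_\ind^+$ are uniformly $\bo(1)$ for $(y,s)\in\Omega_\ind^+$ with $y$ bounded. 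From \eqref{eq:JKplus}, using that $W'(w_*)$ is bounded away from $0$ and $h_\ind^+(w_*)=\eta_\ind(w_*)^2=\bo(1)$, one gets $J_\ind^+=\bo(\epsilon_N^{1/3}e^{p_\ind}B_{\ind,12}(y))$ and $K_\ind^+=\bo(e^{p_\ind}\pu_\ind(y))+\bo(\epsilon_N^{1/3}e^{p_\ind}B_{\ind,12}(y))$, with $B_{\ind,12}=2H_\ind\pu_\ind-\pu_\ind'$. If $p_\ind\le 0$ then the defining inequality $|y-\mathscr{P}(\pu_\ind)|\ge\delta e^{p_\ind/2}$ together with the bounds $\pu_\ind(y),B_{\ind,12}(y)=\bo(|y-\mathscr{P}(\pu_\ind)|^{-1})$ from Proposition~\ref{prop:Rnlocgeneral} (and $e^{p_\ind}\le 1$) give $J_\ind^+=\bo(\epsilon_N^{1/3})$ and $K_\ind^+=\bo(1)$, so \eqref{eq:bdplus}--\eqref{eq:denomplus} yield $\mathbf{b}_\ind^+,\mathbf{d}_\ind^+=\bo(1)$. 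If instead $0<p_\ind\le\tfrac13\log(\epsilon_N^{-1})+\kappa$, membership in $\Omega_\ind^+$ forces $y$ to within $\delta e^{-p_\ind/2}$ of a zero of $\pu_\ind$, which (for $\delta$ small enough that no zero of $\pu_\ind$ is within $\delta$ of a pole) makes $B_{\ind,12}(y)$ bounded and $\pu_\ind(y)=\bo(\delta e^{-p_\ind/2})$; then, using $e^{p_\ind}\le e^\kappa\epsilon_N^{-1/3}$, one finds $J_\ind^+=\bo(1)$ but only $K_\ind^+=\bo(\delta\epsilon_N^{-1/6})$. This is the one place where the jump datum of Riemann-Hilbert Problem~\ref{rhp:Eparametrix} fails to be uniformly small, and it is where the real work lies: one must observe that $D_\ind^+\ge 64w_*^2(J_\ind^++SK_\ind^+)^2$, which with $J_\ind^+=\bo(1)$ and $S\to1$ gives $D_\ind^+\gtrsim 1+|K_\ind^+|^2$, while every entry of the numerators in \eqref{eq:bdplus} is $\bo(1+|K_\ind^+|^2)$; dividing, $\mathbf{b}_\ind^+,\mathbf{d}_\ind^+=\bo(1)$ in this case too. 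Combining the two cases with the partial-fraction bound gives the asserted uniform estimate for $\dot{\mathbf{Q}}_\ind^+(w)$ on $\mathbb{C}\setminus(U\cup\mathbb{R}_+)$, completing the proof. The genuinely non-routine step is this last cancellation in the ``teeth'': boundedness of the parametrix there is not automatic from boundedness of its jump datum, and relies on the specific rational structure — precisely the quadratic lower bound on $D_\ind^+$ beating the growth of the numerators.
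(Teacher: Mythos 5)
Your proposal is correct and follows essentially the same route as the paper: reduce the uniform bound to boundedness of the vectors $\mathbf{b}_\ind^+$ and $\mathbf{d}_\ind^+$ via the partial-fraction representation \eqref{eq:GnplusPF}, then control those vectors through their explicit rational dependence on $J_\ind^+$ and $K_\ind^+$. The paper leaves the final step as an assertion (``it is still obvious that $\mathbf{b}^+_\ind$ and $\mathbf{d}^+_\ind$ cannot'' become unbounded); the case analysis you supply — $J_\ind^+=\bo(1)$ throughout $\Omega_\ind^+$, $K_\ind^+$ possibly as large as $\bo(\epsilon_N^{-1/6})$ in the teeth, with the lower bound $D_\ind^+\gtrsim 1+|K_\ind^+|^2$ dominating the numerator growth in \eqref{eq:bdplus} — is the degree-counting argument that justifies that assertion.
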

\begin{proof}
The above construction clearly yields a solution of Riemann-Hilbert 
Problem~\ref{rhp:Eparametrix}, and uniqueness follows easily from the conditions
of that problem via a Liouville argument, which also shows that 
$\dot{\mathbf{Q}}_\ind^+(w)$ is unimodular.  Since $\dot{\mathbf{Q}}_\ind^+(w)=\mathbf{F}_\ind^+(w)$
for $w\in\mathbb{C}\setminus U$, the uniform bound on $\dot{\mathbf{Q}}_\ind^+(w)$
will follow from a corresponding bound on $\mathbf{F}_\ind^+(w)$ valid when $w$ is bounded away
from the singularity at $w_*$, or what is the same, a bound on $\mathbf{G}_\ind^+(\sv)$ valid when
$\sv$ is bounded away from both $iS$ and $-iS$ where $S:=(-w_*)^{1/2}>0$.  According to the partial-fraction
representation \eqref{eq:GnplusPF} of $\mathbf{G}_\ind^+(\sv)$, it is therefore enough to establish
that the elements of the vectors $\mathbf{b}^+_\ind$ and $\mathbf{d}^+_\ind$ remain bounded as $\epsilon_N\downarrow 0$ uniformly with respect to $(y,s)\in\Omega_\ind^+$ with $y$ bounded.
But from \eqref{eq:bdplus} we see that the components of $\mathbf{b}^+_\ind$ and $\mathbf{d}^+_\ind$
really only depend on $J_\ind^+$ and $K_\ind^+$ (since $w_*\approx -1$) and while the latter
can become unbounded, it is still obvious that $\mathbf{b}^+_\ind$ and $\mathbf{d}^+_\ind$ cannot.
%
\end{proof}

\subsection{Parametrix for $\mathbf{Q}_\ind(w)$ in $\Omega_\ind^-$.}
Fix an integer $\ind$ with $|\ind|\le B$, and suppose that $(y,s)\in\Omega_\ind^-$ with $y\in\mathbb{R}$ bounded. By introducing the scaled and shifted coordinate $p_{\ind-1}$ defined by \eqref{eq:spind}
and analyzing the jump matrix $\mathbf{V}_{\mathbf{Q}_\ind}(\xi)$ for $\xi\in\partial U$ as in
the beginning of \S\ref{sec:parametrixplus}, we see that in the current situation a lower-triangular model
for the jump matrix is more appropriate.  In particular, defining 
\begin{equation}
h_\ind^-(w):=\eta_\ind(w)^{-2} 
\label{eq:hindminus}
\end{equation}
and then
\begin{equation}
Y^-_\ind(w):=e^{-p_{\ind-1}}h_\ind^-(w)\left(\pv_\ind(y)W(w)^{-1} +
\epsilon_N^{1/3}\left[\pv_\ind'(y)-2H_\ind(y)\pv_\ind(y)\right]W(w)^{-2}\right),
\label{eq:Ynminusdef}
\end{equation}
(note that $Y_\ind^-(\xi)$ extends from $\xi\in\partial U$ to a meromorphic function for $\xi\in\overline{U}$ whose only singularity is a double pole at $w=w_*$) we arrive at the following.
\begin{proposition}
The following three estimates hold uniformly for $\xi\in\partial U$ and $y$ bounded.\\
In the downward-pointing ``teeth'' of $\Omega_\ind^-$:
\begin{multline}
\mathbf{V}_{\mathbf{Q}_\ind}(\xi)\begin{bmatrix}1 & 0\\Y_\ind^-(\xi) & 1\end{bmatrix}^{-1}=
\mathbb{I} + \bo(\epsilon_N^{1/3}e^{-p_{\ind-1}}|y-\mathscr{Z}(\pv_\ind)|),\\ (y,s)\in\Omega_\ind^-,
\quad p_{\ind-1}<0\;\text{and}\;|y-\mathscr{Z}(\pv_\ind)|\ge\delta e^{p_{\ind-1}}.
\end{multline}
In the part of $\Omega_\ind^-$ resembling a ``strip with notches'':
\begin{equation}
\mathbf{V}_{\mathbf{Q}_\ind}(\xi)\begin{bmatrix}1 & 0\\Y_\ind^-(\xi) & 1\end{bmatrix}^{-1}=
\mathbb{I} + \bo\left(\frac{\epsilon_N^{1/3}}{|y-\mathscr{P}(\pv_\ind)|}\right),\quad (y,s)\in\Omega_\ind^-,
\quad |y-\mathscr{P}(\pv_\ind)|\le\delta.
\end{equation}
Elsewhere in $\Omega_\ind^-$:
\begin{equation}
\mathbf{V}_{\mathbf{Q}_\ind}(\xi)\begin{bmatrix}1 & 0\\Y_\ind^-(\xi) & 1\end{bmatrix}^{-1}=
\mathbb{I} + \bo(\epsilon_N^{1/3}).
\end{equation}
\label{prop:VQerrorminus}
\end{proposition}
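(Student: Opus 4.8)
The plan is to mirror the proof of Proposition~\ref{prop:VQerrorplus}, exploiting the structural parallel between the ``$+$'' and ``$-$'' situations that is already encoded in the definitions \eqref{eq:hindplus}--\eqref{eq:Ynplusdef} versus \eqref{eq:hindminus}--\eqref{eq:Ynminusdef}. First I would start from the exact formula \eqref{eq:Qjumpdiscboundary} for $\mathbf{V}_{\mathbf{Q}_\ind}(\xi)$ on $\partial U$, insert the large-$\zeta$ expansion \eqref{eq:Fexpansion} for $\mathbf{Z}_\ind(\zeta;y)(-\zeta)^{(1-2\ind)\sigma_3/2}$ (valid with error $\bo(|y-\mathscr{P}(\pv_\ind)|^{-1}\epsilon_N)$ by Proposition~\ref{prop:Rnlocgeneral}, since $\mathscr{P}(\pv_\ind)=\mathscr{P}(\pu_\ind)$), and use the shifted coordinate $p_{\ind-1}$ defined by \eqref{eq:spind}. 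The key difference from the ``$+$'' case is that the conjugating exponential $\exp(\tfrac12 s\sigma_3/\epsilon_N)$ now combines with $\epsilon_N^{(2\ind-1)\sigma_3/6}$ in a way that makes the $(2,1)$-entry (rather than the $(1,2)$-entry) of the $\mathbf{A}_\ind,\mathbf{B}_\ind$ correction the one that is potentially large; hence the lower-triangular model $\begin{bmatrix}1&0\\Y_\ind^-(\xi)&1\end{bmatrix}$ is the right one to divide out. Recalling \eqref{eq:uvwzdefs}, \eqref{eq:Adiag}, and \eqref{eq:Boffdiag}, the $(2,1)$-entries $A_{\ind,21}=\pv_\ind$ and $B_{\ind,21}=\tfrac13\pz_\ind-2H_\ind\pv_\ind=\pv_\ind'-2H_\ind\pv_\ind$ are exactly the combinations appearing in \eqref{eq:Ynminusdef}, so after dividing out the lower-triangular factor the residual jump is $\mathbb{I}+\bo(\epsilon_N^{1/3})+\bo(Y_\ind^-(\xi)\epsilon_N^{1/3})$, with the $\epsilon_N^{1/3}$ coming from neglecting the diagonal and $(1,2)$ corrections, which are $\bo(|y-\mathscr{P}(\pv_\ind)|^{-1})$ and hence dominate the $\bo(\epsilon_N^{2/3}e^{-p_{\ind-1}}/|y-\mathscr{P}(\pv_\ind)|)$ term present before truncation because $p_{\ind-1}\ge -\tfrac13\log(\epsilon_N^{-1})-\kappa$.

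Next I would estimate $Y_\ind^-(\xi)$ on $\partial U$ in each of the three sub-regions of $\Omega_\ind^-$, exactly as in the proof of Proposition~\ref{prop:VQerrorplus} but with the roles of zeros and poles, and the sign of $p_{\ind-1}$, interchanged. In the downward-pointing teeth, $p_{\ind-1}<0$ and $|y-\mathscr{P}(\pv_\ind)|$ is bounded away from zero (teeth are localized near $\mathscr{Z}(\pv_\ind)$), so $\pv_\ind'-2H_\ind\pv_\ind$ is bounded and $\pv_\ind(y)=\bo(|y-\mathscr{Z}(\pv_\ind)|)$, giving $Y_\ind^-(\xi)=\bo(e^{-p_{\ind-1}}|y-\mathscr{Z}(\pv_\ind)|)+\bo(1)$; substituting into the residual-jump estimate yields $\mathbb{I}+\bo(\epsilon_N^{1/3})+\bo(\epsilon_N^{1/3}e^{-p_{\ind-1}}|y-\mathscr{Z}(\pv_\ind)|)$, which is the first claimed estimate. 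In the ``strip with notches'' part, $p_{\ind-1}\ge 0$ and, by Proposition~\ref{prop:Rnlocgeneral}, both $\pv_\ind$ and $\pv_\ind'-2H_\ind\pv_\ind$ have simple poles at $\mathscr{P}(\pv_\ind)$, so $Y_\ind^-(\xi)=\bo(e^{-p_{\ind-1}}|y-\mathscr{P}(\pv_\ind)|^{-1})$; using the defining inequality $|y-\mathscr{P}(\pv_\ind)|\ge\delta e^{-\frac12 p_{\ind-1}}$ of $\Omega_\ind^-$ together with $p_{\ind-1}\ge 0$ gives $Y_\ind^-(\xi)=\bo(1)$, hence $\mathbb{I}+\bo(\epsilon_N^{1/3}/|y-\mathscr{P}(\pv_\ind)|)$, the second estimate. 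The third estimate (``elsewhere'') follows by combining the two cases exactly as in the ``$+$'' proof: when $p_{\ind-1}<0$ the teeth bound reduces to $\bo(\epsilon_N^{1/3})$ for $|y-\mathscr{Z}(\pv_\ind)|$ bounded, and when $p_{\ind-1}\ge 0$ the notch bound reduces to $\bo(\epsilon_N^{1/3})$ for $|y-\mathscr{P}(\pv_\ind)|$ bounded away from zero.

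I do not expect any genuine obstacle here; the proposition is the exact $\Omega_\ind^-$ analogue of Proposition~\ref{prop:VQerrorplus}, and the only substantive point to get right is the bookkeeping that makes the lower-triangular (rather than upper-triangular) model the natural one, namely that the sign flip in the exponent of $\epsilon_N$ and in the sign of $s$ (through the relation between $p_{\ind-1}$ and $s$ in \eqref{eq:spind}) swaps which off-diagonal entry of the $\mathbf{A}_\ind$-correction is amplified by $e^{-p_{\ind-1}}$. The mildest care is needed to confirm that the pole structure invoked from Proposition~\ref{prop:Rnlocgeneral} applies to $B_{\ind,21}(y)=\pv_\ind'(y)-2H_\ind(y)\pv_\ind(y)$ — but this is stated verbatim in the last sentence of Proposition~\ref{prop:Rnlocgeneral}. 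Everything else is a transcription, with $\pu_\ind\mapsto\pv_\ind$, $\mathscr{Z}(\pu_\ind)\mapsto\mathscr{Z}(\pv_\ind)$, $\mathscr{P}(\pu_\ind)\mapsto\mathscr{P}(\pv_\ind)$, $p_\ind\mapsto p_{\ind-1}$, $e^{p_\ind}\mapsto e^{-p_{\ind-1}}$, $h_\ind^+\mapsto h_\ind^-$, and the upper- replaced by the lower-triangular factor, of the argument already given.
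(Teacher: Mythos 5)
Your proof is correct and takes essentially the same approach as the paper: the paper itself simply states that the proof ``is virtually the same as that of Proposition~\ref{prop:VQerrorplus},'' and your write-up is a faithful transcription of that argument with the appropriate substitutions ($\pu_\ind\mapsto\pv_\ind$, $p_\ind\mapsto -p_{\ind-1}$, upper- replaced by lower-triangular factor), including the correct identifications $A_{\ind,21}=\pv_\ind$ and $B_{\ind,21}=\pv_\ind'-2H_\ind\pv_\ind$ and the correct use of Proposition~\ref{prop:Rnlocgeneral} for the pole structure.
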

The proof of this statement is virtually the same as that of Proposition~\ref{prop:VQerrorplus}.  
The analogue for $\Omega_\ind^-$ of Corollary~\ref{cor:VQerrorplus} is then this result.
\begin{corollary}
The following estimate holds uniformly for $\xi\in\partial U$ and $(y,s)\in\Omega_\ind^-$ with $y$
bounded:
\begin{equation}
\mathbf{V}_{\mathbf{Q}_\ind}(\xi)\begin{bmatrix}1 & 0\\Y_\ind^-(\xi) & 1\end{bmatrix}^{-1}=
\mathbb{I}+\bo(\epsilon_N^{1/6}).
\end{equation}
While uniform, this estimate fails to be sharp except possibly at the vertical extremes of $\Omega_\ind^-$ where $|p_{\ind-1}|\sim \frac{1}{3}\log(\epsilon_N^{-1})$.
\label{cor:VQerrorminus}
\end{corollary}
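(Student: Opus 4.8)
The plan is to derive this uniform estimate from Proposition~\ref{prop:VQerrorminus} in exactly the same way that Corollary~\ref{cor:VQerrorplus} is derived from Proposition~\ref{prop:VQerrorplus}: the set $\Omega_\ind^-$ is exhausted by its downward-pointing teeth (located near $\mathscr{Z}(\pv_\ind)$), its ``strip-with-notches'' part (located near $\mathscr{P}(\pv_\ind)$), and the remaining region, so it suffices to bound the error term of Proposition~\ref{prop:VQerrorminus} by $\bo(\epsilon_N^{1/6})$ on each of the three pieces and then take the worst case. On the ``elsewhere'' part the error is already $\bo(\epsilon_N^{1/3})$, which is stronger than required, so the only real work is in the teeth and in the notches, where one must balance an exponential factor against a power of $\epsilon_N$ using the inequalities defining $\Omega_\ind^-$ in Definition~\ref{def:sets} together with the relation \eqref{eq:spind}.

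First I would treat the teeth, where $p_{\ind-1}<0$ and, by Definition~\ref{def:sets}, $|y-\mathscr{Z}(\pv_\ind)|\le\delta e^{p_{\ind-1}/2}$: then $e^{-p_{\ind-1}}|y-\mathscr{Z}(\pv_\ind)|\le\delta e^{-p_{\ind-1}/2}$, and since membership in $\Omega_\ind^-$ forces $p_{\ind-1}\ge-\tfrac{1}{3}\log(\epsilon_N^{-1})-\kappa$, this is $\bo(\epsilon_N^{-1/6})$; hence the teeth error $\bo(\epsilon_N^{1/3}e^{-p_{\ind-1}}|y-\mathscr{Z}(\pv_\ind)|)$ of Proposition~\ref{prop:VQerrorminus} is $\bo(\epsilon_N^{1/6})$. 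Next I would treat the notches, where $p_{\ind-1}\ge 0$ and $|y-\mathscr{P}(\pv_\ind)|\ge\delta e^{-p_{\ind-1}/2}$: then $|y-\mathscr{P}(\pv_\ind)|^{-1}\le\delta^{-1}e^{p_{\ind-1}/2}$, and since $p_{\ind-1}\le\tfrac{1}{3}\log(\epsilon_N^{-1})+\kappa$ in $\Omega_\ind^-$, this is again $\bo(\epsilon_N^{-1/6})$, so the notch error $\bo(\epsilon_N^{1/3}|y-\mathscr{P}(\pv_\ind)|^{-1})$ is $\bo(\epsilon_N^{1/6})$. Combining the three cases proves the estimate. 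The signs in the two balancing computations are simply interchanged relative to the $\Omega_\ind^+$ case, because the teeth of $\Omega_\ind^-$ point downward and the governing coordinate is $p_{\ind-1}$ rather than $p_\ind$; otherwise nothing changes.

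For the non-sharpness remark, I would observe from these same computations that the $\epsilon_N^{1/6}$ loss is realized only when one of the exponentials $e^{-p_{\ind-1}/2}$ (in the teeth) or $e^{p_{\ind-1}/2}$ (in the notches) is actually of size $\epsilon_N^{-1/6}$, which requires $|p_{\ind-1}|$ to be comparable to $\tfrac{1}{3}\log(\epsilon_N^{-1})$, i.e.\ $(y,s)$ near the top or bottom edge of the horizontal strip $\Omega_\ind^-$; away from these vertical extremes the exponentials are $\lo(\epsilon_N^{-1/6})$ and the error is in fact $\lo(\epsilon_N^{1/6})$. I do not anticipate any genuine obstacle here: the argument is a routine triangle-inequality combination of estimates already in hand, and the only thing requiring care is tracking the sign conventions peculiar to $\Omega_\ind^-$.
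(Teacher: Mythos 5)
Your argument is correct and is precisely the (unwritten) deduction the paper intends: the corollary is stated without proof as an immediate consequence of Proposition~\ref{prop:VQerrorminus}, and your three-case analysis supplies exactly that deduction. In the teeth you correctly pair the defining upper bound $|y-\mathscr{Z}(\pv_\ind)|\le\delta e^{p_{\ind-1}/2}$ from Definition~\ref{def:sets} with the lower bound $p_{\ind-1}\ge -\tfrac13\log(\epsilon_N^{-1})-\kappa$ to get $\epsilon_N^{1/3}e^{-p_{\ind-1}}|y-\mathscr{Z}(\pv_\ind)|=\bo(\epsilon_N^{1/6})$; in the notches you correctly pair $|y-\mathscr{P}(\pv_\ind)|\ge\delta e^{-p_{\ind-1}/2}$ with $p_{\ind-1}\le\tfrac13\log(\epsilon_N^{-1})+\kappa$; and the ``elsewhere'' bound $\bo(\epsilon_N^{1/3})$ covers the rest (including the very bottoms of the teeth, where $|y-\mathscr{Z}(\pv_\ind)|<\delta e^{p_{\ind-1}}$ and the teeth estimate of the proposition is not the one in force --- a minor point, but worth noting that you are implicitly leaning on the elsewhere estimate there too). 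The non-sharpness remark is read off the same two computations, as you say. The only caveat is cosmetic: you should make clear that the teeth estimate of Proposition~\ref{prop:VQerrorminus} carries the side condition $|y-\mathscr{Z}(\pv_\ind)|\ge\delta e^{p_{\ind-1}}$, so the tip of each tooth falls under the ``elsewhere'' clause rather than the teeth clause; since that only improves the bound, the conclusion stands.
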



Based on this approximation result for $\mathbf{V}_{\mathbf{Q}_\ind}(\xi)$, we
build a parametrix for $\mathbf{Q}_\ind(w)$ to satisfy the following
problem.
\begin{rhp}[Parametrix for $\mathbf{Q}_\ind(w)$ in $\Omega_\ind^-$]
Let real numbers $y$ and $p_{\ind-1}$ be given, and assume that $y\not\in\mathscr{P}(\pv_\ind)$.  Seek a matrix $\dot{\mathbf{Q}}_\ind^-(w)$
with the following properties:
\begin{itemize}
\item[]\textbf{Analyticity:}  $\dot{\mathbf{Q}}_\ind^-(w)$ is analytic for
$w\in\mathbb{C}\setminus(\mathbb{R}_+\cup\partial U)$, and is uniformly
H\"older continuous up to the boundary of its domain of
analyticity.  
\item[]\textbf{Jump conditions:}  The boundary values taken by 
$\dot{\mathbf{Q}}_\ind^-(w)$ on the contour 
$\mathbb{R}_+\cup\partial U$ are related
as follows:
\begin{equation}
\dot{\mathbf{Q}}^-_{\ind+}(\xi)=\sigma_2
\dot{\mathbf{Q}}^-_{\ind-}(\xi)\sigma_2,\quad \xi>0,
\end{equation}
and
\begin{equation}
\dot{\mathbf{Q}}^-_{\ind+}(\xi)=\dot{\mathbf{Q}}^-_{\ind-}(\xi)
\begin{bmatrix}1 & 0\\
Y_\ind^-(\xi) & 1\end{bmatrix},\quad \xi\in\partial U
\end{equation}
where $\partial U$ is taken to be negatively (clockwise) oriented and
$Y_\ind^-(\xi)$ is defined by \eqref{eq:Ynminusdef}.
\item[]\textbf{Normalization:}  the matrix $\dot{\mathbf{Q}}_\ind^-(w)$ satisfies
the condition
\begin{equation}
\lim_{w\to\infty}\dot{\mathbf{Q}}_\ind^-(w)=\mathbb{I}.
\end{equation}
\end{itemize}
\label{rhp:EparametrixII}
\end{rhp}
This problem is solved completely analogously to Riemann-Hilbert 
Problem~\ref{rhp:Eparametrix}.  Namely, we introduce a matrix 
$\mathbf{G}_\ind^-(\sv)$
by 
\begin{equation}
\mathbf{G}_\ind^-(\sv):=\begin{cases}\mathbf{F}_\ind^-(\sv^2),\quad &\Im\{\sv\}>0\\
\sigma_2\mathbf{F}_\ind^-(\sv^2)\sigma_2,\quad &\Im\{\sv\}<0,
\end{cases}
\end{equation}
where we define $\mathbf{F}_\ind^-(w)$ by
\begin{equation}
\mathbf{F}_\ind^-(w):=\begin{cases}
\dot{\mathbf{Q}}_\ind^-(w),&\quad w\not\in\overline{U}\\
\displaystyle \dot{\mathbf{Q}}_\ind^-(w)\begin{bmatrix}1 & 0\\
Y_\ind^-(w) & 1\end{bmatrix},&\quad
w\in U.
\end{cases}
\label{eq:Fnminusdef}
\end{equation}
Then, $\mathbf{G}_\ind^-(\sv)$ is a rational function with 
$\mathbf{G}_\ind^-(\infty)=\mathbb{I}$ and double poles at $\sv=\pm iS$ where $S:=(-w_*)^{1/2}>0$
only, satisfying the symmetry relation 
$\mathbf{G}_\ind^-(-\sv)=\sigma_2\mathbf{G}_\ind^-(\sv)\sigma_2$.

The singularities of $\mathbf{G}_\ind^-(\sv)$ are characterized by first noting
the Laurent expansion of $Y_\ind^-(\sv^2)$ about $\sv=iS$:
\begin{equation}
Y_\ind^-(\sv^2)=\frac{J_\ind^-}{(\sv-iS)^2} + \frac{iK_\ind^-}{\sv-iS}
+\bo(1)
\end{equation}
where $\bo(1)$ denotes a function analytic at $\sv=iS$ and where
\begin{equation}
\begin{split}
J_\ind^-&:=-\frac{e^{-p_{\ind-1}}h_\ind^-(w_*)\epsilon_N^{1/3}}{4w_*W'(w_*)^2}
\Big(2H_\ind(y)\pv_\ind(y)-\pv_\ind'(y)\Big)\\
K_\ind^-&:=-\frac{e^{-p_{\ind-1}}}{4w_*SW'(w_*)^3}
\Bigg[2w_*h_\ind^-(w_*)W'(w_*)^2\pv_\ind(y)\\
&\quad{}+\epsilon_N^{1/3}\Big(2H_\ind(y)\pv_\ind(y)-\pv_\ind'(y)\Big)
\Big(-2w_*h_\ind^{-\prime}(w_*)W'(w_*)+2w_*h_\ind^-(w_*)W''(w_*)+h_\ind^-(w_*)W'(w_*)
\Big)\Bigg]
\end{split}
\label{eq:JKminus}
\end{equation}
are real-valued coefficients when both $y\in\mathbb{R}$ and 
$p_{\ind-1}\in\mathbb{R}$.  The Laurent series of  $\mathbf{G}_\ind^-(\sv)$
about $\sv=iS$ has the form
\begin{equation}
\mathbf{G}_\ind^-(\sv)=\frac{(\mathbf{a}_\ind^-,\mathbf{b}_\ind^-)}{(\sv-iS)^2} +
\frac{(\mathbf{c}_\ind^-,\mathbf{d}_\ind^-)}{\sv-iS} +
(\mathbf{e}_\ind^-,\mathbf{f}_\ind^-) +(\mathbf{g}_\ind^-,\mathbf{h}_\ind^-)(\sv-iS)
+\bo((\sv-iS)^2),
\end{equation}
where $\bo((\sv-iS)^2)$ represents an analytic function vanishing
to second order at $\sv=iS$.  Since $\dot{\mathbf{Q}}_\ind^-(w)$ is
meant to be analytic at $w=w_*$, from \eqref{eq:Fnminusdef} we learn
that the principal part of the second column of $\mathbf{G}_\ind^-(\sv)$ vanishes:
\begin{equation}
\mathbf{b}_\ind^-=\mathbf{d}_\ind^-=\mathbf{0},
\label{eq:GminusPPsecondcol}
\end{equation}
and the principal part of the first column satisfies the relations
\begin{equation}
\mathbf{a}_\ind^- = J_\ind^-\mathbf{f}_\ind^-\quad\text{and}\quad
\mathbf{c}_\ind^- = J_\ind^-\mathbf{h}_\ind^-+iK_\ind^-\mathbf{f}_\ind^-.
\label{eq:GminusPPfirstcol}
\end{equation}
Using the symmetry condition 
$\mathbf{G}_\ind^-(-\sv)=\sigma_2\mathbf{G}_\ind^-(\sv)\sigma_2$
and taking into account \eqref{eq:GminusPPsecondcol}, we see that the
partial fraction expansion of $\mathbf{G}_\ind^-(\sv)$ necessarily has the form:
\begin{equation}
\mathbf{G}_\ind^-(\sv)=\mathbb{I}+
\frac{(\mathbf{a}_\ind^-,\mathbf{0})}{(\sv-iS)^2} +
\frac{(\mathbf{c}_\ind^-,\mathbf{0})}{\sv-iS} +
\frac{(\mathbf{0},-i\sigma_2\mathbf{a}_\ind^-)}{(\sv+iS)^2} +
\frac{(\mathbf{0},i\sigma_2\mathbf{c}_\ind^-)}{\sv+iS}.
\label{eq:GnminusPF}
\end{equation}
By evaluating the second column of this expression and its derivative at 
$\sv=iS$ we can express $\mathbf{f}_\ind^-$ and $\mathbf{h}_\ind^-$ in terms
of $\mathbf{a}_\ind^-$ and $\mathbf{c}_\ind^-$:
\begin{equation}
\mathbf{f}_\ind^- = \begin{bmatrix}0\\1\end{bmatrix} -
\frac{i\sigma_2}{4w_*}\mathbf{a}_\ind^- +\frac{\sigma_2}{2S}\mathbf{c}_\ind^-
\quad\text{and}\quad
\mathbf{h}_\ind^- = \frac{\sigma_2}{4w_*S}\mathbf{a}_\ind^--\frac{i\sigma_2}
{4w_*}\mathbf{c}_\ind^-,\quad S:=(-w_*)^{1/2}>0.
\end{equation}
Combining this with \eqref{eq:GminusPPfirstcol} yields a closed inhomogeneous
linear system of equations for the components of the vectors $\mathbf{a}_\ind^-$
and $\mathbf{c}_\ind^-$:
\begin{equation}
\begin{split}
\left(\mathbb{I}+\frac{iJ_\ind^-}{4w_*}\sigma_2\right)\mathbf{a}_\ind^- -
\frac{J_\ind^-}{2S}\sigma_2\mathbf{c}_\ind^- &=\displaystyle
\begin{bmatrix}0\\J_\ind^-\end{bmatrix}\\
-\frac{J_\ind^-+SK_\ind^-}{4w_*S}\sigma_2\mathbf{a}_\ind^- +
\left(\mathbb{I}+\frac{iJ_\ind^-+2iSK_\ind^-}{4w_*}\sigma_2\right)\mathbf{c}_\ind^- &\displaystyle = \begin{bmatrix}0 \\ iK_\ind^-\end{bmatrix}.
\end{split}
\end{equation}
As before, this system has a unique solution:
\begin{equation}
\begin{split}
\mathbf{a}_\ind^-&=\frac{4w_*J_\ind^-}{D_\ind^-}\begin{bmatrix}
[J_\ind^-]^3-16w_*^2J_\ind^--32w_*^2SK_\ind^-\\
4w_*(16w_*^2+2SJ_\ind^-K_\ind^-+3[J_\ind^-]^2)
\end{bmatrix}\\
\mathbf{c}_\ind^-&=\frac{4iS}{D_\ind^-}
\begin{bmatrix}
[J_\ind^-]^4+16w_*^2[J_\ind^-]^2+32w_*^2SJ_\ind^-K_\ind^--32w_*^3[K_\ind^-]^2\\
4w_*(2[J_\ind^-]^3+S[J_\ind^-]^2K_\ind^--16w_*^2SK_\ind^-)
\end{bmatrix}
\end{split}
\label{eq:acminus}
\end{equation}
where the denominator
\begin{equation}
D_\ind^-:=(16w_*^2+[J_\ind^-]^2)^2+64w_*^2(J_\ind^-+SK_\ind^-)^2\ge 256w_*^4
\label{eq:denomminus}
\end{equation}
is again strictly positive because $w_*\approx -1$.
This essentially completes the construction of $\dot{\mathbf{Q}}_\ind^-(w)$
solving Riemann-Hilbert Problem~\ref{rhp:EparametrixII}.

The analogue of Proposition~\ref{prop:parametrixplus} in this case is the following.  (Its proof
is also nearly the same.)
\begin{proposition}
  For each fixed integer $\ind$ with $|\ind|\le B$,
  Riemann-Hilbert Problem~\ref{rhp:EparametrixII} has a unique and explicit solution
  $\dot{\mathbf{Q}}_\ind^-(w)$ satisfying
  $\det(\dot{\mathbf{Q}}_\ind^-(w))=1$ where defined.  Also, $\dot{\mathbf{Q}}_\ind^-(w)$
  is uniformly bounded for $w\in\mathbb{C}\setminus (U\cup\mathbb{R}_+)$ and $(y,s)\in\Omega_\ind^-$
  with $y$ bounded as $\epsilon_N\to 0$.
\label{prop:parametrixminus}
\end{proposition}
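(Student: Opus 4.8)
The plan is to follow the proof of Proposition~\ref{prop:parametrixplus} essentially verbatim, invoking the explicit construction of $\dot{\mathbf{Q}}_\ind^-(w)$ carried out in the paragraphs immediately preceding the statement. First I would verify that this construction genuinely produces a solution of Riemann-Hilbert Problem~\ref{rhp:EparametrixII}: the matrix $\mathbf{G}_\ind^-(\sv)$ given by the partial-fraction formula \eqref{eq:GnminusPF} with $\mathbf{a}_\ind^-$ and $\mathbf{c}_\ind^-$ as in \eqref{eq:acminus} is rational in $\sv$, equals $\mathbb{I}$ at $\sv=\infty$, and obeys the symmetry $\mathbf{G}_\ind^-(-\sv)=\sigma_2\mathbf{G}_\ind^-(\sv)\sigma_2$ by inspection. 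The linear relations \eqref{eq:GminusPPsecondcol}--\eqref{eq:GminusPPfirstcol} that were used to pin down $\mathbf{a}_\ind^-$ and $\mathbf{c}_\ind^-$ are exactly the conditions ensuring that, upon undoing the substitutions $\mathbf{G}_\ind^-\mapsto\mathbf{F}_\ind^-\mapsto\dot{\mathbf{Q}}_\ind^-$ via $\sv=i(-w)^{1/2}$ and \eqref{eq:Fnminusdef}, the lower-triangular factor $\begin{bmatrix}1 & 0 \\ Y_\ind^-(w) & 1\end{bmatrix}$ cancels the double pole of $\mathbf{F}_\ind^-(w)$ at $w=w_*$, so that $\dot{\mathbf{Q}}_\ind^-(w)$ is analytic there. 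The remaining jump relations---the $\sigma_2$-conjugation on $\mathbb{R}_+$ and the lower-triangular jump on $\partial U$---then hold by the very definitions of $\mathbf{G}_\ind^-$ and $\mathbf{F}_\ind^-$, and the normalization at $w=\infty$ is immediate.

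Next I would dispatch uniqueness and unimodularity by the same Liouville argument used in the plus case. If $\dot{\mathbf{Q}}_\ind^-$ and $\tilde{\mathbf{Q}}_\ind^-$ were two solutions, their ratio $\dot{\mathbf{Q}}_\ind^-(\tilde{\mathbf{Q}}_\ind^-)^{-1}$ extends analytically across both $\mathbb{R}_+$ (the $\sigma_2$-conjugation jump being involutive) and $\partial U$ (the $Y_\ind^-$ jumps cancelling), is bounded near $w=w_*$ by the uniform H\"older condition, hence is entire, and tends to $\mathbb{I}$ at infinity; therefore it is $\mathbb{I}$. Likewise $\det(\dot{\mathbf{Q}}_\ind^-(w))$ is analytic across $\mathbb{R}_+\cup\partial U$ since both jump matrices have unit determinant, is entire, and tends to $1$, so $\det(\dot{\mathbf{Q}}_\ind^-(w))\equiv 1$.

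Finally, for the uniform boundedness I would note that $\dot{\mathbf{Q}}_\ind^-(w)=\mathbf{F}_\ind^-(w)$ for $w\notin\overline{U}$, so it suffices to bound $\mathbf{G}_\ind^-(\sv)$ for $\sv$ bounded away from $\pm iS$ with $S=(-w_*)^{1/2}>0$; by the representation \eqref{eq:GnminusPF} this reduces to bounding the entries of $\mathbf{a}_\ind^-$ and $\mathbf{c}_\ind^-$ uniformly as $\epsilon_N\downarrow 0$ for $(y,s)\in\Omega_\ind^-$ with $y$ bounded. Since $w_*\approx -1$ and hence $S\approx 1$ near criticality, the formulae \eqref{eq:acminus} express these vectors as fixed rational functions of the two real quantities $J_\ind^-$ and $K_\ind^-$ defined in \eqref{eq:JKminus}, with denominator $D_\ind^-\ge 256 w_*^4$ bounded below by \eqref{eq:denomminus}; comparing degrees shows that the numerators in \eqref{eq:acminus} are dominated by $D_\ind^-$ uniformly in $(J_\ind^-,K_\ind^-)\in\mathbb{R}^2$, so even where $J_\ind^-$ or $K_\ind^-$ grows large---which can happen near the points of $\mathscr{P}(\pv_\ind)$ excised by the notches and at the vertical extremes $|p_{\ind-1}|\sim\tfrac{1}{3}\log(\epsilon_N^{-1})$ of $\Omega_\ind^-$---the vectors $\mathbf{a}_\ind^-$ and $\mathbf{c}_\ind^-$ remain $\bo(1)$. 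This last bookkeeping (matching the argument in Proposition~\ref{prop:parametrixplus} that $\mathbf{b}_\ind^+,\mathbf{d}_\ind^+$ stay bounded even though $K_\ind^+$ need not) is the only part requiring genuine, if elementary, attention; everything else is identical to \S\ref{sec:parametrixplus}.
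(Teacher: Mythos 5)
Your proposal is correct and follows essentially the same route the paper takes: the paper's own proof of Proposition~\ref{prop:parametrixminus} is the one-line remark that it is "nearly the same" as the proof of Proposition~\ref{prop:parametrixplus}, which consists precisely of your three steps (the explicit construction yields a solution, a Liouville argument gives uniqueness and unimodularity, and boundedness is read off from \eqref{eq:acminus}--\eqref{eq:denomminus} after noting $w_*\approx -1$). One small remark on the last step: the phrase "comparing degrees" is a little loose because $J_\ind^-$ enters $D_\ind^-$ to degree four while $K_\ind^-$ only to degree two, so the clean version of your observation is a weighted-degree comparison (weight $1$ for $J_\ind^-$, weight $2$ for $K_\ind^-$) together with the positive-definiteness of the leading weighted-homogeneous part $[J_\ind^-]^4+64w_*^2S^2[K_\ind^-]^2$; this matches the level of detail the paper itself offers ("it is still obvious that ... cannot" become unbounded).
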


\subsection{Accuracy of the parametrices for $\mathbf{Q}_\ind(w)$.  Error analysis.}
The \emph{error} in approximating $\mathbf{Q}_\ind(w)$ by the parametrix $\dot{\mathbf{Q}}_\ind^\pm(w)$
is the matrix $\mathbf{E}_\ind^\pm(w):=\mathbf{Q}_\ind(w)\dot{\mathbf{Q}}_\ind^\pm(w)^{-1}$.   Since the
jump contour $\partial U$ for $\dot{\mathbf{Q}}_\ind^\pm(w)$ is a subset of the jump contour $\Sigma$
of $\mathbf{Q}_\ind(w)$, and since $\det(\dot{\mathbf{Q}}_\ind^\pm(w))=1$ for $w\in\mathbb{C}\setminus
\partial U$, it is clear that $\mathbf{E}_\ind^\pm(w)$ is, like $\mathbf{Q}_\ind(w)$, analytic for
$w\in\mathbb{C}\setminus\Sigma$, and takes its boundary values on $\Sigma$ in the same
classical sense as we require of $\mathbf{Q}_\ind(w)$.  Also, $\mathbf{E}_\ind^\pm(w)\to\mathbb{I}$
as $w\to\infty$ since this is true of both factors.

We now consider the jump conditions satisfied by $\mathbf{E}_\ind^\pm(w)$ as a consequence of
those known to be satisfied by $\mathbf{Q}_\ind(w)$ from the statement of Riemann-Hilbert Problem~\ref{rhp:critQ} and those satisfied by $\dot{\mathbf{Q}}_\ind^\pm(w)$ from either Riemann-Hilbert
Problem~\ref{rhp:Eparametrix} or \ref{rhp:EparametrixII}.  If $\xi\in\mathbb{R}_+$, then
\begin{equation}
\begin{split}
\mathbf{E}_{\ind+}^\pm(\xi)&=\mathbf{Q}_{\ind+}(\xi)\dot{\mathbf{Q}}_{\ind+}^\pm(\xi)^{-1}\\
&=
\sigma_2\mathbf{Q}_{\ind-}(\xi)\sigma_2(\mathbb{I}+\mathbf{X}_\ind(\xi))\sigma_2
\dot{\mathbf{Q}}_{\ind-}^\pm(\xi)^{-1}\sigma_2 \\
&= 
\sigma_2\mathbf{E}_{\ind-}^\pm(\xi)\sigma_2\cdot\left[\sigma_2\dot{\mathbf{Q}}_{\ind-}^\pm(\xi)
\sigma_2(\mathbb{I}+\mathbf{X}_\ind(\xi))\sigma_2\dot{\mathbf{Q}}_{\ind-}^\pm(\xi)^{-1}\sigma_2
\right].
\end{split}
\end{equation}
If we write the matrix in brackets above as $\mathbb{I}+\tilde{\mathbf{X}}_\ind^\pm(\xi)$, we see that
$\tilde{\mathbf{X}}_\ind^\pm(\xi)=\mathbf{0}$ wherever $\mathbf{X}_\ind(\xi)=\mathbf{0}$ (that is, everywhere
except in the interval $I$ near $\xi=1$).  Using Proposition~\ref{prop:parametrixplus} and the
fact that $\mathbf{X}_\ind(\xi)=\bo(\epsilon_N)$ for $\xi\in I$, we see that under the condition
that $(y,s)\in\Omega_\ind^+$ with $y$ bounded we also have $\tilde{\mathbf{X}}_\ind^\pm(\xi)=\bo(\epsilon_N)$
for $\xi\in I$.  Similarly, Proposition~\ref{prop:parametrixminus} shows that under the
condition that $(y,s)\in\Omega_\ind^-$ with $y$ bounded the same estimate holds for $\tilde{\mathbf{X}}_\ind^\pm(\xi)$.  If $\xi\in\Sigma\setminus(\mathbb{R}_+\cup\partial U)$, then the parametrix has no discontinuity,
so
\begin{equation}
\mathbf{E}_{\ind+}^\pm(\xi)=\mathbf{Q}_{\ind+}(\xi)\dot{\mathbf{Q}}_\ind^\pm(\xi)^{-1} =
\mathbf{E}_{\ind-}^\pm(\xi)\dot{\mathbf{Q}}_\ind^\pm(\xi)(\mathbb{I}+\bo(\epsilon_N))
\dot{\mathbf{Q}}_\ind^\pm(\xi)^{-1}
\end{equation}
Therefore according to Proposition~\ref{prop:parametrixplus}, if $(y,s)\in\Omega_\ind^+$ with
$y$ bounded, we will have $\mathbf{E}_{\ind+}^+(\xi)=\mathbf{E}_{\ind-}^+(\xi)(\mathbb{I}+\bo(\epsilon_N))$
for such $\xi$.  Analogously, Proposition~\ref{prop:parametrixminus} shows that $(y,s)\in\Omega_\ind^-$
with $y$ bounded implies that $\mathbf{E}_{\ind+}^-(\xi)=\mathbf{E}_{\ind-}^-(\xi)(\mathbb{I}+\bo(\epsilon_N))$ for such $\xi$.  Finally, if $\xi\in\partial U$, then 
\begin{equation}
\begin{split}
\mathbf{E}_{\ind+}^+(\xi)&=\mathbf{Q}_{\ind+}(\xi)\dot{\mathbf{Q}}_{\ind+}^+(\xi)^{-1}\\
&=\mathbf{Q}_{\ind-}(\xi)\mathbf{V}_{\mathbf{Q}_\ind}(\xi)\begin{bmatrix}1 & Y_\ind^+(\xi)\\
0 & 1\end{bmatrix}^{-1}\dot{\mathbf{Q}}_{\ind-}^+(\xi)^{-1}\\
&=\mathbf{E}_{\ind-}^+(\xi)\left(
\dot{\mathbf{Q}}_{\ind-}^+(\xi)\mathbf{V}_{\mathbf{Q}_\ind}(\xi)\begin{bmatrix}1 & Y_\ind^+(\xi)\\
0 & 1\end{bmatrix}^{-1}\dot{\mathbf{Q}}_{\ind-}^+(\xi)^{-1}\right).
\end{split}
\end{equation}
If $(y,s)\in\Omega_\ind^+$ with $y$ bounded, then according to Propositions~\ref{prop:VQerrorplus}
and \ref{prop:parametrixplus} we find $\mathbf{E}_{\ind+}^+(\xi)=\mathbf{E}_{\ind-}^+(\xi)(\mathbb{I}+
\lo(1))$ where the size of the error term is $\bo(\epsilon_N^{1/6})$ at the worst (according to Corollary~\ref{cor:VQerrorplus}) but is more typically smaller according to the more precise estimates \eqref{eq:plusestimateteeth}--\eqref{eq:plusestimateelsewhere} enumerated in
Proposition~\ref{prop:VQerrorplus}.  Similarly, for $\xi\in\partial U$,
\begin{equation}
\begin{split}
\mathbf{E}_{\ind+}^-(\xi)&=\mathbf{Q}_{\ind+}(\xi)\dot{\mathbf{Q}}_{\ind+}^-(\xi)^{-1}\\
&=\mathbf{Q}_{\ind-}(\xi)\mathbf{V}_{\mathbf{Q}_\ind}(\xi)\begin{bmatrix}1 & 0\\
Y_\ind^-(\xi) & 1\end{bmatrix}^{-1}\dot{\mathbf{Q}}_{\ind-}^-(\xi)^{-1}\\
&=\mathbf{E}_{\ind-}^-(\xi)\left(
\dot{\mathbf{Q}}_{\ind-}^-(\xi)\mathbf{V}_{\mathbf{Q}_\ind}(\xi)\begin{bmatrix}1 &0\\
Y_\ind^-(\xi) & 1\end{bmatrix}^{-1}\dot{\mathbf{Q}}_{\ind-}^-(\xi)^{-1}\right),
\end{split}
\end{equation}
so if $(y,s)\in\Omega_\ind^-$ with $y$ bounded, then from
Propositions~\ref{prop:VQerrorminus} and \ref{prop:parametrixminus} we find
$\mathbf{E}_{\ind+}^-(\xi)=\mathbf{E}_{\ind-}^-(\xi)(\mathbb{I}+
\lo(1))$ with a similar characterization of the error term.  

The dominant terms in the deviation of the jump matrix from the identity come from $\xi\in\partial U$.  Therefore, we observe that the matrix $\mathbf{E}_\ind^\pm$ satisfies the conditions of the following
type of Riemann-Hilbert problem.
\begin{rhp}[Small-Norm Problem for the Error]
Let $\ind$ be an integer with $|\ind|\le B$, and suppose that $(y,s)\in\Omega_\ind^\pm$ with $y$ bounded.   Seek a matrix $\mathbf{E}_\ind^\pm(w)$
with the following properties:
\begin{itemize}
\item[]\textbf{Analyticity:}  $\mathbf{E}_\ind^\pm(w)$ is analytic for
$w\in\mathbb{C}\setminus\Sigma$, and is uniformly
H\"older continuous up to the boundary of 
its domain of
analyticity.  
\item[]\textbf{Jump conditions:}  The boundary values taken by 
$\mathbf{E}_\ind^\pm(w)$ on the contour 
$\Sigma$ are related
as follows:
\begin{equation}
\mathbf{E}^\pm_{\ind+}(\xi)=\sigma_2
\mathbf{E}^\pm_{\ind-}(\xi)\sigma_2\mathbf{V}_{\mathbf{E}_\ind^\pm}(\xi),\quad \xi>0,
\end{equation}
and
\begin{equation}
\mathbf{E}^\pm_{\ind+}(\xi)=\mathbf{E}^\pm_{\ind-}(\xi)\mathbf{V}_{\mathbf{E}_\ind^\pm}(\xi),
\quad \xi\in\Sigma\setminus\mathbb{R}_+,
\end{equation}
where the following estimates hold for the jump matrix.  Firstly, we have
$\mathbf{V}_{\mathbf{E}_\ind^\pm}(\xi)=\mathbb{I}$ for $\xi>0$ with $\log(|\xi|)$ sufficiently large, so
$\mathbf{V}_{\mathbf{E}_\ind^\pm}-\mathbb{I}$ is compactly supported and vanishes identically in a neighborhood
of $\xi=0$.  Next, we have $\|\mathbf{V}_{\mathbf{E}_\ind^\pm}-\mathbb{I}\|_{L^\infty(\Sigma)} = \bo(\epsilon_N^{1/6})$, or, more precisely, $\|\mathbf{V}_{\mathbf{E}_\ind^\pm}-\mathbb{I}\|_{L^\infty(\Sigma)} = \bo(e_\ind^\pm(y,s;\epsilon_N))$, where for $(y,s)\in\Omega_\ind^+$,
\begin{equation}
e_\ind^+(y,s;\epsilon_N):= 
\begin{cases}\epsilon_N^{1/3}e^{p_\ind}|y-\mathscr{Z}(\pu_\ind)|,&\quad
p_\ind>0\;\text{and}\;|y-\mathscr{Z}(\pu_\ind)|\ge\delta e^{-p_\ind}\\
\epsilon_N^{1/3}|y-\mathscr{P}(\pu_\ind)|^{-1},&\quad
|y-\mathscr{P}(\pu_\ind)|\le\delta\\
\epsilon_N^{1/3},&\quad\text{elsewhere in $\Omega_\ind^+$},
\end{cases}
\label{eq:VEestimateplus}
\end{equation}
and for $(y,s)\in\Omega_\ind^-$,
\begin{equation}
e_\ind^-(y,s;\epsilon_N):= 
\begin{cases}\epsilon_N^{1/3}e^{-p_{\ind-1}}|y-\mathscr{Z}(\pv_\ind)|,&\quad
p_{\ind-1}<0\;\text{and}\;|y-\mathscr{Z}(\pv_\ind)|\ge\delta e^{p_{\ind-1}}\\
\epsilon_N^{1/3}|y-\mathscr{P}(\pv_\ind)|^{-1},&\quad
|y-\mathscr{P}(\pv_\ind)|\le\delta\\
\epsilon_N^{1/3},&\quad\text{elsewhere in $\Omega_\ind^-$}.
\end{cases}
\label{eq:VEestimateminus}
\end{equation}
\item[]\textbf{Normalization:}  the matrix $\mathbf{E}_\ind^\pm(w)$ satisfies
the condition
\begin{equation}
\lim_{w\to\infty}\mathbf{E}_\ind^\pm(w)=\mathbb{I}.
\end{equation}
\end{itemize}
\label{rhp:Error}
\end{rhp}

\begin{proposition}
Suppose that $\ind$ is an integer with $|\ind|\le B$, and that $(y,s)\in\Omega_\ind^\pm$ with
$y$ bounded.  Then for sufficiently small $\epsilon_N$, there exists a unique solution $\mathbf{E}_\ind^\pm(w)$
of Riemann-Hilbert Problem~\ref{rhp:Error}.  The solution has expansions for small and large $w$
of the form
\begin{equation}
\mathbf{E}^\pm_\ind(w)=\begin{cases}
[\fourIdx{0}{0}{\pm}{\ind}{\mathbf{E}}]+ [\fourIdx{0}{1}{\pm}{\ind}{\mathbf{E}}](-w)^{1/2} + \bo(w),&\quad
w\to 0\\
\mathbb{I}+[\fourIdx{\infty}{1}{\pm}{\ind}{\mathbf{E}}](-w)^{-1/2} + \bo(w^{-1}),&\quad w\to\infty,
\end{cases}
\label{eq:Eexpand}
\end{equation}
and we have the estimates
\begin{equation}
\vphantom{o} [\fourIdx{0}{0}{\pm}{\ind}{\mathbf{E}}]=\mathbb{I}+
 \bo(e_\ind^\pm(y,s;\epsilon_N)),\quad
[\fourIdx{0}{1}{\pm}{\ind}{\mathbf{E}}]=\bo(e_\ind^\pm(y,s;\epsilon_N)),\quad\text{and}\quad
[\fourIdx{\infty}{1}{\pm}{\ind}{\mathbf{E}}]=\bo(e_\ind^\pm(y,s;\epsilon_N))
\end{equation}
holding uniformly for $(y,s)\in\Omega_\ind^\pm$ with $y$ bounded, 
where $e_\ind^+(y,s;\epsilon_N)$ and $e_\ind^-(y,s;\epsilon_N)$ are defined by
\eqref{eq:VEestimateplus} and \eqref{eq:VEestimateminus} respectively.
\label{prop:Error}
\end{proposition}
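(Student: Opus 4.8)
The plan is to apply the standard small-norm theory for matrix Riemann-Hilbert problems (as developed in \cite{DeiftZ93} and systematized in many subsequent works) to Riemann-Hilbert Problem~\ref{rhp:Error}, keeping careful track of how the $L^\infty$ (and concomitant $L^2$) bounds on $\mathbf{V}_{\mathbf{E}_\ind^\pm}-\mathbb{I}$ propagate through the solution formula. The key observation making this routine is that by the preceding error analysis the jump matrix $\mathbf{V}_{\mathbf{E}_\ind^\pm}(\xi)$ differs from $\mathbb{I}$ by a quantity that is (i) compactly supported on $\Sigma$ (identically $\mathbb{I}$ for $\log|\xi|$ large, and in particular near $\xi=0$, since outside a bounded portion of $\Sigma$ the parametrix exactly solves the jump conditions of $\mathbf{Q}_\ind(w)$), and (ii) uniformly small, bounded by $\bo(e_\ind^\pm(y,s;\epsilon_N))$, which is itself $\bo(\epsilon_N^{1/6})$ uniformly on $\Omega_\ind^\pm$ by Corollaries~\ref{cor:VQerrorplus} and \ref{cor:VQerrorminus}, and typically much smaller by Propositions~\ref{prop:VQerrorplus} and \ref{prop:VQerrorminus}.

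First I would set up the singular-integral-equation reformulation: write $\mathbf{E}_\ind^\pm = \mathbb{I} + \mathcal{C}_\Sigma[\boldsymbol{\mu}(\mathbf{V}_{\mathbf{E}_\ind^\pm}-\mathbb{I})]$ where $\mathcal{C}_\Sigma$ is the Cauchy operator on $\Sigma$, so that $\boldsymbol{\mu}$ solves $(\mathrm{Id}-\mathcal{C}_-[(\mathbf{V}_{\mathbf{E}_\ind^\pm}-\mathbb{I})\,\cdot\,])\boldsymbol{\mu}=\mathbb{I}$; one must account for the conjugation by $\sigma_2$ on the $\mathbb{R}_+$-component of $\Sigma$, which is a fixed bounded operation and does not affect the estimates. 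Since $\Sigma$ is a fixed contour (independent of $x$, $t$, $\ind$, $\epsilon_N$) with finitely many self-intersections and since $\mathbf{V}_{\mathbf{E}_\ind^\pm}-\mathbb{I}$ is compactly supported, the operator $\mathcal{C}_-$ is bounded on $L^2(\Sigma)$ with a norm independent of the parameters, so $\mathrm{Id}-\mathcal{C}_-[(\mathbf{V}_{\mathbf{E}_\ind^\pm}-\mathbb{I})\,\cdot\,]$ is invertible by Neumann series once $\|\mathbf{V}_{\mathbf{E}_\ind^\pm}-\mathbb{I}\|_{L^\infty\cap L^2}$ is small enough, which holds for $\epsilon_N$ small by the $\bo(\epsilon_N^{1/6})$ bound. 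Uniqueness follows from the usual Liouville/vanishing-lemma argument using $\det\equiv1$ and the classical attainment of boundary values (including at the self-intersection points). This gives existence, uniqueness, and the bound $\|\boldsymbol{\mu}-\mathbb{I}\|_{L^2(\Sigma)}=\bo(e_\ind^\pm(y,s;\epsilon_N))$.

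Next I would extract the expansions \eqref{eq:Eexpand}. Because $\mathbf{V}_{\mathbf{E}_\ind^\pm}-\mathbb{I}$ is supported away from $w=0$ and from $w=\infty$, the Cauchy integral $\mathcal{C}_\Sigma[\boldsymbol{\mu}(\mathbf{V}_{\mathbf{E}_\ind^\pm}-\mathbb{I})](w)$ is analytic at both $w=0$ and $w=\infty$, and expanding the Cauchy kernel $1/(\xi-w)$ in powers of $w$ near $0$ and in powers of $w^{-1}$ near $\infty$ (in the variable $(-w)^{1/2}$, which is the natural local parameter given the $\sigma_2$-jump on $\mathbb{R}_+$ and the square-root structure inherited throughout the paper) yields the stated forms with
$[\fourIdx{\infty}{1}{\pm}{\ind}{\mathbf{E}}] = -\tfrac{1}{2\pi i}\int_\Sigma \boldsymbol{\mu}(\xi)(\mathbf{V}_{\mathbf{E}_\ind^\pm}(\xi)-\mathbb{I})\,\frac{d\xi}{(-\xi)^{1/2}}$ up to elementary constants, and similarly for $[\fourIdx{0}{0}{\pm}{\ind}{\mathbf{E}}]-\mathbb{I}$ and $[\fourIdx{0}{1}{\pm}{\ind}{\mathbf{E}}]$. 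Estimating each such integral by Cauchy-Schwarz, using $\|\boldsymbol{\mu}\|_{L^2}=\bo(1)$ and $\|\mathbf{V}_{\mathbf{E}_\ind^\pm}-\mathbb{I}\|_{L^2(\Sigma)}=\bo(e_\ind^\pm(y,s;\epsilon_N))$ (which in turn follows from the $L^\infty$ bound together with the fixed, finite length of the support), gives the three claimed estimates, uniformly for $(y,s)\in\Omega_\ind^\pm$ with $y$ bounded.

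The main obstacle, to the extent there is one, is bookkeeping rather than conceptual: one must verify that the $L^2$-norm of $\mathbf{V}_{\mathbf{E}_\ind^\pm}-\mathbb{I}$ is controlled by the pointwise quantity $e_\ind^\pm(y,s;\epsilon_N)$ \emph{uniformly} as $(y,s)$ ranges over the notched-strip-with-teeth regions $\Omega_\ind^\pm$ — in particular near the notches, where the individual matrix entries of $\dot{\mathbf{Q}}_\ind^\pm(w)$ and of $\mathbf{A}_\ind(y),\mathbf{B}_\ind(y)$ blow up like $|y-\mathscr{P}(\pu_\ind)|^{-1}$ but, by Proposition~\ref{prop:parametrixplus}, Proposition~\ref{prop:parametrixminus}, and the precise estimates \eqref{eq:plusestimateteeth}--\eqref{eq:plusestimateelsewhere}, the \emph{conjugated} jump matrix $\dot{\mathbf{Q}}_{\ind-}^\pm(\xi)\mathbf{V}_{\mathbf{Q}_\ind}(\xi)[\,\cdots\,]^{-1}\dot{\mathbf{Q}}_{\ind-}^\pm(\xi)^{-1}$ remains controlled. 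One also needs that the contribution from $\xi\in I\subset\mathbb{R}_+$ is $\bo(\epsilon_N)$, hence subdominant, which was established in the error analysis preceding the proposition. Since the support of $\mathbf{V}_{\mathbf{E}_\ind^\pm}-\mathbb{I}$ has fixed finite length and the pointwise bounds are already organized exactly as in \eqref{eq:VEestimateplus}--\eqref{eq:VEestimateminus}, passing from $L^\infty$ to $L^2$ costs only a fixed constant, and the argument closes.
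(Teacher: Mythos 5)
Your overall strategy coincides with the paper's: apply small‑norm theory for Riemann–Hilbert problems, note the jump matrix is compactly supported and uniformly within $\bo(e_\ind^\pm(y,s;\epsilon_N))$ of $\mathbb{I}$, then extract the expansion coefficients from the solution's Cauchy‑integral representation and estimate. However, there is a real gap in the singular‑integral‑equation step as you have written it. The representation $\mathbf{E}_\ind^\pm=\mathbb{I}+\mathcal{C}_\Sigma[\boldsymbol{\mu}(\mathbf{V}_{\mathbf{E}_\ind^\pm}-\mathbb{I})]$ is valid only for a jump of the multiplicative form $\mathbf{E}_+=\mathbf{E}_-\mathbf{V}$; but on $\mathbb{R}_+$ the jump is instead $\mathbf{E}_+(\xi)=\sigma_2\mathbf{E}_-(\xi)\sigma_2\mathbf{V}_{\mathbf{E}_\ind^\pm}(\xi)$, so the \emph{additive} jump $\mathbf{E}_+-\mathbf{E}_-$ contains the term $\sigma_2\mathbf{E}_-\sigma_2-\mathbf{E}_-$, which is of order one and supported on all of $\mathbb{R}_+$, not just on $I$. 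Your assertion that the conjugation by $\sigma_2$ ``is a fixed bounded operation and does not affect the estimates'' is true in outcome, but as a justification it is circular: the conjugation defeats the very Cauchy representation you are about to use, and the subsequent claim that $\mathcal{C}_\Sigma[\boldsymbol{\mu}(\mathbf{V}-\mathbb{I})]$ is analytic at $0$ and $\infty$ (because the integrand is compactly supported) no longer holds. Similarly, the coefficient formula $[\fourIdx{\infty}{1}{\pm}{\ind}{\mathbf{E}}]\propto\int_\Sigma\boldsymbol{\mu}(\mathbf{V}-\mathbb{I})\,(-\xi)^{-1/2}\,d\xi$ does not drop out of a naive geometric‑series expansion of $1/(\xi-w)$; that expansion produces integer powers of $w$, not half‑integer ones.

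The device the paper uses, which you hint at by choosing $(-w)^{1/2}$ as the local parameter but never actually invoke, is the substitution $\mathbf{F}_\ind^\pm(\sv):=\mathbf{E}_\ind^\pm(\sv^2)$ for $\Im\{\sv\}>0$ and $\mathbf{F}_\ind^\pm(\sv):=\sigma_2\mathbf{E}_\ind^\pm(\sv^2)\sigma_2$ for $\Im\{\sv\}<0$, with $\sv=i(-w)^{1/2}$. This ``unfolding'' does all the work at once: it erases the $\sigma_2$‑conjugated jump across $\mathbb{R}_+$ (converting it into analyticity of $\mathbf{F}_\ind^\pm$ across the real $\sv$‑axis outside the compact image of $I$), it compactifies the jump contour $\Sigma'$ so that the support of $\mathbf{V}_{\mathbf{F}_\ind^\pm}-\mathbb{I}$ is genuinely compact and bounded away from $\sv=0$ and $\sv=\infty$, and it renders the asymptotic expansions of $\mathbf{E}_\ind^\pm$ at $w=0$ and $w=\infty$ as ordinary integer Taylor and Laurent series in $\sv$, from which the claimed $(-w)^{k/2}$ expansions and their coefficient estimates follow immediately via Cauchy's integral formula on circles in the $\sv$‑plane. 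Once unfolded, the rest of your argument — the $L^2$/$L^\infty$ small‑norm bounds, Neumann‑series invertibility, Liouville‑type uniqueness, Cauchy–Schwarz on the coefficient integrals — goes through exactly as you describe. So the missing idea is not the small‑norm machinery, which you have right, but the structural reduction that legitimizes applying it.
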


\begin{proof}
By the substitution $\mathbf{F}_\ind^\pm(\sv):=\mathbf{E}_\ind^\pm(\sv^2)$ for $\Im\{\sv\}>0$ and 
$\mathbf{F}_\ind^\pm(\sv):=\sigma_2\mathbf{E}_\ind^\pm(\sv^2)\sigma_2$ for $\Im\{\sv\}<0$, we translate
the conditions of Riemann-Hilbert Problem~\ref{rhp:Error} into a list of conditions satisfied
by $\mathbf{F}_\ind^\pm(\sv)$.  This ``unfolding'' of $w$ to the $\sv$-plane implies that $\mathbf{F}_\ind^\pm(\sv)$ is analytic on the complement of a compact contour $\Sigma'$, along which it satisfies jump conditions of the form $\mathbf{F}_{\ind+}^\pm=\mathbf{F}_{\ind-}^\pm\mathbf{V}_{\mathbf{F}_\ind^\pm}$ where
 $\mathbf{V}_{\mathbf{F}_\ind^\pm}-\mathbb{I}$ satisfies the uniform estimate $\|\mathbf{V}_{\mathbf{F}_\ind^\pm}-\mathbb{I}\|_{L^\infty(\Sigma')} = \bo(e_\ind^\pm(y,s;\epsilon_N))$ and $e_\ind^\pm(y,s;\epsilon_N)$ is defined by \eqref{eq:VEestimateplus} for $(y,s)\in\Omega_\ind^+$
and by \eqref{eq:VEestimateminus} for $(y,s)\in\Omega_\ind^-$.  Since these estimates
are all dominated by $\bo(\epsilon_N^{1/6})$, for sufficiently small $\epsilon_N$, the small-norm
theory of matrix Riemann-Hilbert problems in, say, the $L^2$-sense implies that $\mathbf{F}_\ind^\pm(\sv)$
may be constructed by iteration applied to a suitable singular
integral equation, and it follows from that theory that an estimate of the form
$\|\mathbf{F}_\ind^\pm-\mathbb{I}\|_{L^\infty(K)} = \bo(e_\ind^\pm(y,s;\epsilon_N))$ actually holds whenever $K$ is a compact set disjoint from the jump contour $\Sigma'$.  
In particular, $\mathbf{F}_\ind^\pm(\sv)$ is analytic on some compact neighborhood $K$ of
the origin $\sv=0$, so it has a Taylor expansion that we write in the form
\begin{equation}
\mathbf{F}_\ind^\pm(\sv)=[\fourIdx{0}{0}{\pm}{\ind}{\mathbf{F}}]+ [\fourIdx{0}{1}{\pm}{\ind}{\mathbf{F}}]\sv + \bo(\sv^2),\quad \sv\to 0,
\end{equation}
and by the uniform estimate on $\mathbf{F}_\ind^\pm(\sv)-\mathbb{I}$ we have $[\fourIdx{0}{0}{\pm}{\ind}{\mathbf{E}}]=[\fourIdx{0}{0}{\pm}{\ind}{\mathbf{F}}]=\mathbb{I} +\bo(\|\mathbf{F}_\ind^\pm-\mathbb{I}\|_{L^\infty(K)})$.  Applying the Cauchy integral formula to $\mathbf{F}_\ind^\pm(\sv)-\mathbb{I}$ on a small loop in $K$ surrounding the origin 
then shows that  $\vphantom{o}[\fourIdx{0}{1}{\pm}{\ind}{\mathbf{E}}]=i[\fourIdx{0}{1}{\pm}{\ind}{\mathbf{F}}]=
\bo(\|\mathbf{F}_\ind^\pm-\mathbb{I}\|_{L^\infty(K)})$.  Also, $\mathbf{F}_\ind^\pm(\sv)$ tends to the identity as $\sv\to\infty$
and it is analytic for sufficiently large $|\sv|$ so it has a Laurent expansion of the form
\begin{equation}
\mathbf{F}_\ind^\pm(\sv)=\mathbb{I}+[\fourIdx{\infty}{1}{\pm}{\ind}{\mathbf{F}}]\sv^{-1}+\bo(\sv^{-2}),\quad \sv\to\infty,
\end{equation}
Now we can take $K$ as an annulus enclosing a large circle.  Applying the Cauchy integral
formula on such a circle to $\mathbf{F}_\ind^\pm(\sv)-\mathbb{I}$ we find that $\vphantom{o}[\fourIdx{\infty}{1}{\pm}{\ind}{\mathbf{E}}]=-i[\fourIdx{\infty}{1}{\pm}{\ind}{\mathbf{F}}]=\bo(\|\mathbf{F}_\ind^\pm-\mathbb{I}\|_{L^\infty(K)})$.  This completes the proof.
\end{proof}

The magnitude of the expansion coefficients of $\mathbf{E}_\ind^+(w)\approx\mathbb{I}$ and of $\mathbf{E}_{\ind+1}^-(w)\approx\mathbb{I}$ in the interlocking regions $\Omega_\ind^+$ and $\Omega_{\ind+1}^-$ respectively is illustrated in Figure~\ref{fig:EminusIEstimates}.  In particular it is clear that the order of accuracy is $\bo(\epsilon_N^{1/3})$
except very close to the common boundaries between teeth and notches.  Moreover, the order of accuracy is worst right along these common boundaries where it is $\bo(\epsilon_N^{1/3}e^{|p_\ind|/2})$ which in turn is largest just at the tips of the
teeth ($|p_\ind|=\frac{1}{3}\log(\epsilon_N^{-1})$), where the estimate reduces to $\bo(\epsilon_N^{1/6})$.
\begin{figure}[h]
\begin{center}
\includegraphics{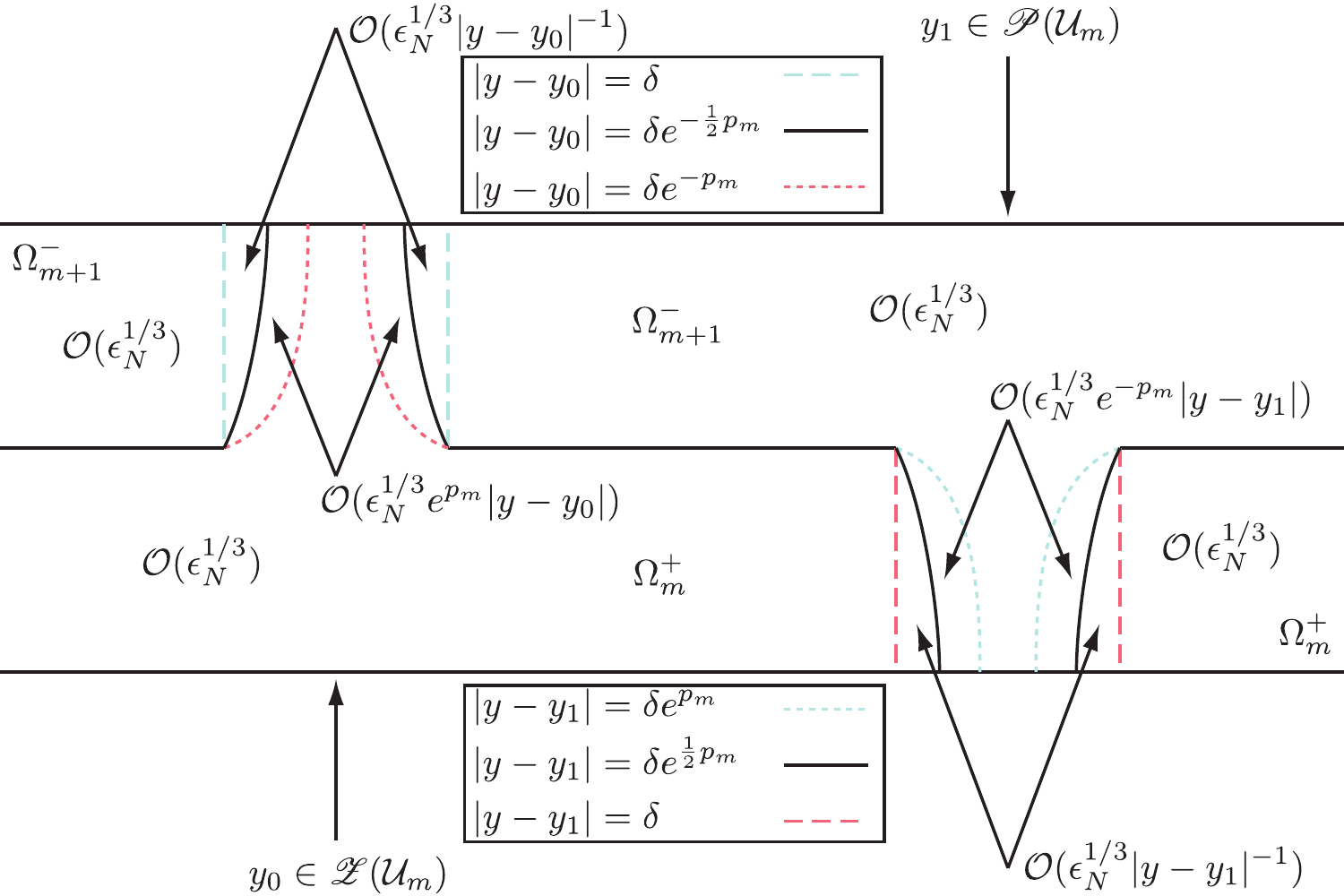}
\end{center}
\caption{\emph{Estimates $e_\ind^+(y,s;\epsilon_N)$ for expansion coefficients of $\mathbf{E}_\ind^+(w)-\mathbb{I}$ with $(y,s)\in\Omega_\ind^+$ and $e_{\ind+1}^-(y,s;\epsilon_N)$ for
expansion coefficients of $\mathbf{E}_{\ind+1}^-(w)-\mathbb{I}$ with $(y,s)\in\Omega_{\ind+1}^-$.  The boundaries of the two interlocking regions are drawn with solid black curves.}}
\label{fig:EminusIEstimates}
\end{figure}

\section{Proofs of Theorems}
\label{sec:potentials}
\subsection{Exact formulae for the potentials.}
Beginning with the matrix $\mathbf{N}(w)=\mathbf{N}^\mathfrak{g}_N(w;x,t)$ we may now express $\mathbf{N}(w)$ in terms of explicitly known quantities and the error matrix $\mathbf{E}_\ind^\pm(w)$ characterized in Proposition~\ref{prop:Error}.  For example, if $|w|$ is sufficiently small then for $\Im\{w\}>0$ we have
\begin{equation}
\mathbf{N}(w)=\mathbf{O}(w)=
\mathbf{P}(w)i\sigma_1=\mathbf{Q}_\ind(w)\dot{\mathbf{P}}^\mathrm{out}_{\ind}(w)i\sigma_1=\mathbf{E}_\ind^\pm(w)\dot{\mathbf{Q}}_\ind^\pm(w)
\dot{\mathbf{P}}^\mathrm{out}_{m}(w)i\sigma_1,
\end{equation}
while for $\Im\{w\}<0$ we have the same formula just replacing $\mathbf{P}(w)$ and $\dot{\mathbf{P}}^\mathrm{out}_\ind(w)$ by $-\mathbf{P}(w)$ and $-\dot{\mathbf{P}}^\mathrm{out}_\ind(w)$ respectively.  
But from \eqref{eq:PoutDS} we see that
\begin{equation}
\mathrm{sgn}(\Im\{w\})\dot{\mathbf{P}}_\ind^\mathrm{out}(w)=\left(\mathbb{I}
+\frac{1-2\ind}{S}\sigma_3 (-w)^{1/2} + \bo(w)\right)(-1)^{\ind+1}i\sigma_2,\quad w\to 0,
\end{equation}
where $S:=(-w_*)^{1/2}>0$.  Therefore, in terms of the expansion \eqref{eq:Eexpand} of $\mathbf{E}_\ind^\pm(w)$ for
small $|w|$, and the corresponding expansion of $\dot{\mathbf{Q}}_\ind^\pm(w)$:
\begin{equation}
\dot{\mathbf{Q}}_\ind^\pm(w)=[\fourIdx{0}{0}{\pm}{\ind}{\smash{\dot{\mathbf{Q}}}}] + 
[\fourIdx{0}{1}{\pm}{\ind}{\smash{\dot{\mathbf{Q}}}}](-w)^{1/2} + \bo(w),\quad w\to 0,
\label{eq:Qwsmall}
\end{equation}
we find the following formulae for certain expansion coefficients of $\mathbf{N}(w)$
(see \eqref{eq:Nexpandwsmall}):
\begin{equation}
[\fourIdx{0}{0}{\mathfrak{g}}{N}{\mathbf{N}}](x,t)=[\fourIdx{0}{0}{\pm}{\ind}{\mathbf{E}}][\fourIdx{0}{0}{\pm}{\ind}{\smash{\dot{\mathbf{Q}}}}](-1)^{\ind+1}i\sigma_2,
\label{eq:N00}
\end{equation}
and
\begin{equation}
\vphantom{\fourIdx{0}{0}{\mathfrak{g}}{N}{\mathbf{N}}(x,t)^{-1}}[\fourIdx{0}{0}{\mathfrak{g}}{N}{\mathbf{N}}](x,t)^{-1}[\fourIdx{0}{1}{\mathfrak{g}}{N}{\mathbf{N}}](x,t)=
\sigma_2[\fourIdx{0}{0}{\pm}{\ind}{\smash{\dot{\mathbf{Q}}}}]^{-1}
[\fourIdx{0}{0}{\pm}{\ind}{\mathbf{E}}]^{-1}
[\fourIdx{0}{1}{\pm}{\ind}{\mathbf{E}}]
[\fourIdx{0}{0}{\pm}{\ind}{\smash{\dot{\mathbf{Q}}}}]\sigma_2 +
\sigma_2[\fourIdx{0}{0}{\pm}{\ind}{\smash{\dot{\mathbf{Q}}}}]^{-1}
[\fourIdx{0}{1}{\pm}{\ind}{\smash{\dot{\mathbf{Q}}}}]\sigma_2 -
\frac{1-2m}{S}\sigma_3.
\label{eq:N01}
\end{equation}

Similarly, if $|w|$ is sufficiently large, then we have
\begin{equation}
\mathbf{N}(w)=\mathbf{O}(w)=\mathbf{P}(w)=\mathbf{Q}_\ind(w)\dot{\mathbf{P}}^\mathrm{out}_\ind(w) = \mathbf{E}_\ind^\pm(w)\dot{\mathbf{Q}}_\ind^\pm(w)\dot{\mathbf{P}}_\ind^\mathrm{out}(w),
\end{equation}
and from \eqref{eq:PoutDS} we have
\begin{equation}
\dot{\mathbf{P}}^\mathrm{out}_\ind(w)=\mathbb{I}+(1-2\ind)S\sigma_3(-w)^{-1/2}+\bo(w^{-1}),\quad w\to\infty,
\end{equation}
so in terms of the expansion \eqref{eq:Eexpand} of $\mathbf{E}_\ind^\pm(w)$ for large $|w|$,
and the corresponding expansion of $\dot{\mathbf{Q}}_\ind^\pm(w)$:
\begin{equation}
\dot{\mathbf{Q}}_\ind^\pm(w)=\mathbb{I} + [\fourIdx{\infty}{1}{\pm}{\ind}{\smash{\dot{\mathbf{Q}}}}](-w)^{-1/2} + \bo(w^{-1}),\quad w\to\infty,
\label{eq:Qwlarge}
\end{equation}
we obtain an exact formula for the coefficient $[\fourIdx{\infty}{1}{\mathfrak{g}}{N}{\mathbf{N}}](x,t)$ (see \eqref{eq:Nexpandwlarge}):
\begin{equation}
[\fourIdx{\infty}{1}{\mathfrak{g}}{N}{\mathbf{N}}](x,t)=[\fourIdx{\infty}{1}{\pm}{\ind}{\mathbf{E}}] +
[\fourIdx{\infty}{1}{\pm}{\ind}{\smash{\dot{\mathbf{Q}}}}] +(1-2\ind)S\sigma_3.
\label{eq:Ninfty1}
\end{equation}

Combining \eqref{eq:N00} with \eqref{eq:fluxoncondensate} from Definition~\ref{def:fluxoncondensate} we
obtain the exact formulae:
\begin{equation}
\cos(\tfrac{1}{2}u_N(x,t))=(-1)^\ind\left[[\fourIdx{0}{0}{\pm}{\ind}{\mathbf{E}}][\fourIdx{0}{0}{\pm}{\ind}{\smash{\dot{\mathbf{Q}}}}]\right]_{12}\quad\text{and}\quad
\sin(\tfrac{1}{2}u_N(x,t))=(-1)^\ind\left[[\fourIdx{0}{0}{\pm}{\ind}{\mathbf{E}}][\fourIdx{0}{0}{\pm}{\ind}{\smash{\dot{\mathbf{Q}}}}]\right]_{22}.
\end{equation}
Similarly, combining \eqref{eq:N01} and \eqref{eq:Ninfty1} with \eqref{eq:epsuNt} gives
\begin{equation}
\epsilon_N\frac{\partial u_N}{\partial t}(x,t)=[\fourIdx{\infty}{1}{\pm}{\ind}{\mathbf{E}}]_{12} +
[\fourIdx{\infty}{1}{\pm}{\ind}{\smash{\dot{\mathbf{Q}}}}]_{12} -
\left[[\fourIdx{0}{0}{\pm}{\ind}{\smash{\dot{\mathbf{Q}}}}]^{-1}
[\fourIdx{0}{0}{\pm}{\ind}{\mathbf{E}}]^{-1}
[\fourIdx{0}{1}{\pm}{\ind}{\mathbf{E}}]
[\fourIdx{0}{0}{\pm}{\ind}{\smash{\dot{\mathbf{Q}}}}]\right]_{21}-
\left[[\fourIdx{0}{0}{\pm}{\ind}{\smash{\dot{\mathbf{Q}}}}]^{-1}
[\fourIdx{0}{1}{\pm}{\ind}{\smash{\dot{\mathbf{Q}}}}]\right]_{21}.
\end{equation}

Now suppose that $(x,t)$ are such that $(y,s)=(\epsilon^{-2/3}r(x,t),s(x,t))\in\Omega_\ind^\pm$
with $y$ bounded.  Then we can apply Proposition~\ref{prop:Error} and the boundedness
of $[\fourIdx{0}{0}{\pm}{\ind}{\smash{\dot{\mathbf{Q}}}}]=\dot{\mathbf{Q}}_\ind^\pm(0)$ and
its inverse as guaranteed by Propositions~\ref{prop:parametrixplus} and \ref{prop:parametrixminus} to obtain:
\begin{equation}
\begin{split}
\cos(\tfrac{1}{2}u_N(x,t))&=\dot{C}_{N,\ind}^\pm(x,t) +\bo(e_\ind^\pm(y,s;\epsilon_N))\\
\sin(\tfrac{1}{2}u_N(x,t))&=\dot{S}_{N,\ind}^\pm(x,t)+\bo(e_\ind^\pm(y,s;\epsilon_N))\\
\epsilon_N\frac{\partial u_N}{\partial t}(x,t)&=\dot{G}_{N,\ind}^\pm(x,t)+\bo(e_\ind^\pm(y,s;\epsilon_N)),
\end{split}
\label{eq:cossineE}
\end{equation}
all holding for $(y,s)\in\Omega_\ind^\pm$ with $y$ bounded, where
\begin{equation}
\begin{split}
\dot{C}_{N,\ind}^\pm(x,t)&:=(-1)^\ind[\fourIdx{0}{0}{\pm}{\ind}{\smash{\dot{\mathbf{Q}}}}]_{12}\\
\dot{S}_{N,\ind}^\pm(x,t)&:=(-1)^\ind[\fourIdx{0}{0}{\pm}{\ind}{\smash{\dot{\mathbf{Q}}}}]_{22}\\
\dot{G}_{N,\ind}^\pm(x,t)&:=[\fourIdx{\infty}{1}{\pm}{\ind}{\smash{\dot{\mathbf{Q}}}}]_{12} -
\left[[\fourIdx{0}{0}{\pm}{\ind}{\smash{\dot{\mathbf{Q}}}}]^{-1}
[\fourIdx{0}{1}{\pm}{\ind}{\smash{\dot{\mathbf{Q}}}}]\right]_{21} = 
[\fourIdx{\infty}{1}{\pm}{\ind}{\smash{\dot{\mathbf{Q}}}}]_{12}+
[\fourIdx{0}{0}{\pm}{\ind}{\smash{\dot{\mathbf{Q}}}}]_{21}
[\fourIdx{0}{1}{\pm}{\ind}{\smash{\dot{\mathbf{Q}}}}]_{11} - 
[\fourIdx{0}{0}{\pm}{\ind}{\smash{\dot{\mathbf{Q}}}}]_{11}
[\fourIdx{0}{1}{\pm}{\ind}{\smash{\dot{\mathbf{Q}}}}]_{21}.
\end{split}
\end{equation}
(In the last equality we used the fact that $[\fourIdx{0}{0}{\pm}{\ind}{\smash{\dot{\mathbf{Q}}}}]=
\dot{\mathbf{Q}}_\ind^\pm(0)$ has determinant one.)
Since $\dot{\mathbf{Q}}_\ind^\pm(w)=\mathbf{F}_\ind^\pm(w)=\mathbf{G}_\ind^\pm(i(-w)^{1/2})$, 
the necessary expansion coefficients of $\dot{\mathbf{Q}}_\ind^\pm(w)$ (see \eqref{eq:Qwsmall} and \eqref{eq:Qwlarge}) can be obtained directly from the partial-fractions expansions \eqref{eq:GnplusPF} and \eqref{eq:GnminusPF}.  In this way, we get
\begin{equation}
\begin{split}
[\fourIdx{0}{0}{+}{\ind}{\smash{\dot{\mathbf{Q}}}}]&=\mathbb{I}+\frac{1}{w_*}\left(i\sigma_2[\mathbf{b}_\ind^+-iS\mathbf{d}_\ind^+],
\mathbf{b}_\ind^+-iS\mathbf{d}_\ind^+\right)\\
[\fourIdx{0}{1}{+}{\ind}{\smash{\dot{\mathbf{Q}}}}]&=\frac{1}{w_*S}
\left(-i\sigma_2[2\mathbf{b}_\ind^+-iS\mathbf{d}_\ind^+],
2\mathbf{b}_\ind^+-iS\mathbf{d}_\ind^+\right)\\
[\fourIdx{\infty}{1}{+}{\ind}{\smash{\dot{\mathbf{Q}}}}]&=-i\left(-i\sigma_2\mathbf{d}_\ind^+,\mathbf{d}_\ind^+\right),
\end{split}
\end{equation}
implying that
\begin{equation}
\begin{split}
\dot{C}_{N,\ind}^+(x,t)&=(-1)^\ind\left[\frac{b_{\ind,1}^+}{w_*}+\frac{id_{\ind,1}^+}{S}\right]\\
\dot{S}_{N,\ind}^+(x,t)&=(-1)^\ind\left[1+\frac{b_{\ind,2}^+}{w_*}+\frac{id_{\ind,2}^+}{S}\right]\\
\dot{G}_{N,\ind}^+(x,t)&=\frac{1}{w_*^2}\left[iw_*(1-w_*)d_{\ind,1}^++2Sb_{\ind,1}^++ib_{\ind,1}^+d_{\ind,2}^+-ib_{\ind,2}^+d_{\ind,1}^+\right],
\end{split}
\end{equation}
where the components of the vectors $\mathbf{b}_\ind^+$ and $\mathbf{d}_\ind^+$ are given by 
\eqref{eq:bdplus} with \eqref{eq:denomplus} and we recall that $S:=(-w_*)^{1/2}>0$.   In the same way one sees that
\begin{equation}
\begin{split}
[\fourIdx{0}{0}{-}{\ind}{\smash{\dot{\mathbf{Q}}}}]&=\mathbb{I}+\frac{1}{w_*}
\left(\mathbf{a}_\ind^--iS\mathbf{c}_\ind^-,-i\sigma_2[\mathbf{a}_\ind^--iS\mathbf{c}_\ind^-]\right)\\
[\fourIdx{0}{1}{-}{\ind}{\smash{\dot{\mathbf{Q}}}}]&=\frac{1}{w_*S}
\left(2\mathbf{a}_\ind^--iS\mathbf{c}_\ind^-,i\sigma_2[2\mathbf{a}_\ind^--iS\mathbf{c}_\ind^-]\right)\\
[\fourIdx{\infty}{1}{-}{\ind}{\smash{\dot{\mathbf{Q}}}}]&=-i\left(\mathbf{c}_\ind^-,i\sigma_2\mathbf{c}_\ind^-\right),
\end{split}
\end{equation}
implying that
\begin{equation}
\begin{split}
\dot{C}_{N,\ind}^-(x,t)&=(-1)^{\ind+1}\left[\frac{a_{\ind,2}^-}{w_*} +\frac{ic_{\ind,2}^-}{S}\right]\\
\dot{S}_{N,\ind}^-(x,t)&=(-1)^\ind\left[1+\frac{a_{\ind,1}^-}{w_*}+\frac{ic_{\ind,1}^-}{S}\right]\\
\dot{G}_{N,\ind}^-(x,t)&=\frac{1}{w_*^2}\left[iw_*(1-w_*)c_{\ind,2}^-+2Sa_{\ind,2}^- + ia_{\ind,2}^-c_{\ind,1}^--ia_{\ind,1}^-c_{\ind,2}^-\right],
\end{split}
\end{equation}
where $\mathbf{a}_\ind^-$ and $\mathbf{c}_\ind^-$ are given by \eqref{eq:acminus} with \eqref{eq:denomminus} and we have $S:=(-w_*)^{1/2}>0$.

Finally, substituting from \eqref{eq:bdplus}--\eqref{eq:denomplus} and \eqref{eq:acminus}--\eqref{eq:denomminus} gives
\begin{equation}
\begin{split}
\dot{C}_{N,\ind}^\pm(x,t)&=\pm (-1)^{m+1}\frac{2R_\ind^\pm}{[R_\ind^\pm]^2+1}\\
\dot{S}_{N,\ind}^\pm(x,t)&=(-1)^{m+1}\frac{[R_\ind^\pm]^2-1}{[R_\ind^\pm]^2+1}\\
\dot{G}_{N,\ind}^\pm(x,t)&=16w_*\frac{[J_\ind^\pm]^2K_\ind^\pm+2S[J_\ind^\pm]^3-w_*[J_\ind^\pm]^2K_\ind^\pm+32w_*SJ_\ind^\pm-16w_*^2K_\ind^\pm+16w_*^3K_\ind^\pm}{([J_\ind^\pm]^2+16w_*^2)^2+64w_*^2(J_\ind^\pm+SK_\ind^\pm)^2},
\end{split}
\label{eq:dotCSGplusminus}
\end{equation}
where
\begin{equation}
R_\ind^\pm:=\frac{-8w_*(J_\ind^\pm+SK_\ind^\pm)}{[J_\ind^\pm]^2+16w_*^2},
\label{eq:Rplusminus}
\end{equation}
and where $J_\ind^+$ and $K_\ind^+$ are defined by \eqref{eq:JKplus} while 
$J_\ind^-$ and $K_\ind^-$ are defined by \eqref{eq:JKminus}.

\subsection{Proof of Theorem~\ref{thm:tzero} (Initial accuracy of fluxon condensates).}
The fluxon condensate $u_N(x,t)$ is an even function of $x$, so it suffices to consider $x\approx x_\mathrm{crit}>0$.  The line segment $t=0$ with $\Delta x=\bo(\epsilon_N^{2/3})$ is mapped under $(y,s)=(\epsilon^{-2/3}r(x,t),s(x,t))$
into the union $\Omega_0^+\cup\Omega_1^-$,
assuming that  $\epsilon_N$ sufficiently small.  More specifically, it is mapped into the part of $\Omega_0^+\cup\Omega_1^-$ with $y=\bo(1)$ and $p_0=\bo(\epsilon_N^{1/3})$.  Using the facts that $\pu_0(y)=\pv_1(y)=1$ and $\pv_0(y)=\pu_1(y)=-y/6$, and hence also $H_0(y)=H_1(y)=y^2/24$, it is easy to see from the definitions \eqref{eq:JKplus} and
\eqref{eq:JKminus} (using also $h_0^+(w_*)=h_1^-(w_*)=-4w_*W'(w_*)$) that for $t=0$ and $\Delta x=\bo(\epsilon_N^{2/3})$,
\begin{equation}
J_0^+=\bo(\epsilon_N^{1/3})\quad\text{and}\quad J_1^-=\bo(\epsilon_N^{1/3})
\end{equation}
while
\begin{equation}
K_0^+=-2S + \bo(\epsilon_N^{1/3})\quad\text{and}\quad
K_1^-=-2S+\bo(\epsilon_N^{1/3}),
\end{equation}
where $S:=(-w_*)^{1/2}>0$.
According to \eqref{eq:Rplusminus} we then have
\begin{equation}
R_0^+=-1+\bo(\epsilon_N^{1/3})\quad\text{and}\quad R_1^-=-1+\bo(\epsilon_N^{1/3}).
\end{equation}
It  follows from these facts and \eqref{eq:dotCSGplusminus} that for such $x$ and $t=0$,
\begin{equation}
\dot{C}_{N,0}^+(x,0)=1+\bo(\epsilon_N^{1/3})\quad\text{and}\quad
\dot{C}_{N,1}^-(x,0)=1+\bo(\epsilon_N^{1/3}),
\end{equation}
\begin{equation}
\dot{S}_{N,0}^+(x,0)=\bo(\epsilon_N^{1/3})\quad\text{and}\quad
\dot{S}_{N,1}^-(x,0)=\bo(\epsilon_N^{1/3}),
\end{equation}
and
\begin{equation}
\dot{G}_{N,0}^+(x,0)=-\left(S+\frac{1}{S}\right)+\bo(\epsilon_N^{1/3})\quad\text{and}\quad
\dot{G}_{N,1}^-(x,0)=-\left(S+\frac{1}{S}\right)+\bo(\epsilon_N^{1/3}).
\end{equation}
Furthermore, since $w_*$ is an analytic function of $(x,t)$ with $w_*(x_\mathrm{crit},0)=-1$,
we have $w_*=-1 + \bo(\Delta x,t) = -1+\bo(\Delta x) = -1+\bo(\epsilon_N^{2/3})$.  Therefore,
we may equivalently write the above formulae in the form
\begin{equation}
\dot{G}_{N,0}^+(x,0)=-2 + \bo(\epsilon_N^{1/3})\quad\text{and}\quad
\dot{G}_{N,1}^-(x,0)=-2+\bo(\epsilon_N^{1/3}),\quad \Delta x=\bo(\epsilon_N^{2/3}).
\end{equation}
Also, since the function $G(\cdot)$ in the initial data is differentiable at $x_\mathrm{crit}$
and $G(x_\mathrm{crit})=-2$, we can also write
\begin{equation}
\dot{G}_{N,0}^+(x,0)=G(x)+\bo(\epsilon_N^{1/3})\quad\text{and}\quad
\dot{G}_{N,1}^-(x,0)=G(x)+\bo(\epsilon_N^{1/3}),\quad \Delta x = \bo(\epsilon_N^{2/3}).
\end{equation}
Finally, recalling \eqref{eq:cossineE},
 the fact that $e_0^+(y,s;\epsilon_N)=\bo(\epsilon_N^{1/3})$ for $(y,s)\in\Omega_0^+$
with $y$ bounded and $p_0=\bo(\epsilon_N^{1/3})$, and the fact that
$e_1^-(y,s;\epsilon_N)=\bo(\epsilon_N^{1/3})$ for $(y,s)\in\Omega_1^-$ with $y$ bounded
and $p_0=\bo(\epsilon_N^{1/3})$, we arrive at
\begin{equation}
u_N(x,0)=\bo(\epsilon_N^{1/3})\pmod{4\pi}\quad\text{and}\quad\epsilon_N\frac{\partial u_N}{\partial t}(x,0)=G(x)+\bo(\epsilon_N^{1/3}),\quad \Delta x=\bo(\epsilon_N^{2/3}),
\end{equation}
which completes the proof of the Theorem.

\subsection{Proof of Theorem~\ref{thm:main} (Main approximation theorem).}
\label{sec:ProofMain}
Firstly, we wish to simplify the expressions for $\dot{C}_{N,\ind}^\pm(x,t)$ and $\dot{S}_{N,\ind}^\pm(x,t)$ given in \eqref{eq:dotCSGplusminus} by localizing near the critical point $\Delta x=t=0$, which
allows us to write the coordinates $r(x,t)$ and $s(x,t)$ explicitly in terms of $x$ and $t$ at the
cost of some small error terms according to \eqref{eq:rsexpansions}.  
By analogy with the definition \eqref{eq:spind} of $p_\ind$ for general $\ind\in\mathbb{Z}$, we define a shifted and scaled coordinate $q_\ind$ equivalent to $t$ by the relation
\begin{equation}
t = \frac{2\ind}{3}\epsilon_N\log(\epsilon_N^{-1}) + \epsilon_Nq_\ind,\quad\ind\in\mathbb{Z}.
\label{eq:sqind}
\end{equation}
Recall the spatial coordinate $z$ exactly proportional to $\Delta x$ as defined 
by \eqref{eq:zDeltax}.
We will now show that there is essentially no additional cost (beyond that already introduced in approximating
$\mathbf{E}_\ind^\pm(w)$) in explicitly introducing $\Delta x$ and $t$ as coordinates.
\begin{proposition}
Suppose that $(y,s)\in\Omega_\ind^+$ with $y$ bounded and $\ind$ fixed.  Then
\begin{equation}
\dot{C}_{N,\ind}^+(x,t)=\ddot{C}_{N,\ind}^+(x,t)+
\bo(e_\ind^+(y,s;\epsilon_N))
\quad\text{and}\quad
\dot{S}_{N,\ind}^+(x,t)=\ddot{S}_{N,\ind}^+(x,t)+
\bo(e_\ind^+(y,s;\epsilon_N))
\end{equation}
where
\begin{equation}
\ddot{C}_{N,\ind}^+(x,t):=(-1)^{m+1}\frac{2\dot{R}_\ind^+}{[\dot{R}_\ind^+]^2+1}\quad\text{and}\quad
\ddot{S}_{N,\ind}^+(x,t):=(-1)^{m+1}\frac{[\dot{R}_\ind^+]^2-1}{[\dot{R}_\ind^+]^2+1}
\label{eq:CSddotplus}
\end{equation}
and where
\begin{equation}
\dot{R}_\ind^+:=\frac{-2^{4(1-\ind)}\nu^{-2\ind/3}e^{q_\ind}\pu_\ind(z)}{2^{-8\ind}\nu^{-2(1+2\ind)/3}
\epsilon_N^{2/3}e^{2q_\ind}(2H_\ind(z)\pu_\ind(z)-\pu_\ind'(z))^2 + 16}.
\label{eq:dotRplus}
\end{equation}
Similarly, suppose that $(y,s)\in\Omega_\ind^-$ with $y$ bounded and $\ind$ fixed.  Then
\begin{equation}
\dot{C}_{N,\ind}^-(x,t)=\ddot{C}_{N,\ind}^-(x,t)+
\bo(e_\ind^-(y,s;\epsilon_N))
\quad\text{and}\quad
\dot{S}_{N,\ind}^-(x,t)=\ddot{S}_{N,\ind}^-(x,t)+
\bo(e_\ind^-(y,s;\epsilon_N))
\end{equation}
where
\begin{equation}
\ddot{C}_{N,\ind}^-(x,t):=(-1)^{m}\frac{2\dot{R}_\ind^-}{[\dot{R}_\ind^-]^2+1}\quad\text{and}\quad
\ddot{S}_{N,\ind}^-(x,t):=(-1)^{m+1}\frac{[\dot{R}_\ind^-]^2-1}{[\dot{R}_\ind^-]^2+1}
\label{eq:CSddotminus}
\end{equation}
and where
\begin{equation}
\dot{R}_\ind^-:=\frac{-2^{4\ind}\nu^{2(\ind-1)/3}e^{-q_{\ind-1}}\pv_\ind(z)}
{2^{8(\ind-1)}\nu^{-2+4\ind/3}\epsilon_N^{2/3}e^{-2q_{\ind-1}}(2H_\ind(z)\pv_\ind(z)-\pv_\ind'(z))^2 + 16}.
\label{eq:dotRminus}
\end{equation}
\label{prop:CSapproximate}
\end{proposition}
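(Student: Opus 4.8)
The plan is to show that replacing the exact quantities $J_\ind^\pm$, $K_\ind^\pm$, $w_*$ in the formulae \eqref{eq:dotCSGplusminus}--\eqref{eq:Rplusminus} by their leading-order expressions in terms of $z$ and $q_\ind$ (or $q_{\ind-1}$) incurs only errors already dominated by $e_\ind^\pm(y,s;\epsilon_N)$. The starting point is that $\dot{C}_{N,\ind}^\pm$ and $\dot{S}_{N,\ind}^\pm$ are smooth (indeed rational) functions of the single combined quantity $R_\ind^\pm$ defined by \eqref{eq:Rplusminus}, and $|R_\ind^\pm|$, $|R_\ind^\pm|^{-1}$ are uniformly controlled since $\dot{C}^2+\dot{S}^2=1$; hence it suffices to estimate $\dot R_\ind^\pm - R_\ind^\pm$, where $\dot R_\ind^\pm$ is the claimed leading-order form. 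Because $R_\ind^\pm$ is the ratio of $-8w_*(J_\ind^\pm+SK_\ind^\pm)$ to $[J_\ind^\pm]^2+16w_*^2$, and because the denominator is bounded away from zero (it is $\ge 256 w_*^4$ times a positive quantity by \eqref{eq:denomplus}, \eqref{eq:denomminus}), the problem reduces to tracking the relative errors in $w_*$, $J_\ind^\pm$, $K_\ind^\pm$ separately.

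First I would substitute the explicit definitions \eqref{eq:JKplus} and \eqref{eq:JKminus} of $J_\ind^\pm$ and $K_\ind^\pm$ and expand every ingredient near criticality. The quantities $W'(w_*)$, $W''(w_*)$, $h_\ind^\pm(w_*)$, $\eta_\ind(w_*)$, and $w_*$ itself are all analytic functions of $(x,t)$ with values at criticality computed earlier: $w_*(x_\mathrm{crit},0)=-1$, $W'(w_*)=W''(w_*)=\nu^{1/3}$ at criticality by \eqref{eq:WprimeWdoubleprimecrit}, and $h_\ind^\pm$ expressed through $\eta_\ind$ via \eqref{eq:Hnformula}, \eqref{eq:hindplus}, \eqref{eq:hindminus}; the needed value of $\eta_\ind(w_*)$ at criticality follows from the local behaviour $W(w)=\nu^{1/3}(w+1)(1+\tfrac12(w+1)+\bo((w+1)^2))$ together with $(\sqrt{-w}+\sqrt{-w_*})\to 2$. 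This pins down the powers of $2$, of $\nu$, and of $\epsilon_N$ appearing in \eqref{eq:dotRplus} and \eqref{eq:dotRminus}, and it converts the $e^{p_\ind}$ prefactor of $Y_\ind^+$ (resp. $e^{-p_{\ind-1}}$ of $Y_\ind^-$) into $e^{q_\ind}$ (resp. $e^{-q_{\ind-1}}$) using \eqref{eq:spind}, \eqref{eq:sqind}, and the expansion $s(x,t)=t+\bo(\Delta x^2,t\Delta x,t^2)$ from \eqref{eq:rsexpansions}, so that $p_\ind-q_\ind=\epsilon_N^{-1}(s-t)=\bo(\epsilon_N^{-1}(\Delta x^2,t\Delta x,t^2))=\lo(1)$ on the region in question. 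Likewise $y=\epsilon_N^{-2/3}r(x,t)$ and the leading relation $r=\tfrac{1}{2\nu^{1/3}}\Delta x+\bo(\cdots)$ from \eqref{eq:rsexpansions} identify $y$ with $z$ up to a small error, so $\pu_\ind(y)$, $\pv_\ind(y)$, $H_\ind(y)$, $\pu_\ind'(y)$, $\pv_\ind'(y)$ may be replaced by their values at $z$, using that these are rational functions with the only real singularities at $\mathscr{P}(\pu_\ind)=\mathscr{P}(\pv_\ind)$, and that $(y,s)\in\Omega_\ind^\pm$ keeps us a controlled distance from those poles.

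The heart of the argument is the bookkeeping: each replacement introduces an error, and I must check that the worst such error is $\bo(e_\ind^\pm(y,s;\epsilon_N))$. The three regimes in the definition \eqref{eq:VEestimateplus}--\eqref{eq:VEestimateminus} of $e_\ind^\pm$ — teeth, notches, and "elsewhere" — correspond precisely to the three size regimes of $Y_\ind^\pm$ analyzed in Propositions~\ref{prop:VQerrorplus} and \ref{prop:VQerrorminus}, so I would argue regime by regime. In the "strip with notches" regime, $|y-\mathscr{P}(\pu_\ind)|\le\delta$ and the potentials have simple poles there, so $\pu_\ind$, $2H_\ind\pu_\ind-\pu_\ind'$ are $\bo(|y-\mathscr{P}(\pu_\ind)|^{-1})$; the discrepancy $|\pu_\ind(y)-\pu_\ind(z)|$ is then $\bo(|y-z|\,|y-\mathscr{P}(\pu_\ind)|^{-2})=\bo(\epsilon_N^{2/3}|y-\mathscr{P}(\pu_\ind)|^{-2})$ by \eqref{eq:rsexpansions}, which after propagation through $R_\ind^+$ matches $\epsilon_N^{1/3}|y-\mathscr{P}(\pu_\ind)|^{-1}$ provided $|y-\mathscr{P}(\pu_\ind)|\ge \delta e^{p_\ind/2}$, exactly the defining inequality of $\Omega_\ind^+$ for $p_\ind\le 0$. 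In the teeth and the generic regime the potentials are bounded, $|y-z|=\bo(\epsilon_N^{2/3})$, and the corrections to $w_*$, $W'(w_*)$, etc., are $\bo(\Delta x,t)=\bo(\epsilon_N^{2/3}\log(\epsilon_N^{-1}))$, all of which are absorbed. The main obstacle I anticipate is precisely this notch-region estimate: one must be careful that the replacement $p_\ind\mapsto q_\ind$ does not spoil the delicate exponential balance near the poles, i.e. that $e^{q_\ind}/e^{p_\ind}=1+\lo(1)$ uniformly as $(y,s)$ ranges over $\Omega_\ind^+$ with $y$ bounded, including at the extreme $|p_\ind|\sim\tfrac13\log(\epsilon_N^{-1})$; this follows because $|s-t|=\bo(\Delta x^2+t^2)$ and $\Delta x=\bo(\epsilon_N^{2/3})$, $t=\bo(\epsilon_N\log(\epsilon_N^{-1}))$ together give $|s-t|/\epsilon_N=\bo(\epsilon_N^{1/3})=\lo(1)$. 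The $\Omega_\ind^-$ case is handled identically with $\pu_\ind\leftrightarrow\pv_\ind$, $p_\ind\leftrightarrow p_{\ind-1}$, $q_\ind\leftrightarrow q_{\ind-1}$, and the upper-triangular model replaced by the lower-triangular one, as in Proposition~\ref{prop:VQerrorminus}; assembling the $\dot R\to R$ estimate with the smoothness of the map $R\mapsto(\dot C,\dot S)$ then yields the stated conclusion.
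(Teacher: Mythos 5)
Your proposal takes essentially the same route as the paper's proof: write $R_\ind^\pm$ exactly in terms of slowly varying quantities and $J_\ind^\pm,K_\ind^\pm$, evaluate the slow quantities at criticality using \eqref{eq:WprimeWdoubleprimecrit} and \eqref{eq:Hnformula}, replace $p_\ind\to q_\ind$ and $y\to z$ via \eqref{eq:rsexpansions}, and check that the accumulated errors are dominated by $e_\ind^\pm$ regime by regime, exploiting the pole structure of the potentials and the defining inequalities of $\Omega_\ind^\pm$ in the notch region. One minor logical slip: $\dot{C}^2+\dot{S}^2=1$ does not give uniform control of $|R_\ind^\pm|$ or $|R_\ind^\pm|^{-1}$ (and $R_\ind^\pm$ in fact ranges over all of $\mathbb{R}$); what actually justifies reducing to an estimate of $\dot R_\ind^\pm-R_\ind^\pm$ is that the maps $R\mapsto 2R/(R^2+1)$ and $R\mapsto(R^2-1)/(R^2+1)$ have uniformly bounded derivative on $\mathbb{R}$, which is the observation the paper relies on.
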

\begin{proof}
First, we localize near criticality by isolating the terms in $R_\ind^\pm$ depending ``slowly'' on
$x$ and $t$ and replacing $x$ with $x_\mathrm{crit}$ and $t$ with zero in these terms.
The ``slowly varying'' quantities are $w_*=w_*(x,t)$ and the following:
\begin{equation}
\begin{split}
\alpha_\ind^\pm&:=\frac{h_\ind^\pm(w_*)}{4w_*W'(w_*)^2}\\
\beta_\ind^\pm&:=\frac{4w_*h_\ind^\pm(w_*)}{W'(w_*)}\\
\gamma_\ind^\pm&:=\frac{4w_*h_\ind^{\pm\prime}(w_*)W'(w_*)-4w_*h_\ind^\pm(w_*)W''(w_*)-4h_\ind^\pm(w_*)W'(w_*)}{W'(w_*)^3}.
\end{split}
\end{equation}
Recalling the definitions of  $J_\ind^\pm$ and $K_\ind^\pm$ given by \eqref{eq:JKplus} and \eqref{eq:JKminus},
one obtains from \eqref{eq:Rplusminus} the exact expressions
\begin{equation}
\begin{split}
R_\ind^+&=\frac{\beta_\ind^+e^{p_\ind}\pu_\ind(y)+\gamma_\ind^+\epsilon_N^{1/3}e^{p_\ind}B_{\ind,12}(y)}{[\alpha_\ind^+\epsilon_N^{1/3}e^{p_\ind}B_{\ind,12}(y)]^2+16w_*^2},\quad\text{where}\quad B_{\ind,12}(y)=2H_\ind(y)\pu_\ind(y)-\pu_\ind'(y)\\
R_\ind^-&=\frac{\beta_\ind^-e^{-p_{\ind-1}}\pv_\ind(y)+\gamma_\ind^-\epsilon_N^{1/3}e^{-p_{\ind-1}}B_{\ind,21}(y)}{[\alpha_\ind^-\epsilon_N^{1/3}e^{-p_{\ind-1}}B_{\ind,21}(y)]^2+16w_*^2},\quad\text{where}\quad
B_{\ind,21}(y)=\pv_\ind'(y)-2H_\ind(y)\pv_\ind(y).
\end{split}
\end{equation}
Now, for each $\ind\in\mathbb{Z}$, $\alpha_\ind^\pm$, $\beta_\ind^\pm$, $\gamma_\ind^\pm$, and $w_*$ are all independent of $\epsilon_N$, depending only on $(x,t)$.  Moreover, they are all analytic functions
of $(x,t)$ near criticality.  Computing their limits at criticality using \eqref{eq:WprimeWdoubleprimecrit}
and \eqref{eq:Hnformula} together with \eqref{eq:hindplus} and \eqref{eq:hindminus} we obtain:
\begin{equation}
\begin{split}
\alpha_\ind^\pm &= -2^{-2\pm(2-4\ind)}\nu^{(-2\pm(1-2\ind))/3} + \bo(\Delta x,t)\\
\beta_\ind^\pm &=-2^{2\pm(2-4\ind)}\nu^{(-1\pm(1-2\ind))/3} + \bo(\Delta x,t)\\
\gamma_\ind^\pm &=\bo(\Delta x,t),
\end{split}
\end{equation} 
and for $w_*$ we have the corresponding expression \eqref{eq:wstarexpand}.  
Since by assumption $\ind$ and $y$ are bounded, it follows immediately that $s(x,t)=\bo(\epsilon_N\log(\epsilon_N^{-1}))$
and that $r(x,t)=\bo(\epsilon_N^{2/3})$.  Then, since the Jacobian matrix $\partial(r,s)/\partial (x,t)$ is
diagonal and invertible at criticality according to \eqref{eq:rsexpansions}, it also follows that $\Delta x=\bo(\epsilon_N^{2/3})$ and $t=\bo(\epsilon_N\log(\epsilon_N^{-1}))$, and consequently all error terms
of the form $\bo(\Delta x,t)$ may be replaced with $\bo(\epsilon_N^{2/3})$.  Therefore,
\begin{equation}
\begin{split}
R_\ind^+&=\frac{-2^{4(1-\ind)}\nu^{-2\ind/3} e^{p_\ind}\pu_\ind(y)(1+\bo(\epsilon_N^{2/3})) +
\epsilon_N^{1/3}e^{p_\ind}B_{\ind,12}(y)\bo(\epsilon_N^{2/3})}
{[-2^{-4\ind}\nu^{-(1+2\ind)/3}\epsilon_N^{1/3}e^{p_\ind}B_{\ind,12}(y)(1+\bo(\epsilon_N^{2/3}))]^2 + 16(1+\bo(\epsilon_N^{2/3}))}\\
R_\ind^-&=\frac{-2^{4\ind}\nu^{2(m-1)/3}e^{-p_{\ind-1}}\pv_\ind(y)(1+\bo(\epsilon_N^{2/3})) +\epsilon_N^{1/3}e^{-p_{\ind-1}}
B_{\ind,21}(y)\bo(\epsilon_N^{2/3})}
{[-2^{4(\ind-1)}\nu^{-1+2\ind/3}\epsilon_N^{1/3}e^{-p_{\ind-1}}B_{\ind,21}(y)(1+\bo(\epsilon_N^{2/3}))]^2+16(1+\bo(\epsilon_N^{2/3}))}.
\end{split}
\label{eq:Rpmind_intermediate}
\end{equation}
The denominators are bounded away from zero and, according to Proposition~\ref{prop:Rnlocgeneral},
$B_{\ind,12}(y)$ and $B_{\ind,21}(y)$ have simple poles at the points of $\mathscr{P}(\pu_\ind)=\mathscr{P}(\pv_\ind)$ implying that
\begin{equation}
\begin{split}
\epsilon_N^{1/3}e^{p_\ind}B_{\ind,12}(y)&=\bo(1),\quad\text{for $(y,s)\in\Omega_\ind^+$}
\\
\epsilon_N^{1/3}e^{-p_{\ind-1}}B_{\ind,21}(y)&=\bo(1),\quad\text{for $(y,s)\in\Omega_\ind^-$}.
\end{split}
\label{eq:B1221bound}
\end{equation}
Therefore, the formulae \eqref{eq:Rpmind_intermediate} can be written in the form
\begin{equation}
\begin{split}
R_\ind^+&=\frac{-2^{4(1-\ind)}\nu^{-2\ind/3}e^{p_\ind}\pu_\ind(y)}
{2^{-8\ind}\nu^{-2(1+2\ind)/3}\epsilon_N^{2/3}e^{2p_\ind}B_{\ind,12}(y)^2 + 16}
+\bo(\epsilon_N^{2/3}e^{p_\ind}\pu_\ind(y)) + \bo(\epsilon_N^{2/3})\\
R_\ind^-&=\frac{-2^{4\ind}\nu^{2(\ind-1)/3}e^{-p_{\ind-1}}\pv_\ind(y)}
{2^{8(\ind-1)}\nu^{-2+4\ind/3}\epsilon_N^{2/3}e^{-2p_{\ind-1}}B_{\ind,21}(y)^2 + 16}
+\bo(\epsilon_N^{2/3}e^{-p_{\ind-1}}\pv_\ind(y)) + \bo(\epsilon_N^{2/3}).
\end{split}
\end{equation}

The explicit terms we have retained in $R_\ind^\pm$ have the same form as $\dot{R}_\ind^\pm$
 defined by \eqref{eq:dotRplus} and \eqref{eq:dotRminus} but written in terms of the ``fast'' 
 variables $p_\ind$, $p_{\ind-1}$, and $y$ instead of $q_\ind$, $q_{\ind-1}$, and $z$. But, 
recalling the definition \eqref{eq:sqind} we deduce from
\eqref{eq:rsexpansions} that
\begin{equation}
p_\ind = q_\ind + \bo(\epsilon_N^{-1}\Delta x^2,\epsilon_N^{-1}t\Delta x,\epsilon_N^{-1}t^2) = 
q_\ind+\bo(\epsilon_N^{1/3}),
\end{equation}
and from this it follows (using \eqref{eq:B1221bound} and the fact that the denominators are bounded away from zero) that
\begin{equation}
\begin{split}
R_\ind^+&=\frac{-2^{4(1-\ind)}\nu^{-2\ind/3}e^{q_\ind}\pu_\ind(y)}
{2^{-8\ind}\nu^{-2(1+2\ind)/3}\epsilon_N^{2/3}e^{2q_\ind}B_{\ind,12}(y)^2 + 16}
+\bo(\epsilon_N^{1/3}e^{p_\ind}\pu_\ind(y)) + \bo(\epsilon_N^{2/3})\\
R_\ind^-&=\frac{-2^{4\ind}\nu^{2(\ind-1)/3}e^{-q_{\ind-1}}\pv_\ind(y)}
{2^{8(\ind-1)}\nu^{-2+4\ind/3}\epsilon_N^{2/3}e^{-2q_{\ind-1}}B_{\ind,21}(y)^2 + 16}
+\bo(\epsilon_N^{1/3}e^{-p_{\ind-1}}\pv_\ind(y)) + \bo(\epsilon_N^{2/3}).
\end{split}
\end{equation}
Also, from the definition \eqref{eq:zDeltax}, the definition $y=r(x,t)/\epsilon_N^{2/3}$ and 
\eqref{eq:rsexpansions}, we see that
\begin{equation}
y=z + \bo(\epsilon_N^{-2/3}\Delta x^2,\epsilon_N^{-2/3}t\Delta x,\epsilon_N^{-2/3}t^2)=
z + \bo(\epsilon_N^{2/3}).
\end{equation}
Since the poles are simple for $\pu_\ind$ and $\pv_\ind$, 
\begin{equation}
\begin{split}
\pu_\ind(y)-\pu_\ind(z)&=\bo(\epsilon_N^{2/3}\pu_\ind'(y+\bo(\epsilon_N^{2/3}))) =\bo(\epsilon_N^{2/3}|y+\bo(\epsilon_N^{2/3})-\mathscr{P}(\pu_\ind)|^{-2})\\
\pv_\ind(y)-\pv_\ind(z)&=\bo(\epsilon_N^{2/3}\pv_\ind'(y+\bo(\epsilon_N^{2/3}))) = 
\bo(\epsilon_N^{2/3}|y+\bo(\epsilon_N^{2/3})-\mathscr{P}(\pv_\ind)|^{-2}).
\end{split}
\end{equation}
Now, $|y-\mathscr{P}(\pu_\ind)|^{-1}=\bo(\epsilon_N^{-1/6})$ for $(y,s)\in\Omega_\ind^+\cup\Omega_\ind^-$, so
\begin{equation}
\begin{split}
\frac{1}{|y+\bo(\epsilon_N^{2/3})-\mathscr{P}(\pu_\ind)|}&=
\frac{1}{|y-\mathscr{P}(\pu_\ind)|}\frac{|y-\mathscr{P}(\pu_\ind)|}{|y+\bo(\epsilon_N^{2/3})-\mathscr{P}(\pu_\ind)|}\\
&\le
\frac{1}{|y-\mathscr{P}(\pu_\ind)|}\frac{|y-\mathscr{P}(\pu_\ind)|}
{||y-\mathscr{P}(\pu_\ind)| - |\bo(\epsilon_N^{2/3})||} \\ &= 
\frac{1}{|y-\mathscr{P}(\pu_\ind)|}\frac{1}{1-|\bo(\epsilon_N^{2/3})||y-\mathscr{P}(\pu_\ind)|^{-1}}
\\ &= \frac{1}{|y-\mathscr{P}(\pu_\ind)|}\frac{1}{1-|\bo(\epsilon_N^{1/2})|}
\end{split}
\end{equation}
so that $\pu_\ind(y)-\pu_\ind(z)=\bo(\epsilon_N^{2/3}|y-\mathscr{P}(\pu_\ind)|^{-2})$
and similarly $\pv_\ind(y)-\pv_\ind(z)=\bo(\epsilon_N^{2/3}|y-\mathscr{P}(\pv_\ind)|^{-2})$.
For 
$B^2_{\ind,12}$ and $B^2_{\ind,21}$ the poles are double, and it follows by the same reasoning that
$B_{\ind,12}(y)^2-B_{\ind,12}(z)^2=\bo(\epsilon_N^{2/3}|y-\mathscr{P}(\pu_\ind)|^{-3})$
and $B_{\ind,21}(y)^2-B_{\ind,21}(z)^2=\bo(\epsilon_N^{2/3}|y-\mathscr{P}(\pv_\ind)|^{-3})$.
From these considerations we have that
\begin{equation}
\begin{split}
R_\ind^+&=\dot{R}_\ind^+ + \bo(\epsilon_N^{2/3}e^{p_\ind}|y-\mathscr{P}(\pu_\ind)|^{-2})
+\bo(\epsilon_N^{4/3}e^{3p_\ind}\pu_\ind(y)|y-\mathscr{P}(\pu_\ind)|^{-3})\\
&\quad\quad{}
+\bo(\epsilon_N^2e^{3p_\ind}|y-\mathscr{P}(\pu_\ind)|^{-5}) + 
\bo(\epsilon_N^{1/3}e^{p_\ind}\pu_\ind(y)) + \bo(\epsilon_N^{2/3}),\quad (y,s)\in\Omega_\ind^+\\
R_\ind^-&=\dot{R}_\ind^- +
\bo(\epsilon_N^{2/3}e^{-p_{\ind-1}}|y-\mathscr{P}(\pv_\ind)|^{-2}) +
\bo(\epsilon_N^{4/3}e^{-3p_{\ind-1}}\pv_\ind(y)|y-\mathscr{P}(\pv_\ind)|^{-3})\\
&\quad\quad{} +
\bo(\epsilon_N^2e^{-3p_{\ind-1}}|y-\mathscr{P}(\pv_\ind)|^{-5}) +
\bo(\epsilon_N^{1/3}e^{-p_{\ind-1}}\pv_\ind(y))+\bo(\epsilon_N^{2/3}),\quad (y,s)\in\Omega_\ind^-,
\end{split}
\end{equation}
where $\dot{R}_\ind^+$ is defined by \eqref{eq:dotRplus} and $\dot{R}_\ind^-$ is defined
by \eqref{eq:dotRminus}.  By comparing with the definition \eqref{eq:VEestimateplus} of $e_\ind^+(y,s;\epsilon_N)$ for $(y,s)\in\Omega_\ind^+$ and the definition \eqref{eq:VEestimateminus} of $e_\ind^-(y,s;\epsilon_N)$ for $(y,s)\in\Omega_\ind^-$ we see that
\begin{equation}
R_\ind^\pm = \dot{R}_\ind^\pm + \bo(e_\ind^\pm(y,s;\epsilon_N)),\quad (y,s)\in\Omega_\ind^\pm.  
\label{eq:RRdotestimate}
\end{equation}

Finally, we insert the estimate \eqref{eq:RRdotestimate} into $\dot{C}_{N,\ind}^\pm(x,t)$
and $\dot{S}_{N,\ind}^\pm(x,t)$ given by \eqref{eq:dotCSGplusminus} noting that derivatives of the rational expressions in $R_\ind^\pm$ that
appear 
are uniformly bounded to complete the
proof of the proposition.
\end{proof}
Combining Proposition~\ref{prop:CSapproximate} with the formulae \eqref{eq:cossineE}, we  obtain the estimates
\begin{equation}
\cos(\tfrac{1}{2}u_N(x,t))=\ddot{C}_{N,\ind}^\pm(x,t)+\bo(e_\ind^\pm(y,s;\epsilon_N))\quad
\text{and}\quad
\sin(\tfrac{1}{2}u_N(x,t))=\ddot{S}_{N,\ind}^\pm(x,t)+\bo(e_\ind^\pm(y,s;\epsilon_N))
\label{eq:CSapproximateII}
\end{equation}
holding uniformly for $(y,s)\in\Omega_\ind^\pm$ with $y$ bounded.

We are now in a position to define the multiscale asymptotic formulae for $\cos(\tfrac{1}{2}u_N(x,t))$ and $\sin(\tfrac{1}{2}u_N(x,t))$
mentioned in the statement of the Theorem:  
\begin{equation}
\dot{C}(x,t;\epsilon_N):=
\ddot{C}^\pm_{N,\ind}(x,t)\quad\text{and}\quad \dot{S}(x,t;\epsilon_N):=\ddot{S}^\pm_{N,\ind}(x,t),\quad\text{whenever $(\epsilon_N^{-2/3}r(x,t),s(x,t))\in\Omega_\ind^\pm$}.
\end{equation}
Using \eqref{eq:CSapproximateII} we find that
\begin{equation}
\begin{split}
\cos(\tfrac{1}{2}u_N(x,t)) &= \dot{C}(x,t;\epsilon_N)  + \bo(e^\pm_\ind(\epsilon_N^{-2/3}r(x,t),s(x,t);\epsilon_N))=\dot{C}(x,t;\epsilon_N) + \bo(\epsilon_N^{1/6})\\
\sin(\tfrac{1}{2}u_N(x,t))&=\dot{S}(x,t;\epsilon_N)+\bo(e^\pm_\ind(\epsilon_N^{-2/3}r(x,t),s(x,t);\epsilon_N))=\dot{S}(x,t;\epsilon_N)+\bo(\epsilon_N^{1/6})
\end{split}
\end{equation}
hold whenever $(x,t)$ are such that $(\epsilon_N^{-2/3}r(x,t),s(x,t))\in\Omega_\ind^\pm$.  The exponent
of $1/6$ is the worst-case exponent as explained in the paragraph following the proof of Proposition~\ref{prop:Error}.  On the other hand, it is clear that this error estimate is uniformly valid whenever $m$
is bounded and $\epsilon_N^{-2/3}r(x,t)=y$ is bounded, or as is the same, whenever $t=\bo(\epsilon_N\log(\epsilon_N^{-1}))$ and $\Delta x=\bo(\epsilon_N^{2/3})$.  This completes the proof of Theorem~\ref{thm:main}.

\subsection{Proof of Theorem~\ref{thm:kinkapprox} (Superluminal kink asymptotics).}
Choose any integer $m$ with $|m|\le B$, and let $K>0$ be fixed.  Suppose that $(x,t)$ lies in
the rectangle $\mathcal{R}_\ind$ defined by the inequalities $|\Delta x| \le K\epsilon_N^{2/3}$ and
$|t-\tfrac{2}{3}\ind\epsilon_N\log(\epsilon_N^{-1})|\le\tfrac{1}{3}\epsilon_N\log(\epsilon_N^{-1})$. 
Because $y=z+\bo(\epsilon_N^{2/3})$ and $p_\ind=q_\ind+\bo(\epsilon_N^{1/3})$, for any \emph{strictly positive} choice of the parameter $\kappa\ge 0$ entering into Definition~\ref{def:sets}, the image $I(\mathcal{R}_\ind)$ of $\mathcal{R}_\ind$ in the $(y,s)$-plane under the mapping $y=r(x,t)/\epsilon_N^{2/3}$ and $s=s(x,t)$ will be contained in the horizontal strip
$\Omega_\ind^+\cup\Omega_{\ind+1}^-$ (see \eqref{eq:stripunion}) with $y$ bounded, as long as $\epsilon_N$ is sufficiently small.  Also, for $(x,t)$ mapping into $I(\mathcal{R}_\ind)\cap\Omega_\ind^+$ (respectively into $I(\mathcal{R}_\ind)\cap\Omega_{\ind+1}^-$) we have the uniform estimate $e_\ind^+(\epsilon_N^{-2/3}r(x,t),s(x,t);\epsilon_N)=\bo(\epsilon_N^{1/6})$ (respectively we have the uniform estimate $e_{\ind+1}^-(\epsilon_N^{-2/3}r(x,t),s(x,t);\epsilon_N)=\bo(\epsilon_N^{1/6})$).

Moreover, given an interval $[z_-,z_+]$ with $2\nu^{1/3}|z_\pm|\le K$ on which $\log|\pu_\ind(z)|$ is bounded, if the parameter $\delta>0$ entering into Definition~\ref{def:sets} is chosen sufficiently small, the sub-rectangle $\mathcal{R}^{\mathrm{sub}}_\ind\subset\mathcal{R}_\ind$ defined by the inequalities
$z_-\le z\le z_+$ (recall the definition \eqref{eq:zDeltax} of $z$ in terms of $\Delta x$) and $|t-\tfrac{2}{3}\ind\epsilon_N\log(\epsilon_N^{-1})|\le\tfrac{1}{3}\epsilon_N\log(\epsilon_N^{-1})$
will be mapped into a subregion $I(\mathcal{R}_\ind^{\mathrm{sub}})$ of $I(\mathcal{R}_\ind)$ 
in which both inequalities $|y-\mathscr{P}(\pu_\ind)|\ge\delta$ and $|y-\mathscr{P}(\pv_{\ind+1})|\ge\delta$
hold.  It  then follows that
for $(x,t)$ mapping into $I(\mathcal{R}_\ind^{\mathrm{sub}})\cap\Omega_\ind^+$, the estimate
$e_\ind^+(\epsilon_N^{-2/3}r(x,t),s(x,t);\epsilon_N)=\bo(\epsilon_N^{1/3})$ holds uniformly while for $(x,t)$ mapping into
$I(\mathcal{R}_\ind^{\mathrm{sub}})\cap\Omega_{\ind+1}^-$, the estimate $e_{\ind+1}^-(\epsilon_N^{-2/3}r(x,t),s(x,t);\epsilon_N)=\bo(\epsilon_N^{1/3})$ holds uniformly.  See also Figure~\ref{fig:EminusIEstimates}.

Suppose that $(x,t)$ corresponds to a point $(y,s)\in\Omega_\ind^+$.  We wish to neglect the
first term in the  denominator of $\dot{R}_\ind^+$ as defined by \eqref{eq:dotRplus};
we write $\dot{R}_\ind^+$
in the form
\begin{equation}
\dot{R}_\ind^+ = \frac{-\mathrm{sgn}(\pu_\ind(z))e^{T_\kink}}{f^2+1},\quad f=\bo(\epsilon_N^{1/3}e^{q_\ind}B_{\ind,12}(z)) = \bo(\epsilon_N^{1/3}e^{q_\ind}|z-\mathscr{P}(\pu_\ind)|^{-1}),
\end{equation}
where $T_\kink$ is defined in terms of $t$, $\epsilon_N$, $\nu$, and $|\pu_\ind(z)|$ by 
\eqref{eq:kinkTdefine}.  Note that since $z=y+\bo(\epsilon_N^{2/3})$ and $q_\ind=p_\ind+\bo(\epsilon_N^{1/3})$, and since $|y-\mathscr{P}(\pu_\ind)|\ge\delta\epsilon_N^{1/6}$ for $(y,s)\in\Omega_\ind^+$, the upper bound for $f$ may be replaced with $f= 
\bo(\epsilon_N^{1/3}e^{p_\ind}|y-\mathscr{P}(\pu_\ind)|^{-1})$. Therefore, as pointed out in the proof of Proposition~\ref{prop:CSapproximate},
$f=\bo(1)$ uniformly for $(y,s)\in\Omega_\ind^+$, although it is typically smaller pointwise.
By using this result in \eqref{eq:CSddotplus} and recalling \eqref{eq:CSapproximateII}, it follows that whenever $(x,t)\in\mathcal{R}_\ind$ maps into $\Omega_\ind^+$,
\begin{equation}
\begin{split}
\cos(\tfrac{1}{2}u_N(x,t))&=(-1)^m\,\mathrm{sgn}(\pu_\ind(z))\,\mathrm{sech}(T_\kink) + 
\bo(e^{-|T_\kink|}f^2)+
\bo(e_\ind^+(\epsilon_N^{-2/3}r(x,t),s(x,t);\epsilon_N))\\
\sin(\tfrac{1}{2}u_N(x,t))&=(-1)^{m+1}\tanh(T_\kink) + 
\bo(e^{-2|T_\kink|}f^2)+
\bo(e_\ind^+(\epsilon_N^{-2/3}r(x,t),s(x,t);\epsilon_N)).
\end{split}
\end{equation}
As shown at the beginning of the proof, the error terms involving $e_\ind^+$ are either of magnitude $\bo(\epsilon_N^{1/6})$ and in particular $\lo(1)$, or, if $(x,t)\in\mathcal{R}_\ind^{\mathrm{sub}}$, of smaller magnitude $\bo(\epsilon_N^{1/3})$.  The remaining error terms are bounded, and 
they are negligible unless \emph{both} $T_\kink=\bo(1)$ and $f$ fails to be small.  But since 
$f=\bo(\epsilon_N^{1/3}e^{p_\ind}|y-\mathscr{P}(\pu_\ind)|^{-1})$ and $(y,s)\in\Omega_\ind^+$,
we see from Definition~\ref{def:sets} that $f$ fails to be small only if $e^{-p_\ind}=\bo(\epsilon_N^{1/3})$, or equivalently, $e^{-q_\ind}=\epsilon_N^{-2\ind/3}e^{-t/\epsilon_N}=\bo(\epsilon_N^{1/3})$.  Now, under this condition we have from \eqref{eq:kinkTdefine} that
$T_\kink\ge \bo(1) + \tfrac{1}{3}\log(\epsilon_N^{-1})+\log|\pu_\ind(z)|$, so to have $T_\kink=\bo(1)$ requires
that $|z-\mathscr{Z}(\pu_\ind)|=\bo(\epsilon_N^{1/3})$.
Moreover, if $(x,t)\in\mathcal{R}_\ind^{\mathrm{sub}}$ and its image $(y,s)=(\epsilon_N^{-2/3}r(x,t),s(x,t))$ lies in $\Omega_\ind^+$, 
$e^{-|T_\kink|} = \bo(e^{-|q_\ind|})$ and $f = \bo(\epsilon_N^{1/3}e^{q_\ind})$.  The inequality $|q_\ind|\le\tfrac{1}{3}\log(\epsilon_N^{-1})$ then implies that if $(x,t)\in\mathcal{R}_\ind^{\mathrm{sub}}$ with image in $\Omega_\ind^+$, then $e^{-|T_\kink|}f^2$ and $e^{-2|T_\kink|}f^2$ are both 
dominated by the $\bo(\epsilon_N^{1/3})$ estimate of the error terms involving $e_\ind^+$.

%
%

Now suppose instead that $(x,t)$ corresponds to a point $(y,s)\in\Omega_{\ind+1}^-$. 
As before, we wish to neglect the first term in the denominator 
 of $\dot{R}_{\ind+1}^-$ as defined by \eqref{eq:dotRminus}, so we write $\dot{R}_{\ind+1}^-$ in the form (using the identity $\pv_{\ind+1}(z)=\pu_\ind(z)^{-1}$)
\begin{equation}
\dot{R}_{\ind+1}^- = \frac{-\mathrm{sgn}(\pu_\ind(z))e^{-T_\kink}}{h^2+1},\quad
h = \bo(\epsilon_N^{1/3}e^{-q_\ind}B_{\ind+1,21}(z))=\bo(\epsilon_N^{1/3}e^{-q_\ind}|z-\mathscr{Z}(\pu_\ind)|^{-1}).
\end{equation}
Again, the error bound for $h$ may be replaced with $\bo(\epsilon_N^{1/3}e^{-p_\ind}|y-\mathscr{Z}(\pu_\ind)|^{-1})$ and so $h=\bo(1)$ holds uniformly for $(y,s)\in\Omega_{\ind+1}^-$;
it then follows
by using this result in \eqref{eq:CSddotminus} and recalling \eqref{eq:CSapproximateII}, that whenever $(x,t)\in\mathcal{R}_\ind$ maps into $\Omega_{\ind+1}^-$,
\begin{equation}
\begin{split}
\cos(\tfrac{1}{2}u_N(x,t)) &= (-1)^\ind\,\mathrm{sgn}(\pu_\ind(z))\,\mathrm{sech}(T_\kink) +
\bo(e^{-|T_\kink|}h^2) + \bo(e_{\ind+1}^-(\epsilon_N^{-2/3}r(x,t),s(x,t);\epsilon_N))\\
\sin(\tfrac{1}{2}u_N(x,t)) &= (-1)^{\ind+1}\tanh(T_\kink) + \bo(e^{-2|T_\kink|}h^2) +
\bo(e_{\ind+1}^-(\epsilon_N^{-2/3}r(x,t),s(x,t);\epsilon_N)).
\end{split}
\end{equation}
In particular, we note that the leading terms are \emph{identical} with those arising from
the analysis in the case that the image of $(x,t)$ lies in $\Omega_\ind^+$.  By completely
analogous reasoning as in that case, one learns that the sum of error terms is $\lo(1)$
unless both $T_\kink=\bo(1)$ and also $h$ fails to be small, which occurs in this case only
if $e^{p_\ind}=\bo(\epsilon_N^{1/3})$, or equivalently, if $e^{q_\ind} = \epsilon_N^{2\ind/3}e^{t/\epsilon_N}=\bo(\epsilon_N^{1/3})$.  Under this condition, the definition \eqref{eq:kinkTdefine} implies that $T_\kink\le\bo(1)-\tfrac{1}{3}\log(\epsilon_N^{-1})+\log|\pu_\ind(z)|$, so to have $T_\kink=\bo(1)$ requires that $|z-\mathscr{P}(\pu_\ind)|=\bo(\epsilon_N^{1/3})$.  Moreover, if $(x,t)\in\mathcal{R}_\ind^{\mathrm{sub}}$
with image in $\Omega_{\ind+1}^-$, then both error terms are $\bo(\epsilon_N^{1/3})$.
This completes the proof of Theorem~\ref{thm:kinkapprox}.

\subsection{Proof of Theorem~\ref{thm:grazingapprox} (Grazing kink collisional asymptotics).}
Recall that $z_0\in\mathscr{Z}(\pu_{\ind-1})$ is a simple zero of $\pu_{\ind-1}$, and hence also a simple zero of $\pv_{\ind+1}$.  Note that from the recurrence relation \eqref{eq:Baecklundplus} it follows that $\pv_{\ind+1}'(z_0)=-1/\pu_{\ind-1}'(z_0)$.  Since $z=y+\bo(\epsilon_N^{2/3})$ and $t=s + \bo(\epsilon_N^{4/3})$ for the range of $x$ and
$t$ under consideration, the region in the $(x,t)$-plane defined by the inequalities
\begin{equation}
|t-(\tfrac{2}{3}\ind-\tfrac{1}{3})\epsilon_N\log(\epsilon_N^{-1})|\le\tfrac{1}{3}\epsilon_N\log(\epsilon_N^{-1})\quad\text{and}\quad |z-z_0|\le \mu\epsilon_N^{1/6}\exp(|t-(\tfrac{2}{3}\ind-\tfrac{1}{3})\epsilon_N\log(\epsilon_N^{-1})/(2\epsilon_N))
\end{equation}
is, for both $\mu>0$ and $\epsilon_N$ sufficiently small, mapped into the union $\Omega_{\ind-1}^+\cup\Omega_{\ind+1}^-$ in the $(y,s)$-plane, where $(y,s)=(\epsilon_N^{-2/3}r(x,t),s(x,t))$.  In particular, each point of the image lies either
in the upward-pointing ``tooth'' of $\Omega_{\ind-1}^+$ centered at $y=z_0$ or in the downward-pointing ``tooth'' of $\Omega_{\ind+1}^-$ centered at $y=z_0$.  It follows that
for the points whose images lie in $\Omega_{\ind-1}^+$, the error term $e_{\ind-1}^+(\epsilon_N^{-2/3}r(x,t),s(x,t);\epsilon)$ is $\bo(\epsilon_N^{1/6})$ while for the points whose images lie in $\Omega_{\ind+1}^-$, the error term $e_{\ind+1}^-(\epsilon_N^{-2/3}r(x,t),s(x,t);\epsilon_N)$ is also $\bo(\epsilon_N^{1/6})$ so that in particular these error terms arising
in \eqref{eq:CSapproximateII} will always tend to zero with $\epsilon_N$.  Moreover, the inequality $|z-z_0|=\bo(\epsilon_N^{1/3})$ implies that $|y-z_0|=\bo(\epsilon_N^{1/3})$, from which it follows
that under this condition $(x,t)$ will be mapped into the parts of $\Omega_{\ind-1}^+$
and $\Omega_{\ind+1}^-$ (the centers of two matching teeth) in which the better error estimate of $\bo(\epsilon_N^{1/3})$ holds for 
$e_{\ind-1}^+$ and $e_{\ind+1}^-$ respectively.

Suppose that the image of $(x,t)$ in the $(y,s)$-plane lies in the upward-pointing tooth of $\Omega_{\ind-1}^+$ centered at $y=z_0$.  In this part of the image, the uniform estimate $e^{T_\grazing}=\bo(1)$ holds.   By Taylor expansion,
\begin{equation}
\begin{split}
\pu_{\ind-1}(z)&=\pu_{\ind-1}'(z_0)(z-z_0) (1+ \bo(z-z_0))=\frac{\epsilon_N^{1/3}\pu_{\ind-1}'(z_0)}{2\nu^{1/3}}X_\grazing (1+\bo(z-z_0))\\
B_{\ind-1,12}(z)&=2H_{\ind-1}(z)\pu_{\ind-1}(z)-\pu_{\ind-1}'(z)=-\pu_{\ind-1}'(z_0)+\bo(z-z_0),
\end{split}
\end{equation}
where the spatial coordinate $X_\grazing$ is defined by \eqref{eq:grazingXT}.  It follows from \eqref{eq:dotRplus} using $e^{T_\grazing}=\bo(1)$ that
\begin{equation}
\dot{R}_{\ind-1}^+=-\,\mathrm{sgn}(\pu_{\ind-1}'(z_0))X_\grazing\,\mathrm{sech}(T_\grazing)(1+\bo(z-z_0)),
\end{equation}
and then that
\begin{equation}
\begin{split}
\ddot{C}_{N,\ind-1}^+(x,t)&=(-1)^{\ind-1}\,\mathrm{sgn}(\pu_{\ind-1}'(z_0))\frac{2X_\grazing\,\mathrm{sech}(T_\grazing)}
{1+X_\grazing^2\,\mathrm{sech}^2(T_\grazing)} + \bo(z-z_0)\\
\ddot{S}_{N,\ind-1}^+(x,t)&=(-1)^{\ind-1}\frac{1-X_\grazing^2\,\mathrm{sech}^2(T_\grazing)}{1+X_\grazing^2\,\mathrm{sech}^2(T_\grazing)} + \bo(z-z_0).
\end{split}
\end{equation}
Combining this formula with the above estimates of $e_{\ind-1}^+(x,t;\epsilon_N)$ and
the formulae \eqref{eq:CSapproximateII} proves the Theorem for those $(x,t)$ whose image
in the $(y,s)$-plane lies within the upward-pointing tooth of $\Omega_{\ind-1}^+$ centered
at $y=z_0$.

Now suppose instead that the image of $(x,t)$ in the $(y,s)$-plane lies in the downward-pointing
tooth of $\Omega_{\ind+1}^-$ centered at $y=z_0$.  In this part of the image, the uniform estimate $e^{-T_\grazing}=\bo(1)$ holds.  By Taylor expansion and the fact that $\pv_{\ind+1}'(z_0)=-1/\pu_{\ind-1}'(z_0)$,
\begin{equation}
\begin{split}
\pv_{\ind+1}(z)&=-\frac{1}{\pu_{\ind-1}'(z_0)}(z-z_0)(1+\bo(z-z_0))=
-\frac{\epsilon_N^{1/3}}{2\nu^{1/3}\pu_{\ind-1}'(z_0)}X_\grazing(1+\bo(z-z_0))\\
B_{\ind+1,21}(z)&=\pv_{\ind+1}'(z)-2H_{\ind+1}(z)\pv_{\ind+1}(z)=
-\frac{1}{\pu_{\ind-1}'(z_0)} + \bo(z-z_0).
\end{split}
\end{equation}
It then follows from \eqref{eq:dotRminus} using $e^{-T_\grazing}=\bo(1)$ that
\begin{equation}
\dot{R}_{\ind+1}^- = \mathrm{sgn}(\pu_{\ind-1}'(z_0))X_\grazing\,\mathrm{sech}(T_\grazing)(1+\bo(z-z_0)),
\end{equation}
and then that
\begin{equation}
\begin{split}
\ddot{C}_{N,\ind+1}^-(x,t)&=(-1)^{\ind-1}\,\mathrm{sgn}(\pu_{\ind-1}'(z_0))
\frac{2X_\grazing\,\mathrm{sech}(T_\grazing)}{1+X_\grazing^2\,\mathrm{sech}^2(T_\grazing)}+\bo(z-z_0)\\
\ddot{S}_{N,\ind+1}^-(x,t)&=(-1)^{\ind-1}\frac{1-X_\grazing^2\,\mathrm{sech}^2(T_\grazing)}{1+X_\grazing^2\,\mathrm{sech}^2(T_\grazing)}+\bo(z-z_0).
\end{split}
\end{equation}
Combining this formula with the above estimates of $e_{\ind+1}^-(x,t;\epsilon_N)$ and
the formulae \eqref{eq:CSapproximateII} proves the Theorem for those $(x,t)$ whose image in the $(y,s)$-plane lies within the downward-pointing tooth of $\Omega_{\ind+1}^-$ centered 
at $y=z_0$.

\end{document}